\documentclass{article}

\usepackage{microtype}
\usepackage{graphicx}
\usepackage{subcaption}
\usepackage{booktabs} 

\usepackage[accepted]{icml2026}

\usepackage[utf8]{inputenc} 
\usepackage[T1]{fontenc}    
\usepackage[hypertexnames=false]{hyperref}

\usepackage{url}            
\usepackage{booktabs}       
\usepackage{amsfonts}       
\usepackage{nicefrac}       
\usepackage{microtype}      
\usepackage{xcolor}         
\usepackage{amssymb}
\usepackage{amsthm}
\usepackage{esvect}
\usepackage{amsmath}
\usepackage{cleveref}
\usepackage{mathtools}

\usepackage{algorithm}
\makeatletter
\renewcommand{\ALG@name}{Procedure} 
\makeatother
\usepackage{algorithmic}
\usepackage[old]{old-arrows}
\usepackage[left,inline,final]{showlabels}
\usepackage{etoc}
\usepackage{setspace}
\usepackage{xspace}

\usepackage{color}

\usepackage{bm}
\usepackage[font=small,labelfont=bf]{caption}
\usepackage{thm-restate}
\usepackage{mathrsfs}
\usepackage{xspace}
\usepackage{enumitem}
\usepackage{tikz} 
\usepackage{comment}
\usepackage{nicefrac}
\usetikzlibrary{calc,patterns,arrows.meta,positioning}

\definecolor{mygreen}{rgb}{0.0, 0.5, 0.0}
\definecolor{myorange}{rgb}{0.55, 0.62, 1}

\definecolor{niceRed}{RGB}{190,38,38}
\definecolor{Red2}{RGB}{219, 50, 54}
\definecolor{mgreen}{RGB}{160, 200, 140}
\definecolor{blueGrotto}{RGB}{5,157,192}
\definecolor{limeGreen}{HTML}{81B622}
\definecolor{myellow}{rgb}{0.88,0.61,0.14}
\definecolor{darkGreen}{HTML}{2E8B57}
\definecolor{navyBlueP}{HTML}{03468F}
\definecolor{Sepia}{HTML}{7F462C}
\definecolor{red2}{HTML}{1F462C}
\definecolor{orange2}{HTML}{FF8000}
\definecolor{mgray}{HTML}{ABB3B8}
\definecolor{lgray}{HTML}{E5E8E9}
\definecolor{myPurple}{RGB}{175,0,124}
\definecolor{mypurple2}{rgb}{0.8,0.62,1}
\definecolor{royalBlue}{HTML}{057DCD}
\definecolor{mpink}{HTML}{FC6C85}
\definecolor{lblue}{RGB}{74,144,226}
\definecolor{peagreen}{RGB}{152,193,39}
\definecolor{typ_navy}{HTML}{001f3f}
\definecolor{typ_blue}{HTML}{0074d9}
\definecolor{typ_aqua}{HTML}{7fdbff}
\definecolor{typ_teal}{HTML}{39cccc}
\definecolor{typ_eastern}{HTML}{239dad}
\definecolor{typ_purple}{HTML}{b10dc9}
\definecolor{typ_fuchsia}{HTML}{f012be}
\definecolor{typ_maroon}{HTML}{85144b}
\definecolor{typ_red}{HTML}{ff4136}
\definecolor{typ_orange}{HTML}{ff851b}
\definecolor{typ_yellow}{HTML}{ffdc00}
\definecolor{typ_olive}{HTML}{3d9970}
\definecolor{typ_green}{HTML}{2ecc40}
\definecolor{typ_lime}{HTML}{01ff70}
\definecolor{newgreen}{HTML}{83c702}
\definecolor{newpurp}{RGB}{97,96,121}
\definecolor{Andrea}{RGB}{204, 0, 153}

\definecolor{axisblue}{RGB}{20,60,140}
\definecolor{diaggreen}{RGB}{40,140,70}
\definecolor{regionred}{RGB}{200,40,40}

\hypersetup{
	colorlinks = true,
	linkcolor = typ_blue,
	citecolor = typ_maroon,
	linktocpage = true,
	urlcolor = darkGreen
}

\theoremstyle{plain}
\newtheorem{theorem}{Theorem}[section]

\newtheorem{lemma}[theorem]{Lemma}
\newtheorem{corollary}[theorem]{Corollary}
\theoremstyle{definition}

\theoremstyle{remark}
\newtheorem{remark}[theorem]{Remark}

\newcommand{\E}{\mathbb{E}}

\newcommand{\Var}{\mathbb{V}\mathrm{ar}}

\newcommand{\supp}{\textup{supp}}

\newcommand{\argmin}{\operatornamewithlimits{argmin}}
\newcommand{\argmax}{\operatornamewithlimits{argmax}}

\newcommand{\Prob}{\mathbb{P}}

\makeatletter
\newcommand{\mapstoto}{\mathpalette\@mapstoto\relax}
\newcommand*{\@mapstoto}[2]{%
    \mathrel{%
      \vcenter{%
         \vbox{%
            \baselineskip\z@skip
            \lineskip\z@
            \ialign{##\cr$#1\mapstochar\varrightarrow$\cr
            $#1\mapstochar\varrightarrow$\cr}%
         }%
      }%
   }%
}
\makeatother

\newcommand{\ie}{\emph{i.e.},\xspace}
\newcommand{\eg}{\emph{e.g.},\xspace}

\crefname{assumption}{Assumption}{Assumptions}

\icmltitlerunning{Optimal Rates for Feasible Payoff Set Estimation in Games}

\begin{document}

\etocdepthtag.toc{mtchapter}
\etocsettagdepth{mtchapter}{subsection}
\etocsettagdepth{mtappendix}{none}

\twocolumn[
  \icmltitle{Optimal Rates for Feasible Payoff Set Estimation in Games}



  \icmlsetsymbol{equal}{*}

  \begin{icmlauthorlist}
    \icmlauthor{Annalisa Barbara}{yyy,eth}
    \icmlauthor{Riccardo Poiani}{yyy}
    \icmlauthor{Martino Bernasconi}{yyy}
    \icmlauthor{Andrea Celli}{yyy}
  \end{icmlauthorlist}

  \icmlaffiliation{yyy}{Bocconi University, Milano, Italy}
  \icmlaffiliation{eth}{ETH Zurich, Switzerland}
%
  \icmlcorrespondingauthor{Riccardo Poiani}{riccardo.poiani@unibocconi.it}

  \icmlkeywords{Machine Learning, ICML}

  \vskip 0.3in
]



\printAffiliationsAndNotice{}  

\begin{abstract}

We study a setting in which two players play a (possibly approximate) Nash equilibrium of a bimatrix game, while a learner observes only their actions and has no knowledge of the equilibrium or the underlying game. A natural question is whether the learner can rationalize the observed behavior by inferring the players' payoff functions. 
Rather than producing a single payoff estimate, inverse game theory aims to identify the entire set of payoffs consistent with observed behavior, enabling downstream use in, \emph{e.g.}, counterfactual analysis and mechanism design across applications like auctions, pricing, and security games.
We focus on the problem of estimating the set of feasible payoffs with high probability and up to precision $\epsilon$ on the Hausdorff metric. 
We provide the first minimax-optimal rates for both exact and approximate equilibrium play, in zero-sum as well as general-sum games. Our results provide learning-theoretic foundations for set-valued payoff inference in multi-agent environments.

\end{abstract}
\section{Introduction}

The central question in \emph{inverse game theory} is the following \citep{waugh2011computational, kuleshov2015inverse}: given observed equilibrium behavior, what utility functions could have generated it? 
%
%
We study the problem of recovering the \emph{entire} set of possible payoffs compatible with the players' observed behavior, given action samples generated by their strategies. We are interested in the minimax sample complexity of approximately identifying such a set.
Even when equilibrium behavior is observed exactly (let alone through samples), the problem of inferring an agent’s utilities from behavior is inherently underdetermined, as multiple payoff functions can rationalize the same observed play~\citep{ng2000algorithms,waugh2011computational}.
This issue is often referred to as \emph{unidentifiability}. For example, the zero payoff matrix is compatible with any equilibrium. This ambiguity is usually addressed by imposing additional structure, such as restricting attention to  parametrized low-dimensional payoff families \citep{waugh2011computational, kuleshov2015inverse}, or by choosing a priori a selection principle such as maximum entropy \citep{vorobeychik2007learning, waugh2011computational}, or bounded-rationality models (such as quantal response) \citep{yu2022inverse}.

Rather than imposing additional modeling assumptions or fixing an \emph{a priori} selection principle, we focus on identifying the entire set of payoff functions consistent with the observed strategies, \emph{i.e.}, the feasible payoff set. The idea of this approach is to defer the selection of a particular payoff to a later stage, thereby making the entire process modular and more adaptable to the requirements of the application at hand.
For example, given the entire set of possible equilibria, one may select equilibria according to a maximum-entropy criterion \citep{waugh2011computational}, add additional structural constraints to the set (\eg if we know that the game is symmetric or has a potential-game structure), or choose a robust representative such as the incenter solution \citep{cui2025inverse}.
This set-valued perspective is standard in structural econometrics and inverse game theory (see, among others, \citet{brown1996testable,manski2003partial,tamer2010partial,waugh2011computational}).
However, the corresponding learning problem remains poorly understood from a theoretical standpoint, and even in the simplest multi-agent settings, we lack a sharp characterization of the \emph{statistical} difficulty of such set-valued identification.

In this work, we address this gap by providing minimax-optimal sample complexity guarantees for the problem of recovering the feasible payoff set from the observed behavior of two players in an unknown bimatrix game playing a Nash equilibrium. In particular, we consider two players repeatedly interacting in an unknown $n$-dimensional normal-form bimatrix game with payoff matrices $(A,B)$.
The players play a (possibly approximate) Nash equilibrium $(x,y)$ of the game, and the learner observes i.i.d.~action pairs $(X_t,Y_t)$ sampled from the product distribution $x \times y$.
The learner's goal is to infer the \emph{feasible payoff set}:
the set of all payoff matrices $(A',B')$ for which the strategies $(x,y)$ constitute a Nash equilibrium (or an $\alpha$-Nash equilibrium) of $(A',B')$. The goal of the paper is to answer the following question: \emph{How many equilibrium samples are required to estimate the feasible payoff set up to Hausdorff error $\epsilon$ with probability at least $1-\delta$?}
Our main contribution is the first minimax-optimal characterization of this question. Our results cover both \emph{general-sum} and \emph{zero-sum} games, and both approximate and exact Nash equilibria. 



\paragraph{Technical challenges} An important technical ingredient in our analysis is the correspondence (a set-valued function) that maps equilibrium strategies to the feasible payoff sets that rationalize them. Regarding the lower bounds, we first show that, when observing exact equilibria (\ie $\alpha=0$), the strong discontinuity of such correspondence makes the problem unlearnable. Therefore, we relax the learning requirement and provide minimax-optimal rates when the strategy profiles have a minimum probability of playing each action that is bounded away from $0$.
Conversely, when $\alpha > 0$, we can show that such discontinuities are somewhat mitigated. Interestingly, our results reveal that small values of $\alpha$ can substantially degrade the rates through an unavoidable dependence on $\alpha^{-1}$. Informally, the cause is that when $(x,y)$ play some actions with probability $\propto \alpha$, small changes in the strategy profiles lead to large changes in the payoff sets.
For the upper bounds, the major technical obstacle in obtaining tight rates is establishing this $\alpha^{-1}$ dependence in the approximate-equilibrium case.
To this end, focusing on a term that depends only on the $x$-player, we first show that the Hausdorff error scales with $\mathcal{O}(\alpha^{-1}(\hat x - x)^\top A y)$, where $\hat x$ is the empirical mean of $x$ after $m$ samples and $A$ is any feasible payoff for $(x,y)$ (an analogous argument holds for the $y$-player).
Hence, to get a tight rate, we need to show that the empirical value of the game $\hat x^\top Ay$ converges to the true value $x^\top Ay$ as ${\mathcal{O}}(\sqrt{\alpha (n+\log(1/\delta))/m})$, where $m$ is the number of samples (while standard concentration of $\hat x$ to $x$ would only yield ${\mathcal{O}}(\sqrt{n+\log(1/\delta)/m})$). We argue that such rates are achievable by studying a fractional knapsack problem that arises in controlling the error. 
\subsection{Related Works}

Early empirical work on payoff learning treats utilities as regression targets when realized utilities are observed for sampled strategy profiles, enabling supervised payoff approximation~\citep{vorobeychik2007learning}.
Closer in spirit to our setting, \citet{kuleshov2015inverse} introduced inverse game theory in the context of succinctly-representable games and studied the computational problem of finding utilities that rationalize an observed equilibrium (primarily via feasibility of linear/convex programs under structural restrictions). \citet{waugh2011computational} study an inverse equilibrium problem (for correlated equilibria) in a parametrized setting, and recover a polytope of compatible payoff functions used to compute the maximum-entropy equilibrium.
Recently, \citet{Goktas} studied how to recover a \emph{single} payoff-parameter/Nash-equilibrium pair consistent with observed behavior, relying on some form of oracle access to payoffs during the inverse optimization. This is fundamentally different from our setting, where we aim to recover the feasible set of payoff matrices from i.i.d.~equilibrium action samples alone.
A related line of works focus on learning adversary models in Stackelberg games, with the main goal of enabling behavioral modeling of attackers in security games \cite{Haghtalab2016three, Sinha2016learning, wu2022inverse}. This problem differs from ours in the fact that the learner can probe the adversary (\emph{i.e.}, the player with unknown payoffs) by observing their responses to ad hoc strategies.
In the context of sponsored search auctions, \citet{nekipelov2015econometrics} studies how to estimate players' valuations from observed contextual data, by assuming each player places bids following a no-regret algorithm.
Finally, \citet{chunkai} attempts to fit parametric games end-to-end by replacing Nash equilibrium with differentiable fixed points, such as quantal response equilibrium, enabling gradient-based estimation of game parameters.

Related to our work is the single-agent version of our problem, usually studied in inverse reinforcement learning (IRL). 
In IRL, the objective is to infer rewards for which a demonstrated policy is optimal \citep{abbeel2004apprenticeship}. It is well known that even in single-agent settings, the problem is fundamentally \emph{ill-posed}, as multiple reward functions may be compatible with the observed behavior \cite{ng2000algorithms}. Analogously to the game theory setting, there have been several attempts to resolve this ambiguity by focusing on specific selection criteria such as maximum margin, maximum entropy, and minimum Hessian eigenvalue \cite{ratliff2006maximum,metelli2017compatible,zeng2022maximum}. In this work, we take inspiration from an alternative recent line of research that attempts to overcome the ambiguity issue by estimating the \emph{feasible reward set}, \emph{i.e.}, the entire set of reward functions compatible with the observed data \cite{metelli2021provably,lindner2022active, metelli2023towards,poiani2024sub}. Remarkably, this approach captures the inherent ambiguity of the problem, supports robust prediction and counterfactual reasoning, and avoids committing to arbitrary equilibrium-selection or regularization choices.

There are many works that extend the single-agent IRL problem to the multi-agent IRL problem (MAIRL), including, \eg \citep{natarajan2010multi, yu2019multi, fu2021evaluating, reddy2012inverse, zhang2019non}. As in inverse game theory, most MAIRL approaches resolve the inherent ambiguity through a priori selection principles or additional modeling assumptions. There are recent exceptions that study the learning problem of returning the entire set of feasible rewards \citep{freihaut2024feasible, tang2024multi}. However, unlike what happens in game-theoretic applications such as mechanism design, auctions, and pricing, the primary goal of (MA)IRL is to serve as a preprocessing step within a larger pipeline: the learned model and payoff are subsequently used to train multi-agent policies, which are then evaluated in the true environment.
%
As noted by \citet{freihaut2024feasible}, in strategic settings this pipeline lacks guarantees without further restrictions (such as uniqueness of equilibria or structural assumptions that enforce it, \eg entropy-regularized Markov games), since the new learned equilibria for the approximated payoffs might perform arbitrarily bad in the original setting.

Finally, a closely related area is \emph{revealed preference} analysis, which asks whether observed price/purchase observations can be rationalized by utility maximization and constructs utility representations consistent with the available observations~\citep{samuelson1948consumption,afriat1967construction,varian1982nonparametric}. There is a related line of works studying the sample complexity of learning to predict optimal bundles of items for a buyer with an unknown demand function from some specific class (e.g., linear, separable piecewise-linear concave, and Leontief) from observed choices~\citep{beigman2006learning,zadimoghaddam2012efficiently,balcan2014learning}.
\section{Preliminaries}

\paragraph{Mathematical notation}
For $n \in \mathbb{N}$, we denote by $\Delta_n$ the $n$-dimensional simplex. Furthermore, $[n]$ denotes the set $\{1, \dots, n\}$. We also denote by $e_i$ the $i$-th standard basis vector of $\mathbb{R}^n$. For any $x \in \Delta_n$, we denote by $\supp(x) = \{i\in[n]: x_i>0\}$. Let $\mathcal{X}_1$, $\mathcal{X}_2$ be two non-empty subsets of a metric space $(\mathcal{X}, d)$. Then, the Hausdorff distance \citep{rockafellar2017stochastic} $H_d(\mathcal{X}_1, \mathcal{X}_2)$ is:
\begin{align*}
    \max \left\{ \sup_{x_1 \in \mathcal{X}_1} \inf_{x_2 \in \mathcal{X}_2} d(x_1, x_2), \sup_{x_2 \in \mathcal{X}_2} \inf_{x_1 \in \mathcal{X}_1} d(x_1, x_2) \right\}.
\end{align*}
We note that $H_d$ is defined based on the metric $d$.
In this work, we will be primarily interested in the case where $d$ is the $\ell_\infty$-norm between (subsets of) Euclidean spaces. From now on, we will drop the dependency on $d$.

\paragraph{General-Sum Games and Nash-Equilibrium}
A general-sum game (GSG) is specified by two matrices $A,B \in [-1,1]^{n \times n}$. For mixed strategies $x, y \in \Delta_n$, the expected losses incurred by the two players are $x^{\top} A y$ and $x^{\top} B y$, respectively. Each player seeks to choose a mixed strategy that minimizes their own expected loss. Given $\alpha \ge 0$, we say that a pair of strategies $x,y$ is an $\alpha$-Nash equilibrium for $\mathcal{G}$ if the following conditions are met: $x^\top A y \le e_i Ay + \alpha$ for all $i \in [n]$, and $x^\top B y \le x^\top B e_j + \alpha$ for all $j \in [n]$.
For any $x,y$ we denote by $\mathcal{G}_{\alpha}(x,y)$ the set of all matrices for which $x,y$ is an $\alpha$-Nash equilibrium, that is, the set of matrices $A,B$ that verify the following constraints for all indices $i,j \in [n]^2$:
\begin{align*}
    x^\top A y \le e_i^\top A y + \alpha \quad\textnormal{ and }\quad  x^\top B y \le x^\top B e_j + \alpha.
\end{align*}
Formally, we can define the following sets:
\begin{align*}
& \mathcal{G}^{x}_\alpha(x ,y) = \{ A \in [-1,1]^{n \times n}: x^\top Ay \le e_i^\top A y + \alpha~~\forall i \} \\
& \mathcal{G}^{y}_\alpha(x ,y) = \{ B \in [-1,1]^{n \times n}: x^\top By \le x^\top B e_j + \alpha~~\forall j \},
\end{align*}
so that we can write $\mathcal{G}_\alpha(x,y) = \mathcal{G}^{x}_\alpha(x ,y) \times \mathcal{G}^{y}_\alpha(x ,y)$.

\paragraph{Zero-Sum Games and Nash-Equilibrium}
A zero-sum game (ZSG), instead, is specified by a single matrix $A \in [-1,1]^{n \times n}$. In this case, $x,y \in \Delta_n \times \Delta_n$ is an $\alpha$-Nash equilibrium if $x^\top A e_j - \alpha\le x^\top A y \le e_i^\top A y + \alpha$ holds for every $i,j \in [n]^2$.
For any pair of mixed strategies $x,y$, we denote by $\mathcal{Z}_{\alpha}(x,y)$ the set of all matrices for which $x,y$ is an $\alpha$-Nash equilibrium, that is:
\begin{align*}
    \mathcal{Z}_{\alpha}(x,y) = & \left\{ A \in [-1,1]^{n\times n}: x^\top A e_j - \alpha\le x^\top A y\right. \\ & \hspace{.2cm}\textnormal{and } \left.x^\top A y \le e_i^\top A y + \alpha, \quad \forall i,j \in [n]^2 \right\}.
\end{align*}

Note that $\mathcal{Z}_\alpha(x,y) \subseteq \mathcal{G}_\alpha(x,y)$. Indeed, it is sufficient to slice $\mathcal{G}_\alpha(x,y)$ on the hyperplane where $A = -B$.

\begin{remark}
    When $\alpha$ is not specified, we refer to the exact equilibrium case. There, we will simply write \eg $\mathcal{G}(x,y)$ and $\mathcal{Z}(x,y)$ instead of $\mathcal{G}_\alpha(x,y)$ and $\mathcal{Z}_\alpha(x,y)$.
\end{remark}

\paragraph{Learning Framework}
We consider a setting in which two players follow a pair of mixed strategies $x, y \in \Delta_n \times \Delta_n$ that constitute an $\alpha$-Nash equilibrium of an unknown general-sum game. The objective of the agent is to recover the feasible payoff set $\mathcal{G}_{\alpha}(x,y)$ of matrices for which $(x,y)$ is an $\alpha$-Nash equilibrium.  In particular, the interaction framework is as follows: during each round $t \in \mathbb{N}$, the learner only observes $X_t, Y_t \sim x, y$. 
The learning algorithm takes in input a precision level $\epsilon >0$ together with a maximum risk parameter $\delta \in (0,1)$ and the parameter $\alpha \in [0,1)$ and is composed of (i) a stopping rule $\tau_\delta$ that controls the end of the data acquisition phase, and (ii) a recommendation rule, which we express as a function $f$ that takes in input a pair of matrices $A,B$ and returns $1$ if the agent believes that $(A,B)$ belongs to $\mathcal{G}_\alpha(x,y)$ and $0$ otherwise. We note that $f$ uniquely describes a set $\hat {\mathcal{G}}_\alpha=\{(A,B):f((A,B))=1\}$, which denotes the guess of the agent for all the pair of matrices belonging to $\mathcal{G}_{\alpha}(x,y)$. Hence, in the following, we will refer to $\hat{\mathcal{G}}_{\alpha}$ as the recommendation rule (similarly, $\hat{\mathcal{Z}}_\alpha$ is the guess of the agent for the set $\mathcal{Z}_{\alpha}(x,y)$).

Among the possible learning algorithms, we are interested in those that are $(\epsilon, \delta)$-correct w.r.t.~the Hausdorff distance. More precisely, the algorithm has to guarantee that:
\begin{align*}
    &\Prob( H(\mathcal{G}_{\alpha}(x,y), \hat{\mathcal{G}}_{\alpha}) \ge \epsilon) \le \delta
\end{align*}
in the general-sum case, with an analogous requirement for $\mathcal{Z}_{\alpha}$ in the zero-sum setting.

%

We make a remark that distinguishes the two learning objectives. At first, one might think that the problem of learning $\mathcal{Z}_\alpha(x,y)$ could be easily solved once we have a set $\hat{\mathcal{G}}_\alpha$ that approximates $\mathcal{G}_\alpha(x,y)$. Nonetheless, we did not find any trivial way to slice a generic set $\hat {\mathcal{G}}_\alpha$ so that it approximates $\mathcal{Z}_\alpha(x,y)$. On one hand, one might be tempted to obtain $\hat{\mathcal{Z}}_\alpha$ by slicing $\hat{\mathcal{G}}_\alpha$ on $A = -B$. However, this only ensures one direction of the Hausdorff distance.\footnote{Note that, in principle, the resulting set could even be empty.} Similarly, one could think of slicing $\hat{\mathcal{G}}_{\alpha}$ by taking all the pair of  matrices $(A,B)$ which are almost zero-sum, \eg $\|A - (-B) \|_{\infty} \le \epsilon$. However, this set might be too large and could fail to satisfy the other direction of the Hausdorff distance. For these reasons, we study zero-sum and general-sum games as separate problems, providing minimax rates for both these settings.

\section{Sample Complexity Lower Bound}\label{sec:lb}

\begin{figure*}[ht]
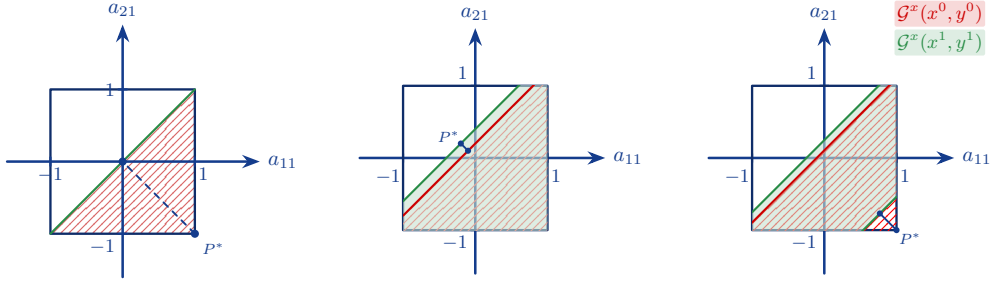

\begin{center}
\scalebox{0.8}{\centerline{\begin{tikzpicture}

\node at (-5.8,0) {\input{img/example1}};
\node at (0,0)  {\input{img/example2}};
\node at (5.8,0)  {\input{img/example3}};

\end{tikzpicture}}}
\caption{Visual representations of payoff sets $\mathcal{G}^x_\alpha(\cdot, \cdot)$ for several pairs of instances $(x^0, y^0)$ (striped \textbf{{\color{regionred} red}} region) and $(x^1, y^1)$ (solid \textbf{{\color{diaggreen} green}} region). In each figure, a point $P^*$ is used to illustrate the Hausdorff distance between the two areas. The $y$ vector is $(1,0)$ in all problems. (\emph{Left}): we set $\alpha=0$, $x^0=(0,1)$, $x^1=(\pi, 1-\pi)$ for any $\pi \in (0,1)$; (\emph{Center}): $\alpha = 0.2$, $x^0=(0,1)$, and $x^1=(\pi, 1-\pi)$ for $\pi = 0.05$; (\emph{Right}): $\alpha=0.1$, $\epsilon=0.1$, $x^0 = (\tfrac{\alpha}{2}, 1-\tfrac{\alpha}{2})$ and $x^1 =(\tfrac{\alpha(1+3\epsilon)}{2}, 1-\tfrac{\alpha(1+3\epsilon)}{2}) $.}
\label{fig:lb-areas}
\end{center}
\vskip -0.2in
\end{figure*}

In this section, we study the sample complexity of any $(\epsilon, \delta)$-correct algorithm that learns, with high probability, the set of payoffs induced by an unknown strategy profile $(x,y)$. In \Cref{sec:lb-exact} we discuss the case of $\alpha = 0$, and in \Cref{sec:lb-approx} we discuss the case when $\alpha>0$. 

\subsection{On Exact Equilibria}\label{sec:lb-exact}

We begin by highlighting a fundamental challenge in learning the feasible sets $\mathcal{G}(x,y)$ and $\mathcal{Z}(x,y)$ from data generated by an exact equilibrium.
To build intuition for the problem, we first consider a simple general-sum example. Let $n=2$ and consider two strategy profiles $(x^0, y^0)$ and $(x^1, y^1)$  such that $y^0 = y^1 = (1,0)$, $x^0=(0,1)$, and $x^1=(\pi, 1-\pi)$ for some $\pi$.\footnote{We use exactly this construction to prove \Cref{thm:lb-alpha-0-impossibility}.} Then, it is easy to see that a matrix $A$ belongs to $\mathcal{G}^x(x^0,y^0)$ if and only if $a_{21} \le a_{11}$, while $A$ belongs to $\mathcal{G}^x(x^1, y^1)$ if and only if $a_{21} = a_{11}$. Hence, as $\pi \to 0$, we have a pair of instances that are statistically indistinguishable (\ie the KL divergence between the two instances approaches $0$ as $\pi \to 0$) while the Hausdorff between the payoff sets is constant and much larger than $\epsilon$. This behavior is illustrated in \Cref{fig:lb-areas} (\emph{left}): for example, the payoff matrix with $a_{21}=-1$ and $a_{11}=1$  (marked with $P^\ast$ in the figure) belongs to $\mathcal{G}(x^0,y^0)$ but lies at a constant distance from every point on the line $a_{21}=a_{11}$. Consequently, when $x$ is unknown and must be learned from samples, one cannot distinguish between $(x^0,y^0)$ or $(x^1,y^1)$, and therefore cannot output an $\epsilon$-correct estimate of the feasible set. In general, we have the following result.
\begin{theorem}\label{thm:lb-alpha-0-impossibility}
    Let $\alpha = 0$ and $\epsilon < 1$. Then, there exists a problem instance $(x, y) \in \Delta_n^2$ such that, for any $\pi \in (0,1)$ and $(\epsilon, \delta)$-correct algorithm, it holds that:
    \begin{align}\label{eq:impossibility}
        \E_{x,y}[\tau_\delta] \ge  \Omega\left( \nicefrac{\log\left(\frac{1}{\delta}\right)}{\log\left( \frac{1}{1-\pi} \right)} \right).
    \end{align}
    This holds both for General and Zero-Sum Games.
\end{theorem}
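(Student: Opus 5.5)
We use the two-instance construction described just before the statement and a standard change-of-measure argument. Take $n=2$ (padding with extra actions played with probability zero if $n>2$). The base instance is $(x^0,y^0)$ with $x^0=(0,1)$, $y^0=(1,0)$, and the alternative is $(x^1,y^1)$ with $x^1=(\pi,1-\pi)$, $y^1=y^0$. Since $\pi$ in the statement is arbitrary, the instance $(x,y)$ of the theorem is $(x^1,y^1)$. The bound is then a lower bound on $\E_{x^1,y^1}[\tau_\delta]$, with the comparison instance $(x^0,y^0)$.

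\textbf{Step 1: the feasible sets are $\epsilon$-separated.} For the general-sum case we compute $\mathcal{G}^x(x^0,y^0)=\{a_{21}\le a_{11}\}$ and $\mathcal{G}^x(x^1,y^1)=\{a_{21}=a_{11}\}$, while the $B$-components coincide because $y$ is the same. Take $P^\ast$ with $a_{11}=1$, $a_{21}=-1$. Its $\ell_\infty$-distance to the diagonal is $1$, so $H(\mathcal{G}(x^0,y^0),\mathcal{G}(x^1,y^1))\ge 1>\epsilon$; the product structure only makes the distance larger or equal.

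For zero-sum games we want the same separation in $\mathcal{Z}$, but the constraints $x^\top A e_j\le x^\top Ay$ involve $x$. Under $x^0$ they read $a_{2j}\le a_{21}$; under $x^1$ they read $\pi a_{1j}+(1-\pi)a_{2j}\le \pi a_{11}+(1-\pi)a_{21}$. Both sets also contain the row constraint, which is $a_{21}\le a_{11}$ for $x^0$ and $a_{21}=a_{11}$ for $x^1$. Choose $P^\ast$ with $a_{11}=1$, $a_{21}=-1$, $a_{12}=a_{22}=-1$. It lies in $\mathcal{Z}(x^0,y^0)$, and every point of $\mathcal{Z}(x^1,y^1)$ has $a_{11}=a_{21}$, so the distance is again at least $1$.

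\textbf{Step 2: the output event distinguishes the instances.} Let $E=\{H(\hat{\mathcal{G}},\mathcal{G}(x^0,y^0))<1/2\}$ (assume $\epsilon\le 1/2$ after rescaling, or use the $\epsilon$-balls directly). The key point is that, by the triangle inequality, the $\epsilon$-Hausdorff balls around the two true sets are disjoint whenever $2\epsilon<1$. To avoid that restriction, use $P^\ast$ directly instead. On instance $0$ the output must contain a point within $\epsilon$ of $P^\ast$. On instance $1$ every output point lies within $\epsilon$ of the diagonal, so a point within $\epsilon$ of $P^\ast$ would require $1\le 2\epsilon$ by the triangle inequality. Take $E=\{\exists Q\in\hat{\mathcal{G}}:\|Q-P^\ast\|_\infty<\epsilon'\}$ with $\epsilon'$ slightly above $\epsilon$; this works for all $\epsilon<1$ using $1-\epsilon>\epsilon$ only if $\epsilon<1/2$. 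I expect this constant juggling to be the main fiddly point. Since the statement's $\Omega$ absorbs constants, I would rescale the construction, with all entries in $[-1,1]$ and the gap equal to $2$, so that $\epsilon<1$ suffices. Correctness then gives $\Prob_0(E)\ge1-\delta$ and $\Prob_1(E)\le\delta$.

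\textbf{Step 3: change of measure.} By the transportation lemma (Kaufmann et al.), $\E_1[\tau_\delta]\,\mathrm{KL}(x^1\otimes y^1, x^0\otimes y^0)\ge \mathrm{kl}(\delta,1-\delta)\ge\log\frac{1}{2.4\delta}$. But $\mathrm{KL}(x^1,x^0)=\infty$ because $x^0$ has a zero entry, so we must swap roles. Swapping, $\E_0[\tau_\delta]\,\mathrm{KL}(x^0,x^1)\ge \log\frac{1}{2.4\delta}$ with $\mathrm{KL}(x^0,x^1)=\log\frac{1}{1-\pi}$. This yields the bound for the instance $(x^0,y^0)$, which is uniform in $\pi$. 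The statement's instance is therefore $(x^0,y^0)$, and the bound holds for every $\pi\in(0,1)$; letting $\pi\to0$ shows $\E[\tau_\delta]=\infty$, that is, the problem is unlearnable. The same computation gives the zero-sum case.
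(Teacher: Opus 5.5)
Your construction, separation argument and change of measure are the paper's. You use the same pair $x^0=(0,1)$, $x^1=(\pi,1-\pi)$, $y^0=y^1=(1,0)$, and the same witness $P^\ast$ at distance $1$ from $\mathcal{G}(x^1,y^1)$ and $\mathcal{Z}(x^1,y^1)$. You also make the same swap to expectations under $(x^0,y^0)$, with $\mathrm{KL}(x^0,x^1)=\log\frac{1}{1-\pi}$. Your bound $\mathrm{kl}(1-\delta,\delta)\ge\log\frac{1}{2.4\delta}$ in place of the paper's $\delta\ge\frac14 e^{-\mathrm{KL}}$ from \Cref{lemma:ours-change-of-measure} is immaterial.

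\textbf{The gap is your proposed remedy in Step~2.} A separation of $2$ cannot be engineered. The zero payoff (the pair $(0,0)$ in the general-sum case) belongs to every $\mathcal{G}(x,y)$ and every $\mathcal{Z}(x,y)$, and every point of $[-1,1]^{n\times n}$ is within $\ell_\infty$-distance $1$ of it. So any two feasible sets are at Hausdorff distance at most $1$, and here the distance is exactly $1$.

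Worse, no construction can cover $\epsilon\in(\tfrac12,1)$. Consider the algorithm that ignores the data and outputs $[-\tfrac12,\tfrac12]^{n\times n}\times[-\tfrac12,\tfrac12]^{n\times n}$ (resp.\ $[-\tfrac12,\tfrac12]^{n\times n}$ for zero-sum games). Its Hausdorff error is at most $\tfrac12$ on every instance: in one direction, clip a feasible matrix into the box; in the other, match any box point with the feasible zero matrix. It is therefore $(\epsilon,0)$-correct with bounded $\tau_\delta$, so no lower bound that diverges as $\pi\to0$ can hold. The $\Omega(\cdot)$ hides multiplicative constants in the bound; it cannot absorb the admissible range of $\epsilon$.

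The right repair is to state and prove the result for $\epsilon\le\tfrac12$. There your event $E=\{\exists Q\in\hat{\mathcal{G}}:\|Q-P^\ast\|_\infty<\epsilon\}$ works as is, with no $\epsilon'$:
\begin{itemize}
\item on instance $0$, $H(\hat{\mathcal G},\mathcal G(x^0,y^0))<\epsilon$ forces $E$;
\item on instance $1$, $E$ together with $H(\hat{\mathcal G},\mathcal G(x^1,y^1))<\epsilon$ would produce a point of $\mathcal G(x^1,y^1)$ at distance strictly less than $2\epsilon\le1$ from $P^\ast$, a contradiction.
\end{itemize}
The paper's proof has exactly the same defect. It invokes \Cref{lemma:ours-change-of-measure}, whose hypothesis is $H\ge2\epsilon$, while having only $H\ge1$ and $\epsilon<1$. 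So you were right to flag this point; you were wrong only in believing it could be fixed.

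Two smaller corrections:
\begin{itemize}
\item The $B$-components do not coincide, since $\mathcal{G}^y(x,y)$ depends on $x$. It is $\{b_{21}\le b_{22}\}$ under $x^0$ but $\{\pi b_{11}+(1-\pi)b_{21}\le\pi b_{12}+(1-\pi)b_{22}\}$ under $x^1$. This is harmless, because you only need the $A$-component for the lower bound.
\item The quantifier order (instance fixed before $\pi$) makes the instance $(x^0,y^0)$ from the start, so your initial identification with $(x^1,y^1)$ should be removed.
\end{itemize}
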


These results show that there exists an instance in which learning $\mathcal{G}(x,y)$ and/or $\mathcal{Z}(x,y)$ is essentially impossible when $\alpha=0$. Indeed, taking the limit of \Cref{eq:impossibility} for $\pi \to 0$ leads to an infinite complexity lower bound. The cause of this phenomenon is intrinsic in the definition of an exact Nash equilibrium. Indeed, the condition for $A \in \mathcal{G}^x(x,y)$,\footnote{Similar reasoning holds for $\mathcal{G}^y(x,y)$ and $\mathcal{Z}(x,y)$ as well.} \ie $x^\top A y \le e_i^\top A y~\forall i \in [n]$, can equivalently be rewritten as follows:\footnote{This is a well-known result, \eg~\citet[Section~3.6]{gintis2000game}. For completeness, we report a formal statement in \Cref{lemma:nash-support}.}
\begin{subequations}\label{eq:nash-revr}
    \begin{align}
        & x^\top A y = e_i^\top A y \quad \forall i: x_i > 0 \\
        & x^\top A y \le e_i^\top A y \quad \forall i: x_i = 0.
    \end{align}
\end{subequations}
This explains the strong discontinuity of $\mathcal{G}^x(\cdot,y)$ when the strategy profile of the $x$-player changes support, as we observed in the example.

To mitigate this issue, a natural assumption is to restrict attention to problem instances in which each action is played with probability bounded away from zero. This is exactly what was done in the IRL problem \citep{metelli2023towards}, where the authors have found a similar phenomenon for the inverse reinforcement learning problem.  
Specifically, for any $x,y$, let $\pi_{\min}(x,y) \in (0,\tfrac{1}{n}]$ as:
\begin{align*}
    \pi_{\min}(x,y) = \min \left\{ \min_{i: x_i > 0} x_i, \min_{j: y_j > 0} y_j \right\}.
\end{align*}
Then, we assume that the agent is dealing with a set of instances such that $\pi_{\min}(x,y) \ge  \pi_{\min}$ for some $ \pi_{\min} > 0$.\footnote{We observe that in some settings assuming that all actions are played with positive probability could be too demanding. Another possibility, would be to relax the learning requirement, \eg to retrieve the set of matrices for which $(x,y)$ is an $\alpha$-Nash equilibrium with $\alpha > 0$. As we discuss later in this section, this problem is learnable. Furthermore, since $(x,y)$ is an exact equilibrium, the true (unknown) payoffs also belong to $\mathcal{G}_\alpha(x,y)$ or $\mathcal{Z}_\alpha(x,y)$. In this way, one can still guarantee to recover a set that contains an approximation of the true payoffs.} In this case, we derive the following lower bound.

\begin{theorem} \label{thm:lb-alpha-0}
Let $\alpha = 0$, ${\pi}_{\min} > 0$, and $\epsilon < 1/384$. Then, there exists a problem instance $(x,y)$  such that, for any $(\epsilon, \delta)$-correct algorithm, the following holds:
\begin{align}\label{eq:rate-alpha-0-pi-min}
\mathbb{E}_{x,y}[\tau_\delta] \ge \Omega\left( \frac{\log\left(\frac{1}{\delta}\right)}{\log\left( \frac{1}{1- \pi_{\min}} \right)} + \frac{n+\log\left(\frac{1}{\delta}\right)}{\epsilon^2} \right).    
\end{align}
This result holds both for General and Zero-Sum Games.
\end{theorem}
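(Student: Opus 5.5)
We prove the bound as the maximum of two separate lower bounds, each obtained by a change-of-measure argument on pairs (or families) of instances with $\pi_{\min}(x,y)\ge\pi_{\min}$. Since $\max\{a,b\}\ge (a+b)/2$, it suffices to establish each term on its own. Both terms are proved with the standard transportation inequality for stopping times. Suppose two instances $\nu^0,\nu^1$ have payoff sets at Hausdorff distance at least $2\epsilon$. Then any $(\epsilon,\delta)$-correct algorithm must produce an event $E$ with $\Prob_0(E)\ge 1-\delta$ and $\Prob_1(E)\le\delta$. This forces
\[
\E_0[\tau_\delta]\,\mathrm{KL}(\nu^0,\nu^1)\ge \mathrm{kl}(1-\delta,\delta)\ge\log(1/(2.4\delta)).
\]
The same argument works for the zero-sum case by working with $\mathcal{Z}$ in place of $\mathcal{G}^x$.

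\emph{First term.} We reuse the construction behind \Cref{thm:lb-alpha-0-impossibility}, now with $\pi=\pi_{\min}$. Take $y=(1,0)$, $x^0=(0,1)$ and $x^1=(\pi_{\min},1-\pi_{\min})$; both instances satisfy the support-minimum constraint. By \Cref{eq:nash-revr}, $\mathcal{G}^x(x^0,y)=\{a_{21}\le a_{11}\}$ while $\mathcal{G}^x(x^1,y)=\{a_{21}=a_{11}\}$. The matrix $P^*$ with $a_{11}=1$, $a_{21}=-1$ then lies at $\ell_\infty$-distance $1>\epsilon$ from the second set.

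Rather than a KL bound, we use a direct argument. Under $x^1$, with probability $(1-\pi_{\min})^t$ the first $t$ samples all equal action $2$, which is exactly what is observed under $x^0$. Hence if $\tau_\delta\le t$ on that event, the algorithm's output coincides on both instances. Correctness on both then forces $(1-\pi_{\min})^t\lesssim\delta$, which gives $t\gtrsim \log(1/\delta)/\log(1/(1-\pi_{\min}))$. This is essentially the proof of \Cref{thm:lb-alpha-0-impossibility}, with $\pi$ fixed to $\pi_{\min}$. For zero-sum games we apply the same perturbation to the $x$ side of $\mathcal{Z}$.

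\emph{Second term.} Here we use a Fano/Assouad-type packing on fully supported strategies.
\begin{itemize}
\item Fix $y$ uniform. Around $x$ uniform, use perturbations $x^{v}=\tfrac1n\mathbf{1}+\tfrac{c\epsilon}{n}v$, with $v$ ranging over a packing of $\{\pm1\}^{n}$ whose entries sum to zero (Gilbert--Varshamov gives $2^{\Omega(n)}$ elements at Hamming distance $\Omega(n)$). All entries stay of order $1/n\ge\pi_{\min}$ (one uses $\pi_{\min}\le 1/(2n)$, or places the perturbation on a sub-block).
\item Pairwise KL divergences are $O(\epsilon^2)$.
\item With full support, \Cref{eq:nash-revr} says $\mathcal{G}^x(x,y)=\{A: Ay\in\mathrm{span}(\mathbf 1)\}$ when $y$ is fully supported, which is independent of $x$. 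So perturbing $x$ alone is useless for $\mathcal{G}^x$; instead we must perturb in the constraint on the other player. Concretely, we perturb $x$ and study $\mathcal{G}^y(x,y)=\{B: x^\top B\in\mathrm{span}(\mathbf 1)^\top\}$.
\item For $v\neq v'$, take $B\in\mathcal{G}^y(x^v,y)$ with columns chosen so that $x^{v\top}B$ is constant but $x^{v'\top}B$ has spread of order $\epsilon$ relative to $\|B\|\le1$. Projecting onto the other set should then cost $\Omega(\epsilon)$ in $\ell_\infty$.
\end{itemize}
Fano then gives $\E[\tau]\gtrsim (n+\log(1/\delta))/\epsilon^2$, where the $\log(1/\delta)$ part comes from a two-point version of the same construction. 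The stated constant $\epsilon<1/384$ is the room needed to keep the scaled perturbations valid.

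\textbf{Main obstacle.} The hard step is the geometric one: showing that $H(\mathcal{G}^y(x^v,y),\mathcal{G}^y(x^{v'},y))=\Omega(\epsilon)$ with the right scaling. The difficulty is that these sets are intersections of a linear subspace with the cube $[-1,1]^{n\times n}$, so the Hausdorff distance between them is not simply the angle between the subspaces. The plan is as follows.
\begin{enumerate}
\item Construct an explicit $B$ with entries $\pm1$ chosen along the signs of $v-v'$, corrected so that $x^{v\top}B$ is constant.
\item Lower bound the distance to the other set via duality: for any $B'$ in the other set,
\[
|x^{v'\top}(B-B')(e_j-e_k)|\le 2\|B-B'\|_\infty\|x^{v'}\|_1,
\]
while $x^{v'\top}B(e_j-e_k)\gtrsim\epsilon$.
\end{enumerate}
For the zero-sum case, the same construction is applied symmetrically to both players.
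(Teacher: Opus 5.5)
Your proposal is correct and has the same architecture as the paper's proof. It has three pieces. The first is a support-identification bound from the pair $x^0=(0,1)$, $x^1=(\pi_{\min},1-\pi_{\min})$, $y=(1,0)$. The second is a two-point $\log(1/\delta)/\epsilon^2$ bound whose $n=2$ instance ($y$ uniform, $x$ shifted off uniform by $\Theta(\epsilon)$, separation measured in $\mathcal{G}^y$) is essentially that of \Cref{thm:alpha0-eps}. The third is a Fano bound over a packing of zero-sum sign vectors, with a test matrix supported on $D^\pm_{v,w}$ and a H\"older step. All three are combined through $\max\ge$ average, as in the paper.

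The differences are minor but worth recording.

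\emph{First term.} You replace the KL change of measure by a coupling on the deterministic all-$2$ stream of $x^0$. This is equivalent here, since the KL equals $\mathbb{E}_{x^0,y^0}[\tau_\delta]\log\frac{1}{1-\pi_{\min}}$, and slightly more elementary. For randomized algorithms, phrase it via the internal randomness, using convexity of $t\mapsto(1-\pi_{\min})^t$, rather than ``if $\tau_\delta\le t$''.

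\emph{The $n/\epsilon^2$ term.} Your packing is the transpose of the one in \Cref{thm:alpha0-n}. The paper fixes $x=e_1$, perturbs $y$, and works in $\mathcal{G}^x$, where the inequality constraints are made tight by a matrix whose first row is zero. You keep both strategies fully supported, perturb $x$, and work in $\mathcal{G}^y$, where the constraints are equalities. (The description $\{A: Ay\in\mathrm{span}(\mathbf 1)\}$ of $\mathcal{G}^x$ needs $x$, not $y$, to be fully supported.)

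Your choice puts the two-point and Fano instances in one family. The paper's choice makes the zero-sum case free: with $x=e_1$ and a zero first row, the column constraints of $\mathcal{Z}$ hold trivially. In your instance, a zero-sum test matrix must have both $Ay$ constant and $(x^v)^\top A$ constant. So ``correcting the columns so that $x^{v\top}B$ is constant'' is not sufficient by itself. A choice such as $A=[-b,\,b,\,0,\dots,0]$, with $b$ supported on $D^\pm$ and $(x^v)^\top b=0$, works. Your H\"older step then goes through verbatim, because every $A'\in\mathcal{Z}(x^{v'},y)$ has $(x^{v'})^\top A'$ constant.

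Both packings need $\pi_{\min}\lesssim 1/n$ to be admissible. You flag this restriction; the paper leaves it implicit.
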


We can interpret the first term as the worst-case complexity of learning the support of $(x,y)$. The second term, instead, arises from the fact that, even after the support is identified, the strategies $(x,y)$ must still be estimated accurately in order to return an $(\epsilon,\delta)$-correct set in the Hausdorff metric. 


\subsection{On Approximate Equilibria}\label{sec:lb-approx}

In this section we provide a lower bound for the case in which players are playing an $\alpha$-approximate Nash equilibrium. A natural question is whether the difficulty of learning the support of the equilibrium strategies again leads to an infinite sample-complexity bound. We argue that, in this setting, the answer is \emph{no}. The presence of $\alpha$ in the definitions of $\mathcal{G}_\alpha(x,y)$ and $\mathcal{Z}_\alpha(x,y)$ prevents these sets from having a representation analogous to \Cref{eq:nash-revr}. Intuitively, this removes the strong discontinuity exhibited by $\mathcal{G}(x,y)$ and $\mathcal{Z}(x,y)$ when $(x,y)$ vary across strategy profiles with different supports. Informally, the parameter $\alpha$ acts as a tolerance that absorbs the error contributed by actions that are played with small probability.
To have an intuitive picture of this phenomenon, it is instructive to consider the simple instance that we analyzed for the $\alpha = 0$ setting. 
We recall that $y=y^0=y^1 = (1,0)$, $x^0=(0,1)$, and $x^1 = (\pi, 1-\pi)$ for some $\pi$. Let $\Delta_A = a_{21} - a_{11}$. Then, we have that
\begin{align*}
& A \in \mathcal{G}^x_\alpha(x^0, y^0) \Leftrightarrow -2 \le \Delta_A \le \alpha; \\
& A \in \mathcal{G}^x_\alpha(x^1, y^1) \Leftrightarrow \Delta_A \in [\max\{-2,-\tfrac{\alpha}{\pi} \} , \min\{2,\tfrac{\alpha}{1-\pi}]\}.
\end{align*}
As $\pi \to 0$ (or, equivalently, as $\pi \ll \alpha$), the two sets collapse. Thus, $\alpha$ is absorbing the error of actions that are played rarely. These sets are depicted in \Cref{fig:lb-areas} (\emph{center}).


While this discussion is reassuring, an important question remains: what is the \emph{worst-case} dependence of the lower bound on the parameter $\alpha$? One can anticipate that, as $\alpha \to 0$, the sample complexity must diverge, since actions played with vanishing probability increasingly influence the geometry of the feasible sets. To address this question, we establish the following result.

\begin{theorem}
\label{thm:lb-any-alpha}
Let $\alpha \in (0, 1/4)$ and $\epsilon \le 1/128$. Then, there exists a problem instance $(x,y)$ such that, for any $(\epsilon, \delta)$-correct algorithm, the following holds:
\begin{align}\label{eq:lb-any-alpha-main-text}
    \mathbb{E}_{x,y}[\tau_\delta] \geq \Omega\!\left(
\frac{n+\log(\frac{1}{\delta})}{\epsilon^{2}\,\alpha}
\right).
\end{align}
This result holds both for General and Zero-Sum Games.
\end{theorem}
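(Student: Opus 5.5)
We follow the standard change-of-measure route for $(\epsilon,\delta)$-correct algorithms: build a base instance $(x^0,y^0)$ and a family of alternatives $(x^k,y^0)$ that are statistically close but whose feasible sets are more than $2\epsilon$ apart in Hausdorff distance. The $\log(1/\delta)$ part then comes from a two-instance argument using the transportation lemma, $\E_{0}[\tau_\delta]\,\mathrm{KL}(x^0,x^1)\ge \mathrm{kl}(\delta,1-\delta)$. The $n$ part comes from a Fano/Assouad-type argument over about $2^{\Omega(n)}$ alternatives that are pairwise $2\epsilon$-separated. The two bounds are combined by taking their maximum, which is of the same order as their sum.

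\textbf{Construction.} The construction lifts the right panel of \Cref{fig:lb-areas}. Fix $y=e_1$, so that only the first column of $A$ matters for $\mathcal{G}^x_\alpha$. Take $x^0$ to put mass about $\alpha/2$ (split as $\alpha/(2m)$) on a block of $m\approx n/2$ actions, with the remaining mass on a reference action. Perturb by setting $x^v_i=\tfrac{\alpha}{2m}(1+c\epsilon v_i)$ for $v\in\{\pm1\}^{m}$, with Gilbert--Varshamov separation between the sign vectors. Writing $a_i=a_{i1}$, the constraint is $\sum_j x_j a_j\le a_i+\alpha$ for all $i$. With the reference entry set to $-1$ and the rare entries set to $\pm1$ in a chosen pattern, we need to exhibit a matrix in $\mathcal{G}^x_\alpha(x^v,y)$ that is far from $\mathcal{G}^x_\alpha(x^{v'},y)$. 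This works because the slack $\alpha$ is essentially used up by the rare-action mass: moving mass of order $\alpha\epsilon$ shifts $x^\top Ay$ by order $\alpha\epsilon$. Crucially, the $\ell_\infty$ correction needed to restore feasibility must then be of order $\epsilon$, not $\alpha\epsilon$. The reason is that the only coordinates able to absorb the violation carry weight $O(\alpha)$ in $x^\top A y$, or have to move against the constraint $a_i+\alpha$. Dividing by this leverage gives a Hausdorff gap of order $\epsilon$, exactly as in the right panel.

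\textbf{KL computation.} The divergence is $\mathrm{KL}(x^v,x^{v'})\lesssim\sum_i \frac{(x^v_i-x^{v'}_i)^2}{x^0_i}\approx m\cdot\frac{(\alpha\epsilon/m)^2}{\alpha/m}=\alpha\epsilon^2$. Plugging this into the transportation lemma gives $\E[\tau_\delta]\gtrsim \log(1/\delta)/(\alpha\epsilon^2)$. Plugging it into Fano with about $2^{m}$ hypotheses gives $n/(\alpha\epsilon^2)$. The conditions $\alpha<1/4$ and $\epsilon\le 1/128$ keep every probability valid and leave room for the constants. For zero-sum games we use the same $x$-perturbations. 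We then check that the witness matrix can be chosen with the column constraints $x^\top Ae_j-\alpha\le x^\top Ay$ satisfied, for example by choosing the other columns suitably. The Hausdorff gap survives because the $x$-side constraints are identical.

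\textbf{Main obstacle.} The hard step is the geometric one. We must prove rigorously that the Hausdorff distance between the sets is $\Omega(\epsilon)$ rather than $\Omega(\alpha\epsilon)$, and that this holds simultaneously for all Gilbert--Varshamov-separated pairs in the $n$-dimensional packing. We would first work out the single-rare-action case explicitly, computing the projection distance from the witness $P^\ast$ by solving the small linear feasibility problem. We would then extend it coordinatewise, choosing the witness per pair so that the deficit accumulated over the $\Omega(m)$ differing coordinates forces an $\ell_\infty$ move of order $\epsilon$.
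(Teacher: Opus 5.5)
\textbf{This is a genuine gap.} Your overall architecture matches the paper's. The $\log(1/\delta)$ term comes from a two-instance transportation argument on the $n=2$ pair of the right panel: the paper takes $x^0=(\tfrac{\alpha}{2},1-\tfrac{\alpha}{2})$, $x^1=(\tfrac{\alpha+\beta}{2},1-\tfrac{\alpha+\beta}{2})$ with $\beta=\tfrac83\alpha\epsilon$, and witness first column $(1,-1)$. The $n$ term comes from a balanced sign packing with $y=e_1$, a rare block plus a reference action, $\mathrm{KL}\lesssim\alpha\epsilon^2$, and Fano.

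The gap is the step you yourself call the main obstacle. You never pin down the witness for a packing pair, and the scaling you do commit to cannot give an $\Omega(\epsilon)$ separation. That scaling is a rare block of total mass $\alpha/2$ split as $\alpha/(2m)$, reference entry $-1$, and rare entries $\pm1$. With $a=Ae_1$, write the $x$-side constraint as $F_x(a)=\sum_i x_i\bigl(a_i-\min_k a_k\bigr)\le\alpha$.
\begin{itemize}
\item With balanced $v$, every $x^v$ puts mass exactly $\alpha/2$ on the rare block. So whenever the reference entry is the minimum, the rare part is at most $2\cdot\tfrac{\alpha}{2}=\alpha$, and the constraint holds for \emph{every} instance. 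Your proposed witness therefore lies in all the feasible sets.
\item If instead the reference entry exceeds the minimum, any violation under $x^{v'}$ is at most $2\|x^v-x^{v'}\|_1=O(\alpha\epsilon)$. It is also at most $(1-\tfrac{\alpha}{2})(a_n-\min_k a_k)$, so lowering $a_n$ repairs it at $\ell_\infty$ cost $O(\alpha\epsilon)$.
\item The same repair, lowering $B_{n1}$, works on the $B$ side.
\end{itemize}
Hence all pairwise Hausdorff distances in your family are $O(\alpha\epsilon)$, not $\Omega(\epsilon)$.

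The $n=2$ picture works with mass $\alpha/2$ only because there the perturbation is unbalanced: it raises the rare mass above $\alpha/2$ while the rare entry sits at $+1$. That mechanism does not survive a coordinatewise lift. Dropping balance does not rescue it either: the only first-order signal is then the total rare mass $\tfrac{\alpha}{2}(1+c\epsilon\bar v)$, a single scalar, which cannot separate $2^{\Omega(n)}$ hypotheses.

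\textbf{The missing idea} is to give the rare block mass bounded away above $\alpha/2$, so that a tight witness can have rare entries of mixed sign. The paper uses $x^v=\bigl(\tfrac{\alpha+\gamma v_1}{n-1},\dots,\tfrac{\alpha+\gamma v_{n-1}}{n-1},1-\alpha\bigr)$ with balanced $v$. For a pair $(v,w)$ it takes the first column to be $-\kappa$ on $D^+=\{v_k=1,w_k=-1\}$, $+1$ on $D^-=\{v_k=-1,w_k=1\}$, $0$ on agreeing coordinates, and $-1$ at the reference action, where $\kappa=x^v(D^-)/x^v(D^+)=\tfrac{\alpha-\gamma}{\alpha+\gamma}$. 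The rare contribution then vanishes, so $(x^v)^\top Ae_1=-1+\alpha$ is tight. Setting the other columns to $-1$ also gives the zero-sum constraints.

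Under $x^w$, mass moves from the $-\kappa$ entries to the $+1$ entries, and the value rises by $\tfrac{4\alpha\gamma|D^+|}{(n-1)(\alpha+\gamma)}=\Theta(\alpha\epsilon)$ for $\gamma=\Theta(\alpha\epsilon)$. Now take any $A'$ with $\|A-A'\|_\infty\le t$. Using $\min_i A'_{i1}\le A'_{n1}\le -1+t$ gives $F_{x^w}(A'e_1)\ge\sum_{i<n}x^w_iA'_{i1}-\alpha A'_{n1}\ge\alpha+\Theta(\alpha\epsilon)-2\alpha t$, hence $t=\Omega(\epsilon)$. This is your leverage heuristic made rigorous. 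It needs the reference entry at the minimum \emph{and} enough rare mass for the constraint to be tight with a zero-mean rare column.
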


An important aspect of this result is that the factor $\alpha^{-1}$ multiplies \emph{all} the other terms in the bound. To understand why this happens, we provide some intuition on how we obtained the term $\log(\delta^{-1}) / \epsilon^2 \alpha$.\footnote{The term $n/(\epsilon^2\alpha)$ follows from a related but more involved argument, which we defer to Appendix~\ref{app:lb-any-alpha-gsg}.} Consider two instances with $n=2$ such that $y^1=y^0= (1,0)$, $x^0=(\tfrac{\alpha}{2}, 1-\tfrac{\alpha}{2})$, and $x^1=(\tfrac{\alpha + 3\alpha\epsilon}{2}, 1-\tfrac{\alpha+3\alpha\epsilon}{2})$. Again, let $\Delta_A = a_{21} - a_{11}$. One can verify that, for $\alpha$ and $\epsilon$ sufficiently small, it holds
\begin{align*}
& A \in \mathcal{G}^x_\alpha(x^0, y^0) \implies \Delta_A \ge -2  ;\\
& A \in \mathcal{G}^x_\alpha(x^1, y^1) \implies \Delta_A \ge -\tfrac{2\alpha}{\alpha + 3\alpha\epsilon} > -2.
\end{align*}
These sets are depicted in \Cref{fig:lb-areas} (\emph{right}). With some algebraic manipulations, one can show that the distance between these sets is at least $2\epsilon$ and, therefore, it is important for any algorithm to distinguish $(x^0, y^0)$ and $(x^1, y^1)$ to be $(\epsilon,\delta)$-correct.
However, for $\epsilon$ and $\alpha$ small, the two instances are very similar as the KL divergence between the two is at most $\epsilon^2 \alpha$.
Therefore, using change-of-measure arguments, we can obtain the term $\log(\delta^{-1}) / \epsilon^2 \alpha$. This shows that, while $\alpha$ provides a form of protection against the adverse geometry encountered when learning from exact equilibria, its effect is limited since all the other parameter the defines the instance are badly impacted by a small $\alpha$.

\section{Algorithm}\label{sec:algo}

\begin{algorithm}[!t]
	\caption{}
	\label{alg:WAS}
	\begin{algorithmic}[1]
        \REQUIRE{Number of samples $m \in \mathbb{N}$, $\alpha, \epsilon, \delta \in \mathbb{R}_{\ge 0}$}
		\WHILE{$t \in \{1, \dots, m \}$}
        \STATE{Observe $X_t, Y_t \sim x,y$ and update $\hat{x}(t), \hat{y}(t)$}
        \ENDWHILE
        \STATE{Return $\mathcal{G}_{\alpha}(\hat{x}(m), \hat{y}(m))$ or $\mathcal{Z}_{\alpha}(\hat{x}(m), \hat y(m))$}
	\end{algorithmic}
\end{algorithm}

The algorithm we developed and analyzed to attain minimax optimality is simple, and its pseudocode can be found in Procedure \ref{alg:WAS}. It takes as input the parameter of the problem, \ie $\alpha, \epsilon$ and $\delta$, and an additional parameter $m \in \mathbb{N}$, that denotes the number of samples to collect. Then it simply gathers $m$ observations for the two players $x$ and $y$, and it stops by returning all the matrices that are an ($\alpha$-)Nash equilibrium for the empirical strategy profiles. In other words, the recommendation rule is $\mathcal{G}_\alpha(\hat x(m), \hat y(m))$ for general-sum games and $\mathcal{Z}_\alpha(\hat x(m), \hat y(m))$ for zero sum games, where $\hat x(m)$ and $\hat y(m)$ are the maximum likelihood estimators of $x$ and $y$ after $m$ samples (\ie empirical means). In the following, we drop the dependency on $m$ in $\hat x(m)$ and $\hat y(m)$, and we simply write $\hat x$ and $\hat y$.

In the rest of this section, we show that, by properly choosing the parameter $m$, Procedure \ref{alg:WAS} is $(\epsilon, \delta)$-correct and, furthermore, this choice of $m$ attains the same rate that we presented in the lower bound section. Specifically, in \Cref{sec:alg-exact} we discuss our results for the exact equilibrium case and in \Cref{sec:alg-approx} for the approximate one. 

\paragraph{Representation of $\mathcal{G}_\alpha(\hat x, \hat y)$ and $\mathcal{Z}_\alpha(\hat x, \hat y)$}
In the introductory section, we noted that the recommendation rule is in one-to-one correspondence with a set of matrices. Here, we note that the sets produced by our procedure can be represented as linear feasibility problems. Indeed, given $\hat x$ and $\hat y$, the set $\mathcal{G}_\alpha(\hat x, \hat y)$ and $\mathcal{Z}_\alpha(\hat x, \hat y)$ can be described by collections of linear inequalities.

\subsection{Optimal Rates for Exact Equilibria}\label{sec:alg-exact}
As discussed in \Cref{sec:lb-exact}, when $\alpha=0$ the problem of learning the payoff sets is in general ill-posed, as the sample complexity may be infinite. Therefore, we adopt the same assumption of \Cref{thm:lb-alpha-0} and restrict our attention to instances in which  $\pi_{\min}(x,y) \ge \pi_{\min} > 0$. In this setting, we can show the following.

\begin{theorem}\label{thm:ub-alpha-0}
    Consider $\alpha = 0$ and let $$m \in \widetilde{{\mathcal{O}}} \left( \frac{\log\left( \frac{n}{\delta} \right)}{\log\left( \frac{1}{1-\pi_{\min}} \right)} + \frac{n+\log\left(\frac{1}{\delta}\right)}{\epsilon^2} \right).$$ Then, Procedure \ref{alg:WAS} is $(\epsilon, \delta)$-correct and its sample complexity $\tau_\delta$ is given by $m$, both for General and Zero-Sum Games.\footnote{Here, $\widetilde{\mathcal{O}}$ hides multiplicative constants and logarithmic factors.}
\end{theorem}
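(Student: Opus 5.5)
The plan has two stages: first show that, with high probability, the empirical strategies $(\hat x,\hat y)$ have the same supports as $(x,y)$; then, once supports agree, bound the Hausdorff distance between $\mathcal{G}(\hat x,\hat y)$ and $\mathcal{G}(x,y)$ (and likewise for $\mathcal{Z}$) by a constant times $\|\hat x - x\|_1+\|\hat y-y\|_1$. The result then follows from standard $\ell_1$ concentration of empirical distributions.

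\textbf{Support identification.} Since $\hat x_i=0$ whenever $x_i=0$, only the actions in the support can cause trouble. For each $i\in\supp(x)$ we have $x_i\ge\pi_{\min}$, so $\Prob(\hat x_i=0)=(1-x_i)^m\le(1-\pi_{\min})^m$. A union bound over at most $2n$ actions (both players) shows that $m\ge \log(4n/\delta)/\log\frac{1}{1-\pi_{\min}}$ gives $\supp(\hat x)=\supp(x)$ and $\supp(\hat y)=\supp(y)$ with probability at least $1-\delta/2$. This produces the first term of the rate.

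\textbf{Concentration.} The classical bound $\|\hat x-x\|_1\le\sqrt{(2n\log 2+2\log(4/\delta))/m}$ holds with probability $1-\delta/4$, and the same for $y$. Hence $m\gtrsim(n+\log(1/\delta))/\epsilon^2$ makes both $\ell_1$ errors at most $c\,\epsilon\,\pi_{\min}$, or at most $c\epsilon$ if the perturbation lemma below can be proved without a $\pi_{\min}$ factor. Any $\pi_{\min}$ factor should be avoided, so that the rate matches \Cref{thm:lb-alpha-0}.

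\textbf{Perturbation lemma (main obstacle).} Fix the common supports $S_x,S_y$. By \Cref{eq:nash-revr}, membership $A\in\mathcal{G}^x(x,y)$ means that the vector $Ay$ is constant, equal to some $v$, on $S_x$, and satisfies $(Ay)_i\ge v$ off $S_x$. The $x$-player's set therefore depends only on $y$ (given the support), and symmetrically the $y$-player's set depends only on $x$. For the projection step, take $A$ feasible for $y$ and construct $A'$ feasible for $\hat y$ with $\|A-A'\|_\infty\lesssim\|\hat y-y\|_1$. The idea is to change row $i$ of $A$ by a constant shift. The residual is $r_i=(A\hat y)_i-(Ay)_i$, and $|r_i|\le\|\hat y-y\|_1$. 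Pick a target value $\hat v$, say $\hat v=v$, and set $A'_{i\cdot}=A_{i\cdot}-(r_i)\mathbf{1}^\top$ for $i\in S_x$. Because $\hat y\in\Delta_n$, this makes $(A'\hat y)_i=v$. For rows off the support, shift down only if needed, which preserves the inequality. The obstacle is the box constraint $[-1,1]$: shifting can push entries outside the box. To handle this, first shrink $A$ toward a scaled copy, $A\mapsto(1-\eta)A$ with $\eta\approx\|\hat y-y\|_1$. The shrunk matrix stays feasible because the constraints are homogeneous. This costs $\eta$ in $\ell_\infty$ and creates slack $\eta$ for the shift, so the total distance is $O(\|\hat y-y\|_1)$. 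The argument is symmetric in the direction $\hat y\to y$. The same construction applied to the columns handles $B$, and because the Hausdorff distance of a product set is the max of the two components, the bounds combine. For $\mathcal{Z}$, both constraint families act on the single matrix $A$, so row and column corrections must be done simultaneously. To do this, I would correct by a rank-one-type update $A'=(1-\eta)A-r\mathbf{1}^\top-\mathbf{1}s^\top+c\mathbf{1}\mathbf{1}^\top$, solving the two linear conditions on $S_x\times S_y$ with bounded coefficients. Verifying that this yields a feasible matrix with error $O(\|\hat x-x\|_1+\|\hat y-y\|_1)$ is where I expect the main difficulty.

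Combining the support event and the concentration events with a union bound, and setting the constant in $m$, gives $(\epsilon,\delta)$-correctness with $\tau_\delta=m$ as stated.
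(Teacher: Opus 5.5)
Your support-identification and $\ell_1$-concentration steps coincide with the paper's; your bound even avoids the $\log(6m)$ factor. Your shrink-then-shift construction for $\mathcal{G}$ is essentially \Cref{lemma:matconstr}, which shifts first and then divides by $1+2\|y-\hat y\|_1$. There is one sign slip: an off-support row with $(A\hat y)_i<v$ must be shifted \emph{up}, by at most $\|\hat y-y\|_1$, because its loss has to stay at least the common value.

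The genuine gap is the zero-sum case, which is half of the theorem. You write down $A'=(1-\eta)A-r\mathbf{1}^\top-\mathbf{1}s^\top+c\mathbf{1}\mathbf{1}^\top$ but stop exactly at the step that constitutes the proof. Your phrase ``solving the two linear conditions on $S_x\times S_y$'' suggests you expect the row and column corrections to interact. The missing idea is that they do not:
\begin{itemize}
\item Since $\hat x^\top\mathbf{1}=1$, the row shift $r\mathbf{1}^\top$ changes $\hat x^\top A'$ by the same scalar $\hat x^\top r$ in every column.
\item Since $\mathbf{1}^\top\hat y=1$, the column shift $\mathbf{1}s^\top$ changes $A'\hat y$ by the same scalar $s^\top\hat y$ in every row.
\item Both families of constraints (equalities on the support, one-sided inequalities off it) are invariant under adding a constant vector. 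So each correction leaves the other player's constraints intact.
\item The two common values agree automatically, because both equal $\hat x^\top A'\hat y$.
\end{itemize}

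Concretely, let $A_\eta=(1-\eta)A\in\mathcal{Z}(x,y)$ and $v=x^\top A_\eta y$, and take $c=0$. Set $r_i=(A_\eta\hat y)_i-v$ for $i\in\supp(x)$ and $r_i=-(v-(A_\eta\hat y)_i)_+$ otherwise. Set $s_j=(\hat x^\top A_\eta)_j-v$ for $j\in\supp(y)$ and $s_j=((\hat x^\top A_\eta)_j-v)_+$ otherwise. Then:
\begin{itemize}
\item By \Cref{lemma:nash-support} and equal supports, $|r_i|\le(1-\eta)\|\hat y-y\|_1$ and $|s_j|\le(1-\eta)\|\hat x-x\|_1$.
\item $A'\hat y$ equals $v-s^\top\hat y$ on $\supp(\hat x)$ and is at least that elsewhere.
\item $\hat x^\top A'$ equals $v-\hat x^\top r$ on $\supp(\hat y)$ and is at most that elsewhere.
\item With $d=\|\hat x-x\|_1+\|\hat y-y\|_1$ and $\eta=d/(1+d)$, you get $A'\in[-1,1]^{n\times n}$ and $\|A-A'\|_\infty\le 2d$.
\end{itemize}
The reverse direction is symmetric.

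The paper instead avoids any simultaneous correction. It splits $H(\mathcal{Z}(x,y),\mathcal{Z}(\hat x,\hat y))$ via the Hausdorff triangle inequality (\Cref{corollary:separation-x-y-err}) into a pure-$y$ and a pure-$x$ perturbation. In \Cref{lemma:matconstr-1-zsg,lemma:matconstr-2-zsg} it then checks the opponent's constraints with the same computation: a shift moves the opponent's payoff vector only by a constant. Once you make the decoupling explicit, your direct route works and gives constant $2$ instead of the paper's $8$. Without it, the zero-sum half of the statement is unproved.
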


\Cref{thm:ub-alpha-0} shows the same rate of the lower bound of \Cref{thm:lb-alpha-0}, thus showing minimax optimality. As discussed in the proof sketch below, the first term can be interpreted as the number of samples required to learn the support of $(x,y)$, while the second one, is the number of samples needed for the Hausdorff distance between $\mathcal{G}(x,y)$ and $\mathcal{G}(\hat x, \hat y)$ to be bounded by $\epsilon$, once the support has been identified. 

We now present a proof sketch of \Cref{thm:ub-alpha-0} for the general-sum game problem. Then, we will discuss how to extend this reasoning to the zero-sum setting.

\paragraph{Proof sketch of \Cref{thm:ub-alpha-0} (General-Sum Games)} 
The proof works as follows. We fix a generic $m \in \mathbb{N}$, and we derive a high-probability upper bound on the Hausdorff distance between $\mathcal{G}(x,y)$ and $\mathcal{G}(\hat x, \hat y)$. Then, $m$ is chosen so that this upper bound is below $\epsilon$. The line that we follow to derive this upper bound is inspired by the lower bound of \Cref{sec:lb}. In particular, we know that the support of $(x,y)$ and $(\hat x, \hat y)$ should match in order to avoid the degenerate behavior that we identified. Hence, we start with the assumption that $(x,y)$ and $(\hat x, \hat y)$ are such that $\supp(x)=\supp(\hat x)$ and $\supp(y)=\supp(\hat y)$, and we later discuss how to guarantee that this holds in high probability once $m$ is large enough. Under the assumption that $\supp(x)=\supp(\hat x)$ and $\supp(y)=\supp(\hat y)$, we upper bound $H(\mathcal{G}(x,y), \mathcal{G}(\hat x, \hat y))$ by showing that for any $(A,B) \in \mathcal{G}(x,y)$ there exists $(\hat A, \hat B) \in \mathcal{G}(\hat x, \hat y)$ such that:\footnote{We also show that the symmetric claim is true, since we need both directions to upper bound the Hausdorff distance.}
\begin{align}\label{eq:constr-main}
    \|A-\hat A\|_{\infty} \le 4 \| y - \hat y\|_1,\, \|B-\hat B \|_{\infty} \le 4\|x - \hat x \|_1.
\end{align}
Note that this directly gives us an upper bound on the Hausdorff distance of $ 4(\|x - \hat x \|_1+ \|y - \hat y \|_1)$. In other words, the error scales linearly with the $\ell_1$-distance between the true strategy profile and the empirical one. 

Now, we discuss how to obtain \Cref{eq:constr-main} (we do that for $A$, and the construction for $B$ follows from a similar reasoning).
Starting from $A \in \mathcal{G}^x(x,y)$ (and assuming equal support), we know that if $\hat A \in \mathcal{G}^x(\hat x, \hat y)$, then for all $i \in \supp(x)$, $\hat x^\top \hat{A} \hat y = e_i^\top \hat{A} \hat y$. Then, since we want to construct $\hat A$ close to $A$, our goal is to "redistribute" the payoffs of $A$ according to the shift of $\hat x$ and $\hat y$ so that, in all the rows in $\supp(\hat x)$, the value obtained by the x-player is left unchanged. To this end, for any $i \in \supp(\hat x)$, we first define the following quantity:
\begin{align}\label{eq:delta-main}\textstyle
    \Delta_i = \max_{k \in \supp(\hat x)} \sum_{j \in [n]} \hat y_j A_{kj} - \sum_{j \in [n]} \hat y_j A_{ij}.
\end{align}
Here, $\Delta_i \in [0, 2\|y-\hat y \|_1]$ represents the maximum payoff difference (for the $x$-player) in playing the pure strategy $i$ w.r.t.~playing any other strategy $k \in \supp(\hat x)$. Intuitively, this is the shift, for actions in $\supp(\hat x)$, that $\hat A$ should meet so that $\hat A  \in \mathcal{G}^x(\hat x, \hat y)$.
Instead, for all the rows that do not belong to $\supp(\hat x)$, we only need to ensure that their value remains greater than those of actions in $\supp(\hat x)$.
Given these intuitions, we define $\hat A$ as follows: 
\begin{equation}\label{eq:mat-constr-a-gsg-game-main}
\hat A_{ij} :=
\begin{cases}
\dfrac{A_{ij}}{1+2\|y - \hat y\|_1}, & \text{if } i \in S^\star_{\hat x},\\[12pt]
\dfrac{A_{ij} + \Delta_i^x}{1+2\|y - \hat y\|_1}, & \text{if } i \in \supp(\hat x) \setminus S^\star_{\hat x},\\[12pt]
\dfrac{A_{ij} + 2\|y - \hat y\|_1}{1+2\|y - \hat y\|_1}, & \text{if } i \notin \supp(\hat x),
\end{cases}
\end{equation}
where $S^{\star}_{\hat x} = \argmax_{k \in \supp(\hat x)} \sum_{j \in [n]} \hat y_j A_{kj}$. Ignoring the denominator, which is simply a rescaling to guarantee that $\hat A$ remains in $[-1,1]^{n \times n}$, we observe that the construction does not modify values in $S^\star_{\hat x}$, but it increases the values of other actions played in $\hat x$ so that any action in $\supp (\hat x)$ yields the same value. 
Moreover, this construction increases the value of all the actions which are not played by $\hat x$, so that they do not obtain values which are better than actions that are actually played by $\hat x$. It is then possible to verify that $\hat A$ belongs to $\mathcal{G}^x(\hat x, \hat y)$, and that $\|A- \hat A \|_{\infty} \le 4 \|y-\hat y\|_1$.

Finally, to conclude the proof, it is sufficient to upper bound with high-probability $\|x - \hat x \|_1$ and $\|y - \hat y \|_1$, and to guarantee that the support of $(x,y)$ aligns with that of $(\hat x, \hat y)$. Standard concentration tools yield $\mathcal{O}(\sqrt{(n+\log(\delta^{-1})/m})$ for the former and $\widetilde{\mathcal{O}}(\log(\delta^{-1})/\log((1-\pi_{\min})^{-1}))$ for the latter. Putting everything together this gives \Cref{thm:ub-alpha-0}.

\paragraph{What changes in Zero-Sum Games} The proof for the zero-sum game problem follows from a similar argument; however, there is an important difference in how to upper bound the Hausdorff distance once the support of $(x,y)$ and $(\hat x, \hat y)$ matches. Specifically, constructing $\hat A \in \mathcal{Z}(\hat x, \hat y)$ is more challenging as now $\hat A$ should satisfy the constraints for both players simultaneously. In particular, we show in \Cref{app:zsg-alpha0} that a matrix $\hat A$ constructed as in \Cref{eq:mat-constr-a-gsg-game-main} fails to meet the constraints for the $y$-player. To solve this issue (\ie building a matrix that accounts at the same time for changes in $x$ and $y$), we show that we can upper bound $H(\mathcal{Z}(x, y), \mathcal{Z}(\hat x, \hat y))$ as:
\begin{align*}
    H(\mathcal{Z}(x, y), \mathcal{Z}(\hat x, y)) + H(\mathcal{Z}(\hat x, y), \mathcal{Z}(\hat x, \hat y)).
\end{align*}
This error decomposition allows us to apply the technique for general-sum games even in the zero-sum setting. 

\subsection{Optimal Rates for Approximate Equilibria}\label{sec:alg-approx}

Next, we present the results for the case in which the two players are playing an approximate Nash equilibrium. The following result summarizes our findings. 

\begin{theorem}\label{thm:ub-any-alpha}
    Consider $\alpha > 0$ and let 
    \begin{align*}
        m \in {\mathcal{O}} \left( \frac{n+\log\left( \frac{1}{\delta} \right)}{\epsilon^2 \alpha } + \frac{\sqrt{n} \log\left( \frac{1}{\delta} \right)}{ \epsilon \alpha} \right).
    \end{align*}
    Then Procedure \ref{alg:WAS} is $(\epsilon, \delta)$-correct and its sample complexity $\tau_\delta$ is given by $m$,
    both for General and Zero-Sum Games.
\end{theorem}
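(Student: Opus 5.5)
We fix $m$ and bound $H(\mathcal{G}_\alpha(x,y),\mathcal{G}_\alpha(\hat x,\hat y))$ with high probability, then choose $m$ so that the bound is at most $\epsilon$. Since $\mathcal{G}_\alpha=\mathcal{G}^x_\alpha\times\mathcal{G}^y_\alpha$ and the metric is $\ell_\infty$, it suffices to handle $\mathcal{G}^x_\alpha$; the $y$-player term is symmetric. We further split the error as
$$H(\mathcal{G}^x_\alpha(x,y),\mathcal{G}^x_\alpha(\hat x,y))+H(\mathcal{G}^x_\alpha(\hat x,y),\mathcal{G}^x_\alpha(\hat x,\hat y)).$$
The zero-sum case uses the same triangle decomposition as in the $\alpha=0$ discussion, interpolating through $(\hat x,y)$.

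\textbf{Changing $y$ with $x$ fixed.} Given $A\in\mathcal{G}^x_\alpha(x,y)$, we have $|x^\top A y-x^\top A\hat y|\le\|y-\hat y\|_1$ and likewise for each row. Hence every constraint is violated by at most $2\|y-\hat y\|_1$. We repair it by a shrink-and-shift construction. Set
$$\hat A=\frac{A+c\,\mathbf{1}\mathbf{1}^\top}{1+c}$$
with $c$ proportional to $\|y-\hat y\|_1$, or use a version that shifts only rows $i$ with small value. This shows the error is $\mathcal{O}(\|y-\hat y\|_1)$ with no $\alpha^{-1}$ factor. A softer alternative is a convex combination $\hat A=(1-\lambda)A+\lambda A_0$, where $A_0$ has slack $\alpha$ in every constraint (e.g. $A_0$ with row $i$ equal to a constant $\ge$ all others' values). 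Since each constraint is affine in $A$, mixing in slack $\lambda\alpha$ absorbs a violation of $2\|y-\hat y\|_1$ once $\lambda\approx 2\|y-\hat y\|_1/\alpha$. This costs $\mathcal{O}(\|y-\hat y\|_1/\alpha)$, which is worse. I would therefore first try to avoid $\alpha^{-1}$ here via the direct row shift. Either way, with $\|y-\hat y\|_1\lesssim\sqrt{(n+\log(1/\delta))/m}$, this piece contributes at most the stated rate.

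\textbf{Changing $x$ with $y$ fixed: the main obstacle.} Now the constraint $x^\top Ay\le e_i^\top Ay+\alpha$ changes only through the scalar $v(x)=x^\top Ay$. Going from $x$ to $\hat x$ shifts the left-hand side by $(\hat x-x)^\top Ay$ uniformly over $i$. A mixing argument toward a slack-$\alpha$ matrix, as above, gives error $\mathcal{O}(\alpha^{-1}|(\hat x-x)^\top Ay|)$, which matches the sketch in the introduction. The key step is a uniform concentration bound
$$\sup_{A\in\mathcal{G}^x_\alpha(x,y)}|(\hat x-x)^\top Ay|\lesssim\sqrt{\alpha(n+\log(1/\delta))/m}+\sqrt n\log(1/\delta)/m.$$
The vector $u=Ay\in[-1,1]^n$ ranges over vectors with $u_i\ge x^\top u-\alpha$. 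After centering $u$ at $v=x^\top u$, write $w=u-v$. Then $w_i\ge-\alpha$, $w\in[-2,2]^n$, $x^\top w=0$, and $(\hat x-x)^\top u=\hat x^\top w$. The variance of $w_{X_t}$ is $\sum_i x_iw_i^2\le 2\sum_i x_i|w_i|\cdot\max|w|$. Because $x^\top w=0$ and the negative part is at most $\alpha$, we get $\sum_i x_i|w_i|\le 2\alpha$, so the variance is $\mathcal{O}(\alpha)$. Bernstein's inequality for a fixed $w$ then gives $\sqrt{\alpha\log(1/\delta)/m}+\log(1/\delta)/m$.

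For uniformity, the supremum of $\hat x^\top w$ over this polytope is a fractional-knapsack-type LP in $w$. Its extreme points have a threshold structure: $w_i=-\alpha$ on one set, $w_i=2$ on another, with at most one fractional coordinate. A union bound over at most $2^n$ such sets replaces $\log(1/\delta)$ by $n+\log(1/\delta)$. Alternatively, an $\epsilon$-net over the polytope achieves the same. I expect controlling this union bound carefully, so that the lower-order term becomes $\sqrt n\log(1/\delta)/(\epsilon\alpha)$ after dividing by $\alpha$, to be the technical crux.

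\textbf{Reverse direction and conclusion.} The reverse Hausdorff direction, starting from $\hat A\in\mathcal{G}^x_\alpha(\hat x,\hat y)$, uses the same bound with the roles of the true and empirical profiles swapped. The needed variance bound now holds for $\hat x$, and its concentration is obtained from the same supremum. Dividing the bound by $\alpha$ and setting it equal to $\epsilon$ yields
$$m\gtrsim\frac{n+\log(1/\delta)}{\epsilon^2\alpha}+\frac{\sqrt n\log(1/\delta)}{\epsilon\alpha}.$$
Adding the $y$-parts, which are of lower order, gives the theorem.
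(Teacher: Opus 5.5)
Your forward own-change argument is sound and is essentially the paper's argument in cleaner form. Mixing toward a slack matrix such as $A_0=0$ is exactly the paper's rescaling $\hat A=\lambda A$. Centring $w=Ay-(x^\top Ay)\mathbf{1}$ gives $x^\top w=0$, $w\ge-\alpha$ and variance $\mathcal{O}(\alpha)$, and your union bound over the vertices of $\{w:\ x^\top w=0,\ -\alpha\le w\le 2\}$ plays the role of the paper's fractional-knapsack reduction to sets $S\in\mathcal{S}_\alpha(x)$.

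The concrete gap is the opponent-change step. Your displayed repair $\hat A=(A+c\,\mathbf{1}\mathbf{1}^\top)/(1+c)$ only shrinks. Adding a constant to every entry leaves each difference $x^\top A\hat y-e_i^\top A\hat y$ unchanged, so a violation of size $\alpha+2\|y-\hat y\|_1$ disappears only once $c\ge 2\|y-\hat y\|_1/\alpha$. That is again a cost of order $\|y-\hat y\|_1/\alpha$. Your ``either way'' is therefore false: such a term forces $m\gtrsim (n+\log(1/\delta))/(\epsilon^2\alpha^2)$. The shift must be row-dependent and applied to every row. With $d=A(\hat y-y)$, so that $\|d\|_\infty\le\|y-\hat y\|_1$, take $\hat A=(A-d\mathbf{1}^\top)/(1+\|y-\hat y\|_1)$. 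Then $\hat A\hat y=Ay/(1+\|y-\hat y\|_1)$, every constraint of $\mathcal{G}^x_\alpha(x,\hat y)$ holds, and $\|A-\hat A\|_\infty\le 2\|y-\hat y\|_1$.

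This $\alpha$-free treatment is necessary, not cosmetic. The paper rescales for the opponent-side terms too and asserts the knapsack bound for them (second and third inequalities of its key intermediate lemma, inherited by the zero-sum version). There, however, the weights $x_i[(By)_i-B_{i\star}]$ can be negative and the bound fails. For $n=2$, $x=(\tfrac12,\tfrac12)$, $y=e_1$, $B=\begin{pmatrix}1&-1\\-1&1\end{pmatrix}$ and $\hat x=x+\eta(e_1-e_2)$, the left side is $4\eta$ while the right side is $0$. So your instinct is right; what is missing is a construction that actually works.

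Two further steps are only asserted.

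\textbf{Zero-sum games.} The reduction to $\mathcal{G}^x_\alpha$ is unavailable, because one matrix must satisfy both players' constraints. On each leg of the decomposition, one player's constraints see an own change and the other's an opponent change, so you need a combined construction. For $x\to\hat x$, for instance, take $A'=A-\mathbf{1}(\hat x-x)^\top A$.
\begin{itemize}
\item It gives $\hat x^\top A'=x^\top A$, so the column player's constraints hold exactly as before.
\item Every row difference $\hat x^\top A'y-e_i^\top A'y=\hat x^\top Ay-e_i^\top Ay$ is unchanged.
\item Divide by $1+\|\hat x-x\|_1$, then apply your rescaling for the own-change violation $((\hat x-x)^\top Ay)_+$. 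Rescaling cannot break constraints already satisfied.
\end{itemize}
This yields a matrix in $\mathcal{Z}_\alpha(\hat x,y)$ at distance $\mathcal{O}(\|\hat x-x\|_1+\alpha^{-1}((\hat x-x)^\top Ay)_+)$.

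\textbf{Reverse Hausdorff direction.} It is not ``the same supremum''. The relevant polytope $\{w:\ \hat x^\top w=0,\ -\alpha\le w\le 2\}$ depends on the data, so a union bound over its vertices is not over a fixed family. Moreover, coordinates with $\hat x_i=0<x_i$ are unconstrained and contribute the missing mass $2\sum_{i:\hat x_i=0}x_i$. The paper controls this via $\mathcal{S}_\alpha(\hat x)\subseteq\mathcal{S}_{2\alpha}(x)$ (empirical Bernstein) and a missing-mass concentration bound. The latter is where the $\sqrt n\log(1/\delta)/(\epsilon\alpha)$ term comes from, not the forward union bound, which costs only $(n+\log(1/\delta))/m$. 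Your remark that the variance bound holds under $\hat x$ can be made rigorous. One way is empirical Bernstein over the fixed family of vertex patterns, with the single fractional coordinate discretised; this also absorbs the unseen coordinates. As written, though, one of the two Hausdorff directions is unproved. With these pieces supplied, your outline does yield the theorem.
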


For $\epsilon$ sufficiently small, \ie $\epsilon \lesssim n^{-1/2}$, \Cref{thm:ub-any-alpha} matches the rate of \Cref{thm:lb-any-alpha}. Hence, Procedure \ref{alg:WAS} is minimax optimal in the relevant regime of small $\epsilon$.


\paragraph{Proof sketch of \Cref{thm:ub-any-alpha} (General-Sum Games)} As for \Cref{thm:ub-alpha-0}, we fix a generic $m \in \mathbb{N}$ and derive a high probability upper bound on the Hausdorff distance between the two sets as a function of $m$. Then, we choose $m$ so that this upper bound is below $\epsilon$. We start by employing the same decomposition that we used for zero-sum games and $\alpha = 0$ in order to upper bound $H(\mathcal{G}_\alpha(x, y), \mathcal{G}_\alpha(\hat x, \hat y))$ with 
\begin{align*}
    H(\mathcal{G}_\alpha(x, y), \mathcal{G}_\alpha(\hat x, y)) + H(\mathcal{G}_\alpha(\hat x, y), \mathcal{G}_\alpha(\hat x, \hat y)).
\end{align*}
This allows us to study only the error components that arise from one of the two strategy profiles, $x$ or $y$. For clarity of exposition, we explain how to upper bound $H(\mathcal{G}_\alpha(x,y),\mathcal{G}_\alpha(\hat x,y))$, and in particular we focus on the component corresponding to the first player, namely the sets $\mathcal{G}_\alpha^x(\cdot,\cdot)$ (as for $\mathcal{G}_\alpha^y(\cdot, \cdot)$ we have specular arguments). 
%

Thus, we start by considering any $A \in \mathcal{G}^x_\alpha(x,y)$ and construct $\hat A \in \mathcal{G}^x_\alpha(\hat x, y)$ which is close to $A$. 
To this end, we follow a reasoning that is similar to the above, \ie we first quantify $A$'s constraint violation of $\mathcal{G}_\alpha^x(\hat x, y)$ and we try to correct that error with an appropriate construction. Specifically, let us denote by $g_A$ the following quantity 
\begin{align*}
    g_A \coloneqq \max_{i \in [n]} \left( \hat x^\top A y - e_i^\top A y\right).
\end{align*}
If $g_A \le \alpha$, then $A \in \mathcal{G}^x_\alpha(\hat x, y)$ as $\alpha$ is absorbing the error that arises from $\hat x \ne x$. If $g_A \ge \alpha$, instead, we construct $\hat A$ as $\hat A = \frac{\alpha}{g_A} A$. This multiplicative construction ensures that $\hat A \in \mathcal{G}_\alpha(\hat x, y)$ since, for all $i \in [n]$, we have that:
\begin{align*}
    \hat x^\top \hat A y - e_i^\top \hat A y = \frac{\alpha}{g_A}(\hat x^\top Ay - e_i^\top A y) \le \alpha. 
\end{align*}
The difference between $A$ and $\hat A$ is then given by:
\begin{align*}
\| A - \hat A \|_{\infty} =  \mathcal{O}\left( \frac{g_A - \alpha}{g_A}\right)
\le \mathcal{O}\left(\frac{(\hat x - x)^\top A y}{\alpha} \right),
\end{align*}
where the second step follows from $g_A \ge \alpha$ and the fact that $A \in \mathcal{G}_\alpha^x(x,y)$.\footnote{We add and subtract $x^\top A y$. Then we use $x^\top A y - e_i^\top A y \le \alpha$.} 
Hence, to control the Hausdorff distance related to $\mathcal{G}_\alpha^x(x, y)$ and $\mathcal{G}_\alpha^x(\hat x, y)$ , we need to control:
\begin{align}\label{eq:ub-any-alpha-eq1-main}
    \frac{1}{\alpha}\sup_{A \in \mathcal{G}_\alpha^x(x,y)} {(\hat x - x)^\top A y}{}.
\end{align}
At first, one might be tempted to bound $(\hat x- x)^\top A y$ with $2 \|x-\hat x \|_1$. However, this naive argument yields sub-optimal rates. Since with high probability $\|x-\hat x \|_1 \le \sqrt{(n+\log(\delta^{-1})/m}$, the number of samples to ensure that \Cref{eq:ub-any-alpha-eq1-main} is below $\epsilon$ would roughly be $(n+\log(\delta^{-1}))/{\epsilon^2 \alpha^2}$, which does not match the lower bound of \Cref{thm:lb-any-alpha}.
The difficulty is that the bound $(\hat x - x)^\top A y \le 2\|x-\hat x\|_1$ fails to exploit the fact that $A\in\mathcal{G}_\alpha^x(x,y)$ and does not leverage its geometric structure.

In the following, we show how to obtain better rates by exploiting the structure of $\mathcal{G}_\alpha^x(x,y)$. Indeed, the properties of $\mathcal{G}_\alpha^x(x,y)$ will ensure that the variance of $(\hat x - x)^\top A y$ is small, thus leading to faster high-probability concentration. To this end, we need to manipulate appropriately \Cref{eq:ub-any-alpha-eq1-main}. First of all, we bound \Cref{eq:ub-any-alpha-eq1-main} as follows
\begin{align}\label{eq:ub-any-alpha-eq4-main}
    \sup_{A \in \mathcal{G}_\alpha^x(x,y)}\frac{\sum_{i: x_i > 0} (\hat x_i - x_i) (Ay)_i}{\alpha} + \frac{2\sum_{i: x_i = 0} \hat x_i}{\alpha},
\end{align}
to get rid of the components where $x_i = 0$ in the optimization problem. Let us ignore the second term for now, and focus on the remaining optimization over $A$ (we return to the second term later in the proof sketch).
We will show that this remaining optimization problem presents a hidden fractional knapsack structure.  First, let us introduce some notation. For any $A \in \mathcal{G}_\alpha^x(x,y)$, let $\star \in [n]$ be any row that belongs to $\argmin_{i \in [n]} e_i^\top A y$. Then, let $c_i = x_i [(Ay)_i - (Ay)_\star]$. Then, we can rewrite that optimization problem as:
\begin{align}\label{eq:ub-any-alpha-eq2-main}
    \frac{1}{\alpha}\sup_{c: c_i \in [0, 2x_i]} \sum_{i: x_i > 0} \frac{\hat x_i - x_i}{x_i} c_i \quad \text{s.t.~} \sum_{i \in [n]} c_i \le \alpha.
\end{align}
This problem is well-known to be solved by a greedy algorithm that sorts the indexes $i$'s according to their value (\ie $\frac{\hat x_i -x_i}{x_i}$) and allocates the maximum value of $c_i$ (\ie $c_i = 2x_i$), according to this order until the budget is fully depleted, \ie is stops once $\sum_{i: x_i > 0} c_i >  \alpha$. \Cref{eq:ub-any-alpha-eq2-main} can be rewritten as (exact steps in \Cref{app:anyalpha-gsg}):
\begin{align*}
    \frac{1}{\alpha}\sup_{S \in \mathcal{S}_\alpha(x)} \sum_{i \in S} (\hat x_i - x_i),
\end{align*}
where $\mathcal{S}_\alpha(x)$ is given by
\begin{align*}\textstyle
\mathcal{S}_\alpha(x) \coloneqq \{S \subseteq [n]: x_i > 0~\forall i \in S \land \sum_{i \in S} x_i \le \alpha/2 \}.    
\end{align*}
Now, the main point is that each set $S \in \mathcal{S}_\alpha(x)$ is such that $\sum_{i \in {S}} x_i \le \alpha /2$. Hence, we can treat $\sum_{i \in S}(\hat x_i - x_i)$ as a Bernoulli random variable with mean bounded by $\alpha /2$ (and variance bounded by $\alpha/2$). Using Bernstein's inequality, we have that $\sum_{i \in S}(\hat x_i - x_i)$ concentrates as $\mathcal{O}(\sqrt{\alpha \log(\delta^{-1})/m})$. Taking an union bound over all the possible subsets $S \in \mathcal{S}_\alpha(x)$, we obtain a rate of $\mathcal{O}(\sqrt{\alpha(n+\log(\delta^{-1}) / m)}$. Plugging this result in \Cref{eq:ub-any-alpha-eq4-main}, we have that only $\mathcal{O}(\frac{n+\log(\delta^{-1})}{\epsilon^2 \alpha})$ samples are needed to have the first error term below $\epsilon$.

Now, we still need to control the second term of \Cref{eq:ub-any-alpha-eq4-main}. It is easy to see that this term is $0$. Indeed, given that $\hat x_i$ is the empirical mean of $x_i$, $x_i = 0 \implies \hat x_i = 0$. However, this only holds for one direction of the argument, \ie when we are finding matrices $\hat A \in \mathcal{G}^x_\alpha(\hat x, y)$ which are close to $A \in \mathcal{G}_\alpha^x(x,y)$. In the opposite case, we obtain $2\sum_{i: \hat x_i = 0} x_i$, which is commonly referred to as \emph{missing mass} \citep{mcallester2003concentration} and is not equal to $0$.\footnote{There is another asymmetry in bounding that direction of the Hausdorff distance. Specifically, when constructing $A \in \mathcal{G}_\alpha(x,y)$ close to some $\hat A \in \mathcal{G}_\alpha(\hat x, y)$, the set $\mathcal{S}_\alpha(\hat x)$ is stochastic. In the proof, we show that we can reduce it to $\mathcal{S}_{2\alpha}(x)$ by only paying a minor order term in the sample complexity.} However, \citet{rajaraman2020toward} has shown that this term concentrates around its expectation fast w.r.t.~to the number of samples, \ie it decays as $\frac{\sqrt{n} \log(\delta^{-1})}{m}$ and hence only $\sqrt{n} \log(\delta^{-1})/\epsilon\alpha$ samples are needed to have that term below $\epsilon$.\footnote{Furthermore, the expectation decays also as $\mathcal{O}({m}^{-1})$).} In this way, we concluded the proof of \Cref{thm:ub-any-alpha}.


\paragraph{What changes in Zero-Sum Games}
For the zero-sum game problem, we follow an analogous argument. 
The main difference is that we define $g_A$ as:
\begin{align*}
    g_A \coloneqq \max \left\{ \max_{i \in [n]} \hat x^\top A y - e_i^\top A y, \max_{j \in [n]} \hat x^\top A e_j - \hat x^\top A y \right\}.
\end{align*}
This is due to the fact that the matrix $\hat A$ now needs to satisfy a larger set of constraints. We show that the errors introduced by this modification are bounded as in the general-sum case.

\paragraph{The challenges for $\alpha > 0$} 
We conclude by highlighting the challenges introduced by the case $\alpha>0$. The main difficulty lies in the fact that when the players are playing an approximate equilibrium, we cannot rely on strong characterizations of their behavior, \ie \Cref{eq:nash-revr}, as we did for the $\alpha = 0$ case. Furthermore, \Cref{thm:lb-any-alpha} shows that the dependence in $\alpha$ is complex, as it multiplies all the parameters of the problem due to geometric arguments that impact how much the Hausdorff distance changes when the strategy profiles have values around $\alpha$.

\section{Conclusions}

In this paper, we study the problem of learning from demonstrations the payoff sets that are compatible with an unknown, possibly approximate, equilibrium. We derived tight minimax rates by providing matching lower and upper bounds. As we highlighted, these results required careful considerations on the geometry of the set of feasible payoffs.  
Our work paves the way for several promising directions for future research. For instance, it would be interesting to extend these results beyond the notion of approximate Nash equilibrium to broader solution concepts, such as correlated equilibria. Another line of possible research, instead, could aim to extend these results beyond bimatrix games with a finite number of actions. Indeed, in several games the set of actions for each player is an infinite set. Furthermore, in this work, we assumed that the strategy profiles $(x,y)$ are fixed. Future research should focus on the case where strategy profiles are evolving, and players are learning the game through regret minimization. Finally, we note that our analysis is instance-independent. In the future, it would be exciting to develop instance-dependent lower and upper bounds that directly depend on the unknown strategy profiles $(x,y)$. Here, one could take inspiration from the optimal instance-dependent results available in the bandit literature \citep[\eg][]{garivier2016optimal,poiani2026pure}. In particular, in \citet{poiani2026pure}, the authors deals with optimal instance-dependent rates for problems that admits an infinite set of (euclidean) answers. In the future, it would be interesting to extend these results to the inverse learning problem that we presented in this work.

\section*{Acknowledgments}
This work was partially funded by the European Union. Views and opinions expressed are however those of the author(s) only and do not necessarily reflect those of the European Union or the European Research Council Executive Agency. Neither the European Union nor the granting authority can be held responsible for them.

This work is supported by an ERC grant (Project 101165466 — PLA-STEER).

\section*{Impact Statement}
This paper presents work aimed at advancing the field of machine learning. There are many potential societal consequences of our work, none of which we feel must be specifically highlighted here.

\bibliographystyle{plainnat}
\bibliography{bibliography}

\newpage
\appendix
\onecolumn

\etocdepthtag.toc{mtappendix}
\etocsettagdepth{mtchapter}{none}
\etocsettagdepth{mtappendix}{subsection}

\begin{spacing}{1}
\tableofcontents
\end{spacing}

\newpage

\begin{table}[t]
\centering
\caption{Table of Symbols}
\label{tab:symbols}
\begin{tabular}{ll}
\toprule
\textbf{Symbol} & \textbf{Description} \\
\midrule
$n$ & Number of actions per player \\
$[n]$ & Set $\{1,\dots,n\}$ \\
$\Delta_n$ & Probability simplex over $[n]$ \\
$e_i$ & $i$-th canonical basis vector \\
$\supp(x)$ & Support of $x \in \Delta_n$ \\
\midrule
$A,B$ & Payoff matrices of a general-sum bimatrix game \\
$x,y$ & Mixed strategies of the two players \\
$\alpha$ & Approximation parameter for $\alpha$-Nash equilibrium \\
\midrule
$\mathcal{G}_\alpha(x,y)$ & Set of payoff matrices $(A,B)$ for which $(x,y)$ is an $\alpha$-Nash equilibrium \\
$\mathcal{G}^x_\alpha(x,y)$ & Player-1 feasible payoff matrices \\
$\mathcal{G}^y_\alpha(x,y)$ & Player-2 feasible payoff matrices \\
$\mathcal{Z}_\alpha(x,y)$ & Zero-sum payoff matrices for which $(x,y)$ is an $\alpha$-Nash equilibrium \\
$\hat{\mathcal{G}}_\alpha, \hat{\mathcal{Z}}_\alpha$ & Learner's estimate of the feasible payoff set \\
\midrule
$(X_t,Y_t)$ & Actions observed at round $t$, sampled from $x \times y$ \\
$\hat x, \hat y$ & Empirical estimates of strategies \\
$m$ & Number of observed samples \\
$\tau_\delta$ & Stopping time of the learning algorithm \\
\midrule
$H(\cdot,\cdot)$ & Hausdorff distance (w.r.t. $\ell_\infty$ norm) \\
$\|\cdot\|_1, \|\cdot\|_\infty$ & $\ell_1$ and $\ell_\infty$ norms \\
$\epsilon$ & Target Hausdorff accuracy \\
$\delta$ & Failure probability \\
$\pi_{\min}(x,y)$ & Minimum positive probability in supports of $x$ and $y$ \\
$\pi_{\min}$ & Known lower bound on $\pi_{\min}(x,y)$ \\
\bottomrule
\end{tabular}
\end{table}

\section{Proof of the lower bounds}\label{app:lower}

\subsection{Proof of \Cref{thm:lb-alpha-0-impossibility}}\label{app:lb-alpha0-impossibility}

\begin{proof}
    The proof is split into several steps. 
    
    \paragraph{Step 1: Instance Construction}
    To prove the result, we consider the case where $n=2$. Then, we
    consider two different instances where the unknown strategy profiles $(x^0, y^0)$ and $(x^1, y^1)$ are defined as follows:
    \begin{align*}
    & x^0=(0,1), y^0=(1,0) \\
    & x^1=(\pi, 1-\pi), y^1=(1,0)   
    \end{align*}
    where $\pi$ is any number in $(0,1)$.
 
    \paragraph{Step 2: Haussdorff Distance}
    Next, we analyze the distances $H(\mathcal{G}(x^0, y^0), \mathcal{G}(x^1, y^1))$ and $H(\mathcal{Z}(x^0, y^0), \mathcal{Z}(x^1, y^1))$. In particular, we will provide lower bounds on these quantities.

    We start from $H(\mathcal{G}(x^0, y^0), \mathcal{G}(x^1, y^1))$. First of all, we note that:
    \begin{align}\label{eq:lb-alpha0-gsg-eq1}
        H(\mathcal{G}(x^0, y^0), \mathcal{G}(x^1, y^1)) \ge \max_{A \in \mathcal{G}^x(x^0, y^0)} \min_{A' \in \mathcal{G}^x(x^1, y^1)} \| A- A' \|_{\infty}.
    \end{align}
    Next, we analyze this lower bound in our simplified case. 
    For any matrix \(A=\begin{bmatrix}a_{11}&a_{12}\\ a_{21}&a_{22}\end{bmatrix} \in [-1,1]^{2\times2}\), using the definition of $\mathcal{G}^x(x,y)$, we have that:
    \begin{align*}
        & \text{(i) }  A \in \mathcal{G}^x(x^0, y^0) \iff a_{21} \le a_{11} \text{ and }  a_{21} \le a_{21} \iff a_{21} \le a_{11}; \\
        & \text{(ii) }A \in \mathcal{G}^x(x^1, y^1) \iff \pi a_{11} + (1-\pi)a_{21} \le \min\{a_{11}, a_{21} \} \iff a_{21} \le a_{11} \text{ and } a_{11} \le a_{21} \iff a_{21} = a_{11} .
    \end{align*}
    

    We now use this result to further lower bound the r.h.s. of \Cref{eq:lb-alpha0-gsg-eq1}. In particular, we have that:
    \begin{align*}
        \max_{A \in \mathcal{G}^x(x^0, y^0)} \min_{A' \in \mathcal{G}^x(x^1, y^1)} \| A- A' \|_{\infty}& =\max_{a_{11}, a_{21} \in [-1,1]^2: a_{11} \ge a_{21}} \min_{t \in [-1,1]} \max \{ |a_{11} - t|, |a_{21}-t| \} \\
        & \ge \min_{t \in [-1,1]} \max \{ |1-t|, |1+t| \} \\
        & = 1,
    \end{align*}
    where the first equality is justified by points (i) and (ii) above, together with the fact that for any $A \in \mathcal{G}(x^0,y^0)$ we can pick $A' \in \mathcal{G}(x^1, y^1)$ such that $a_{22} = a'_{22}$ and $ a_{12} = a'_{12}$, and the inequality follows by setting $a_{11} = 1$ and $a_{21} = -1$. 

    Similarly we can analyze $H(\mathcal{Z}(x^0,y^0), \mathcal{Z}(x^1, y^1))$.    
    In this case, we have:
    \begin{align*}
        & \text{(i) } A \in \mathcal{Z}(x^0, y^0) \iff a_{22} \le a_{21} \le a_{11}; \\
        & \text{(ii) } A \in \mathcal{Z}(x^1, y^1) \iff \pi a_{21} + (1-\pi)a_{22}  \le \pi a_{11} + (1-\pi)a_{21} \le \min\{a_{11}, a_{21} \}.
    \end{align*}
    Hence, we have that:
    \begin{align*}
        H(\mathcal{Z}(x^0, y^0), \mathcal{Z}(x^1, y^1)) & \ge \max_{A \in \mathcal{Z}(x^0, y^0)} \min_{A' \in \mathcal{Z}(x^1, y^1)} \| A- A' \|_{\infty} \\
        & \ge  \min_{A' \in \mathcal{Z}(x^1, y^1)} \max \{ |1-a'_{11}|, |-1-a'_{21}|  \} \tag{Pick $A \in \mathcal{Z}(x^0,y^0): a_{11}=1, a_{21}=-1$ }\\
        & \ge \min_{t \in [-1,1]} \max \{ |1-t|, |1+t| \} \tag{$A' \in \mathcal{Z}(x^1, y^1) \implies a'_{11} = a'_{22}$}\\
        & = 1.
    \end{align*}

    Therefore, the following holds:
    \begin{align}
        &  H(\mathcal{G}(x^0, y^0), \mathcal{G}(x^1, y^1)) \ge 1 \label{eq:lb-alpha0-gsg-eq2} \\
        &  H(\mathcal{Z}(x^0, y^0), \mathcal{Z}(x^1, y^1)) \ge 1. \label{eq:lb-alpha0-gsg-eq2-new} 
    \end{align}
    
    \paragraph{Step 3: Change of measure}
    We first set some notation. For any $x,y \in \Delta_n$, we have that $\mathbb{P}_{x, y} = \prod_{t=1}^{\tau_\delta} p_{x}(X_t) p_y(Y_t)$,
    where $p_q(\cdot)$ denotes the density function of $q \in \Delta_n$. Then, since \Cref{eq:lb-alpha0-gsg-eq2,eq:lb-alpha0-gsg-eq2-new} hold (and since $\epsilon < 1$), we can apply \Cref{lemma:ours-change-of-measure}. Thus for any $(\epsilon,\delta)$-correct algorithm, for both GSGs and ZSGs we obtain
    \begin{align*}
        \delta \ge \frac{1}{4} \exp\left( -\text{KL}(\mathbb{P}_{x^0, y^0}, \mathbb{P}_{x^1, y^1}) \right).
    \end{align*}
    We proceed by analyzing the r.h.s.~of this equation.
    In the case we are considering we have set $n=2$. Hence, $x,y$ are Bernoulli distributions, and we obtain that:
    \begin{align*}
        \text{KL}\left(\mathbb{P}_{x^0, y^0}, \mathbb{P}_{x^1, y^1}\right) & = \mathbb{E}_{x^0, y^0}\left[ \sum_{t=1}^{\tau_\delta} \text{KL}(p_{x^0}(X_t), p_{x^1}(X_t)) \right] \tag{Since $y^0=y^1$} \\
        & = \mathbb{E}_{x^0, y^0}\left[ \tau_\delta  \right] \text{KL}(p_{x^0}, p_{x^1})\tag{Wald's identity} \\
        & = \mathbb{E}_{x^0, y^0}\left[ \tau_\delta  \right] \log\left( \frac{1}{1-\pi} \right).\tag{By definition of $x^0$, $x^1$}
    \end{align*}
    Then, we obtained that:
    \begin{align*}
        \delta \ge  \frac{1}{4} \exp\left( -\mathbb{E}_{x^0, y^0}\left[ \tau_\delta  \right] \log\left( \frac{1}{1-\pi} \right) \right),
    \end{align*}
    which yields:
    \begin{align*}
        \mathbb{E}_{x^0, y^0}\left[ \tau_\delta  \right]  \ge \frac{\log\left( \frac{1}{4\delta} \right)}{\log\left( \frac{1}{1-\pi} \right)},
    \end{align*}
    thus concluding the proof.
\end{proof}

\subsection{Proof \Cref{thm:lb-alpha-0}} \label{app:lb-alpha-0-gsg}

\paragraph{Proof outline} \Cref{thm:lb-alpha-0} is obtained as a combination of three distinct lower bounds, \ie \Cref{thm:alpha0-pimin,thm:alpha0-eps,thm:alpha0-n}. In these theorems, we prove that there exists instances where the sample complexity of any $(\epsilon, \delta)$-correct algorithm is lower bounded by:
\begin{align*}
    \text{(i)} \,\,\frac{\log(\frac{1}{\delta})}{\log\left(\frac{1}{1-\pi_{\min}} \right)}, \,\, \text{ (ii)} \,\,\frac{\log\left( \frac{1}{\delta} \right)}{\epsilon^2}, \,\, \text{ and (iii)}\,\, \frac{n}{\epsilon^2},
\end{align*}
respectively (here, we neglected the exact multiplicative constants, which are listed below).
Importantly, these results hold both for General-Sum and Zero-Sum Games.  Then, \Cref{thm:lb-alpha-0} follows by the following argument.

\begin{proof}[Proof of \Cref{thm:lb-alpha-0}]
    It is sufficient to consider the instances presented in \Cref{thm:alpha0-pimin,thm:alpha0-eps,thm:alpha0-n}. The results then follows from applying $\max\{ a, b, c\} \ge \frac{1}{3} (a+b+c)$. 
\end{proof}

Now, we proceed by proving \Cref{thm:alpha0-pimin,thm:alpha0-eps,thm:alpha0-n}. We start from \Cref{thm:alpha0-pimin}, which is a simple corollary of \Cref{thm:lb-alpha-0-impossibility}.

\begin{theorem}\label{thm:alpha0-pimin}
    Let $\alpha = 0$ and $\epsilon < 1$. Then, there exists a problem instance such that, for any $(\epsilon, \delta)$-correct algorithm, the following holds: 
    \begin{align*}
        \E[\tau_\delta] \ge \frac{\log(\frac{1}{4\delta})}{\log\left(\frac{1}{1-\pi_{\min}} \right)}.
    \end{align*}
    This holds for both Zero-Sum and General-Sum Games.
\end{theorem}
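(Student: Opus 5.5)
The plan is to reuse, essentially verbatim, the two-instance construction and change-of-measure argument from the proof of \Cref{thm:lb-alpha-0-impossibility}, with $\pi$ set equal to $\pi_{\min}$. Concretely, take $n=2$, $y^0=y^1=(1,0)$, $x^0=(0,1)$ and $x^1=(\pi_{\min},1-\pi_{\min})$. One subtlety to address first is admissibility. Both instances must belong to the class $\{(x,y):\pi_{\min}(x,y)\ge \pi_{\min}\}$. Since $\pi_{\min}(x,y)$ is defined as a minimum only over positive entries, $(x^0,y^0)$ has $\pi_{\min}(x^0,y^0)=1$. For $(x^1,y^1)$, we have $\pi_{\min}(x^1,y^1)=\min\{\pi_{\min},1-\pi_{\min}\}=\pi_{\min}$, because $\pi_{\min}\le 1/n=1/2$. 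So both are valid instances.

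Next, the Hausdorff separation is independent of $\pi$. Step 2 of the proof of \Cref{thm:lb-alpha-0-impossibility} showed that for every $\pi\in(0,1)$ we have $H(\mathcal{G}(x^0,y^0),\mathcal{G}(x^1,y^1))\ge 1$ and $H(\mathcal{Z}(x^0,y^0),\mathcal{Z}(x^1,y^1))\ge 1$. This rests on $\mathcal{G}^x(x^1,y^1)$ forcing $a_{11}=a_{21}$, and on $\mathcal{Z}(x^1,y^1)$ forcing an equality that pins $a'_{11}$ and $a'_{21}$ together. The witness is $a_{11}=1$, $a_{21}=-1$. Both bounds therefore hold in particular at $\pi=\pi_{\min}$, and since $\epsilon<1$ they exceed $\epsilon$.

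Finally, I apply the change-of-measure lemma (\Cref{lemma:ours-change-of-measure}) exactly as before, which gives $\delta\ge\frac14\exp(-\mathrm{KL}(\mathbb{P}_{x^0,y^0},\mathbb{P}_{x^1,y^1}))$. Because $y^0=y^1$, Wald's identity reduces the KL term to $\E_{x^0,y^0}[\tau_\delta]\,\mathrm{KL}(x^0,x^1)$. This per-sample divergence equals $\log\frac{1}{1-\pi_{\min}}$, since $x^0$ puts all its mass on the second action and $x^1$ gives that action probability $1-\pi_{\min}$. Rearranging yields $\E[\tau_\delta]\ge \log(\frac{1}{4\delta})/\log(\frac{1}{1-\pi_{\min}})$, for both general-sum and zero-sum games.

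There is essentially no obstacle, since the result is a direct specialization of \Cref{thm:lb-alpha-0-impossibility}. The only point requiring care is the admissibility check above, namely that $x^0$ having an empty-probability action does not violate the $\pi_{\min}$ assumption. This holds precisely because $\pi_{\min}(x,y)$ ignores zero entries.
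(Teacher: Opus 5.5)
Your route is the same as the paper's: it proves this theorem in one line by specializing \Cref{thm:lb-alpha-0-impossibility} to $\pi=\pi_{\min}$. Your explicit admissibility check is a useful addition the paper omits, namely that $\pi_{\min}(x^0,y^0)=1$ and $\pi_{\min}(x^1,y^1)=\pi_{\min}$ under the implicit convention $\pi_{\min}\le 1/2$.

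There is, however, a genuine gap where you invoke \Cref{lemma:ours-change-of-measure}. You justify it by saying ``since $\epsilon<1$ they exceed $\epsilon$'', but the lemma's hypothesis is a separation $H\ge 2\epsilon$, not $H>\epsilon$. The factor two is what makes the events $\{H(\mathcal{G}(x^0,y^0),\hat{\mathcal{G}})<\epsilon\}$ and $\{H(\mathcal{G}(x^1,y^1),\hat{\mathcal{G}})<\epsilon\}$ disjoint via the triangle inequality, which the change-of-measure step requires. Your instances give exactly $H=1$: the witness $a_{11}=1$, $a_{21}=-1$ is at $\ell_\infty$-distance exactly $1$ from the diagonal $a_{11}=a_{21}$. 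So the argument only covers $\epsilon\le 1/2$.

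No better choice of instances can fix this, because for $\epsilon\in(1/2,1)$ the statement is false. Every feasible set $\mathcal{G}(x,y)$ (resp.\ $\mathcal{Z}(x,y)$) lies in $[-1,1]^{n\times n}$ (componentwise) and contains the zero matrix. Hence the rule that ignores all data and outputs $\hat{\mathcal{G}}=[-\tfrac12,\tfrac12]^{n\times n}\times[-\tfrac12,\tfrac12]^{n\times n}$ (resp.\ $\hat{\mathcal{Z}}=[-\tfrac12,\tfrac12]^{n\times n}$) has Hausdorff error at most $\tfrac12<\epsilon$ on every instance. Entrywise clipping moves any feasible point by at most $\tfrac12$, and every point of the small cube is within $\tfrac12$ of $0$. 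This rule is therefore $(\epsilon,\delta)$-correct for every $\delta$ with $\tau_\delta=0$, or $\tau_\delta=1$ if one observation is mandatory. That contradicts the claimed bound whenever its right-hand side exceeds $1$, e.g.\ for small $\delta$.

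The paper's proof of \Cref{thm:lb-alpha-0-impossibility} makes the same slip (``and since $\epsilon<1$''), so the correct hypothesis for both results is $\epsilon\le 1/2$. Under that hypothesis your proof is complete. The restriction costs nothing downstream, since \Cref{thm:lb-alpha-0} only uses $\epsilon<1/384$.
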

\begin{proof}
    The results follow from applying \Cref{thm:lb-alpha-0-impossibility} with $\pi = \pi_{\min}$.
\end{proof}

We now continue with \Cref{thm:alpha0-eps}.

\begin{theorem}\label{thm:alpha0-eps}
    Let $\alpha = 0$ and $\epsilon < \tfrac{1}{\sqrt{32}}$. Then, there exists a problem instance such that, for any $(\epsilon, \delta)$-correct algorithm, it holds that
    \begin{align*}
        \E[\tau_\delta] \ge \frac{\log\left( \frac{1}{4\delta} \right)}{16 \epsilon^2}.
    \end{align*}
    This holds both for Zero-Sum and General-Sum Games.
\end{theorem}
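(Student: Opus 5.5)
We prove \Cref{thm:alpha0-eps} with the same two-instance change-of-measure template used for \Cref{thm:lb-alpha-0-impossibility}. This time the two instances share the same full support, so the discontinuity coming from support changes plays no role. The only source of difficulty is the $\epsilon$-level perturbation of the probabilities themselves.

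\textbf{Instances.} Take $n=2$ and $y^0=y^1=(1,0)$. Set
\[
x^0=\left(\tfrac12,\tfrac12\right), \qquad x^1=\left(\tfrac12+2\epsilon,\ \tfrac12-2\epsilon\right),
\]
possibly rescaling the constant in front of $\epsilon$ later. For the zero-sum case I would instead also perturb $y$, or keep $y$ fully supported, so that $\mathcal{Z}$ is nontrivial. A natural choice is $y^0=y^1=(\tfrac12,\tfrac12)$ with only $x$ perturbed; the column-player constraints then fix relations among the entries $a_{ij}$ weighted by $x$.

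\textbf{Hausdorff separation.} Because $x$ is fully supported and $\alpha=0$, \Cref{eq:nash-revr} turns the row-player constraints into an equality. The key step is to show that $H\ge c\epsilon$ for a constant $c>2$, say $H\ge 4\epsilon$ under the normalization used.

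In the general-sum case with $y=(1,0)$, the row condition reads $a_{11}=a_{21}$ and does not depend on $x$. So I would instead separate the instances through the column player's set $\mathcal{G}^y$. With $x$ fully supported, $\mathcal{G}^y(x,(1,0))$ is the set of $B$ with
\[
x_1 b_{11}+x_2 b_{21}\le x_1 b_{12}+x_2 b_{22}.
\]
This is a half-space whose normal direction depends on $x$. Take the point $B^\ast=(b_{11},b_{21},b_{12},b_{22})=(1,-1,\,\cdot\,,\,\cdot\,)$, chosen to make the constraint tight for $x^0$ but violated by order $\epsilon$ for $x^1$. Its $\ell_\infty$ distance to the other set is the violation divided by the $\ell_1$ norm of the normal vector. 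That norm is $2$, because the coefficients are $x_1,x_2,-x_1,-x_2$.

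Concretely, choose
\[
b_{11}=1,\quad b_{21}=-1,\quad b_{12}=b_{22}=0 .
\]
For $x^0$ the constraint value is $0$, so $B^\ast$ is feasible. For $x^1$ it is $4\epsilon>0$, so $B^\ast$ is infeasible. The $\ell_\infty$ distance from $B^\ast$ to the $x^1$ half-space is therefore $4\epsilon/2=2\epsilon$.

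To reach the threshold $\epsilon$ with margin I may enlarge the perturbation to $x^1=(\tfrac12+2\epsilon,\tfrac12-2\epsilon)$ and check that the box constraint $[-1,1]$ does not bind at the projection. The projection moves each coordinate by $2\epsilon$ in the helpful direction, staying inside the box because $b_{12}=b_{22}=0$. For zero-sum games, the same $B=-A$ vector appears in the column constraint of $\mathcal{Z}$. I would pick $A$ with $a_{11}=a_{21}$ (so the row equality holds) and verify similarly that the violation is of order $\epsilon$. I expect this zero-sum geometry, together with keeping the projection inside $[-1,1]^{n\times n}$ while respecting both players' constraints, to be the main obstacle.

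\textbf{Change of measure.} Once $H\ge 2\epsilon$, apply \Cref{lemma:ours-change-of-measure} as before to get
\[
\delta\ge \tfrac14\exp\bigl(-\mathbb{E}[\tau_\delta]\,\mathrm{KL}(x^0,x^1)\bigr),
\]
using Wald's identity. For Bernoulli distributions, $\mathrm{KL}(\tfrac12,\tfrac12+2\epsilon)\le \frac{(2\epsilon)^2}{(\frac12+2\epsilon)(\frac12-2\epsilon)}\le 16\epsilon^2$, where the last inequality uses $\epsilon<1/\sqrt{32}$ so that the denominator is at least $1/8$; that is where the hypothesis enters. This yields $\mathbb{E}[\tau_\delta]\ge \log(1/(4\delta))/(16\epsilon^2)$. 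The constants may need adjusting by choosing the perturbation size appropriately.
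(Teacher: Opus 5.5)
Your general-sum argument is essentially the paper's. Only $x$ is perturbed, by $2\epsilon$ around the uniform strategy, and H\"older's inequality with a normal vector of $\ell_1$-norm $2$ turns a constraint violation of $4\epsilon$ into an $\ell_\infty$-distance of $2\epsilon$. That is exactly the threshold \Cref{lemma:ours-change-of-measure} needs, so no enlargement of the perturbation is required; enlarging it would in fact break the constant $16$. The only difference is that you take $y=(1,0)$, so $\mathcal{G}^y(x^1,y^1)$ is a half-space. The paper instead takes $y=(\tfrac12,\tfrac12)$ and $B=I$, so by \Cref{lemma:nash-support} the column condition becomes the hyperplane $x^\top Be_1=x^\top Be_2$. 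Your worry about the box is unnecessary: the feasible set is contained in the half-space, so its distance from $B^\ast$ can only be larger.

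There are, however, two concrete gaps.

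\textbf{The KL constant.} Your $\chi^2$ bound gives $\frac{4\epsilon^2}{\frac14-4\epsilon^2}$. A denominator of at least $\tfrac18$ yields $32\epsilon^2$, not $16\epsilon^2$, so as written you only obtain $\log(\frac{1}{4\delta})/(32\epsilon^2)$. The $\chi^2$ bound loses a factor of two here. Compute the Bernoulli KL exactly instead: $\mathrm{KL}=\frac12\log\frac{1}{1-16\epsilon^2}\le\frac{8\epsilon^2}{1-16\epsilon^2}\le 16\epsilon^2$, which is precisely where $\epsilon<1/\sqrt{32}$ is used.

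\textbf{The zero-sum case.} You leave it open, but it poses no obstacle with your own instance. Do not perturb $y$, since that would add a KL term. With $y=(1,0)$ and $x$ fully supported, $\mathcal{Z}(x,y)$ consists of the $A$ with $a_{11}=a_{21}$ and $x_1a_{12}+x_2a_{22}\le x_1a_{11}+x_2a_{21}$. Transferring $A=-B^\ast$ fails, because it violates $a_{11}=a_{21}$. Instead take $A^\ast$ with $a_{11}=a_{21}=0$, $a_{12}=1$, $a_{22}=-1$, which lies in $\mathcal{Z}(x^0,y)$. Write $z(A')$ for the flattened matrix in coordinates $(a_{11},a_{12},a_{21},a_{22})$. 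The column constraint of $\mathcal{Z}(x^1,y)$ then reads $g^\top z(A')\le 0$ with $g=(-(\frac12+2\epsilon),\frac12+2\epsilon,-(\frac12-2\epsilon),\frac12-2\epsilon)$. Here $\|g\|_1=2$ because $2\epsilon<\frac12$, and $g^\top z(A^\ast)=4\epsilon$. Hence every $A'\in\mathcal{Z}(x^1,y)$ satisfies $4\epsilon\le g^\top(z(A^\ast)-z(A'))\le 2\|A^\ast-A'\|_\infty$, i.e.\ $\|A^\ast-A'\|_\infty\ge 2\epsilon$. The same KL computation then finishes both games with a single instance. The paper's instance, $y=(\frac12,\frac12)$ with $A=I$, works equally well.
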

\begin{proof}
    The proof is split into three steps

    \paragraph{Step 1: Instance Construction} We consider the setting with $n=2$. Then, we consider two different instances where the unknown strategy profiles $(x^0, y^0)$ and $(x^1, y^1)$ are defined as follows:
    \begin{align*}
        &x^0=\left(\frac{1}{2}, \frac{1}{2}\right),~y^0=\left(\frac{1}{2}, \frac{1}{2}\right) \\
        &x^1=\left(\frac{1}{2}+\gamma, \frac{1}{2}-\gamma\right),~y^1= \left(\frac{1}{2}, \frac{1}{2}\right),
    \end{align*}
    where $\gamma \in (0,1/2)$ is a real number that will specified later in the proof.

    \paragraph{Step 2: Haussdorff Distance} Next, we analyze the distances $H(\mathcal{G}(x^0, y^0), \mathcal{G}(x^1, y^1))$ and $H(\mathcal{Z}(x^0, y^0), \mathcal{Z}(x^1, y^1))$. In particular, we will provide lower bounds on these quantities.

    We start from $H(\mathcal{G}(x^0, y^0), \mathcal{G}(x^1, y^1))$. We have that:
    \begin{align*}
        H(\mathcal{G}(x^0, y^0), \mathcal{G}(x^1, y^1)) & \ge \max_{B \in \mathcal{G}^y(x^0,y^0)} \min_{B ' \in \mathcal{G}^y(x^1, y^1)} \| B - B'\|_{\infty} \\
        & \ge \min_{B ' \in \mathcal{G}^y(x^1, y^1)} \max \{ |1-b'_{11}|, |b'_{12}|, |b'_{21}|, |1-b'_{22}   | \},
    \end{align*}
    where, in the second step, we have set \(B=\begin{bmatrix}1&0\\ 0&1\end{bmatrix} \in \mathcal{G}^y(x^0, y^0)\) and \(B'=\begin{bmatrix}b'_{11}&b'_{12}\\ b'_{21}&b'_{22}\end{bmatrix}\).\footnote{Indeed, $(x^0)^\top B y^0 = \frac{1}{2}$ and $(x^0)^\top B e_j = \frac{1}{2}$ for all $j \in [n]$. Hence, all the constraints of $\mathcal{G}^y(x^0, y^0)$ are satisfied for such $B$.} Now, using the values of $(x^1, y^1)$, we can rewrite the last optimization problem as follows:
    \begin{align*}
        & \min_{B'}\max \{ |1-b'_{11}|, |b'_{12}|, |b'_{21}|, |1-b'_{22}   | \} \\
        & \text{s.t.\quad} b'_{ij} \in [-1, 1] \quad \forall i,j \in [n] \\
        & \phantom{\text{s.t.}\quad} \tfrac12\Bigl((\tfrac12+\gamma)(b'_{11}+b'_{12})
+(\tfrac12-\gamma)(b'_{21}+b'_{22})\Bigr) \le b'_{11}(\tfrac12+\gamma)+b'_{21}(\tfrac12-\gamma),\\
& \phantom{\text{s.t.}\quad}\tfrac12\Bigl((\tfrac12+\gamma)(b'_{11}+b'_{12})
+(\tfrac12-\gamma)(b'_{21}+b'_{22})\Bigr) \le b'_{12}(\tfrac12+\gamma)+b'_{22}(\tfrac12-\gamma).
    \end{align*}
    Introducing an additional variable $t \coloneqq \|B-B' \|_{\infty} \ge 0$ and simplifying the constraints, we can rewrite this optimization problem as follows:
    \begin{equation}\label{eq:lb-alpha0-gsg-eq1-eps}
    \begin{alignedat}{2}
        & \min_{B' \in [-1,1]^{2\times 2},t \ge 0} t \\
        & \text{s.t.\quad} |1-b'_{11}|\le t,~~~|b'_{12}| \le t,~~~ |b'_{21}| \le t,~~~ |1-b'_{22}|\le t   \\
        & \phantom{\text{s.t.}\quad}(b'_{11}-b'_{12})(\tfrac12+\gamma)+(b'_{21}-b'_{22})(\tfrac12-\gamma)=0.
    \end{alignedat}
    \end{equation}
    Let $t(B')$ be the optimal value of Problem \eqref{eq:lb-alpha0-gsg-eq1-eps} for any feasible matrix $B'$.  In the following, we will prove that $t(B') \ge \gamma$ for any feasible matrix $B'$. To this end, let us introduce the following vector $g^\top = [(\tfrac12+\gamma), (-\tfrac12-\gamma), (\tfrac12-\gamma), (-\tfrac12+\gamma) ]$. Then, the last linear constraints in \eqref{eq:lb-alpha0-gsg-eq1-eps} can be rewritten as $g^\top z(B')=0$, where $z(B')$ denotes the flattened version of matrix $B'$, i.e., $z(B')=(b'_{11}, b'_{12}, b'_{21}, b'_{22})$. 
    Next, let $z_0 = (1,0,0,1)$. Note that $g^\top z_0 = 2 \gamma$. Thus, for any feasible $B'$, $g^\top (z(B') - z_0) = -2\gamma$. Therefore,
    \begin{align*}
        2\gamma & = |g^\top (z(B') - z_0)| \\
        & \le \| g \|_{1} \| z(B') - z_0 \|_{\infty} \tag{H\"{o}lder's inequality}\\
        & =\| g \|_1 t(B') \tag{Since $z_0 = z(B)$ and $t(B') = \| B'-B\|_{\infty} $} \\
        & = 2t(B') \tag{Definition of $g$ for $\gamma \in (0,\tfrac12)$} 
    \end{align*}
    Thus, we obtained that, for any feasible matrix $B'$, $t(B') \ge \gamma$. Choosing $\gamma = 2\epsilon$,\footnote{This choice is valid since $\gamma \in (0, \tfrac12)$ and $\epsilon < 1/4$ by assumption.} we obtained that 
    \begin{align}\label{eq:lb-alpha0-gsg-eq2-eps}
    H(\mathcal{G}(x^0, y^0), \mathcal{G}(x^1, y^1)) \ge 2\epsilon    .
    \end{align}

    Next, we analyze \(H(\mathcal{Z}(x^0, y^0), \mathcal{Z}(x^1, y^1))\). We have that:
    \begin{align*}
    H(\mathcal{Z}(x^0, y^0), \mathcal{Z}(x^1, y^1))
    &\ge \max_{A \in \mathcal{Z}(x^0,y^0)} 
        \min_{A' \in \mathcal{Z}(x^1,y^1)} \|A-A'\|_\infty \\
    &\ge \min_{A' \in \mathcal{Z}(x^1,y^1)}
    \max\bigl\{|1-a'_{11}|,\ |a'_{12}|,\ |a'_{21}|,\ |1-a'_{22}|\bigr\},
    \end{align*}
    where, in the second step, we have set
    \[
    A=\begin{bmatrix}1&0\\0&1\end{bmatrix}\in \mathcal{Z}(x^0,y^0).
    \footnote{Indeed, $(x^0)^\top A y^0 = \frac{1}{2}$, $e_j^\top A y^0 = \frac{1}{2}$  and $(x^0)^\top A e_j = \frac{1}{2}$ for all $j \in [n]$. Hence, all the constraints of $\mathcal{Z}(x^0, y^0)$ are satisfied.}
    \]
    Now, using the values of \((x^1,y^1)\), we can rewrite the last optimization problem as follows:
    \begin{align*}
    & \min_{A'}\max \{ |1-a'_{11}|,\ |a'_{12}|,\ |a'_{21}|,\ |1-a'_{22}| \} \\
    & \text{s.t.\quad} a'_{ij} \in [-1, 1] \quad \forall i,j \in \{1,2\} \\
    & \phantom{\text{s.t.}\quad}(\tfrac12+\gamma)a'_{11}+(\tfrac12-\gamma)a'_{21}
    \le \tfrac12\Bigl((\tfrac12+\gamma)(a'_{11}+a'_{12})+(\tfrac12-\gamma)(a'_{21}+a'_{22})\Bigr) \\
    & \phantom{\text{s.t.}\quad}(\tfrac12+\gamma)a'_{12}+(\tfrac12-\gamma)a'_{22}
    \le \tfrac12\Bigl((\tfrac12+\gamma)(a'_{11}+a'_{12})+(\tfrac12-\gamma)(a'_{21}+a'_{22})\Bigr) \\
    & \phantom{\text{s.t.}\quad}\tfrac12\Bigl((\tfrac12+\gamma)(a'_{11}+a'_{12})+(\tfrac12-\gamma)(a'_{21}+a'_{22})\Bigr)
    \le \tfrac12(a'_{11}+a'_{12}) \\
    & \phantom{\text{s.t.}\quad}\tfrac12\Bigl((\tfrac12+\gamma)(a'_{11}+a'_{12})+(\tfrac12-\gamma)(a'_{21}+a'_{22})\Bigr)
    \le \tfrac12(a'_{21}+a'_{22}).
    \end{align*}

    Introducing the additional variable \(t\coloneqq \|A-A'\|_\infty\ge 0\) and simplifying the constraints, we can rewrite the above minimization as the following optimization problem:
    \begin{equation}\label{eq:lb-alpha0-zsg-eq1-eps}
    \begin{alignedat}{2}
    & \min_{A'\in[-1,1]^{2\times 2},~t\ge 0}\quad t\\
    & \text{s.t.}\quad |1-a'_{11}|\le t,\quad |a'_{12}|\le t,\quad |a'_{21}|\le t,\quad |1-a'_{22}|\le t,\\
    & \phantom{\text{s.t.}\quad}
    (\tfrac12+\gamma)(a'_{11}-a'_{12})+(\tfrac12-\gamma)(a'_{21}-a'_{22})=0,\\
    & \phantom{\text{s.t.}\quad}\tfrac12\Bigl((\tfrac12+\gamma)(a'_{11}+a'_{12})+(\tfrac12-\gamma)(a'_{21}+a'_{22})\Bigr)
    \le \tfrac12(a'_{11}+a'_{12}) \\
    & \phantom{\text{s.t.}\quad}\tfrac12\Bigl((\tfrac12+\gamma)(a'_{11}+a'_{12})+(\tfrac12-\gamma)(a'_{21}+a'_{22})\Bigr)
    \le \tfrac12(a'_{21}+a'_{22}).
    \end{alignedat}
    \end{equation}

    Now, as we did for General-Sum Games, let $t(A')$ be the optimal value of Problem \eqref{eq:lb-alpha0-zsg-eq1-eps} for any feasible matrix $A'$.  In the following, we will prove that $t(A') \ge \gamma$ for any feasible matrix $A'$. To this end, let us introduce the following vector $g^\top = [(\tfrac12+\gamma), (-\tfrac12-\gamma), (\tfrac12-\gamma), (-\tfrac12+\gamma) ]$. Then, the second constraint in \eqref{eq:lb-alpha0-zsg-eq1-eps} can be rewritten as $g^\top z(A')=0$, where $z(A')$ denotes the flattened version of matrix $A'$, i.e., $z(A')=(a'_{11}, a'_{12}, a'_{21}, a'_{22})$. 
    Next, let $z_0 = (1,0,0,1)$. Note that $g^\top z_0 = 2 \gamma$. Thus, for any feasible $A'$, $g^\top (z(A') - z_0) = -2\gamma$. Therefore,
    \begin{align*}
        2\gamma & = |g^\top (z(A') - z_0)| \\
        & \le \| g \|_{1} \| z(A') - z_0 \|_{\infty} \tag{H\"{o}lder's inequality}\\
        & =\| g \|_1 t(A') \tag{Since $z_0 = z(A)$ and $t(A') = \| A'-A\|_{\infty} $} \\
        & = 2t(A'). \tag{Definition of $g$ for $\gamma \in (0,\tfrac12)$} 
    \end{align*}
    Thus, we obtained that, for any feasible matrix $A'$, $t(A') \ge \gamma$. Choosing $\gamma = 2\epsilon$,\footnote{This choice is valid since $\gamma \in (0, \tfrac12)$ and $\epsilon < 1/4$ by assumption.} we obtained that 
    \begin{align}\label{eq:lb-alpha0-zsg-eq2-eps}
    H(\mathcal{Z}(x^0, y^0), \mathcal{Z}(x^1, y^1)) \ge 2\epsilon.
    \end{align}

    \paragraph{Step 3: Change of measure} For any $x,y \in \Delta_n$, let $\mathbb{P}_{x, y} = \prod_{t=1}^{\tau_\delta} p_{x}(X_t) p_y(Y_t)$,
    where $p_q(\cdot)$ denotes the density function of $q \in \Delta_n$. Then, since \Cref{eq:lb-alpha0-gsg-eq2-eps,eq:lb-alpha0-zsg-eq2-eps} holds, we can apply \Cref{lemma:ours-change-of-measure}. Thus, both for GSGs and ZSGs, we have that:
    \begin{align*}
        \delta \ge \frac{1}{4} \exp\left( -\text{KL}(\mathbb{P}_{x^0, y^0}, \mathbb{P}_{x^1, y^1}) \right).
    \end{align*}
    We proceed by analyzing the r.h.s. of this equation.
    In the case we are considering we have set $n=2$. Hence, $x,y$ are Bernoulli distributions, and we obtain that:
    \begin{align*}
        \text{KL}\left(\mathbb{P}_{x^0, y^0}, \mathbb{P}_{x^1, y^1}\right) & = \mathbb{E}_{x^0, y^0}\left[ \sum_{t=1}^{\tau_\delta} \text{KL}(p_{x^0}(X_t), p_{x^1}(X_t) \right] \tag{Since $y^0=y^1$} \\
        & = \mathbb{E}_{x^0, y^0}\left[ \tau_\delta  \right] \text{KL}(p_{x^0}, p_{x^1})\tag{Wald's identity} \\
        & =  \frac12\mathbb{E}_{x^0, y^0}\left[ \tau_\delta  \right]\log\!\left(\frac{1}{1-16\epsilon^2}\right) \tag{By def. of $x^0,x^1$ and setting $\gamma=2\epsilon$}\\
        & \le \mathbb{E}_{x^0, y^0}\left[ \tau_\delta  \right] \frac{8 \epsilon^2}{1-16\epsilon^2} \tag{$-\log(1-x) \le \frac{x}{1-x}$ for $x \le 1$} \\
        & \le 16 \mathbb{E}_{x^0, y^0}\left[ \tau_\delta  \right] \epsilon^2 \tag{$\epsilon \le \sqrt{\tfrac{1}{32}} \implies (1-16\epsilon^2) \ge \tfrac12$}
    \end{align*}
    Hence, we obtained that:
    \begin{align*}
        \delta \ge  \frac{1}{4} \exp\left( -16\mathbb{E}_{x^0, y^0}\left[ \tau_\delta  \right] \epsilon^2 \right)
    \end{align*}
    Rearranging the terms yields:
    \begin{align*}
        \mathbb{E}_{x^0, y^0}\left[ \tau_\delta  \right]  \ge \frac{\log\left( \frac{1}{4\delta} \right)}{16 \epsilon^2},
    \end{align*}
    which concludes the proof.
\end{proof}

Finally, we prove \Cref{thm:alpha0-n}.

\begin{theorem}\label{thm:alpha0-n}
    Let $\alpha = 0$, $n \ge 20$, $n \mod 16 = 0$ and $\epsilon \le \frac{1}{384}$. Then, there exists a problem instance such that, for any $(\epsilon, \delta)$-correct algorithm, the following holds:
    \begin{align*}
        \E[\tau_\delta] \ge \frac{\log(2)n}{40(192)^2\epsilon^2}.
    \end{align*}
    This holds both for Zero-Sum and General-Sum Games.
\end{theorem}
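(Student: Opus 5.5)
This is a Fano-type packing argument over a large family of instances, rather than the two-point change of measure used in \Cref{thm:alpha0-eps}. We fix $y$ uniform on $[n]$ and perturb only $x$ around the uniform distribution. Using the Varshamov--Gilbert lemma, we take a family $\{v^k\}_{k=1}^{M}\subseteq\{-1,+1\}^{n/2}$ with $\log M \ge n/16$ (this is where $n \mod 16 = 0$ and $n\ge 20$ are convenient) and pairwise Hamming distance at least $n/16$. For each $k$ we set
\[
x^k_{2l-1} = \tfrac{1}{n}(1+\gamma v^k_l), \qquad x^k_{2l}=\tfrac{1}{n}(1-\gamma v^k_l),
\]
with $\gamma = c\epsilon$ for a constant $c$ (say $c=192$ or a fraction thereof, matching the constant in the statement). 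All $x^k$ have full support and $\pi_{\min}(x^k,y)\ge (1-\gamma)/n$. We also have $\mathrm{KL}(x^k,x^{k'})\le O(\gamma^2)$, so by Wald's identity the KL divergence of the observation processes up to $\tau_\delta$ is $O(\E[\tau_\delta]\epsilon^2)$.

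The core step is a separation bound: $H(\mathcal{G}(x^k,y),\mathcal{G}(x^{k'},y)) \ge 2\epsilon$ and likewise for $\mathcal{Z}$, whenever $k\neq k'$. Following \Cref{thm:alpha0-eps}, we test on a diagonal-type matrix $B^k$ with $B^k_{ij}=\mathbb{1}[\ldots]$ chosen so that $(x^k)^\top B^k e_j$ is constant in $j$. Concretely, on each pair $(2l-1,2l)$ we put entries proportional to $1/x^k_i$, scaled so they stay in $[-1,1]$. Then $B^k\in\mathcal{G}^y(x^k,y)$, since with full support the equilibrium condition becomes the equalities $(x^k)^\top B e_j = $ const. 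Any $B'\in\mathcal{G}^y(x^{k'},y)$ must satisfy the linear equalities $(x^{k'})^\top B'(e_j-e_{j'})=0$. Exactly as in \Cref{thm:alpha0-eps}, we write one such equality, for a column pair where $v^k_l\neq v^{k'}_l$, as $g^\top z(B')=0$. We then apply H\"older's inequality: $|g^\top z(B^k)| \gtrsim \gamma/n$ while $\|g\|_1\lesssim 1/n$, so $\|B'-B^k\|_\infty\gtrsim\gamma$. Because full support gives equalities and not inequalities, this bound holds even for a single differing coordinate; the Hamming separation is therefore only needed to keep the codewords distinct. For zero-sum games, the same matrix (symmetrized so the row conditions also hold under uniform $y$) works, with the equality constraint playing the same role as in \eqref{eq:lb-alpha0-zsg-eq1-eps}.

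Given the separation, any $(\epsilon,\delta)$-correct algorithm yields a test identifying $k$ with probability at least $1-\delta$, taking $\delta\le 1/2$: we pick the $k$ whose set is within $\epsilon$ of the output, which is unique by the triangle inequality for $H$. Fano's inequality then gives
\[
\tfrac12 \le \frac{\max_{k,k'}\mathrm{KL}+\log 2}{\log M}.
\]
Substituting $\log M\ge n/16$ and the KL bound, and then tracking constants, yields $\E[\tau_\delta]\ge \log(2)\,n/(40\cdot192^2\epsilon^2)$. Since Fano controls an average over instances, the conclusion is that some instance satisfies the bound.

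We expect the main obstacle to be the separation step. The difficulty is choosing $B^k$ so that it is feasible for $x^k$ and lies in $[-1,1]$, while the violation it induces for $x^{k'}$ is of order $\gamma$ rather than $\gamma/n$ after normalization. If the per-coordinate construction only gives $\gamma/n$, we will instead concentrate $B^k$'s mass on two columns and two rows, so that $\|g\|_1$ is $O(1/n)$ and the $1/n$ factors cancel. The zero-sum case, where the matrix must satisfy both players' constraints, needs separate checking.
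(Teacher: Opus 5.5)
Your overall architecture matches the paper's: a packing of sign vectors, one player's strategy perturbed by $\pm\gamma/n$ per coordinate, Wald's identity, Fano, and a Hausdorff separation certified by H\"older against an equality forced by full support. Perturbing $x$ in pairs with $y$ uniform, rather than $y$ with $x=e_1$, is a harmless mirror image, and the pairing even removes the balance constraint $\sum_k v_k=0$.

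The gap is in the separation step, and it is not a matter of constants. For $B'\in\mathcal{G}^y(x^{k'},y)$ or $\mathcal{Z}(x^{k'},y)$ with $y$ uniform, the available equalities are $(x^{k'})^\top B'(e_j-e_{j'})=0$. Each one involves \emph{every} row with weight $x^{k'}_i$, so its coefficient vector has $\|g\|_1=2$, not $O(1/n)$. There is no ``column pair where $v^k_l\neq v^{k'}_l$'': columns index the $y$-player's actions, while your perturbation lives on the rows. Moreover $g^\top z(B^k)=(x^{k'}-x^k)^\top B^k(e_j-e_{j'})$, so this argument can never certify more than $\|x^k-x^{k'}\|_1=4\gamma|D|/n$, where $D$ is the set of pairs on which $v^k$ and $v^{k'}$ differ.

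This limitation is intrinsic, not an artifact of H\"older. By \Cref{lemma:matrix-to-h} and its zero-sum analogue, when supports agree the feasible sets move by at most $O(\|x-\hat x\|_1)$ in Hausdorff distance. So two codewords differing in a single pair are only $O(\gamma/n)$ apart, whatever test matrix you use. Hence your claim that the Hamming separation only keeps codewords distinct is false. The fallback fails too: $g$ is fixed by $x^{k'}$, and a $B^k$ supported on two rows only sees the $O(\gamma/n)$ perturbation of those rows. Certifying only $O(\gamma/n)$ would force $\gamma\gtrsim n\epsilon$ and lose the rate.

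The missing idea is that the violation must accumulate over all differing coordinates, which is exactly what the $\Omega(n)$ packing distance is for. In the paper, the test matrix has entries $\tfrac13$ on $D^+_{v,w}$ and $-\tfrac{\kappa}{3}$ on $D^-_{v,w}$, and $|D^+_{v,w}|\ge n/64$ makes the violation $\tfrac{4|D^+|\gamma}{3n(1-\gamma)}=\Omega(\gamma)$.

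In your parametrization the repair is short. Set $d_{2l-1}=-v^k_l(1-\gamma v^k_l)$, $d_{2l}=v^k_l(1+\gamma v^k_l)$, and $B^k=\tfrac12\,d\,(e_1-e_2)^\top$. The signs $v^k_l$ matter; without them the contributions of differing pairs can cancel. Then:
\begin{itemize}
\item $B^k\in[-1,1]^{n\times n}$, $(x^k)^\top d=0$, and every row sum of $B^k$ vanishes, so $B^k\in\mathcal{G}^y(x^k,y)\cap\mathcal{Z}(x^k,y)$.
\item $(x^{k'})^\top B^k(e_1-e_2)=(x^{k'}-x^k)^\top d=4\gamma|D|/n$.
\item H\"older with $\|g\|_1=2$ gives $\|B'-B^k\|_\infty\ge 2\gamma|D|/n\ge\gamma/8$ for every $B'$ in $\mathcal{G}^y(x^{k'},y)$ or $\mathcal{Z}(x^{k'},y)$, using $|D|\ge n/16$.
\end{itemize}
Taking $\gamma=16\epsilon$ instead of $192\epsilon$ yields the $2\epsilon$ separation, after which your Fano argument goes through.
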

\begin{proof}As before, the proof is split into three main steps.
    \paragraph{Step 1: Instance Construction} In the following, taking inspiration from the work of \citet{jin2020reward}, we construct a set of different $n$-dimensional instances. We assume that $n \ge 16$ and that $n \mod 16 = 0$. These instances are parameterized with an $n$-dimensional vector $v \in \{-1,1 \}^n$ such that $\sum_{k=1}^n v_k = 0$ and are constructed as follows. We denote by $\mathcal{V}$ the set of vectors that satisfies these two requirements. Now, let $\gamma \in (0, \tfrac12)$; we define $(x^v, y^v)$ as follows:
    \begin{align*}
        x^v = (1, 0, \dots 0) \text{ and } y^v = \left( \frac{1+\gamma v_1}{n}, \frac{1+ \gamma v_2}{n}, \dots, \frac{1+\gamma v_n}{n}  \right).
    \end{align*}


Observe that $n \ge 2, \gamma \in (0, \tfrac12), \text{and~} v \in \{-1,1 \}^n $. Therefore, by construction, we have that $y^v\in \Delta_n$.

    In the rest of the proof, we restrict our attention to vectors $v$ belonging to a subset $\bar{\mathcal{V}}$ of $\mathcal{V}$. In particular, consider a pair $v,w \in \bar{\mathcal{V}}$ and let 
    \begin{align*}
        D^+_{v,w} = \{ k \in [n]: v_k = 1 \text{ and } w_k = -1\} \quad \quad D^-_{v,w} = \{k \in [n]: v_k=-1 \text{ and }w_k = 1 \},
    \end{align*}
    where we will drop the explicit dependence on $v,w$ when clear from context. 
    Then, we require that for any pair $v,w\in\bar{\mathcal{V}}$ it holds
    \begin{align}
        |D^+_{v,w}| = |D^-_{v,w}| \ge \frac{n}{64}.\label{eq:lb-alpha0-n-dep-eq1} 
    \end{align}
    It is easy to see that $|D^+_{v,w}| = |D^-_{v,w}|$ is granted for any pair $v,w \in \mathcal{V}$. The requirement that, \eg $|D^+_{v,w}| \ge \tfrac{n}{64}$, is instead equivalent to impose that $\sum_{k=1}^n |v_k - w_k| \ge \tfrac{n}{16}$. Thanks to \Cref{lemma:pack-arg}, we know that:
    \begin{align}\label{eq:pack-arg}
        \exists \bar{\mathcal{V}} \subseteq \mathcal{V} \,\text{ s.t. }\, \forall v,w \in \bar{\mathcal{V}},  \quad \sum_{k=1}^n |v_k - w_k| \ge \frac{n}{16} \text{~and~} |\bar{\mathcal{V}}| > 2^{\frac{n}{5}}.
    \end{align}
    In the following step, we will consider two generic profiles $(x^v,y^v)$ and $(x^w,y^w)$ such that $v,w \in \bar{\mathcal{V}}$.

    \paragraph{Step 2: Haussdorff Distance} We analyze the distances $H(\mathcal{G}(x^v, y^v), \mathcal{G}(x^w, y^w))$ and $H(\mathcal{Z}(x^v, y^v), \mathcal{Z}(x^w, y^w))$. In particular, we will provide lower bounds on these quantities.

    We start from $H(\mathcal{G}(x^v, y^v), \mathcal{G}(x^w, y^w))$. We have that:
    \begin{align*}
        H(\mathcal{G}(x^v, y^v), \mathcal{G}(x^w, y^w)) & \ge \max_{A \in \mathcal{G}^x(x^v,y^v)} \min_{A' \in \mathcal{G}^x(x^w, y^w)} \| A - A'\|_{\infty} \\
        & \ge \min_{A' \in \mathcal{G}^x(x^w, y^w)} \|A- A' \|_{\infty} ,
    \end{align*}
    where we define the matrix $A \in \mathcal{G}(x^v, y^v)$ as
    \begin{equation}\label{eq:lb-alpha0-n-dep-eq2}
    A_{ij}:=
    \begin{cases} 
    0, & i=1, j \in [n]\\[2pt]
    \tfrac13, & i>1,j\in D^+,\\[2pt]
    -\tfrac13 \kappa , & i> 1,j\in D^-,\\[2pt]
    0, & i>1,j\in[n]\setminus(D^+\cup D^-),
    \end{cases}
    \qquad\text{where}\qquad
    \kappa \coloneqq \frac{\sum_{j\in D^+} y^v_j}{\sum_{j\in D^-} y^v_j}.
    \end{equation}
    Before continuing, we prove that $A$ constructed as in \Cref{eq:lb-alpha0-n-dep-eq2} satisfies $A \in \mathcal{G}(x^v, y^v)$. To this end, we first verify that $A \in [-1,1]^{n \times n}$. It suffices to check that $\kappa \le 3$. With some simple manipulations, we have that:
    \begin{align*}
        \kappa = \frac{\sum_{j\in D^+} y^v_j}{\sum_{j\in D^-} y^v_j} = \frac{\frac{|D^+|(1+\gamma)}{n}}{\frac{|D^-|(1-\gamma)}{n}} = \frac{1+\gamma}{1-\gamma} \le 3,
    \end{align*}
    where the second equality is due to \Cref{{eq:lb-alpha0-n-dep-eq1}} and the last step is due to $\gamma \in (0, \tfrac12)$. Next, we verify that $A$ satisfies the Nash constraints. Since $x^v = e_1$, those constraints are equivalent to $(e_1 - e_i)^\top A y^v \le 0$ for all $i \in [n]$. Using the definition of $A$ and $y^v$ we have that:
    \begin{align*}
        (e_1 - e_i)^\top A y^v & = \sum_{j=1}^n y_j^v (A_{1j}-A_{ij}) \\ 
        & = -\sum_{j=1}^n y_j^v A_{ij} \tag{$A_{1j} = 0$ for all $j \in [n]$} \\
        & = - \frac13 \left(  \sum_{j \in D^+} y_j^v  - \kappa\sum_{j \in D^-} y_j^v \right) \tag{Definition of $A$} \\
        & = 0. \tag{Definition of $\kappa$}
    \end{align*}
    Hence, all constraints are tight and $A \in \mathcal{G}^x(x^v, y^v)$.

    We can now continue by lower bounding $\min_{A' \in \mathcal{G}^x(x^w, y^w)} \|A- A' \|_{\infty}$. In the following, we will prove that:
    \begin{align}\label{eq:lb-alpha0-n-dep-eq3}
        \min_{A' \in \mathcal{G}^x(x^w, y^w)} \|A- A' \|_{\infty} \ge \frac{2|D^+|\gamma}{3n(1-\gamma)}.
    \end{align}
    
    To this end, consider any matrix $A' \in \mathcal{G}^x(x^w, y^w)$. Let $\Delta(A') = A'-A$ so that $A' = \Delta(A')+A$. 
    Since $x^w = e_1$, then $A'$ satisfies $(e_1 - e_i)^\top A' y^w \le 0$ for all $i \in [n]$. Moreover, for all $A' \in \mathcal{G}^x(x^w, y^w)$ it holds that:
    \begin{align*}
        (e_1 - e_i)^\top A' y^w & = (e_1 - e_i)^\top (\Delta(A')+A) y^w  \\
        & = (e_1 - e_i)^\top \Delta(A') y^w + (e_1 - e_i)^\top A y^w \\
        & = (e_1 - e_i)^\top \Delta(A') y^w - \sum_{j=1}^n y^w_jA_{ij} \tag{$A_{1j} = 0$ for all $j \in [n]$} \\
        & = (e_1 - e_i)^\top \Delta(A') y^w - \frac13 \left( \frac{|D^+|(1-\gamma)}{n} -  \kappa \frac{|D^-|(1+\gamma)}{n} \right) \tag{Def. of $A,y^w, D^+$ and $D^-$.} \\
        & = (e_1 - e_i)^\top \Delta(A') y^w - \frac13 \left( \frac{|D^+|(1-\gamma)}{n} -  \frac{|D^+|(1+\gamma)^2}{n(1-\gamma)} \right) \tag{Def. of $\kappa$} \\
        & = (e_1 - e_i)^\top \Delta(A') y^w + \frac{4|D^+|\gamma}{3n(1-\gamma)}  \\
        & \le 0 \tag{$A' \in \mathcal{G}^x(x^w, y^w)$}
    \end{align*}
    Then, it follows that any feasible $A' \in \mathcal{G}^x(x^w, y^w)$ satisfies $(e_i - e_1)^\top \Delta(A') y^w \ge \frac{4|D^+|\gamma}{3n(1-\gamma)}$. This yields the following:
    \begin{align*}
        \frac{4|D^+|\gamma}{3n(1-\gamma)} &  \le (e_i - e_1)^\top \Delta(A') y^w \\
        & \le \sum_{j=1}^n y^w_j \left( |\Delta(A')_{ij}| + |\Delta(A')_{1j}| \right) \\
        & \le 2 \sum_{j=1}^n y^w_j  \| \Delta(A') \|_{\infty} \\
        & = 2 \|A- A' \|_{\infty},
    \end{align*}
    which proves \Cref{eq:lb-alpha0-n-dep-eq3}.

    Combining this result with \Cref{eq:lb-alpha0-n-dep-eq1} gives $H(\mathcal{G}(x^v, y^v), \mathcal{G}(x^w, y^w)) \ge \frac{\gamma}{96}$. Picking $\gamma = 192 \epsilon$,\footnote{This choice is valid. Indeed, $\epsilon \le \frac{1}{384}$ by assumption and we imposed $\gamma < \frac12$.} we obtained
    \begin{align}\label{eq:lb-alpha0-n-dep-eq4}
        H(\mathcal{G}(x^v, y^v), \mathcal{G}(x^w, y^w)) \ge 2 \epsilon.
    \end{align}

    Next we analyze \(H\!\left(\mathcal Z(x^v,y^v),\mathcal Z(x^w,y^w)\right)\). We have that:
    \begin{align*}
        H\!\left(\mathcal Z(x^v,y^v),\mathcal Z(x^w,y^w)\right) & \ge \max_{A \in \mathcal{Z}(x^v,y^v)} \min_{A' \in \mathcal{Z}(x^w, y^w)} \| A - A'\|_{\infty} \\
        & \ge \min_{A' \in \mathcal{Z}(x^w, y^w)} \|A- A' \|_{\infty},
    \end{align*}
    where the matrix $A \in \mathcal{Z}(x^v, y^v)$ is defined as in \Cref{eq:lb-alpha0-n-dep-eq2}. We already proved above that $A \in [-1,1]^{n \times n}$. It remains to prove that it satisfies the Nash constraints for zero-sum games, \ie that $A$ satisfies:
    \begin{align*}
        e_1^\top A e_j\le e_1^\top A y^v \le e_i^\top A y^v \quad \forall i,j \in [n].
    \end{align*}
    We already proved above that $e_1^\top A y^v \le e_i^\top A y^v$ holds for all $i \in [n]$. We only need to check that $e_1^\top A e_j\le e_1^\top A y^v$ holds for all $j \in [n]$. Since $A_{1j} = 0$ for all $j \in [n]$, we have that:
    \begin{align*}
        e_1^\top A e_j = A_{1j} = 0 = \sum_{j \in [n]} y^v_j A_{1j} = e_1^\top A y^v,
    \end{align*}
    thus showing that $A \in \mathcal{Z}(x^v, y^v)$. 

    We now continue by lower bounding $\min_{A' \in \mathcal{Z}(x^w, y^w)} \|A- A' \|_{\infty}$. We can follow the same reasoning that we used for General-Sum Games to prove that:
    \begin{align}\label{eq:lb-alpha0-n-dep-eq6}
        H(\mathcal{Z}(x^v, y^v), \mathcal{Z}(x^w, y^w)) \ge 2 \epsilon.
    \end{align}
    
    \paragraph{Step 3: Change of measure}
    For any $x,y \in \Delta_n$, let $\mathbb{P}_{x, y} = \prod_{t=1}^{\tau_\delta} p_{x}(X_t) p_y(Y_t)$,
    where $p_q(\cdot)$ denotes the density function of $q \in \Delta_n$. Then, since \Cref{eq:lb-alpha0-n-dep-eq4,eq:lb-alpha0-n-dep-eq6} holds, we can apply \Cref{lemma:ours-change-of-measure-several-instances} using the set of instances $\{(x^v, y^v)\}_{v \in \bar{\mathcal{V}}}$ and $(x^0, y^0) = \!(e_1, (\tfrac{1}{n}, \dots \tfrac{1}{n})\!)$ as base instance. Thus, for both GSGs and ZSGs we have that  for any $(\epsilon, \delta)$-correct algorithm it holds
    \begin{align*}
        \delta \ge 1-\frac{1}{\log (|\bar{\mathcal{V}}|)} \left( \frac{1}{|\bar{\mathcal{V}}|}\sum_{v \in \bar{\mathcal{V}}} \textup{KL}(\mathbb{P}_{x^v, y^v}, \mathbb{P}_{x^0, y^0}) + \log2  \right).
    \end{align*}
    We proceed by analyzing the kl-divergence term in the r.h.s. of this equation. Specifically, we have that:
    \begin{align*}
        \textup{KL}(\mathbb{P}_{x^v, y^v}, \mathbb{P}_{x^0, y^0}) & = \mathbb{E}_{(x^v,y^v)}\left[ \sum_{t=1}^{\tau_\delta} \text{KL}(p_{y^v}(Y_t),p_{y^0}(Y_t)) \tag{Since $x^v = x^0$} \right]  \\
        & = \mathbb{E}_{(x^v,y^v)}\left[ \tau_\delta \right] \text{KL}(p_{y^v},p_{y^0}) \tag{Wald's identity} \\
        & \le 2 (192^2)\epsilon^2 \mathbb{E}_{(x^v,y^v)}\left[ \tau_\delta \right] \tag{\Cref{lemma:kl-divergence-aux-lemma}}.
    \end{align*}
    Plugging this result in the previous equation and rearranging the terms, we obtain:
    \begin{align*}
        \frac{1}{|\bar{\mathcal{V}}|} \sum_{v \in \bar{\mathcal{V}}} \mathbb{E}_{(x^v,y^v)}\left[ \tau_\delta \right] \ge \frac{(1-\delta) \log(|\bar{\mathcal{V}}|) + \log 2}{2(192)^2 \epsilon^2}.
    \end{align*}
    Since $\mathbb{E}_{(x^v,y^v)}\left[ \tau_\delta \right] \le \max_{w \in \bar{\mathcal{V}}} \mathbb{E}_{(x^w,y^w)}\left[ \tau_\delta \right]$ for all $v \in \bar{\mathcal{V}}$, the last equation implies that there exists $\bar{w} \in \bar{\mathcal{V}}$ such that:
    \begin{align*}
        \mathbb{E}_{(x^{\bar{w}},y^{\bar{w}})}\left[ \tau_\delta \right]  \ge \frac{(1-\delta) \log(|\bar{\mathcal{V}}|) + \log 2}{2(192)^2 \epsilon^2}
    \end{align*}
    Finally, we have
    \begin{align*}
        \mathbb{E}_{(x^{\bar{w}},y^{\bar{w}})}\left[ \tau_\delta \right] &  \ge \frac{(1-\delta) \log(|\bar{\mathcal{V}}|) + \log 2}{2(192)^2 \epsilon^2} \\
        & \ge \frac{\log(2) ((1-\delta)\frac{n}{5} + 1) }{2(192)^2 \epsilon^2} \tag{\Cref{eq:pack-arg}} \\
        & \ge \frac{\log(2)n}{40(192)^2\epsilon^2}  \tag{$\delta \le \tfrac12$ and $n \ge 20$},
    \end{align*}
    which concludes the proof.
\end{proof}

\subsection{Proof of \Cref{thm:lb-any-alpha}}\label{app:lb-any-alpha-gsg}

Similarly to \Cref{thm:lb-alpha-0}, the proof of \Cref{thm:lb-any-alpha} is obtained by combining two distinct lower bounds, \ie \Cref{thm:lb-any-alpha-log,thm:lb-any-alpha-n}. In these theorems, we prove that there exist instances where the sample complexity of any $(\epsilon, \delta)$-correct algorithm is lower bounded by:
\begin{align*}
    \text{ (i)} \,\,\frac{\log\left( \frac{1}{\delta} \right)}{ \epsilon^2\alpha}, \,\, \text{ and (ii)}\,\, \frac{n}{\epsilon^2\alpha},
\end{align*}
here, we only dropped the multiplicative constants.
These results hold both for General-Sum Games and Zero-Sum Games. \Cref{thm:lb-any-alpha} follows by the following argument.

\begin{proof}[Proof of \Cref{thm:lb-any-alpha}]
It is sufficient to consider the case where the agent observes the instances presented in \Cref{thm:lb-any-alpha-log,thm:lb-any-alpha-n} with probability $(\tfrac12, \tfrac12)$. The result is then straightforward.
\end{proof}

We now proceed by proving \Cref{thm:lb-any-alpha-log,thm:lb-any-alpha-n}.

\begin{theorem}
\label{thm:lb-any-alpha-log}
Let $\alpha \in (0, \tfrac14)$ and $\epsilon \le \tfrac{1}{8}$. Then, there exists a problem instance such that, for any $(\epsilon, \delta)$-correct algorithm, the following holds:
\[
\mathbb{E}[\tau_\delta] \geq 
\frac{\log(\frac{1}{4\delta})}{18\epsilon^{2}\,\alpha}.
\]
This holds for both Zero-Sum and General-Sum Games.
\end{theorem}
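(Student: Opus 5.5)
We follow the same three-step template as in the proofs of \Cref{thm:lb-alpha-0-impossibility} and \Cref{thm:alpha0-eps}, using the two-instance construction sketched in \Cref{sec:lb-approx}. Take $n=2$, $y^0=y^1=(1,0)$, and
\begin{align*}
x^0=\left(\tfrac{\alpha}{2},1-\tfrac{\alpha}{2}\right),\qquad x^1=\left(\tfrac{\alpha(1+3\epsilon)}{2},1-\tfrac{\alpha(1+3\epsilon)}{2}\right).
\end{align*}
Since $y=e_1$, the constraints defining $\mathcal{G}^x_\alpha(x,y)$ only involve the first column of $A$. With $p=x_1$ and $\Delta_A=a_{21}-a_{11}$, we have $x^\top Ay-a_{11}=(1-p)\Delta_A$ and $x^\top Ay-a_{21}=-p\Delta_A$. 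The feasibility condition is therefore
\begin{align*}
-\tfrac{\alpha}{p}\le\Delta_A\le\tfrac{\alpha}{1-p},
\end{align*}
intersected with $\Delta_A\in[-2,2]$.

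\textbf{Hausdorff distance.} For $p=\alpha/2$ the lower limit is $-2$. Hence $A$ with $a_{11}=1$, $a_{21}=-1$ (other entries arbitrary, e.g.\ $0$) lies in $\mathcal{G}^x_\alpha(x^0,y^0)$. For $p'=\alpha(1+3\epsilon)/2$, every $A'\in\mathcal{G}^x_\alpha(x^1,y^1)$ satisfies $a'_{21}-a'_{11}\ge -2/(1+3\epsilon)$. By the H\"older argument of \Cref{thm:alpha0-eps}, with $g=(-1,1)$ on the coordinates $(a_{11},a_{21})$,
\begin{align*}
2\|A-A'\|_\infty\ge (a'_{21}-a'_{11})-(a_{21}-a_{11})\ge 2-\tfrac{2}{1+3\epsilon}=\tfrac{6\epsilon}{1+3\epsilon}.
\end{align*}
This gives $\|A-A'\|_\infty\ge 3\epsilon/(1+3\epsilon)\ge 2\epsilon$ whenever $\epsilon\le 1/8$. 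Since the $B$-component can be matched exactly, $H(\mathcal{G}_\alpha(x^0,y^0),\mathcal{G}_\alpha(x^1,y^1))\ge 2\epsilon$. We also need $\alpha/(1-p)\le 2$, which holds for $\alpha<1/4$.

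For the zero-sum case, we take the same $A$ but fix the second column so that the column-player constraint $x^\top Ae_2-\alpha\le x^\top Ay$ holds. Setting $a_{12}=a_{22}=-1$ makes $x^\top Ae_2=-1\le x^\top Ay$. Only the $x$-player constraints involve column $1$ in a binding way, so the same H\"older bound on $(a_{11},a_{21})$ applies and gives $H(\mathcal{Z}_\alpha(x^0,y^0),\mathcal{Z}_\alpha(x^1,y^1))\ge 2\epsilon$.

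\textbf{Change of measure.} Apply \Cref{lemma:ours-change-of-measure} to get $\delta\ge\frac14\exp(-\mathrm{KL}(\mathbb{P}_{x^0,y^0},\mathbb{P}_{x^1,y^1}))$. By Wald's identity and $y^0=y^1$, the KL term equals $\mathbb{E}[\tau_\delta]\,\mathrm{kl}(p,p')$, where $\mathrm{kl}$ is the Bernoulli divergence. Bound it by the $\chi^2$ divergence:
\begin{align*}
\mathrm{kl}(p,p')\le\frac{(p-p')^2}{p'(1-p')}=\frac{9\alpha^2\epsilon^2/4}{p'(1-p')}.
\end{align*}
Using $p'\ge\alpha/2$ and $1-p'\ge 1/2$ (valid since $\alpha<1/4$ and $\epsilon\le 1/8$), this is at most $9\alpha\epsilon^2$. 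Rearranging yields $\mathbb{E}[\tau_\delta]\ge\log(1/(4\delta))/(9\alpha\epsilon^2)$, which is stronger than the claimed bound with constant $18$.

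\textbf{Main obstacle.} The step needing the most care is the zero-sum Hausdorff bound. We must check that the added column-player constraints do not exclude the chosen $A$, and that they do not create feasible $A'$ escaping the $\Delta_A$ bound. The latter is impossible, because the $x$-player constraints alone already force the bound on $\Delta_{A'}$. The remaining constant bookkeeping is routine.
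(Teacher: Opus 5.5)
Your proposal is correct and follows essentially the same route as the paper. It uses the same two-action instances with $y=e_1$ and $x_1$ perturbed from $\alpha/2$ by $\Theta(\epsilon\alpha)$, essentially the same witness matrix $A$ with $a_{11}=1$, $a_{21}=-1$ (and column two set to $-1$, which handles the zero-sum case via $\mathcal{Z}_\alpha(x^1,y^1)\subseteq\mathcal{G}^x_\alpha(x^1,y^1)$), a two-coordinate H\"older/averaging bound on $\|A-A'\|_\infty$, and the change-of-measure lemma with Wald's identity. The differences are only cosmetic: you use the perturbation $3\epsilon\alpha$ from the main-text sketch rather than the appendix's $\tfrac83\epsilon\alpha$, and your explicit $\chi^2$ bound on the Bernoulli KL (in place of the paper's reverse-Pinsker step) gives the constant $9$ instead of $18$.
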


\begin{proof}
The proof is divided into three steps.
\paragraph{Step 1: Instance Construction}

To prove the result, we consider the case in which $n=2$, and we have two different instances where the unknown strategy profiles $(x^0, y^0)$ and $(x^1, y^1)$ are defined as follows:
\begin{align*}
    &x^0=\left(\tfrac{\alpha}{2}, 1- \tfrac{\alpha}{2} \right),~y^0=\left(1, 0\right) \\
    &x^1=\left( \tfrac{\alpha}{2} + \tfrac{\beta}{2}, 1-\tfrac{\alpha}{2}-\tfrac{\beta}{2} \right),~y^1= \left(1, 0\right),
\end{align*}
where $\beta \in (0, \tfrac14)$ is a parameter that will be tuned later on.
Since $\alpha \in (0, \tfrac14)$ and $\beta \in (0, \tfrac14)$, $x^0$ and $x^1$ both belong to the simplex.

\paragraph{Step 2: Haussdorff Distance} Next, we lower bound $H(\mathcal{G}_\alpha(x^0, y^0), \mathcal{G}_\alpha(x^1, y^1))$ and $H(\mathcal{Z}_\alpha(x^0, y^0), \mathcal{Z}_\alpha(x^1, y^1))$. 

We start from $H(\mathcal{G}_\alpha(x^0, y^0), \mathcal{G}_\alpha(x^1, y^1))$. We have that:
\begin{align*}
    H(\mathcal{G}_\alpha(x^0, y^0), \mathcal{G}_\alpha(x^1, y^1)) & \ge \max_{A \in \mathcal{G}_\alpha^x(x^0, y^0)} \min_{A' \in \mathcal{G}_\alpha^x(x^1, y^1)} \| A - A'\|_{\infty} \\
    & \ge \min_{A' \in \mathcal{G}_\alpha(x^1, y^1)} \max \{ |1-a'_{11}|, |1+a'_{12}|, |1+a'_{21}|, |1+a'_{22}| \},
\end{align*}
where, in the second step, we have set $A=\begin{bmatrix}1&-1\\ -1&-1\end{bmatrix} \in \mathcal{G}_\alpha^x(x^0, y^0)$.
\footnote{Indeed, $(x^0)^\top A y^0 = -1+\alpha$, $e_1^\top A y^0 = \tfrac\alpha2$ and $e_2^\top A y^0 = -1$. Hence, $(x^0)^\top A y^0 \le e_i^\top A y^0 + \alpha$ for all $i \in \{1,2\}$.}

Now, using the values of $x^1$ and $y^1$, we are interested in solving the following problem:
\begin{equation*}
    \begin{alignedat}{2}
     \min_{A' \in [-1,1]^{2 \times 2}}\max &\,\,\{ |1-a'_{11}|, |1+a'_{12}|, |1+a'_{21}|, |1+a'_{22}| \} \\
     \text{s.t.~} & (\tfrac\alpha2+\tfrac\beta2) a'_{11}+(1-\tfrac\alpha2 - \tfrac \beta 2) a'_{21} \le a'_{11}+\alpha\\
    & (\tfrac\alpha2+\tfrac\beta2) a'_{11}+(1-\tfrac\alpha2 - \tfrac \beta 2) a'_{21} \le a'_{21}+\alpha.
    \end{alignedat}
\end{equation*}
Simplifying the constraints we can rewrite this optimization problem as follows:
\begin{equation}\label{eq:lb-anyalpha-eq1}
    \begin{alignedat}{2}
     \min_{A' \in [-1,1]^{2 \times 2}} & \max \,\, \{ |1-a'_{11}|, |1+a'_{21}| \} \ \\
     \text{s.t.~} \,\,\quad & a'_{21}-a'_{11} \le  \tfrac{2\alpha}{2-\alpha-\beta} \\
    & a'_{11} - a'_{21} \le \tfrac{2\alpha}{\alpha + \beta}.
    \end{alignedat}
\end{equation}
Now, we note that:
\begin{align*}
    \max \{ |1-a'_{11}|, |1+a'_{21}| \} & \ge \frac{1}{2} \left( |1-a'_{11}|+ |1+a'_{21}|\right) \tag{$\max\{a,b\} \ge \tfrac12 (a+b)$}    \\
    & \ge \frac{1}{2} |2 - (a'_{11}-a'_{21}) |. \tag{Triangular inequality}
\end{align*}
Let $\Delta = a'_{11} - a'_{21}$ and observe that $\Delta \in [-2, 2]$. We can then lower bound the value of Problem \eqref{eq:lb-anyalpha-eq1} with the value of the following optimization problem:
\begin{equation}\label{eq:lb-anyalpha-eq2}
    \begin{alignedat}{2}
    \min_{\Delta \in [-2,2]} & \frac12 |2-\Delta| \ \\
     \text{s.t.~} &  -\frac{2\alpha}{2-\alpha-\beta}\le \Delta \le \frac{2\alpha}{\alpha+\beta}.
    \end{alignedat}
\end{equation}
This can be easily solved in closed form, yielding $\Delta^\star = 2\frac{\alpha}{\alpha+\beta}$. Consequently, we have obtained the following lower bound on the Hausdorff distance: $H(\mathcal{G}_\alpha(x^0, y^0), \mathcal{G}_\alpha(x^1, y^1)) \ge \frac{\beta}{\alpha+\beta}$.

Set $\beta = \tfrac83\epsilon \alpha$. We have obtained that:\footnote{In this step, we make use of $\epsilon < \tfrac18$.}
\begin{align}\label{eq:lb-anyalpha-eq3}
    H(\mathcal{G}_\alpha(x^0, y^0), \mathcal{G}_\alpha(x^1, y^1)) \ge 2\epsilon. 
\end{align}

We continue by lower bounding $H(\mathcal{Z}_\alpha(x^0, y^0), \mathcal{Z}_\alpha(x^1, y^1))$. We have that:
\begin{align*}
    H(\mathcal{Z}_\alpha(x^0, y^0), \mathcal{Z}_\alpha(x^1, y^1)) & \ge \max_{A \in \mathcal{Z}_\alpha(x^0, y^0)} \min_{A' \in \mathcal{Z}_\alpha(x^1, y^1)} \|A - A' \|_\infty \\
    & \ge \min_{A' \in \mathcal{Z}_\alpha(x^1, y^1)}  \max \{ |1-a'_{11}|, |1+a'_{12}|, |1+a'_{21}|, |1+a'_{22}| \},
\end{align*}
where, in the second step, we have set $A=\begin{bmatrix}1&-1\\ -1&-1\end{bmatrix} \in \mathcal{Z}_\alpha(x^0, y^0)$. Indeed, we already proved above that $A$ satisfies $-1+\alpha = (x^0)^\top A y^0 \le e_i^\top A y^0 + \alpha $ for all $i \in \{1,2\}$. It is also easy to see that $(x^0)^\top A e_1 = -1+\alpha$  and $(x^0)^\top Ae_2 = -1$ for all $j \in \{ 1,2\}$. Hence, we obtained $A \in \mathcal{Z}_\alpha(x^0, y^0)$. Then, one can proceed as above to obtain:\footnote{It is sufficient to drop all the constraints that arise from $(x^1)^\top A' y^1 \ge (x^1)^\top A' e_j - \alpha $ in the optimization problems.}
\begin{align}\label{eq:lb-anyalpha-eq4}
    H(\mathcal{Z}_\alpha(x^0, y^0), \mathcal{Z}_\alpha(x^1, y^1)) \ge 2\epsilon. 
\end{align}

\paragraph{Step 3: Change of measure}

For any $x,y \in \Delta_n$, let $\mathbb{P}_{x, y} = \prod_{t=1}^{\tau_\delta} p_{x}(X_t) p_y(Y_t)$,
where $p_q(\cdot)$ denotes the density function of $q \in \Delta_n$. Then, since \Cref{eq:lb-anyalpha-eq3,eq:lb-anyalpha-eq4} hold, we can apply \Cref{lemma:ours-change-of-measure}. For both GSGs and ZSGs we have 
\begin{align*}
    \delta \ge \frac{1}{4} \exp\left( -\text{KL}(\mathbb{P}_{x^0, y^0}, \mathbb{P}_{x^1, y^1}) \right).
\end{align*}

We proceed by analyzing the r.h.s. of this equation.
In the case we are considering we have set $n=2$. Hence, $x,y$ are Bernoulli distributions, and we obtain that:
\begin{align*}
    \text{KL}\left(\mathbb{P}_{x^0, y^0}, \mathbb{P}_{x^1, y^1}\right) & = \mathbb{E}_{x^0, y^0}\left[ \sum_{t=1}^{\tau_\delta} \text{KL}(p_{x^0}(X_t), p_{x^1}(X_t) )\right] \tag{Since $y^0=y^1$} \\
    & = \mathbb{E}_{x^0, y^0}\left[ \tau_\delta  \right] \text{KL}(p_{x^0}, p_{x^1})\tag{Wald's identity} \\
    & \le 2\mathbb{E}_{x^0, y^0}\left[ \tau_\delta  \right] \frac{\text{TV}(p_{x^0}, p_{x^1})^2}{\alpha} \tag{Reverse Pinsker's Inequality} \\
    & \le 2\mathbb{E}_{x^0, y^0}\left[ \tau_\delta  \right] \frac{(\tfrac83 \epsilon \alpha)^2}{\alpha}\\
    & \le 18 \mathbb{E}_{x^0, y^0}\left[ \tau_\delta  \right] {\alpha \epsilon^2}.
\end{align*}
Hence, we obtained that:
\begin{align*}
    \delta \ge  \frac{1}{4} \exp\left( -18\mathbb{E}_{x^0, y^0}\left[ \tau_\delta  \right] \alpha \epsilon^2 \right).
\end{align*}
Rearranging the terms yields:
\begin{align*}
    \mathbb{E}_{x^0, y^0}\left[ \tau_\delta  \right]  \ge \frac{\log\left( \frac{1}{4\delta} \right)}{18 \alpha \epsilon^2}.
\end{align*}
which concludes the proof.
\end{proof}

Finally, it remains to prove the last result that composes \Cref{thm:lb-any-alpha}.

\begin{theorem}
\label{thm:lb-any-alpha-n}
Let $\alpha \in (0, \tfrac12)$ and $\epsilon \le \tfrac{1}{128}$. Then, there exists a problem instance such that, for any $(\epsilon, \delta)$-correct algorithm, the following holds:
\[
\mathbb{E}[\tau_\delta] \geq 
\frac{\log(2)n}{20(128)^2\alpha\epsilon^2}. 
\]
\end{theorem}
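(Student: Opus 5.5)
We combine the packing-based construction of \Cref{thm:alpha0-n} with the $\alpha$-scaling mechanism of \Cref{thm:lb-any-alpha-log}. The proof has the same three steps.

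\textbf{Step 1: instances.} Take $n$ a multiple of $16$, large enough for \Cref{lemma:pack-arg}, and let $\bar{\mathcal{V}}\subseteq\mathcal{V}$ be the packing from \Cref{eq:pack-arg}. This packing guarantees $|D^+_{v,w}|=|D^-_{v,w}|\ge n/64$ and $|\bar{\mathcal{V}}|>2^{n/5}$.

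Fix the column strategy $y^v=e_1$ for every instance. For the row player, put total mass $\alpha/2$ on a block of $n$ ``rare'' rows, perturbed according to $v$:
\begin{align*}
x^v_k=\frac{\alpha(1+\gamma v_k)}{2n}\quad (k\in[n]),
\end{align*}
with the remaining mass $1-\alpha/2$ on one extra ``heavy'' row. Formally this uses $n+1$ actions; pad both players to the same dimension so it is harmless up to constants. The base instance for \Cref{lemma:ours-change-of-measure-several-instances} is $x^0_k=\alpha/(2n)$, with the same heavy row and the same $y$.

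Restricted to the rare block, the KL divergence is
\begin{align*}
\mathrm{KL}(p_{x^v},p_{x^0})\lesssim \sum_k \frac{(\alpha\gamma/2n)^2}{\alpha/(2n)}=\mathcal{O}(\alpha\gamma^2).
\end{align*}
This is exactly the factor-$\alpha$ saving seen in \Cref{thm:lb-any-alpha-log}.

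\textbf{Step 2: Hausdorff separation.} Mimic the matrix of \Cref{thm:lb-any-alpha-log} row-wise. Set
\begin{itemize}
\item $A_{k1}=1$ on the rare rows $k\in D^+$,
\item $A_{k1}=1-c\kappa$ on the rare rows $k\in D^-$, with $\kappa$ chosen to balance the two groups as in \Cref{eq:lb-alpha0-n-dep-eq2},
\item the heavy row and all other entries equal to $-1$.
\end{itemize}
Choose the values so that, under $x^v$, the $\alpha$-Nash constraint of the heavy row holds with equality:
\begin{align*}
(x^v)^\top A y - e_{\text{heavy}}^\top A y=\alpha,
\end{align*}
with total slack exactly $\alpha$ contributed by the rare mass. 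The rare rows themselves are trivially satisfied.

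Under $x^w$, the mass on $D^+$ and $D^-$ shifts by $\pm\alpha\gamma|D^+|/n$. The constraint $(x^w)^\top A' y\le e_{\text{heavy}}^\top A'y+\alpha$ is then violated by an amount of order $\alpha\gamma|D^\pm|/n\cdot\Theta(1)$. Any feasible $A'$ must repair this by perturbing entries weighted by rare masses of total size about $\alpha$. As in the Hölder step of \Cref{thm:alpha0-eps} and the $\Delta^\star$ computation of \Cref{thm:lb-any-alpha-log}, dividing by the total weight $\Theta(\alpha)$ gives
\begin{align*}
\|A-A'\|_\infty\gtrsim \frac{\gamma|D^+|}{n}\gtrsim\frac{\gamma}{64}.
\end{align*}
Choosing $\gamma=\Theta(\epsilon)$ (e.g.\ $\gamma=128\epsilon$, which is admissible since $\epsilon\le 1/128$) yields a separation of at least $2\epsilon$.

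For the zero-sum case, $y=e_1$ makes the column constraints easy to check, as in \Cref{thm:lb-any-alpha-log}. We simply drop those constraints when lower bounding the distance.

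\textbf{Step 3: change of measure.} Apply \Cref{lemma:ours-change-of-measure-several-instances} with Wald's identity and the KL bound from Step 1. This gives
\begin{align*}
\max_v \E[\tau_\delta]\gtrsim \frac{\log|\bar{\mathcal{V}}|}{\alpha\epsilon^2}\gtrsim\frac{n}{\alpha\epsilon^2},
\end{align*}
and tracking constants gives the stated $\log(2)n/(20\cdot128^2\alpha\epsilon^2)$.

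\textbf{Main obstacle.} The delicate step is the separation in Step 2. We must verify that a feasible $A'$ cannot redistribute the required correction cheaply, for instance by moving the heavy row's payoff by $\mathcal{O}(\alpha\gamma)$ instead of moving the rare rows by $\Theta(\gamma)$. The heavy-row entry is pinned at $-1$, so it can only increase, and increasing it helps the constraint. To rule this out, I would instead fix the heavy row value at the boundary in the opposite direction, mirroring the sign choices of \Cref{thm:lb-any-alpha-log} where $a_{21}=-1$ and the binding bound $\Delta\le 2\alpha/(\alpha+\beta)$ came from the rare row's own constraint. That is, I would make the binding constraints those of the \emph{rare} rows (each with slack $\alpha$ relative to the value), so that each $k\in D^-$ individually forces a relative change of order $\gamma$.
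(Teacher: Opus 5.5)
\textbf{There is a genuine gap in Step 2.} It is structural, not a matter of constants. Every instance in your family puts total mass exactly $\alpha/2$ on the rare rows, and this makes the heavy-row constraint vacuous. Write $h$ for the heavy row. For every $A'\in[-1,1]^{n\times n}$ and every such $x$,
\[
x^\top A' e_1 - A'_{h1}=\sum_{k\ \mathrm{rare}} x_k\,(A'_{k1}-A'_{h1})\le 2\cdot\tfrac{\alpha}{2}=\alpha .
\]
So no $A'$ ever needs repairing, and the claimed violation of order $\alpha\gamma|D^\pm|/n$ cannot occur. Equality ``with slack exactly $\alpha$'' would force every rare entry to be $+1$. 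That is incompatible with your $1-c\kappa$ on $D^-$ and $-1$ on the other rare rows, and once all rare entries are $+1$ nothing depends on $v$ any more.

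Independently, your signs are reversed. The larger entry sits on $D^+$, whose mass decreases from $x^v$ to $x^w$. So $x^\top A e_1$ goes down and the constraint becomes slacker, not violated.

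The fix under ``Main obstacle'' rests on a misreading of \Cref{thm:lb-any-alpha-log}. There the binding inequality $a'_{11}-a'_{21}\le 2\alpha/(\alpha+\beta)$ is the \emph{heavy} row's constraint $x^\top A'y\le a'_{21}+\alpha$. It is violated because the rare mass itself grows past $\alpha/2$, which is unavailable once total rare mass is held fixed for the packing.

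Making rare-row constraints binding does not rescue the rate either. Passing from $x^v$ to $x^w$ moves $x^\top A'e_1$ by only $\mathcal{O}(\alpha\gamma)$. A rare row's constraint $x^\top A'e_1\le A'_{k1}+\alpha$ can then be restored by shifting $A'_{k1}$ (coefficient $1-x_k\approx 1$) or an unpinned heavy entry by that same amount; if $A'_{k1}=1$ the constraint is vacuous anyway. This route therefore certifies only $\mathcal{O}(\alpha\gamma)$ separation. Reaching $2\epsilon$ then forces $\gamma\gtrsim\epsilon/\alpha$ and a per-sample KL of order $\epsilon^2/\alpha$. The result is a bound of order $\alpha n/\epsilon^2$, which has the wrong dependence on $\alpha$.

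\textbf{The missing idea.} The heavy-row constraint is the right one to use, but the rare rows need mass strictly larger than $\alpha/2$; the paper uses $\alpha$. Then their weighted average payoff can sit at $0$, strictly inside $[-1,1]$, leaving room for a $v$-dependent spread. Concretely, the paper takes $x^v=(\frac{\alpha+\gamma v_1}{n-1},\dots,\frac{\alpha+\gamma v_{n-1}}{n-1},1-\alpha)$ and $y=e_1$. The column-$1$ entries are $-\kappa$ on $D^+$, $+1$ on $D^-$, $0$ on the other rare rows, and $-1$ on the heavy row; all other columns are $-1$. It sets $\kappa=\frac{\alpha-\gamma}{\alpha+\gamma}$ so that the rare part of $(x^v)^\top Ae_1$ vanishes. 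The heavy row is then the unique minimizer and its constraint is tight. Under $x^w$ the rare part rises to $\frac{4|D^+|\gamma}{n-1}\cdot\frac{\alpha}{\alpha+\gamma}\ge\frac{\gamma\alpha}{16(\alpha+\gamma)}$.

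This also disposes of the obstacle you raised. A feasible $A'$ at distance $t$ can lower the rare part by at most $t\alpha$, its total mass. The heavy row appears both in $(x^w)^\top A'e_1$ with weight $1-\alpha$ and as the comparison row, so moving it buys only $\alpha(1+A'_{n1})\le\alpha t$. Hence $2\alpha t\ge\frac{\gamma\alpha}{16(\alpha+\gamma)}$. Choosing $\gamma=128\alpha\epsilon$ gives $t\ge 2\epsilon$, with per-sample KL at most $2\cdot 128^2\alpha\epsilon^2$ by \Cref{lemma:kl-divergence-aux-lemma}. Once Step 1 is changed to put mass $\alpha$ on the rare rows, your packing and change-of-measure steps go through as planned.
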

This holds for both Zero-Sum and General-Sum Games.

\begin{proof}
The proof is split into three main steps.

\paragraph{Step 1: Instance Construction} 
    Similar to \Cref{thm:alpha0-n}, we construct a large set of different $n$-dimensional instances. We assume that $n\ge 22$ and that $n -1 \mod 16 = 0$. These instances are parameterized with an $n-1$-dimensional vector $v \in \{-1,1 \}^{n-1}$ such that $\sum_{k=1}^{n-1} v_k = 0$ and are constructed as follows. We denote by $\mathcal{V}$ the set of vectors that satisfies these two requirements. Now, let $\gamma \in (0, \alpha)$; we define $(x^v, y^v)$ as follows:
    \begin{align*}
        y^v = (1, 0, \dots 0) \text{ and } x^v = \left( \frac{\alpha+\gamma v_1}{n-1}, \frac{\alpha+ \gamma v_2}{n-1}, \dots, \frac{\alpha+\gamma v_{n-1}}{n-1}, 1-\alpha\right).
    \end{align*}

    Observe that $n -1 \ge 16, \gamma \in (0, \alpha), \alpha \in (0,\tfrac12)~  \text{and~} v \in \{-1,1 \}^{n-1} \implies x^v \in [0,1]$. Furthermore
    $\sum_{k=1}^{n-1} v_k = 0 \implies \sum_{j=1}^n x^v_k=1$. Therefore, we have that $x^v\in \Delta_n$. 

    In the rest of the proof, we restrict our attention to vectors belonging to a subset $\bar{\mathcal{V}}$ of $\mathcal{V}$. Precisely, consider any pair $v,w \in \bar{\mathcal{V}}$ and let 
    \begin{align*}
        D^+_{v,w} = \{ k \in [n-1]: v_k = 1 \text{ and } w_k = -1\} \quad \quad D^-_{v,w} = \{k \in [n-1]: v_k=-1 \text{ and }w_k = 1 \},
    \end{align*}
    where we will drop the explicit dependence on $v,w$ when clear from context. 
    Then, we require that for any $v,w\in\bar{ \mathcal{V}}$ it holds
    \begin{align}
        |D^+_{v,w}| = |D^-_{v,w}| \ge \frac{n-1}{16}.\label{eq:lb-any-alpha-n-eq1} 
    \end{align}
    It is easy to see that $|D^+_{v,w}| = |D^-_{v,w}|$ is satisfied for any pair $v,w \in \mathcal{V}$. The requirement that, \eg $|D^+| \ge \tfrac{n-1}{16}$, is instead equivalent to imposing that $\sum_{k=1}^{n-1} |v_k - w_k| \ge \tfrac{n-1}{16}$. Thanks to \Cref{lemma:pack-arg}, we know that:
    \begin{align}\label{eq:lb-any-alpha-n-eq2}
        \exists \bar{\mathcal{V}} \subseteq \mathcal{V}: \forall v,w \in \bar{\mathcal{V}},  \quad \sum_{k=1}^{n-1} |v_k - w_k| \ge \frac{n-1}{16} \text{~and~} |\bar{\mathcal{V}}| > 2^{\frac{n-1}{5}}.
    \end{align}
    In the following step, we will consider two generic profiles $(x^v,y^v)$ and $(x^w,y^w)$ such that $v,w \in \bar{\mathcal{V}}$.

\paragraph{Step 2: Hausdorff Distance}

    We analyze the distances $H(\mathcal{G}_\alpha(x^v, y^v), \mathcal{G}_\alpha(x^w, y^w))$ and $H(\mathcal{Z}_\alpha(x^v, y^v), \mathcal{Z}_\alpha(x^w, y^w))$. In particular, we will provide lower bounds on these quantities.

    We start from $H(\mathcal{G}_\alpha(x^v, y^v), \mathcal{G}_\alpha(x^w, y^w))$. We have that:
    \begin{align*}
        H(\mathcal{G}_\alpha(x^v, y^v), \mathcal{G}_\alpha(x^w, y^w)) & \ge \max_{A \in \mathcal{G}_\alpha^x(x^v,y^v)} \min_{A' \in \mathcal{G}_\alpha^x(x^w, y^w)} \| A - A'\|_{\infty} \\
        & \ge \min_{A' \in \mathcal{G}_\alpha^x(x^w, y^w)} \|A- A' \|_{\infty} ,
    \end{align*}
    where the matrix $A \in \mathcal{G}^x_\alpha(x^v, y^v)$ that we used in the second inequality is constructed as follows:

    \begin{equation}\label{eq:lb-any-alpha-n-eq3}
    A_{ij}:=
    \begin{cases} 
    -1 & i \in [n], j > 1, \\[2pt]
    - \kappa & i \in D^+,j=1,\\[2pt]
    1   & i \in D^- ,j=1,\\[2pt]
    0 & i \in [n-1] \setminus (D^+ \cup D^-),j=1, \\
    -1 & i=n, j =1.
    \end{cases}
    \qquad\text{where}\qquad
    \kappa \coloneqq \frac{\sum_{j\in D^-} x^v_j}{\sum_{i\in D^+} x^v_i}.
    \end{equation}
    Before continuing, we prove that $A$ constructed as in \Cref{eq:lb-any-alpha-n-eq3} satisfies $A \in \mathcal{G}^x_\alpha(x^v, y^v)$. To this end, we first verify that $A \in [-1,1]^{n \times n}$. It suffices to check $|\kappa| \le 1$. With some simple manipulations, we have that:
    \begin{align*}
        |\kappa| = \frac{\sum_{j\in D^-} x^v_j}{\sum_{i\in D^+} x^v_i} = \frac{\frac{|D^-|(\alpha-\gamma)}{n-1}}{\frac{|D^+|(\alpha+\gamma)}{n-1}} = \frac{\alpha-\gamma}{\alpha+\gamma} \le 1,
    \end{align*}
    where the second equality is due to \Cref{eq:lb-any-alpha-n-eq1} and the definition of $x^v$. Next, we verify that $A$ satisfy the $\alpha$-Nash constraints. To this end, we need to verify that $(x^v)^\top A y \le -1 +\alpha$. We note that:
    \begin{align*}
        (x^v)^\top A y & = \sum_{i \in D^+} x^v_i A_{i1} + \sum_{i \in D^-} x_i^v A_{i1} + x_n A_{n1} \\
        & =  -\kappa \sum_{i \in D^+} x^v_i + \sum_{i \in D^-} x_i^v+(\alpha - 1)  \tag{Def. of $A$ and $x^v$} \\
        & = -1+\alpha \tag{Def. of $\kappa$}. 
    \end{align*}
    Hence, all constraints are satisfied and $A \in \mathcal{G}_\alpha^x(x^v, y^v)$.

    We can now continue by lower bounding $\min_{A' \in \mathcal{G}_\alpha^x(x^w, y^w)} \|A- A' \|_{\infty}$. Specifically, let us introduce the auxiliary variable $t = \| A - A' \|_{\infty}$. Then, using the definition of $y^w$, we can rewrite this optimization problem as follows:
    \begin{equation}\label{eq:lb-any-alpha-n-eq4}
    \begin{alignedat}{2}
         \min_{A' \in [-1,1]^{n\times n},t \ge 0} & t \\
         \text{s.t.\quad} & |A'_{ij}-A'_{ij}| \le t,~\forall i,j \in [n]^2\\
        & \sum_{i \in [n]} x^w_i A'_{i1} \le  \min_{i \in [n]} A'_{i1} + \alpha.
    \end{alignedat}
    \end{equation}
    Consider any feasible matrix $A'$. Then, it must hold that:
    \begin{align*}
        A'_{1n} + \alpha & \ge (x^w)^\top A' y \tag{Last constraint of \Cref{eq:lb-any-alpha-n-eq4}}\\
        & = \sum_{i \ne n}x^w_i A'_{i1} + (1-\alpha) A'_{n1} \tag{Def. of $x^w$}\\
        & = \sum_{i \ne n} x^w_i A_{i1} + \sum_{i \ne n} (x_i^w (A'_{i1} - A_{i1})) + (1-\alpha) A'_{n1}  \\
        & \ge  \sum_{i \ne n} x^w_i A_{i1} - t \alpha + (1-\alpha) A'_{n1} \tag{$|A'_{ij} - A_{ij}| \le t$ and $\sum_{i \ne n}x^w_i = \alpha$}\\
        & = \sum_{i \ne n} (x^w_i - x^v_i) A_{i1} - t \alpha + (1-\alpha) A'_{n1} \tag{$\sum_{i \ne n}x_i^v A_{i1} = 0$} \\
        & = \frac{2\kappa |D^+| \gamma}{n-1} + \frac{2|D^-| \gamma}{n-1} - t \alpha + (1-\alpha) A'_{n1} \tag{Def. of $x$ and $A$} \\
        & = \frac{4|D^+|\gamma}{n-1} \left( \frac{\alpha}{\alpha+\gamma} \right)  - t \alpha + (1-\alpha) A'_{n1} \tag{Def. of $\kappa$} \\
        & \ge \frac{\gamma \alpha}{16(\alpha + \gamma)} - t \alpha + (1-\alpha) A'_{n1} \tag{\Cref{eq:lb-any-alpha-n-eq1}}.
    \end{align*}
    Hence, rearranging the terms, and manipulating the equations, we obtain that any feasible $A'$ satisfies:
    \begin{align*}
         \frac{\gamma \alpha}{16(\alpha + \gamma)} & \le \alpha (1+ A'_{n1}) + t \alpha \\
         & \le 2 t \alpha  \tag{$|A_{n1} - A'_{n1}| = |1+A'_{n1}| \le t$}
    \end{align*}

    Therefore, we arrived at $H(\mathcal{G}_\alpha(x^v, y^v), \mathcal{G}_\alpha(x^w, y^w)) \ge \frac{\gamma}{32(\alpha + \gamma)}$. Picking $\gamma = 128 \alpha \epsilon$, we obtained that:\footnote{In this step, we used that $\epsilon \le \tfrac{1}{128}$.}
    \begin{align}\label{eq:lb-any-alpha-n-eq5}
        H(\mathcal{G}_\alpha(x^v, y^v), \mathcal{G}_\alpha(x^w, y^w)) \ge 2 \epsilon.
    \end{align}

    Next we analyze \(H\!\left(\mathcal Z_\alpha(x^v,y^v),\mathcal Z_\alpha(x^w,y^w)\right)\). We have that:
    \begin{align*}
        H\!\left(\mathcal Z_\alpha(x^v,y^v),\mathcal Z_\alpha(x^w,y^w)\right) & \ge \max_{A \in \mathcal{Z}_\alpha(x^v,y^v)} \min_{A' \in \mathcal{Z}_\alpha(x^w, y^w)} \| A - A'\|_{\infty} \\
        & \ge \min_{A' \in \mathcal{Z}_\alpha(x^w, y^w)} \|A- A' \|_{\infty},
    \end{align*}
    where the matrix $A \in \mathcal{Z}_\alpha(x^v, y^v)$ is defined as in \Cref{eq:lb-any-alpha-n-eq3}. We already proved above that $A \in [-1,1]^{n \times n}$. It remains to prove that it satisfies the $\alpha$-Nash constraints for zero-sum games, \ie that $A$ satisfies:
    \begin{align*}
        (x^v)^\top A e_j - \alpha \le (x^v)^\top A y \le e_i^\top A y + \alpha, \quad \forall i,j \in [n]^2.
    \end{align*}
    Above, we have shown that $-1+\alpha=(x^v)^\top A y \le e_i^\top A y + \alpha, \quad \forall i,j \in [n]^2$. Now, it is sufficient to note that $(x^v)^\top A e_1 = (x^v)^\top A y$ and $(x^v)^\top A e_j = -1$ for all $j > 1$. Hence, $A \in \mathcal{Z}_\alpha(x^v, y^v)$.

    Then, one can proceed as above to obtain:
    \begin{align}\label{eq:lb-anyalpha-n-eq6}
        H(\mathcal{Z}_\alpha(x^v, y^v), \mathcal{Z}_\alpha(x^w, y^w)) \ge 2\epsilon. 
    \end{align}

\paragraph{Step 3: Change of measure}
    For any $x,y \in \Delta_n$, let $\mathbb{P}_{x, y} = \prod_{t=1}^{\tau_\delta} p_{x}(X_t) p_y(Y_t)$,
    where $p_q(\cdot)$ denotes the density function of $q \in \Delta_n$. Then, since \Cref{eq:lb-anyalpha-n-eq6,eq:lb-anyalpha-n-eq6} holds, we can apply \Cref{lemma:ours-change-of-measure-several-instances} using the set of instances $\{(x^v, y^v)\}_{v \in \bar{\mathcal{V}}}$ and $(x^0, y^0) = \!((\tfrac{\alpha}{n-1}, \dots \tfrac{\alpha}{n-1}, 1-\alpha), e_1\!)$ as base instance. Thus, both for GSGs and ZSGs, we have that, for any $(\epsilon, \delta)$-correct algorithm, it holds that:
    \begin{align*}
        \delta \ge 1-\frac{1}{\log (|\bar{\mathcal{V}}|)} \left( \frac{1}{|\bar{\mathcal{V}}|}\sum_{v \in \bar{\mathcal{V}}} \textup{KL}(\mathbb{P}_{x^v, y^v}, \mathbb{P}_{x^0, y^0}) + \log2  \right).
    \end{align*}
    We proceed by analyzing the kl-divergence term in the r.h.s. of this equation. Specifically, we have that:
    \begin{align*}
        \textup{KL}(\mathbb{P}_{x^v, y^v}, \mathbb{P}_{x^0, y^0}) & = \mathbb{E}_{(x^v,y^v)}\left[ \sum_{t=1}^{\tau_\delta} \text{KL}(p_{x^v}(X_t),p_{x^0}(X_t)) \tag{Since $y^v = y^0$} \right]  \\
        & = \mathbb{E}_{(x^v,y^v)}\left[ \tau_\delta \right] \text{KL}(p_{x^v},p_{x^0}) \tag{Wald's identity} \\
        & \le 2 (128^2) \alpha \epsilon^2 \mathbb{E}_{(x^v,y^v)}\left[ \tau_\delta \right] \tag{\Cref{lemma:kl-divergence-aux-lemma}}.
    \end{align*}
    Plugging this result in the previous equation and rearranging the terms, we obtain:
    \begin{align*}
        \frac{1}{|\bar{\mathcal{V}}|} \sum_{v \in \bar{\mathcal{V}}} \mathbb{E}_{(x^v,y^v)}\left[ \tau_\delta \right] \ge \frac{(1-\delta) \log(|\bar{\mathcal{V}}|) + \log 2}{2(128)^2 \alpha \epsilon^2}.
    \end{align*}
    Since $\mathbb{E}_{(x^v,y^v)}\left[ \tau_\delta \right] \le \max_{w \in \bar{\mathcal{V}}} \mathbb{E}_{(x^w,y^w)}\left[ \tau_\delta \right]$ for all $v \in \bar{\mathcal{V}}$, the last equation implies that there exists $\bar{w} \in \bar{\mathcal{V}}$ such that:
    \begin{align*}
        \mathbb{E}_{(x^{\bar{w}},y^{\bar{w}})}\left[ \tau_\delta \right]  \ge \frac{(1-\delta) \log(|\bar{\mathcal{V}}|) + \log 2}{ 2 (128^2) \alpha \epsilon^2}
    \end{align*}
    With some final algebraic steps, we can now conclude the proof:
    \begin{align*}
        \mathbb{E}_{(x^{\bar{w}},y^{\bar{w}})}\left[ \tau_\delta \right] &  \ge \frac{(1-\delta) \log(|\bar{\mathcal{V}}|) + \log 2}{2(128)^2 \epsilon^2} \\
        & \ge \frac{\log(2) ((1-\delta)\frac{(n-1)}{5} + 1) }{2(128)^2 \alpha\epsilon^2} \tag{\Cref{eq:lb-any-alpha-n-eq2}} \\
        & \ge \frac{\log(2)n}{20(128)^2\alpha\epsilon^2}  \tag{$\delta \le \tfrac12$ and $n \ge 22$},
    \end{align*}
    This concludes the proof.
\end{proof}

\section{Proof of \Cref{thm:ub-alpha-0}}\label{app:ub1}

This section is structured as follows: 
\begin{itemize}
    \item In \Cref{app:alpha0-gsg} we prove the result for the inverse GSG problem (\Cref{thm:gsg-alpha0}) 
    \item In \Cref{app:zsg-alpha0}, we prove the result of the inverse ZSG problem (\Cref{thm:zsg-alpha0})
\end{itemize}
The proof of \Cref{thm:ub-alpha-0} then follows directly by combining \Cref{thm:gsg-alpha0,thm:zsg-alpha0}.

\subsection{Proof of \Cref{thm:gsg-alpha0} ($\alpha=0$, General-Sum Games)}\label{app:alpha0-gsg}

\paragraph{Proof Outline} First, in \Cref{lemma:matconstr}, we show that, for any pair of strategies $(x,y)$ and $(\hat x, \hat y)$ such that $\supp(x)=\supp(\hat x)$ and $\supp(y)=\supp(\hat y)$, then $\mathcal{G}(x,y)$ is close to $\mathcal{G}(\hat x, \hat y)$ by construction. Specifically, for any point $(A, B) \in \mathcal{G}(x, y)$, there exist $(\hat A, \hat B) \in \mathcal{G}(\hat x, \hat y)$ such that $A \approx \hat A$ and $B \approx \hat B$. Importantly the approximation errors scale linearly with $\|x - \hat x\|_1$ and $\|y- \hat y\|_1$. Then, in \Cref{lemma:matrix-to-h}, we show how \Cref{lemma:matconstr} implies that the Hausdorff distance between the two sets is directly proportional to $\|x - \hat x\|_1$ and $\|y- \hat y\|_1$. In \Cref{lemma:high-prob-alpha0-gsg} we combine this result together with probabilistic argument. In this way, we are able to provide an high probability error upper bound on $H(\mathcal{G}(x,y), \mathcal{G}(\hat x, \hat y))$ that is explicitly related to the number of data $m$ collected by \Cref{alg:WAS}. Finally, to prove \Cref{thm:gsg-alpha0} we only need to select $m$ appropriately to guarantee that the final error of \Cref{alg:WAS} is bounded by $\epsilon$ with high probability.  

We start with \Cref{lemma:matconstr}. Note that here, $x,y,\hat x, \hat y$ are any strategy on the simplex.

\begin{lemma}[Alternative Matrix Construction; $\alpha=0$ and General-Sum Games]\label{lemma:matconstr}
Let $\alpha = 0$, $x, y, \hat x, \hat y \in \Delta_n$.
$A \in \mathcal{G}^x(x,y)$ and $B \in \mathcal{G}^y(x,y)$ and $\hat{x},\hat{y} \in \Delta_n$.

Then if $\supp(x) = \supp(\hat x)$ then there exists $\hat A \in \mathcal{G}^x(\hat x, \hat y)$ such that
\begin{align}
    &\| A -  \hat A \|_{\infty} \le 4 \| y - \hat y \|_1 \label{eq:lem_mat_constr_eq-1},
\end{align}
and similarly, if $\supp(y) = \supp(\hat y)$ then there exists $\hat B \in \mathcal{G}^y(\hat x, \hat y)$ such that
\begin{align}
    &\| B - \hat B \|_{\infty} \le 4 \| x - \hat x\|_1 \label{eq:lem_mat_constr_eq-2}.
\end{align}

Furthermore, $\hat A$ and $\hat B$ are defined as: 
\[
\hat A_{ij} :=
\begin{cases}
\dfrac{A_{ij}}{1+2\|y - \hat y\|_1}, & \text{if } i \in \argmax _{k \in \supp(x)} \sum_{j \in [n]} \hat{y}_j A_{kj},\\[12pt]
\dfrac{A_{ij} + \Delta_i^x}{1+2\|y - \hat y\|_1}, & \text{if } i \in \supp(x) \setminus \argmax _{k \in \supp(x)} \sum_{j \in [n]} \hat{y}_j A_{kj},\\[12pt]
\dfrac{A_{ij} + 2\|y - \hat y\|_1}{1+2\|y - \hat y\|_1}, & \text{if } i \notin \supp(x)
\end{cases}
\]

\[
\hat B_{ij} := 
\begin{cases}
\dfrac{B_{ij}}{1+2\|x - \hat x\|_1}, & \text{if } j \in \argmax _{k \in \supp(y)} \sum_{i \in [n]} \hat{x}_i B_{ik},\\[12pt]
\dfrac{B_{ij} + \Delta_i^y}{1+2\|x - \hat x\|_1}, & \text{if } j \in \supp(y) \setminus \argmax _{k \in \supp(y)} \sum_{i \in [n]} \hat{x}_i B_{ik},\\[12pt]
\dfrac{B_{ij} + 2\|x - \hat x\|_1}{1+2\|x - \hat x\|_1}, & \text{if } j \notin \supp(y),
\end{cases}
\]
where, for all $i \in \supp(x)$ and all $j \in \supp(y)$, $\Delta_i^x$ and $\Delta_j^y$ are defines as follows:
\begin{align*}
    &\Delta_i^x =\max_{k \in \supp(x)} \sum_{j \in [n]} \hat{y}_j A_{kj} - \sum_{j \in [n]} \hat{y}_j A_{ij} \\
    &\Delta_j^y = \max_{k \in \supp(y)} \sum_{i \in [n]} \hat x_i B_{ik} - \sum_{i \in[n]} \hat x_i B_{ij}.
\end{align*}

\end{lemma}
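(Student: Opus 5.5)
The plan is a direct verification of the explicit construction, carried out for $\hat A$; the case of $\hat B$ is symmetric, with the roles of rows and columns and of $x,y$ exchanged. Throughout, write $\eta := \|y-\hat y\|_1$ and $v_k := \sum_j \hat y_j A_{kj} = e_k^\top A\hat y$. We also set $v^\star := \max_{k\in\supp(x)} v_k$ and $S^\star := \argmax_{k\in\supp(x)} v_k$.

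\textbf{Step 1: a priori bound on $\Delta_i^x$.} By the equal-support characterization \Cref{eq:nash-revr}, $A\in\mathcal{G}^x(x,y)$ gives $e_k^\top A y = x^\top A y =: v$ for every $k\in\supp(x)$, and $e_i^\top A y \ge v$ for $i\notin\supp(x)$. Since $|A_{kj}|\le 1$, we get $|v_k - e_k^\top A y|\le \eta$ for every $k$. Hence all $v_k$ with $k\in\supp(x)$ lie in $[v-\eta, v+\eta]$, so $\Delta_i^x = v^\star - v_i \in [0,2\eta]$. Likewise $v_i \ge v-\eta \ge v^\star - 2\eta$ for $i\notin\supp(x)$.

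\textbf{Step 2: $\hat A\in[-1,1]^{n\times n}$ and feasibility.} In every case the numerator is $A_{ij}+s_i$ with a row shift $s_i\in[0,2\eta]$. It therefore lies in $[-1,1+2\eta]$, and dividing by $1+2\eta$ keeps it inside $[-1,1]$. For feasibility, the positive rescaling by $1+2\eta$ and the row shifts commute with the linear constraints. Since $\sum_j\hat y_j=1$, we obtain $(1+2\eta)\, e_i^\top \hat A\hat y = v_i + s_i$.
\begin{itemize}
\item For $i\in S^\star$ this equals $v^\star$.
\item For $i\in\supp(x)\setminus S^\star$ it equals $v_i+\Delta_i^x = v^\star$.
\item For $i\notin\supp(x)$ it equals $v_i+2\eta\ge v^\star$ by Step 1.
\end{itemize}
Because $\supp(\hat x)=\supp(x)$, we get $\hat x^\top\hat A\hat y = v^\star/(1+2\eta) \le e_i^\top\hat A\hat y$ for all $i$, so $\hat A\in\mathcal{G}^x(\hat x,\hat y)$. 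This is precisely where the equal-support hypothesis is used: rows in $\supp(\hat x)$ are exactly the equalized ones.

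\textbf{Step 3: distance bound.} For each entry,
\[
|A_{ij}-\hat A_{ij}| = \frac{|2\eta A_{ij} - s_i|}{1+2\eta} \le \frac{2\eta + 2\eta}{1+2\eta}\le 4\eta .
\]
This is \Cref{eq:lem_mat_constr_eq-1}. The bound for $B$ follows identically, applying Step 1 to columns via $x^\top B e_j$ and $\hat x^\top B e_j$ with $|\cdot|\le\|x-\hat x\|_1$.

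\textbf{Main obstacle.} The only delicate point is Step 1, specifically the lower bound $v_i\ge v^\star-2\eta$ for the unplayed rows. It needs the constraint $e_i^\top Ay\ge v$ at the true $y$, transported to $\hat y$ at cost $\eta$, together with $v^\star\le v+\eta$. Everything else is bookkeeping.
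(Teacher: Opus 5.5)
Your proposal is correct and follows essentially the same route as the paper. It uses the same construction, and bounds $\Delta_i^x\le 2\|y-\hat y\|_1$ the same way, by transporting the on-support equalities from $y$ to $\hat y$ at cost $\|y-\hat y\|_1$. The feasibility check is also the paper's: on-support rows are equalized and off-support rows dominate, and your Step~1 bound $v_i\ge v^\star-2\eta$ is just the paper's inequality chain for $i_2\notin S_x$, rearranged. The entrywise distance estimate matches as well. The only cosmetic difference is that you verify $\hat x^\top\hat A\hat y\le e_i^\top\hat A\hat y$ directly, rather than through the support characterization of \Cref{lemma:nash-support}.
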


\begin{proof}
We first prove \Cref{eq:lem_mat_constr_eq-1} by construction. 
For brevity, let $S_x:=\supp(x) \subseteq [n]$ and define the set of maximizers of $\sum_{j \in [n]} \hat{y}_j A_{ij}$ over the different rows of the support of $x$ as $S^\star_x$, namely $S^{\star}_x \coloneqq \argmax_{k\in S_x} \sum_{j \in [n]} \hat{y}_j A_{kj}$.
Note that, since $A_{ij} \in [-1,1]$ and $\hat y \in \Delta_n$, we have that $\Delta_i^x \in [0,2]$. Furthermore, for all $i \in S_x$
\begin{align*}
\Delta_i^x & = \max_{k \in S_x} \sum_{j \in [n]} \hat{y}_j A_{kj} - \sum_{j \in [n]} \hat y_j A_{ij} \\
& = \sum_{j \in [n]} \hat y_j (A_{kj} - A_{ij}) \tag{For any $k \in S^\star_x$} \\
& = \sum_{j \in [n]} y_j (A_{kj} - A_{ij}) + \sum_{j \in [n]} (\hat y_j - y_j) (A_{kj}-A_{ij}) \\
& = \sum_{j \in [n]} (\hat y_j - y_j) (A_{kj}-A_{ij}) \tag{\Cref{lemma:nash-support} and $i, k \in S_x$} \\
& \le 2 \| y - \hat y \|_1 \tag{$-1 \le A \le 1$} \\
& \coloneqq \Delta,
\end{align*}
where in the last step we have introduced $\Delta \coloneqq 2 \|y - \hat y \|_1$.

Now, we are ready to define a matrix $\hat A$ that we will use to prove \Cref{eq:lem_mat_constr_eq-1}. Specifically, $\hat A$ is defined as follows:
\[
\hat A_{ij}  := 
\begin{cases}
\dfrac{A_{ij}}{1+\Delta}, & \text{if } i \in S^{\star}_x,\\
\dfrac{A_{ij} + \Delta_i^x}{1+\Delta}, & \text{if } i \in S_x \setminus S^{\star}_x,\\
\dfrac{A_{ij} + \Delta}{1+\Delta}, & \text{if } i \notin S_x.
\end{cases}
\]
First of all, note that since $A \in [-1,1]^{n \times n}$ and $\Delta_i^x \le \Delta$, we have that $\hat A \in [-1,1]^{n \times n}$. 

At this point, we continue by showing that $\hat A \in \mathcal G^x(\hat x,\hat y)$. To this end, since $\alpha=0$ and $\supp(x) = \supp(\hat x) = S_x$, due to \Cref{lemma:nash-support} we only need to show that
\begin{subequations}\label{eq:support}
\begin{align}
e_i^{\top}\hat A \hat y &= e_j^{\top}\hat A \hat y \quad \forall\, i,j\in S_x, \label{eq:support-eq}\\
e_i^{\top}\hat A \hat y &\le e_j^{\top}\hat A \hat y \quad \forall\, i\in S_x,\ j\notin S_x. \label{eq:support-ineq}
\end{align}
\end{subequations}

\paragraph{Condition \eqref{eq:support-eq}}
To prove that Condition \eqref{eq:support-eq} holds, we first show that for any $i_1, i_2 \in S^*_x$, we have $e_{i_1}^{\top}\hat A \hat y = e_{i_2}^{\top}\hat A \hat y$. Specifically, let $i_1,i_2 \in S^\star_x$. Then,
\begin{align*}
    e_{i_1}^\top \hat A \hat y & = \frac{1}{1+\Delta} \sum_{j \in [n]} A_{i_1 j} \hat y_j \\
    & = \frac{1}{1+\Delta} \sum_{j \in [n]} A_{i_2 j} \hat y_j \tag{$S^\star_x = \argmax_{i \in S_x} \sum_{j \in [n]} \hat y_j A_{i j}$ and $i_1, i_2 \in S^\star_x$} \\
    & = e_{i_2}^\top \hat A \hat y.
\end{align*}
To conclude the proof of Condition \eqref{eq:support-eq}, we need to consider  $i_1\in S^{\star}_x$, and $i_2\in S_x \setminus S^{\star}_x$ and prove that $e_{i_1}^\top \hat A \hat y = e_{i_2}^\top \hat A \hat y$. Using the definition of $\hat A$, we have that:
\begin{align*}
    e_{i_1}^\top \hat A \hat y = e_{i_2}^\top \hat A \hat y \iff \sum_{j \in [n]} \hat y_j A_{i_1 j} = \sum_{j \in [n]} \hat y_j A_{i_2 j} + \Delta_{i_2}^x.
\end{align*}
However, by definition of $\Delta_{i_2}^x$, we have that,
\begin{align*}
    \Delta_{i_2}^x = \max_{k \in S} \sum_{j \in [n]} \hat y_j A_{k j} - \sum_{k \in [n]} \hat y_j A_{i_2 k} = \sum_{j \in [n]} \hat y_j A_{i_1 j} - \sum_{k \in [n]} \hat y_j A_{i_2 k},
\end{align*}
hence, $\sum_{j \in [n]} \hat y_j A_{i_1 j} = \sum_{j \in [n]} \hat y_j A_{i_2 j} + \Delta_{i_2}$ holds, and $e_{i_1}^\top \hat A \hat y = e_{i_2}^\top \hat A \hat y$ holds as well. Finally, we note that the above argument also covers the case where
$i_1, i_2 \in S_x \setminus S_x^\star$. Indeed, for any $i \in S_x \setminus S_x^\star$, we have
\[
e_i^\top \hat A \hat y = e_{i^\star}^\top \hat A \hat y
\]
for any $i^\star \in S_x^\star$. Therefore, for any
$i_1, i_2 \in S_x \setminus S_x^\star$,
\[
e_{i_1}^\top \hat A \hat y
= e_{i^\star}^\top \hat A \hat y
= e_{i_2}^\top \hat A \hat y,
\]
which shows that Condition~\eqref{eq:support-eq} holds for all pairs
$i_1,i_2 \in S_x$.

\paragraph{Condition \eqref{eq:support-ineq}.}
To prove that Condition \eqref{eq:support-ineq} holds, since we already proven Condition \eqref{eq:support-eq}, it is sufficient to show that, for $i_1 \in S^\star_x$ and $i_2 \notin S_x$, we have that $e_{i_1}^\top \hat A \hat y \le e_{i_2}^\top \hat A \hat y$. 
Consider the following inequalities:
\begin{align*}
    e_{i_1}^\top \hat A \hat y & = \frac{1}{1+\Delta} \left( \sum_{j \in [n]} \hat y_j A_{i_1 j}\right) \\
    & = \frac{1}{1+\Delta} \left( \sum_{j \in [n]} y_j A_{i_1 j} + \sum_{j \in [n]} (\hat y_j - y_j) A_{i_1 j} \right) \\
    & \le \frac{1}{1+\Delta} \left( \sum_{j \in [n]} y_j A_{i_2 j} + \sum_{j \in [n]} (\hat y_j - y_j) A_{i_1 j} \right) \tag{Due to $A \in \mathcal{G}^x(x,y)$ and \Cref{lemma:nash-support}} \\
    & = \frac{1}{1+\Delta} \left( \sum_{j \in [n]} \hat y_j A_{i_2 j} + \sum_{j \in [n]} (\hat y_j - y_j) (A_{i_1 j}  - A_{i_2 j})\right) \\
    & \le \frac{1}{1+\Delta} \left( \sum_{j \in [n]} \hat y_j A_{i_2 j} + 2\| y - \hat y\|_1 )\right) \\
    & = e_{i_2}^\top \hat A \hat y.
\end{align*}
Hence, we proved that $\hat A \in\mathcal G^x(\hat x,\hat y)$.

Finally, to conclude the proof of \Cref{eq:lem_mat_constr_eq-1}, it remains to bound $\|A - \hat{A}\|_\infty$.
Fix any pair of $i,j \in [n]$. Then, if $i \in S^\star_x$, we have that:
\[
|A_{ij}-\hat A_{ij}|
= \left|A_{ij}-\frac{A_{ij}}{1+\Delta}\right|
= |A_{ij}|\,\frac{\Delta}{1+\Delta}
\le \frac{\Delta}{1+\Delta} \le 2 \Delta.
\]
If, instead, $i\in S_x\setminus S^{\star}_x$:
\[
|A_{ij}-\hat A_{ij}|
= \left|A_{ij}-\frac{A_{ij}+\Delta_i^x}{1+\Delta}\right|
= \frac{|\,\Delta A_{ij}-\Delta_i^x\,|}{1+\Delta}
\le \frac{|A_{ij}|\Delta + \Delta_i^x}{1+\Delta}
\le \frac{2\Delta}{1+\Delta} \le 2\Delta,
\]
Finally, if $i\notin S_x$, we have that:
\[
|A_{ij}-\hat A_{ij}|
= \left|A_{ij}-\frac{A_{ij}+\Delta}{1+\Delta}\right|
= \frac{\Delta|1-A_{ij}|}{1+\Delta}
\le \frac{2\Delta}{1+\Delta} \le 2\Delta.
\]
Hence, $\|A-\hat A\|_\infty \le 2\Delta$, thus concluding the proof of \Cref{eq:lem_mat_constr_eq-1}.

We continue by proving \Cref{eq:lem_mat_constr_eq-2}. In particular, the proof of \Cref{eq:lem_mat_constr_eq-2} follows from arguments that are symmetrical to the ones that we presented above. More precisely, consider $B \in \mathcal{G}^y(x,y)$. Let $S_y \coloneqq \supp(y)$ and for all $j \in [n]$ and define:
\[
\Delta_j^y = \max_{k \in S_y} \sum_{i \in [n]} \hat x_i B_{ik} - \sum_{i \in [n]} \hat x_i B_{ij}.
\]
Furthermore, let $S^{\star}_y = \argmax_{j \in [n]} \sum_{i \in [n]} \hat x_i B_{ij}$. Then, we define $\hat B$ as follows:
\[
\hat B_{ij}  := 
\begin{cases}
\dfrac{B_{ij}}{1+\Delta}, & \text{if } j \in S^{\star}_y,\\
\dfrac{B_{ij} + \Delta_i^x}{1+\Delta}, & \text{if } j \in S_y \setminus S^{\star}_y,\\
\dfrac{B_{ij} + \Delta}{1+\Delta}, & \text{if } j \notin S_y.
\end{cases}
\]
where $\Delta = 2 \|x - \hat x \|_1$. The proof of the results than follows from arguments that are analogous to the ones that we presented for \Cref{eq:lem_mat_constr_eq-1}.
\end{proof}

We now leverage \Cref{lemma:matconstr} to provide an upper bound on the Hausdorfff distance.

\begin{lemma}[Matrix Construction $\to$ Hausdorfff Distance]\label{lemma:matrix-to-h}
    Let $\alpha = 0$, $x,y,\hat x, \hat y \in \Delta_n$. Suppose that $\supp(x) = \supp(\hat x)$ and $\supp(y) = \supp(\hat y)$. Then, it holds that:
    \begin{align*}
        H(\mathcal{G}(x,y), \mathcal{G}(\hat x, \hat y)) \le 4 (\| x - \hat x\|_1 + \|y - \hat y \|_1).
    \end{align*}
\end{lemma}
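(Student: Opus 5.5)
The proof applies \Cref{lemma:matconstr} in both directions and then uses the product structure $\mathcal{G}(x,y) = \mathcal{G}^x(x,y)\times\mathcal{G}^y(x,y)$. We read the $\ell_\infty$-distance between pairs as $\|(A,B)-(A',B')\|_\infty = \max\{\|A-A'\|_\infty,\|B-B'\|_\infty\}$.

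\textbf{First direction.} Fix $(A,B)\in\mathcal{G}(x,y)$, so $A\in\mathcal{G}^x(x,y)$ and $B\in\mathcal{G}^y(x,y)$. Since $\supp(x)=\supp(\hat x)$, \Cref{lemma:matconstr} gives $\hat A\in\mathcal{G}^x(\hat x,\hat y)$ with $\|A-\hat A\|_\infty\le 4\|y-\hat y\|_1$. Since $\supp(y)=\supp(\hat y)$, it also gives $\hat B\in\mathcal{G}^y(\hat x,\hat y)$ with $\|B-\hat B\|_\infty\le 4\|x-\hat x\|_1$. The pair $(\hat A,\hat B)$ therefore lies in $\mathcal{G}(\hat x,\hat y)$, and
\[
\max\{\|A-\hat A\|_\infty,\|B-\hat B\|_\infty\}\le 4\max\{\|x-\hat x\|_1,\|y-\hat y\|_1\}\le 4(\|x-\hat x\|_1+\|y-\hat y\|_1).
\]
Taking the infimum over $\mathcal{G}(\hat x,\hat y)$ and then the supremum over $(A,B)$ bounds the first term of the Hausdorff distance.

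\textbf{Second direction.} Swap the roles of $(x,y)$ and $(\hat x,\hat y)$. \Cref{lemma:matconstr} is stated for arbitrary strategies in $\Delta_n$, and its hypotheses $\supp(\hat x)=\supp(x)$ and $\supp(\hat y)=\supp(y)$ are symmetric. So every $(\hat A,\hat B)\in\mathcal{G}(\hat x,\hat y)$ has a pair $(A,B)\in\mathcal{G}(x,y)$ within the same bound. This handles the other sup-inf term.

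\textbf{Main check.} The only point needing care is that \Cref{lemma:matconstr} really is symmetric as stated. Its proof used only that the first pair is an exact equilibrium (through \Cref{lemma:nash-support}) and that the supports are equal. Both hold after the swap, and the bound depends only on the symmetric quantities $\|x-\hat x\|_1$ and $\|y-\hat y\|_1$. With that confirmed, the two directions together prove the claim.
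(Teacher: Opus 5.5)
Your proof is correct and follows the paper's argument: apply \Cref{lemma:matconstr} componentwise, using the product structure $\mathcal{G}(x,y)=\mathcal{G}^x(x,y)\times\mathcal{G}^y(x,y)$, in both directions, then bound $4\max\{\|x-\hat x\|_1,\|y-\hat y\|_1\}$ by the sum. Your explicit check that the lemma's hypotheses are symmetric under $(x,y)\leftrightarrow(\hat x,\hat y)$ makes precise the reverse direction, which the paper leaves implicit.
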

\begin{proof}
    The proof is a direct application of \Cref{lemma:matconstr}. Precisely:
    \begin{align*}
        H(\mathcal{G}(x,y), \mathcal{G}(\hat x, \hat y)) & = \max \Bigg \{ \max_{(A,B) \in \mathcal{G}(x,y)}  \min_{(\hat A, \hat B) \in \mathcal{G}(\hat x, \hat y)} \max\{ \|A - \hat{A} \|_{\infty}, \|B - \hat{B}\|_\infty , \\&\hspace{2cm} \max_{(\hat A, \hat B) \in \mathcal{G}(\hat x, \hat y)}  \min_{(A, B) \in \mathcal{G}(x, y)} \max\{ \|A - \hat{A} \|_{\infty}, \|B - \hat{B}\|_\infty   \Bigg\}  \\
        & \le 4\max \left\{ \|x-\hat x \|_1, \|y - \hat y \|_1  \right\} \tag{\Cref{lemma:matconstr}} \\
        & \le 4 (\|x-\hat x \|_1 + \|y-\hat y \|_1).
    \end{align*}
    which concludes the proof. Here, we simply remark that, in order to apply \Cref{lemma:matconstr} in the first inequality, it is important to recall that $\mathcal{G}(x,y) = \mathcal{G}^x(x,y) \times \mathcal{G}^{y}(x,y)$.
\end{proof}

At this point, we can combine \Cref{lemma:matrix-to-h} with probabilistic arguments to obtain a high probability error bound on the Hausdorff distance that is directly related to the number of samples observed by our algorithm.

\begin{lemma}[High-Probability Hausdorff Bound]\label{lemma:high-prob-alpha0-gsg}
    Let $\alpha = 0$ and  $\hat x, \hat y \in \Delta_n$ be the maximum likelihood estimators of $x,y$ after $m$ samples. Suppose that $m$ satisfies:
    \begin{align}\label{eq:good-event-hp}
        m \ge \frac{\log\left( \frac{4n}{\delta} \right)}{\log\left( \frac{1}{1-\pi_{\min}} \right)},
    \end{align}
    
    then, with probability at least $1-\delta$, it holds that:
    \begin{align*}
        H(\mathcal{G}(x,y), \mathcal{G}(\hat x, \hat y) \le 8 \sqrt{\frac{2 \log\left(\frac{4}{\delta}\right)+2n\log(6m)}{m}}. 
    \end{align*}
\end{lemma}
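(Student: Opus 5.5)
The plan is to intersect two good events, each with probability at least $1-\delta/2$, and then invoke \Cref{lemma:matrix-to-h}. The first event, $\mathcal{E}_{\supp}$, is that $\supp(\hat x)=\supp(x)$ and $\supp(\hat y)=\supp(y)$. The second event, $\mathcal{E}_{\ell_1}$, is that
\begin{align*}
\|x-\hat x\|_1 \le \sqrt{\tfrac{2\log(4/\delta)+2n\log(6m)}{m}}
\end{align*}
together with the analogous bound for $y$. On $\mathcal{E}_{\supp}\cap\mathcal{E}_{\ell_1}$, \Cref{lemma:matrix-to-h} gives $H \le 4(\|x-\hat x\|_1+\|y-\hat y\|_1)$, which is at most $8$ times the square root. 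That is exactly the claimed bound.

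For $\mathcal{E}_{\supp}$, note that since $\hat x$ is an empirical mean, $x_i=0$ forces $\hat x_i=0$. Hence $\supp(\hat x)\subseteq\supp(x)$ always holds, and the only way the event fails is that some $i$ with $x_i\ge\pi_{\min}$ is never sampled. For a fixed such $i$ this happens with probability $(1-x_i)^m\le(1-\pi_{\min})^m$. A union bound over at most $n$ indices for $x$ and $n$ for $y$ gives a failure probability of at most $2n(1-\pi_{\min})^m$. Condition \eqref{eq:good-event-hp} makes this at most $2n\cdot\delta/(4n)=\delta/2$.

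For $\mathcal{E}_{\ell_1}$, I would use the dual representation $\|x-\hat x\|_1=\max_{s\in\{-1,1\}^n} s^\top(\hat x-x)$. For each fixed $s$, the quantity $s^\top\hat x$ is an average of $m$ i.i.d.\ variables in $[-1,1]$, so Hoeffding's inequality gives $\Prob(s^\top(\hat x-x)\ge t)\le\exp(-mt^2/2)$. A union bound over the $2^n$ sign vectors, and then over both players, yields a threshold $t=\sqrt{2(\log(4/\delta)+n\log 2)/m}$, which is smaller than the stated quantity. The stated form $2n\log(6m)$ suggests the authors instead used an $\ell_1$-ball covering argument or a Weissman-type inequality, but any such bound dominated by the stated one suffices. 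Taking the failure probability $\delta/4$ per player gives $\delta/2$ in total.

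I expect no real obstacle here. The only care needed is the bookkeeping of $\delta$: allocate $\delta/2$ to support recovery and $\delta/2$ to concentration, and check that the constant $8$ comes out as $4\cdot 2$, with the factor $2$ coming from the two $\ell_1$ terms each bounded by the same square root. I would also remark that the statement uses $\pi_{\min}$ as a lower bound on $\pi_{\min}(x,y)$, which is what makes the support-recovery step valid.
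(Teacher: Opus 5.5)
Your proof is correct and follows the paper's argument: you use the same two good events (support recovery and $\ell_1$ concentration, each failing with probability at most $\delta/2$, split as $\delta/4$ per player) and the same union-bound argument as in \Cref{lemma:support} under the stated condition on $m$, and then you conclude with \Cref{lemma:matrix-to-h}. The only difference is the $\ell_1$ step. The paper invokes \Cref{lemma:l1-high-prob} (Pinsker plus a KL concentration bound, which is where the $2n\log(6m)$ term comes from), whereas your Hoeffding-plus-union-bound over the $2^n$ sign vectors gives the slightly smaller threshold with $2n\log 2$; this is dominated by the stated one and therefore suffices.
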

\begin{proof}
    We first define the following good event $\mathcal{E}$ and shows that it holds with probability at least $1-\delta$. We define $\mathcal{E} = \mathcal{E}_1 \cap \mathcal{E}_2$, where
    \begin{align*}
        & \mathcal{E}_1 = \left\{ \| x - \hat x \|_1 \le \sqrt{\frac{2 \log\left(\frac{4}{\delta}\right)+2n\log(6m)}{m}}  \right\} \cap \left\{ \|y - \hat y \|_1 \le \sqrt{\frac{2 \log\left(\frac{4}{\delta}\right)+2n\log(6m)}{m}} \right\} \\
        & \mathcal{E}_2 = \{ \supp(x) = \supp(\hat x) \} \cap \{ \supp(y) = \supp(\hat y) \}. 
    \end{align*}
    Then, we have that $\mathbb{P}(\mathcal{E}) = 1 - \Prob(\mathcal{E}^\complement) \ge 1 - (\mathbb{P}(\mathcal{E}_1^\complement) + \Prob(\mathcal{E}^{\complement}_2))$.
    Hence, we simply upper bound $\Prob(\mathcal{E}_1^\complement)$ and $\Prob(\mathcal{E}^\complement_2)$. 
    \begin{align*}
        \Prob ( \mathcal{E}_1^\complement) & \le \Prob\left( \| x - \hat x \|_1 > \sqrt{\frac{2 \log\left(\frac{4}{\delta}\right)+2n\log(6m)}{m}}   \right) + \Prob \left( \|y - \hat y \|_1 > \sqrt{\frac{2 \log\left(\frac{4}{\delta}\right)+2n\log(6m)}{m}} \right) \\
        & \le \frac{\delta}{2}  \tag{\Cref{lemma:l1-high-prob}}.
    \end{align*}
    Furthermore, 
    \begin{align*}
        \Prob(\mathcal{E}_2^\complement) & \le \Prob(\supp(x) \ne \supp(\hat x)) + \Prob(\supp(y) \ne \supp(\hat y)) \\
        & \le \frac{\delta}{2} \tag{\Cref{eq:good-event-hp} and \Cref{lemma:support}}.
    \end{align*}
    Thus, we obtained that $\Prob(\mathcal{E}) \ge 1-\delta$.

    Therefore, due to the definition of $\mathcal{E}$ and by using \Cref{lemma:matrix-to-h}, we have that, with probability at least $1-\delta$:
    \begin{align*}
        H(\mathcal{G}(x,y), \mathcal{G}(\hat x, \hat y)) & \le 4 (\| x - \hat x\|_1 + \|y - \hat y \|_1) \\
        & \le  8 \sqrt{\frac{2 \log\left(\frac{4}{\delta}\right)+2n\log(6m)}{m}},
    \end{align*}
    which concludes the proof.
\end{proof}

We are now ready to show that \Cref{alg:WAS} is an $(\epsilon,\delta)$-correct method for learning $\mathcal{G}(x,y)$ with minimax optimal sample complexity.

\begin{theorem}\label{thm:gsg-alpha0}
    Consider $\alpha = 0$ and let $$m \in \widetilde{\mathcal{O}} \left( \frac{\log\left( \frac{1}{\delta} \right)}{\log(1/(1-\pi_{\min}))} + \frac{n+\log\left(\frac{1}{\delta}\right)}{\epsilon^2} \right).$$ Then \Cref{alg:WAS} is $(\epsilon, \delta)$-correct for General-Sum Games and its sample complexity $\tau_\delta$ is given by $m$.    
\end{theorem}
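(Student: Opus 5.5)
The plan is to invoke \Cref{lemma:high-prob-alpha0-gsg} directly and then choose $m$ so that both of its requirements hold. Procedure \ref{alg:WAS} always stops after exactly $m$ samples, so $\tau_\delta = m$ deterministically, and the only thing left to show is $(\epsilon,\delta)$-correctness.

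We pick $m$ to satisfy two conditions simultaneously.

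\emph{(i) Support condition.} We need \Cref{eq:good-event-hp}, namely
\[
m \ge \frac{\log(4n/\delta)}{\log\bigl(1/(1-\pi_{\min})\bigr)}.
\]
Since $\log(4n/\delta) = \log(4n) + \log(1/\delta)$, this is a term of order $\log(1/\delta)/\log(1/(1-\pi_{\min}))$ times logarithmic factors in $n$, which $\widetilde{\mathcal{O}}$ absorbs.

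\emph{(ii) Accuracy condition.} We need the Hausdorff bound of \Cref{lemma:high-prob-alpha0-gsg} to fall below $\epsilon$:
\[
8\sqrt{\frac{2\log(4/\delta) + 2n\log(6m)}{m}} \le \epsilon,
\quad\text{i.e.}\quad
m \ge \frac{128\bigl(\log(4/\delta) + n\log(6m)\bigr)}{\epsilon^2}.
\]

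Condition (ii) is the step that needs care, because $m$ appears on both sides through the term $n\log(6m)$. It is resolved by a standard lemma for inequalities of the form $m \ge a + b\log m$: any $m \ge 2a + 4b\log(2b)$ suffices, up to constants. Here we take $a = \mathcal{O}(\log(1/\delta)/\epsilon^2)$ and $b = \mathcal{O}(n/\epsilon^2)$. The resulting sufficient $m$ is
\[
\mathcal{O}\!\left(\frac{\log(1/\delta) + n\log(n/\epsilon)}{\epsilon^2}\right),
\]
which is $\widetilde{\mathcal{O}}\bigl((n + \log(1/\delta))/\epsilon^2\bigr)$. If no such auxiliary lemma has been stated, we verify the choice directly. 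Set $m = C\bigl(\log(4/\delta) + n\log(n/\epsilon)\bigr)/\epsilon^2$ for a large constant $C$. Then $\log(6m) \lesssim \log(n/\epsilon) + \log\log(1/\delta)$, and plugging this back in closes the inequality.

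Taking $m$ to be the maximum (or the sum) of the two thresholds gives a value in the stated $\widetilde{\mathcal{O}}$ class. With this $m$, \Cref{lemma:high-prob-alpha0-gsg} gives $H(\mathcal{G}(x,y), \mathcal{G}(\hat x,\hat y)) \le \epsilon$ with probability at least $1-\delta$. Hence $\Prob\bigl(H \ge \epsilon\bigr) \le \delta$, where the boundary case of equality is handled by the slack in the constants or by aiming for $\epsilon/2$. This proves that the procedure is $(\epsilon,\delta)$-correct with $\tau_\delta = m$.
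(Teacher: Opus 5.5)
Your proposal is correct and follows essentially the same route as the paper: apply \Cref{lemma:high-prob-alpha0-gsg}, take $m$ as the maximum of the support threshold $\log(4n/\delta)/\log(1/(1-\pi_{\min}))$ and the accuracy threshold, and resolve the implicit $n\log(6m)$ term via the auxiliary \Cref{lemma:tech-lemma} with $c_1=4$, $c_2=6$, $K=\epsilon/\sqrt{128}$. Your remark about the boundary case ($H=\epsilon$ versus the strict requirement $\Prob(H\ge\epsilon)\le\delta$) addresses a minor point that the paper leaves implicit.
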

\begin{proof}
First, we give a precise expression of the number of samples $m$ needed by \Cref{alg:WAS}.

\begin{align}\label{eq:thm-41-alpha0}
    m & = \max \left\{ \frac{\log\left( \frac{4n}{\delta} \right)}{\log\left( \frac{1}{1-\pi_{\min}} \right)}, 2 \left( 2 + \frac{4\log(4/\delta)}{(\epsilon/\sqrt{128})^2} + \frac{4n}{(\epsilon/\sqrt{128})^2} \log\left( \frac{12n}{(\epsilon/\sqrt{128})^2} \right)\right)\right\} \\
    & \in \widetilde{\mathcal{O}}\left( \frac{\log\left( \frac{1}{\delta} \right)}{\log\left( \frac{1}{1-\pi_{\min}} \right)}+ \frac{\log\left( \frac{1}{\delta} \right)+n}{\epsilon^2} \right). 
\end{align}
Now, since $m \ge \frac{\log\left( \frac{4n}{\delta} \right)}{\log\left( \frac{1}{1-\pi_{\min}} \right)}$, by \Cref{lemma:high-prob-alpha0-gsg}, we have that, with probability at least $1-\delta$:
\begin{align*}
    H(\mathcal{G}(x,y), \mathcal{G}(\hat x, \hat y) & \le 8 \sqrt{\frac{2 \log\left(\frac{4}{\delta}\right)+2n\log(6m)}{m}}
\end{align*}

Thus, we need to show that, for $m$ chosen as in \Cref{eq:thm-41-alpha0}, $8 \sqrt{\frac{2 \log\left(\frac{4}{\delta}\right)+2n\log(6m)}{m}} \le \epsilon$. That holds if and only if
\begin{align*}
    \frac{ \log\left(\frac{4}{\delta}\right)+n\log(6m)}{m} \le \left( \frac{\epsilon}{\sqrt{128}} \right)^2.
\end{align*}

This, however, is granted by the definition of $m$. To this end, it is sufficient to apply \Cref{lemma:tech-lemma} with $c_1 = 4, c_2=6, K=(\epsilon/\sqrt{128})$.

Hence, after $m$ samples, \Cref{alg:WAS} is $(\epsilon, \delta)$-correct for $\alpha=0$ and General-Sum Games.
\end{proof}

\subsection{Proof of \Cref{thm:zsg-alpha0} ($\alpha=0$, Zero-Sum Games)}\label{app:zsg-alpha0}

\paragraph{Proof outline} 
The proof of \Cref{thm:zsg-alpha0} is similar in spirit to the one that we presented for \Cref{thm:gsg-alpha0}. There are, however, a couple of important differences that complicate the analysis. Importantly, these differences arise from the additional constraints that are present in the definition of $\mathcal{Z}(x,y)$, which require a single matrix to verify constraints both for the $x$ and the $y$ players. Consider, indeed, any pair of strategies $(x,y)$ and $(\hat x, \hat y)$ such that $\supp(x)=\supp(\hat x)$ and $\supp(y)=\supp(\hat y)$. Then, for any $A \in \mathcal{Z}(x, y)$, the construction of $\hat A$ that we provided in \Cref{lemma:matconstr} for general-sum games fails to satisfy the constraints of $\mathcal{Z}(\hat x, \hat y)$. To this end, consider the following simple numerical example. Let:
\begin{align*}
   A= \begin{pmatrix}
1 & -1 \\
-1 & 1 
\end{pmatrix} \quad x=y=(0.5, 0.5), \quad \hat x=(0.6, 0.4), \hat y =(0.4, 0.6)
\end{align*}
In this case, \Cref{lemma:matconstr} yields:
\begin{align*}
       \hat A= \begin{pmatrix}
1 & -0.42857143 \\
-0.71428571 & 0.71428571 
\end{pmatrix}.
\end{align*}
\begin{align*}
       \hat A\approx \begin{pmatrix}
1 & -0.43 \\
-0.71 & 0.71 
\end{pmatrix}.
\end{align*}
Since $\hat A \notin \mathcal{Z}(\hat x, \hat y)$, we cannot rely on the previous argument for this more complicated case. More generally, we found particularly challenging building analytically a matrix $\hat A$ that takes into account variations both in $x$ and $y$ strategy profiles. To this end, however, we developed a trick to take into account the single variations of the $x$ and the $y$ vector. Specifically, we start our analysis with \Cref{corollary:separation-x-y-err} which states that
\begin{align*}
    H(\mathcal{Z}(x,y), \mathcal{Z}(\hat{x}, \hat{y}))\le H(\mathcal{Z}(x,y), \mathcal{Z}(\hat x,y)) + H(\mathcal{Z}(\hat x,y), \mathcal{Z}(\hat x, \hat y)).
\end{align*}
This result allows us to construct matrices to upper bound the Hausdorff distance again by construction. Specifically, we can fix two matrices $A_1 \in \mathcal{Z}(x,y)$ and $A_2 \in \mathcal{Z}(\hat x, y)$ and find two matrices $A_3 \in \mathcal{Z}(\hat x, y)$ and $A_4 \in \mathcal{Z}(\hat x, \hat y)$ such that $A_1 \approx A_3$ and $A_2 \approx A_4$ (\Cref{lemma:matconstr-1-zsg} and \Cref{lemma:matconstr-2-zsg}). Importantly, note that $A_1$ and $A_3$ belong to sets that ``vary only across the $x$ player's strategy'' (\ie $y$ is fixed), while $A_2$ and $A_4$ belong to sets that ``vary only across the $y$ player's strategy'' (\ie $\hat x$ is fixed).
Importantly the approximation errors of both matrices scale linearly with $\|x - \hat x\|_1$ and $\|y- \hat y\|_1$. 
Then, once this is done, we proceed identically to \Cref{thm:gsg-alpha0}.

We are now ready to start the proof of \Cref{thm:zsg-alpha0}.
As anticipated, given \Cref{corollary:separation-x-y-err}, we start by fixing the $x$-vector and, for any matrix $A \in \mathcal{Z}(x, y)$ we build an alternative matrix $\hat A \in \mathcal{Z}(x, \hat y)$. A careful reader might note that the matrix constructed in the following lemma is similar to the one presented in \Cref{lemma:matconstr} for general-sum games. However, here both the statement and the proofs are different. Indeed, (i) to construct $\hat A$ we require that the support of $y$ and $\hat y$ coincide and (ii) we need to ensure that our construction satisfies a larger number of constraints.

We remark that in the following lemma, $x,y, \hat y$ are generic strategies on the simplex.

\begin{lemma}[Alternative Matrix Construction (fix x); $\alpha=0$; Zero-Sum Games]\label{lemma:matconstr-1-zsg}
Let $\alpha = 0$, and $x,y, \hat y \in \Delta_n$. Let $A \in \mathcal{Z}(x,y)$. It holds that:
\begin{align*}
    & \supp(y) = \supp(\hat y) \implies \exists \hat A \in \mathcal{Z}(x, \hat y): \| A -  \hat A \|_{\infty} \le 4 \| y - \hat y \|_1 \\
\end{align*}
Furthermore, $\hat A$ is defined as follows:
\[
\hat A_{ij}  := 
\begin{cases}
\dfrac{A_{ij}}{1+2\|y - \hat y\|_1}, & \text{if } i \in \argmax _{k \in \supp(x)} \sum_{j \in [n]} \hat{y}_j A_{kj},\\[12pt]
\dfrac{A_{ij} + \Delta_i}{1+2\|y - \hat y\|_1}, & \text{if } i \in \supp(x) \setminus \argmax _{k \in \supp(x)} \sum_{j \in [n]} \hat{y}_j A_{kj},\\[12pt]
\dfrac{A_{ij} + 2\|y - \hat y\|_1}{1+2\|y - \hat y\|_1}, & \text{if } i \notin \supp(x),
\end{cases}
\]
where, for all $i \in \supp(x)$, $\Delta_i$ is given by
\begin{align*}
    &\Delta_i =\max_{k \in \supp(x)} \sum_{j \in [n]} \hat{y}_j A_{kj} - \sum_{j \in [n]} \hat{y}_j A_{ij} \\
\end{align*}
\end{lemma}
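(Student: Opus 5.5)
The plan is to split the constraints defining $\mathcal{Z}(x,\hat y)$ into the row-player part and the column-player part, and to check each for the matrix $\hat A$ given in the statement. Recall that $\hat A \in \mathcal{Z}(x,\hat y)$ means
\[
x^\top \hat A e_j \le x^\top \hat A \hat y \le e_i^\top \hat A \hat y \quad \forall i,j\in[n].
\]
Throughout I set $S_x=\supp(x)$, $S_x^\star=\argmax_{k\in S_x}\sum_j \hat y_j A_{kj}$ and $\Delta = 2\|y-\hat y\|_1$.

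\textbf{Row-player constraints and the distance bound.} Since $A\in\mathcal{Z}(x,y)\subseteq$ (its $x$-slice) $\mathcal{G}^x(x,y)$, the row player's part is a special case of \Cref{lemma:matconstr} with $\hat x = x$. Its support hypothesis $\supp(x)=\supp(\hat x)$ then holds trivially, and $\hat A$ coincides with the matrix built there.

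I would nevertheless re-run that argument briefly, because it only uses the row constraints of $A$. By \Cref{lemma:nash-support}, all rows $i,k\in S_x$ have equal value $e_i^\top A y$. This gives $\Delta_i=\sum_j(\hat y_j-y_j)(A_{kj}-A_{ij})\in[0,\Delta]$, and hence $\hat A\in[-1,1]^{n\times n}$. It also gives equality of $e_i^\top \hat A\hat y$ across $i\in S_x$, and the inequality $e_{i_1}^\top\hat A\hat y\le e_{i_2}^\top\hat A\hat y$ for $i_2\notin S_x$. The same case analysis yields $\|A-\hat A\|_\infty\le 2\Delta = 4\|y-\hat y\|_1$. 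Finally, by \Cref{lemma:nash-support} these support conditions are equivalent to $x^\top\hat A\hat y\le e_i^\top \hat A\hat y$ for all $i$.

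\textbf{Column-player constraints.} This is the genuinely new part and the step I expect to need the most care, although I anticipate it is short. The key observation is that, for rows $i \in S_x$, the construction only adds a row-dependent constant $\Delta_i$ (with $\Delta_i=0$ on $S_x^\star$) and rescales. Rows outside $S_x$ carry zero weight under $x$. Hence, with $c:=\sum_{i\in S_x\setminus S_x^\star}x_i\Delta_i$,
\[
x^\top \hat A = \frac{x^\top A + c\,\mathbf{1}^\top}{1+\Delta}.
\]
Since $\hat y\in\Delta_n$, this gives $x^\top\hat A e_j=(x^\top A e_j+c)/(1+\Delta)$ and $x^\top \hat A\hat y=(x^\top A\hat y+c)/(1+\Delta)$. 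The constraint therefore reduces to showing
\[
x^\top A e_j\le x^\top A\hat y \quad \forall j.
\]

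Here I use the hypothesis $\supp(y)=\supp(\hat y)$. The column player's constraints in $\mathcal{Z}(x,y)$ say that $y$ maximizes $x^\top A e_j$. By the analogue of \Cref{lemma:nash-support} for the column player, $x^\top A e_j = x^\top A y$ for every $j\in\supp(y)$, and $x^\top A e_j\le x^\top A y$ for every $j$. Since $\hat y$ puts mass only on $\supp(y)$, we get $x^\top A\hat y=\sum_{j\in\supp(y)}\hat y_j\, x^\top A e_j = x^\top A y$. Combining this with $x^\top A e_j \le x^\top A y$ closes the argument.

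This is precisely where the support assumption enters. Without it, $\hat y$ could put mass on a strictly suboptimal column, and the uniform shift by $c$ could not repair that.
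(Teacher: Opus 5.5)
Your proposal is correct and follows essentially the same route as the paper: you reuse the general-sum construction of \Cref{lemma:matconstr} with $\hat x = x$ for the row constraints and the bound $\|A-\hat A\|_\infty \le 4\|y-\hat y\|_1$. You then note that $x^\top \hat A e_j = (x^\top A e_j + c)/(1+\Delta)$ for a column-independent constant $c$, and use $\supp(\hat y)=\supp(y)$ to settle the column constraints. The only cosmetic difference is that the paper computes $c$ explicitly as $\max_{k\in\supp(x)} e_k^\top A\hat y - x^\top A y$, whereas you let $c$ cancel and use the support assumption to show $x^\top A\hat y = x^\top A y$.
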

\begin{proof}
    Given the definition of $\hat A$, we already know from \Cref{lemma:matconstr} that all the following statements are true:
    \begin{align*}
        & \text{(i)}~\|A - \hat A \|_{\infty} \le 4 \|y - \hat y \|_1 \\
        & \text{(ii)}~e_i^\top \hat A \hat y = e_j^\top \hat A \hat y \quad \quad \quad \quad~ \forall i,j \in \supp(x) \\
        & \text{(iii)}~e_i^\top \hat A \hat y \le e_j^\top \hat A \hat y \quad \quad \quad \quad \forall i \in \supp(x), \forall j \notin \supp(x). 
    \end{align*}
    Hence, from \Cref{lemma:nash-support} it only remains to verify that $\hat A$ satisfies:
    \begin{subequations}
        \begin{align}
        & x^\top \hat A e_j = x^\top \hat A \hat y~\quad \forall j \in \supp(\hat y) \label{eq:zsg-alpha0-cnd1} \\
        & x^\top \hat Ae_j \le x^\top \hat A \hat y ~\quad\forall j \notin \supp(\hat y) \label{eq:zsg-alpha0-cnd2}.
    \end{align}
    \end{subequations}
    Before proving \Cref{eq:zsg-alpha0-cnd1,eq:zsg-alpha0-cnd2} we make some preliminary considerations. First, we recall from \Cref{lemma:nash-support} that, since $A \in \mathcal{Z}(x, y)$, it holds that:
    \begin{align}
        & x^{\top}Ay = x^{\top}Ae_j =e_i^TAy, \quad \forall\, i \in \supp(x),j\in \supp(y),\label{eq:mat-constr-zero-sum-alpha0-helper}\\
        & x^\top A e_j \le x^{\top}Ay \le e_i^\top A y, \quad \forall\, i\notin \supp(x),\ \forall\, j\notin \supp(y).
    \end{align}
    Now, let $\Delta = 2 \|y-\hat y \|_1$ and let $j \in [n]$. Then, we have that:
    \begin{align*}
        x^\top \hat A e_j & = \sum_{i \in [n]} x_i \hat A_{ij} \\ & 
        = \frac{1}{1+\Delta} \left( x^\top A e_j  + \sum_{i: x_i > 0}x_i \Delta_i \right) \tag{Definition of $\hat A$}.
    \end{align*}
    Now, let us focus in detail on the term $\sum_{i: x_i > 0} x_i \Delta_i$. We have that:
    \begin{align*}
        \sum_{i: x_i > 0} x_i \Delta_i & = \sum_{i: x_i > 0} x_i \sum_{j \in [n]} \hat y_j (A_{kj} - A_{ij}) \tag{Definition of $\Delta_i$ and $k \in \argmax_{\star \in \supp(x)} \sum_{j \in [n]} \hat y_j A_{\star,j}$} \\
        & = \sum_{j \in [n]} \hat y_j \left( A_{kj}-\sum_{i: x_i>0} x_iA_{ij} \right) \\
        & = \sum_{j \in [n]} \hat y_j \left( A_{kj}-x^\top A e_j \right) \\
        & = \sum_{j \in [n]} \hat y_j A_{kj}- \sum_{j: y_j > 0} \hat y_j (x^\top Ae_j) \tag{$\supp(y)=\supp(\hat y)$} \\
        & = \sum_{j \in [n]} \hat y_j A_{kj}- x^\top A y \tag{\Cref{eq:mat-constr-zero-sum-alpha0-helper}}
    \end{align*}
    Hence, we obtained that, for all $j \in [n]$:
    \begin{align}\label{eq:mat-constr-zero-sum-alpha0-helper2}
        x^\top \hat A e_j = \frac{1}{1+\Delta} \left( x^\top A e_j - x^\top Ay + \max_{k \in \supp(x)} e_k^\top A \hat y  \right) 
    \end{align}

    We are now ready to conclude the proof. Consider $j_1, j_2 \in [n]$ such that $y_{j_1} > 0$ and $y_{j_2} = 0$. Then, combining \Cref{eq:mat-constr-zero-sum-alpha0-helper,eq:mat-constr-zero-sum-alpha0-helper2}, we have that:
    \begin{align*}
        & x^\top \hat A e_{j_1} = \frac{\max_{k \in \supp(x)} e_k^\top A\hat y}{1+\Delta} \\
        & x^\top \hat A e_{j_2} \le \frac{\max_{k \in \supp(x)} e_k^\top A\hat y}{1+\Delta},
    \end{align*}
    thus proving Conditions \eqref{eq:zsg-alpha0-cnd1} and \eqref{eq:zsg-alpha0-cnd2}. Hence, $\hat A \in \mathcal{Z}(x, \hat y)$.
\end{proof}

We now fix $(x,y)$ and for $A \in \mathcal{Z}(x, y)$, we build $\hat A \in \mathcal{Z}(\hat x, y)$. The proof of the following lemma is similar to that of \Cref{lemma:matconstr-1-zsg}, and the proof is reported mainly for the sake of completeness.

Here, $x, \hat x, y$ are again generic strategies on the simplex.

\begin{lemma}[Alternative Matrix Construction (fix y); $\alpha=0$; Zero-Sum Games]\label{lemma:matconstr-2-zsg}
Let $\alpha = 0$, $A \in \mathcal{Z}(x,y)$. It holds that:
\begin{align*}
    & \supp(x) = \supp(\hat x) \implies \exists \hat A \in \mathcal{Z}(\hat x, y): \| A -  \hat A \|_{\infty} \le 4 \| x - \hat x \|_1 \\
\end{align*}
Furthermore, $\hat A$ is defined as follows:
\[
\hat A_{ij} := 
\begin{cases}
\dfrac{A_{ij}}{1+2\|x - \hat x\|_1}, & \text{if } j \in \argmax _{k \in \supp(y)} \sum_{i \in [n]} \hat{x}_i A_{ik},\\[12pt]
\dfrac{A_{ij} + \Delta_j}{1+2\|x - \hat x\|_1}, & \text{if } j \in \supp(y) \setminus \argmax _{k \in \supp(y)} \sum_{i \in [n]} \hat{x}_i A_{ik},\\[12pt]
\dfrac{A_{ij} -2\|x-\hat x\|_1}{1+2\|x - \hat x\|_1}, & \text{if } j \notin \supp(y),
\end{cases}
\]
where, for all $j \in \supp(x)$, $\Delta_j$ is defined as
\begin{align*}
    &\Delta_j = \max_{k \in \supp(y)} \sum_{i \in [n]} \hat x_i A_{ik} - \sum_{i \in[n]} \hat x_i A_{ij}.
\end{align*}
\end{lemma}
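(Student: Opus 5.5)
The plan mirrors the proof of \Cref{lemma:matconstr-1-zsg}, with the roles of rows and columns exchanged and the sign of the shift reversed, since the $y$-player maximizes $x^\top A e_j$ in the zero-sum constraints. Let $S_y=\supp(y)=\supp(\hat x)$-analogue, i.e.\ $S_y=\supp(y)$, and $S_x=\supp(x)=\supp(\hat x)$. Set $\Delta=2\|x-\hat x\|_1$ and let $S^\star_y=\argmax_{k\in S_y}\hat x^\top A e_k$.

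\textbf{Step 1: bounds on $\Delta_j$, membership in the box, and closeness.} First I bound $\Delta_j$ for $j\in S_y$. By \Cref{lemma:nash-support} applied to $A\in\mathcal{Z}(x,y)$, we have $x^\top A e_k=x^\top A e_j$ for $k,j\in S_y$. Hence
\[
\Delta_j=(\hat x-x)^\top A(e_k-e_j)\in[0,\Delta].
\]
This gives $\hat A\in[-1,1]^{n\times n}$. For $j\notin S_y$ the entry is $(A_{ij}-\Delta)/(1+\Delta)$, which stays in $[-1,1]$. The entrywise estimates are then identical to those in \Cref{lemma:matconstr}, giving $|A_{ij}-\hat A_{ij}|\le 2\Delta=4\|x-\hat x\|_1$.

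\textbf{Step 2: the $y$-player (column) constraints.} I check the conditions of \Cref{lemma:nash-support} for the column player at $(\hat x,y)$, namely $\hat x^\top\hat A e_j$ equal for all $j\in S_y$ and not larger for $j\notin S_y$.

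For $j\in S_y$, the added $\Delta_j$ makes $\hat x^\top \hat A e_j=\max_{k\in S_y}\hat x^\top Ae_k/(1+\Delta)$.

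For $j\notin S_y$, I use $x^\top Ae_j\le x^\top Ae_k$ for $k\in S_y$. This yields
\[
\hat x^\top Ae_j\le \hat x^\top Ae_k+(\hat x-x)^\top A(e_j-e_k)\le \hat x^\top A e_k+\Delta.
\]
Subtracting $\Delta$ then gives the required inequality.

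\textbf{Step 3: the $x$-player (row) constraints.} This is the main obstacle, and I handle it through the analogue of \Cref{eq:mat-constr-zero-sum-alpha0-helper2}. For any row $i$,
\[
e_i^\top\hat A y=\frac{1}{1+\Delta}\Big(e_i^\top Ay+\sum_{j\in S_y}y_j\Delta_j\Big),
\]
where the $j\notin S_y$ shift vanishes because $y_j=0$. The key is that the correction term does not depend on $i$, so $e_i^\top\hat Ay$ is an $i$-independent affine image of $e_i^\top Ay$.

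Since $A\in\mathcal{Z}(x,y)$, $e_i^\top Ay$ is constant on $S_x$ and not smaller off $S_x$. The same therefore holds for $\hat A$. Because $\supp(\hat x)=S_x$, \Cref{lemma:nash-support} gives the row constraints at $(\hat x,y)$.

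Combining Steps 2 and 3 shows $\hat A\in\mathcal{Z}(\hat x,y)$, with the distance bound from Step 1. The only delicate points are the sign conventions: subtracting $\Delta$ off-support here, versus adding it in \Cref{lemma:matconstr-1-zsg}, and the hypothesis $\supp(x)=\supp(\hat x)$, which is needed for the row-support condition in Step 3.
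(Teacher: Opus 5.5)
Your proof is correct and follows the paper's argument. You use the same entrywise bounds via $\Delta_j=(\hat x-x)^\top A(e_k-e_j)\in[0,\Delta]$, the same inequality chain for the off-support columns, and the same identity $e_i^\top\hat A y=\frac{1}{1+\Delta}\bigl(e_i^\top Ay+\sum_{j\in S_y}y_j\Delta_j\bigr)$ for the row player; the one difference is that the paper evaluates this constant explicitly as $\max_{k\in\supp(y)}\hat x^\top Ae_k-x^\top Ay$, whereas you note that its independence of $i$ is all that is needed, which is a slightly cleaner finish.
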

\begin{proof}
    First, as in \Cref{lemma:matconstr}, one can trivially verify that, since $A \in [-1,1]^{n \times n}$, $\hat A \in [-1,1]^{n \times n}$ as well. Furthermore, by following analogous arguments, it is also easy to see that $\| A - \hat A \|_{\infty} \le 4 \|x - \hat x \|_1$

    By applying \Cref{lemma:nash-support} it thus remains to verify that:
    \begin{subequations}
    \begin{align}
        & \hat x^\top \hat A e_i = \hat x^\top \hat A e_j,\quad \forall i,j \in \supp(y) \label{eq:mat-constr-zsg-fix-y-eq-1}\\
        & \hat x^\top \hat A e_i \ge \hat x^\top \hat A e_j, \quad \forall i \in \supp(y), j \notin \supp(y)\label{eq:mat-constr-zsg-fix-y-eq-2} \\
        & e_i^\top \hat A y = e_i^\top \hat A e_j, \quad \forall i,j \in \supp(x) \label{eq:mat-constr-zsg-fix-y-eq-3} \\
        & e_i^\top \hat A y \le e_j^\top \hat A y, \quad \forall i \in \supp(x), j \notin \supp(x) \label{eq:mat-constr-zsg-fix-y-eq-4}.
    \end{align}    
    \end{subequations}

    Before continuing, we introduce some additional notation for the sake of brevity. Let $\Delta = 2 \|x - \hat x \|_1$ and define $S=\supp(y)$ and $S^{\star}_y = \argmax_{k \in \supp(y)} \sum_{i \in [n]} \hat x_i A_{ik}$.

    \paragraph{Conditions \eqref{eq:mat-constr-zsg-fix-y-eq-1} and \eqref{eq:mat-constr-zsg-fix-y-eq-2}}
    Given the expression of $\hat A$, \Cref{eq:mat-constr-zsg-fix-y-eq-1} is a direct consequence of \Cref{lemma:matconstr}. We now prove \Cref{eq:mat-constr-zsg-fix-y-eq-2}. Since \Cref{eq:mat-constr-zsg-fix-y-eq-1} holds, we only need to consider $j_1 \in \supp(y)$ and $j_2 \notin \supp(y)$ and show that $\hat x^\top \hat A e_{j_1} \ge \hat x^\top \hat A e_{j_2}$. Specifically, we consider $j_1 \in S^\star_y$. Then, it holds that:

    \begin{align*}
        \hat x^\top \hat A e_{j_1} & = \frac{1}{1+\Delta} \sum_{i \in [n]} \hat x_i A_{i j_1} \\
        & = \frac{1}{1+\Delta} \left( x^\top A e_{j_1} + \sum_{i \in [n]} (\hat x_i - x_i) A_{i j_1}\right) \\
        & \ge \frac{1}{1+\Delta} \left( x^\top A e_{j_2} + \sum_{i \in [n]} (\hat x_i - x_i) A_{i j_1}\right) \tag{$A \in \mathcal{Z}(x,y)$, \Cref{lemma:nash-support} and $j_1 \in \supp(y)$} \\
        & = \frac{1}{1+\Delta} \left( \sum_{i \in [n]} \hat x_i A_{ij_2} + \sum_{i \in [n]} (\hat x_i - x_i) (A_{i j_1} - A_{i j_2}) \right) \\
        & \ge \frac{1}{1+\Delta} \left( \sum_{i \in [n]} \hat x_i A_{i j_2} - 2\|x-\hat x \|_1  \right) \\
        & = \hat x^\top \hat A e_{j_2},
    \end{align*}
    which concludes the proof.

    \paragraph{Conditions \eqref{eq:mat-constr-zsg-fix-y-eq-3} and \eqref{eq:mat-constr-zsg-fix-y-eq-4}} Here, we proceed in a way that is analogous to \Cref{lemma:matconstr-1-zsg}. Specifically, consider any $i \in [n]$. Then, it holds that:
    \begin{align*}
        e_i^\top \hat A y & = \sum_{j \in [n]} y_j \hat A_{ij} \\
        & = \frac{1}{1+\Delta} \left( e_i^\top A y + \sum_{j: y_j > 0} y_j \Delta_j  \right) \tag{Definition of $\hat A $}.
    \end{align*}
    Let us focus on $\sum_{j: y_j > 0} y_j \Delta_j$. We have that:
    \begin{align*}
        \sum_{j: y_j > 0} y_j \Delta_j & = \sum_{j: y_j > 0} y_j \sum_{i \in [n]} \hat x_i (A_{ik}- A_{ij}) \tag{Definition of $\Delta_j$ and $k \in \argmax_{\star \in \supp(y)} \sum_{i \in [n]} \hat x_i A_{i\star}$} \\
        & = \sum_{i \in [n]} \hat x_i \left( A_{ik} - \sum_{j: y_j > 0} y_j A_{ij} \right) \\
        & = \sum_{i \in [n]} \hat x_i (A_{ik} - e_i^\top A y) \\
        & = \sum_{i \in [n]} \hat x_i A_{ik} - \sum_{i: x_i > 0} \hat x_i (e_i^\top A y)) \tag{$\supp(x) = \supp(\hat x)$} \\
        & = \sum_{i \in [n]} \hat x_i A_{ik} - x^\top A y \tag{$A \in \mathcal{Z}(x,y)$}
    \end{align*}
    Therefore, we obtained that, for all $i \in [n]$:
    \begin{align*}
        e_i^\top \hat A y = \frac{1}{1+\Delta} \left( e_i^\top A y - x^\top A y + \max_{k \in \supp(y)} \hat x^\top A e_k \right).
    \end{align*}
    Now, let $i_1\in \supp(\hat x)$ and $i_2 \notin \supp(\hat x)$. We have that:
    \begin{align*}
        & e_{i_1}^\top \hat A y = \frac{1}{1+\Delta} \max_{k \in \supp(y)} \hat x^\top A e_k \\
        & e_{i_2}^\top \hat A y \ge \frac{1}{1+\Delta} \max_{k \in \supp(y)} \hat x^\top A e_k 
    \end{align*}
    which proves \eqref{eq:mat-constr-zsg-fix-y-eq-3} and \eqref{eq:mat-constr-zsg-fix-y-eq-4}, thus showing that $\hat A \in \mathcal{Z}( \hat x, y)$.
\end{proof}

Combining \Cref{lemma:matconstr-1-zsg,lemma:matconstr-2-zsg} with \Cref{corollary:separation-x-y-err}, we have that, whenever $\supp(x)=\supp(\hat x)$ and $\supp(y)=\supp( \hat y)$, then:
\begin{align}\label{eq:zsg-alpha-0-eq-helper}
    H(\mathcal{Z}(x,y), \mathcal{Z}(\hat{x}, \hat{y}))\le 8 (\| x - \hat x\|_1 + \|y - \hat y \|_1)
\end{align}

Hence, following arguments that are identical to the ones presented for \Cref{thm:gsg-alpha0}, we obtain the following result.

\begin{theorem}\label{thm:zsg-alpha0}
    Consider $\alpha = 0$ and let $$m \in \widetilde{\mathcal{O}} \left( \frac{\log\left( \frac{1}{\delta} \right)}{\log(1/(1-\pi_{\min}))} + \frac{n+\log\left(\frac{1}{\delta}\right)}{\epsilon^2} \right).$$ Then \Cref{alg:WAS} is $(\epsilon, \delta)$-correct for Zero-Sum Games and its sample complexity $\tau_\delta$ is given by $m$.    
\end{theorem}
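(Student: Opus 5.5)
The plan mirrors the proof of \Cref{thm:gsg-alpha0}, replacing the deterministic perturbation bound of \Cref{lemma:matrix-to-h} with \Cref{eq:zsg-alpha-0-eq-helper}.

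\textbf{Deterministic bound.} First I record the bound that holds on the support-matching event. Suppose $\supp(x)=\supp(\hat x)$ and $\supp(y)=\supp(\hat y)$. By \Cref{corollary:separation-x-y-err}, the Hausdorff distance splits as
\begin{align*}
H(\mathcal{Z}(x,y),\mathcal{Z}(\hat x,\hat y)) \le H(\mathcal{Z}(x,y),\mathcal{Z}(\hat x,y)) + H(\mathcal{Z}(\hat x,y),\mathcal{Z}(\hat x,\hat y)).
\end{align*}
I bound each term in both Hausdorff directions.
\begin{itemize}
\item For the first term, apply \Cref{lemma:matconstr-2-zsg} to map $\mathcal{Z}(x,y)\to\mathcal{Z}(\hat x,y)$. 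For the reverse direction, apply the same lemma with the roles of $x$ and $\hat x$ swapped. This is legitimate because the lemma only needs equal supports, which is a symmetric condition. Each direction costs $4\|x-\hat x\|_1$.
\item For the second term, apply \Cref{lemma:matconstr-1-zsg} with the fixed row strategy equal to $\hat x$ (the lemma holds for generic $x$), in both directions. Each direction costs $4\|y-\hat y\|_1$.
\end{itemize}
Together these give \Cref{eq:zsg-alpha-0-eq-helper}, namely $8(\|x-\hat x\|_1+\|y-\hat y\|_1)$.

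\textbf{Probabilistic step.} Next I reuse the good event $\mathcal{E}=\mathcal{E}_1\cap\mathcal{E}_2$ from \Cref{lemma:high-prob-alpha0-gsg}.
\begin{itemize}
\item $\mathcal{E}_1$ is the $\ell_1$ concentration of $\hat x$ and $\hat y$, from \Cref{lemma:l1-high-prob}. It fails with probability at most $\delta/2$.
\item $\mathcal{E}_2$ is support recovery, from \Cref{lemma:support}. It fails with probability at most $\delta/2$ once $m\ge \log(4n/\delta)/\log(1/(1-\pi_{\min}))$.
\end{itemize}
On $\mathcal{E}$, the Hausdorff error is at most $16\sqrt{(2\log(4/\delta)+2n\log(6m))/m}$.

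\textbf{Choice of $m$.} Finally, I choose $m$ as in \Cref{eq:thm-41-alpha0}, with $\epsilon/\sqrt{128}$ replaced by $\epsilon/\sqrt{512}$ to absorb the extra factor of $2$. \Cref{lemma:tech-lemma}, applied with $c_1=4$, $c_2=6$, $K=\epsilon/\sqrt{512}$, then guarantees that the bound above is at most $\epsilon$. This yields the claimed $\widetilde{\mathcal{O}}$ rate and $(\epsilon,\delta)$-correctness, with $\tau_\delta=m$ deterministic.

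\textbf{Main obstacle.} The only delicate point is checking that the two construction lemmas really cover both directions of each Hausdorff term. The reverse directions require the lemmas to hold with the roles of the "true" and "estimated" strategies exchanged. Their proofs use only two ingredients: that $A$ lies in the source set, and that the supports coincide. Neither is specific to which strategy is the empirical one, so the swap is valid, but this should be stated explicitly. The remaining steps are routine constant bookkeeping.
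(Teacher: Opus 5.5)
Your proposal is correct and follows the paper's proof essentially verbatim: the decomposition of \Cref{corollary:separation-x-y-err}, the one-player constructions of \Cref{lemma:matconstr-2-zsg} and \Cref{lemma:matconstr-1-zsg} applied in both Hausdorff directions, then the good event and sample-size choice from \Cref{thm:gsg-alpha0} with the constant adjusted. The swap for the reverse directions is legitimate because both lemmas are stated for generic strategies with equal supports; also, taking the max over the two directions of each term actually gives the sharper bound $4(\|x-\hat x\|_1+\|y-\hat y\|_1)$, so your $\sqrt{512}$ rescaling is conservative but harmless.
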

\begin{proof}
Use \Cref{eq:zsg-alpha-0-eq-helper}. Then, the proof follows from the same arguments of \Cref{thm:gsg-alpha0}.    
\end{proof}

\section[Proof of \Cref{thm:ub-any-alpha}]{Proof of \Cref{thm:ub-any-alpha}}\label{app:ub2}

This section is structured as follows: 
\begin{itemize}
    \item In \Cref{app:anyalpha-gsg} we prove the result for the inverse GSG problem (\Cref{thm:ub-any-alpha-gsg-new}) 
    \item In \Cref{app:anyalpha-zsg}, we prove the result of the inverse ZSG problem (\Cref{thm:ub-any-alpha-zsg-new})
\end{itemize}
The proof of \Cref{thm:ub-any-alpha} then follows directly by combining \Cref{thm:ub-any-alpha-gsg-new,thm:ub-any-alpha-zsg-new}.\\

\subsection{Proof of \Cref{thm:ub-any-alpha-gsg-new} ($\alpha > 0$, General-Sum Games)}\label{app:anyalpha-gsg}

\paragraph{Proof outline} We begin our proof by considering two pair of strategies $(x,y)$ and $(\hat x, \hat y)$ and, for any pair of matrix $(A,B) \in \mathcal{G}_\alpha(x,y)$ we construct:
\begin{itemize}
    \item $(\hat A, \hat B) \in \mathcal{G}_\alpha(\hat x, y)$ and $\mathcal{G}_\alpha(x, \hat y)$ such that $\hat A \approx A$ and $\hat B \approx B$
    \item $(\hat A, \hat B) \in \mathcal{G}_\alpha(x, \hat y)$ and $\mathcal{G}_\alpha(x, \hat y)$ such that $\hat A \approx A$ and $\hat B \approx B$
\end{itemize}
This first result is presented in \Cref{lemma:mat-constr-new-any-alpha}. In \Cref{lemma:key-intermediate-step}, we discuss how well these matrices that we constructed approximate the original pair of matrices $(A,B)$. This lemma is the key result that we discussed in \Cref{sec:alg-approx}. Specifically, it shows how these errors scales as, \eg:
\begin{align*}
    \mathcal{O}\left( \frac{1}{\alpha} \sup_{S \in \mathcal{S}(\alpha)} \sum_{i \in S} (\hat x_i - x_i) + \frac{1}{\alpha}\sum_{i: \hat x_i=0} x_i \right).
\end{align*}
Next, using \Cref{corollary:separation-x-y-err}, we will use this result to formally upper bound, with high probability, the Hausdorff distance between $\mathcal{G}_\alpha(x,y)$ and $\mathcal{G}_\alpha(\hat x, \hat y)$. This is done in \Cref{lemma:any-alpha-new-det-h,lemma:any-alpha-new-hauss-high-p}. Finally, \Cref{thm:ub-any-alpha-gsg-new} simply follows by selecting the number of samples in a way such that the high-probability Hausdorff distance upper bound is below $\epsilon$.

We now start with matrix construction argument. 
Note that in the following lemma $x,y,\hat x,\hat y$ are generic strategies on the simplex. 

\begin{lemma}[Matrix construction, $\alpha > 0$, General-Sum Games]\label{lemma:mat-constr-new-any-alpha}
    Let $\alpha > 0$ and $x, y, \hat x, \hat y \in \Delta_n$. Consider $(A,B) \in \mathcal{G}_\alpha(x,y)$. It holds that:
    \begin{align}
        & \exists (\hat A, \hat B) \in \mathcal{G}_\alpha(\hat x, y): \| A- \hat A \|_{\infty} \le 2 \left( \frac{(\hat x-x)^\top A y}{\alpha} \right)_+ \textup{ and } \| B - \hat B\|_{\infty} \le   2\left( \frac{\max_{j \in [n]}(\hat x-x)^\top B(y-e_j)}{\alpha} \right)_+\label{eq:mat-constr-new-any-alpha-eq1} \\
        & \exists (\hat A, \hat B) \in \mathcal{G}_\alpha(x, \hat y): \| A- \hat A \|_{\infty} \le 2 \left( \frac{\max_{i \in [n]} (x - e_i)^\top A(\hat y - y)}{\alpha} \right)_+\textup{ and } \| B - \hat B\|_{\infty} \le 2 \left(\frac{x^\top B(\hat y -y)}{\alpha} \right)_+. \label{eq:mat-constr-new-any-alpha-eq2}
    \end{align}
\end{lemma}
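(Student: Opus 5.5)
The plan is to use the multiplicative rescaling from the proof sketch in \Cref{sec:alg-approx}, applied to each player's matrix separately. This works because $\mathcal{G}_\alpha(\cdot,\cdot)=\mathcal{G}^x_\alpha(\cdot,\cdot)\times\mathcal{G}^y_\alpha(\cdot,\cdot)$. I describe \Cref{eq:mat-constr-new-any-alpha-eq1}; \Cref{eq:mat-constr-new-any-alpha-eq2} is obtained by swapping the roles of the perturbed and fixed strategies.

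For $\hat A$, set
\[
g_A:=\max_{i\in[n]}(\hat x-e_i)^\top A y .
\]
If $g_A\le\alpha$, then $A\in\mathcal{G}^x_\alpha(\hat x,y)$ already, so we take $\hat A=A$ and the bound holds trivially. If $g_A>\alpha$, we take $\hat A=\tfrac{\alpha}{g_A}A$. Since $\alpha/g_A\in(0,1)$, $\hat A$ stays in $[-1,1]^{n\times n}$. Moreover, $(\hat x-e_i)^\top\hat A y=\tfrac{\alpha}{g_A}(\hat x-e_i)^\top Ay\le\alpha$ for every $i$, so $\hat A\in\mathcal{G}^x_\alpha(\hat x,y)$. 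The distance satisfies
\[
\|A-\hat A\|_\infty=\Bigl(1-\tfrac{\alpha}{g_A}\Bigr)\|A\|_\infty\le\frac{g_A-\alpha}{g_A}\le\frac{g_A-\alpha}{\alpha}.
\]
To bound the numerator, write $(\hat x-e_i)^\top Ay=(\hat x-x)^\top Ay+(x-e_i)^\top Ay$. Because $A\in\mathcal{G}^x_\alpha(x,y)$, the second term is at most $\alpha$, so $g_A-\alpha\le(\hat x-x)^\top Ay$. This gives $\|A-\hat A\|_\infty\le((\hat x-x)^\top Ay/\alpha)_+$, which is within the factor $2$ claimed.

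For $\hat B$, set
\[
g_B:=\max_{j}\hat x^\top B(y-e_j).
\]
We use the same two cases: $\hat B=B$ if $g_B\le\alpha$, and $\hat B=\tfrac{\alpha}{g_B}B$ otherwise. The constraints $\hat x^\top\hat By\le\hat x^\top\hat Be_j+\alpha$ then hold by the same scaling argument. For the distance, write $\hat x^\top B(y-e_j)=(\hat x-x)^\top B(y-e_j)+x^\top B(y-e_j)$; the second term is at most $\alpha$ by feasibility of $B$. Taking the maximum over $j$ gives $g_B-\alpha\le\max_j(\hat x-x)^\top B(y-e_j)$, which is the stated bound.

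For \Cref{eq:mat-constr-new-any-alpha-eq2} the steps are the same. For $\hat A$ we use $g_A=\max_i(x-e_i)^\top A\hat y$ and split $(x-e_i)^\top A\hat y=(x-e_i)^\top A(\hat y-y)+(x-e_i)^\top Ay$. For $\hat B$ we use $g_B=\max_j x^\top B(\hat y-e_j)$. Here $x^\top B(\hat y-e_j)=x^\top B(\hat y-y)+x^\top B(y-e_j)$, and the second term is at most $\alpha$, which yields the term $x^\top B(\hat y-y)$.

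The only step needing care is the distance estimate, namely passing from $(g-\alpha)/g$ to $(g-\alpha)/\alpha$ and confirming that the rescaled matrices stay in the box. Both follow directly from $g>\alpha>0$ and $\|A\|_\infty\le1$. The factor $2$ in the statement is slack that absorbs $\|A\|_\infty\le1$ and rounding, so I do not expect a real obstacle.
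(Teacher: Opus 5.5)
Your proof is correct and follows the same route as the paper: rescale each matrix by $\min\{1,\alpha/g\}$, where $g$ is the largest constraint violation at the perturbed profile, and bound $g-\alpha$ by the cross term using feasibility of $(A,B)$ at $(x,y)$. Your explicit case split on $g\le\alpha$ is slightly cleaner than the paper's ``w.l.o.g.\ $\lambda>0$'' remark, and since $\|A\|_\infty\le 1$ you even obtain constant $1$ where the paper states $2$.
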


\begin{proof}
    We first prove \Cref{eq:mat-constr-new-any-alpha-eq1}. 

    Specifically, for $(A,B) \in \mathcal{G}_\alpha(x,y)$ we define $(\hat A, \hat B)$ as $(\hat A, \hat B)=(\lambda_A A, \lambda_B B)$, where $\lambda_A, \lambda_B \in \mathbb{R}$ are defined as follows
    \begin{align*}
        \lambda_A = \min \left\{1, \frac{\alpha}{\max_{i \in [n]} \hat x^\top A y - e_i^\top Ay} \right\}, \quad \lambda_B = \min \left\{ 1,  \frac{\alpha}{\max_{j \in [n]} \hat x By - \hat xBe_j}\right\}.
    \end{align*}
    From now on, w.l.o.g., we asume that $\lambda_A, \lambda_B > 0$. Otherwise, we could simply pick $\hat A = A$ and $\hat B = B$ and all the Nash constraints are verified.

    Next, we verify that $(A,B) \in \mathcal{G}_\alpha(\hat x,y)$. First, under the assumption that $\lambda_A, \lambda_B > 0$, it follows that $\lambda_A, \lambda_B \in (0,1]$, and hence $\hat A, \hat B \in [-1,1]^{n \times n}$. Secondly, for any $i \in [n]$:
    \begin{align*}
        \hat x^\top \hat A y - e_i^\top \hat A y &  = \lambda_A \left( \hat x^\top A y - e_i^\top A y \right) \\
        & = \min \left\{1, \frac{\alpha}{\max_{i \in [n]} \hat x^\top A y - e_i^\top Ay} \right\} \left( \hat x^\top A y - e_i^\top A y \right) \\
        & \le \alpha,
    \end{align*}
    where, (i) if $\lambda_A < 1$, the last step is trivial, and (ii) if $\lambda_A = 1$, then, $\alpha > \max_{i \in [n]} \hat x^\top A y - e_i^\top A y$. Thus, we have proved that $\hat A \in \mathcal{G}^x_\alpha(\hat x,y)$. The proof for showing that $\hat B \in \mathcal{G}^y_\alpha(\hat x,y)$ is equivalent.

    Next, we proceed by analyzing $\|A - \hat A \|_{\infty}$. We consider the case where $\lambda_A < 1$, otherwise $A = \hat A$ and \Cref{eq:mat-constr-new-any-alpha-eq1} is trivial. We have that:
    \begin{align*}
        \| A - \hat A \|_{\infty} & \le (1-\lambda_A) \| A \|_{\infty} \\
        & \le 2 (1-\lambda_A) \\
        & \le  2 \frac{\max_{i \in [n]}\hat x^\top A y - e_i^\top A y - \alpha}{\alpha} \\
        & = 2 \frac{(\hat x-x)^\top A y + x^\top A y - \min_{i \in [n]}e_i^\top A y - \alpha}{\alpha} \\
        & \le 2 \frac{(\hat x-x)^\top A y}{\alpha} \tag{$A \in \mathcal{G}_\alpha^x(x,y)$},
    \end{align*}
    which proves the first part of \Cref{eq:mat-constr-new-any-alpha-eq1}. One can follow the same steps to prove the second part of \Cref{eq:mat-constr-new-any-alpha-eq1}.
    
    The proof of \Cref{eq:mat-constr-new-any-alpha-eq2} follows from analogous argument. It is indeed sufficient to define $(\hat A, \hat B) = (\tilde \lambda_A A, \tilde \lambda_B B)$ where:
    \begin{align*}
        \tilde \lambda_A = \min \left\{1, \frac{\alpha}{\max_{i \in [n]}  x^\top A \hat y- e_i^\top A \hat y} \right\}, \quad  \tilde \lambda_B = \min \left\{1, \frac{\alpha}{\max_{j \in [n]} x^\top B \hat y - x^\top B e_j} \right\}.
    \end{align*}
    The proof is thus concluded.
\end{proof}

Next, the following lemma shows how to upper bound the terms that appeared in \Cref{lemma:mat-constr-new-any-alpha}.
We briefly recall that, for any $p \in \Delta^n$,  $\mathcal{S}_\alpha(p)$ is defined as 
\begin{align*}
    & \mathcal{S}_\alpha(p) = \left\{ S \subseteq [n]: \forall i \in S~ p_i > 0 \text{ and } \sum_{i \in S} p_i \le \frac{\alpha}{2} \right\}.
\end{align*}
Also in the following lemma, $x,y,\hat x,\hat y$ are generic strategies on the simplex. 

\begin{lemma}[Key Intermediate Step]\label{lemma:key-intermediate-step}
    Let $\alpha > 0$ and $x,y, \hat x, \hat y \in \Delta_n$.
    The following holds:
    \begin{align}
        & \sup_{A \in \mathcal{G}_\alpha^x(x,y)} \left( (\hat x-x)^\top A y \right)_+ \le 4 \sup_{S \in \mathcal{S}_\alpha(x)} \sum_{i \in S} (\hat x_i - x_i) + 2 \sum_{i: x_i= 0} \hat x_i \label{eq:key-int-step-eq1} \\
        & \sup_{B \in \mathcal{G}_\alpha^y(x,y)} \left( \max_{j \in [n]} (\hat x - x)^\top B (y - e_j) \right)_+ \le 4 \sup_{S \in \mathcal{S}_\alpha(x)} \sum_{i \in S} (\hat x_i - x_i) + 2 \sum_{i: x_i= 0} \hat x_i  \label{eq:key-int-step-eq2} \\
        & \sup_{A \in \mathcal{G}_\alpha^x(x,y)} \left( \max_{i \in [n]}(x-e_i)^\top A (\hat y - y) \right)_+ \le 4 \sup_{S \in \mathcal{S}_\alpha(y)} \sum_{j \in S} (\hat y_j - y_j) + 2 \sum_{j: y_j= 0} \hat y_j  \label{eq:key-int-step-eq3} \\
        & \sup_{B \in \mathcal{G}_\alpha^y(x,y)} \left( x^\top B(\hat y - y) \right)_+ \le 4 \sup_{S \in \mathcal{S}_\alpha(y)} \sum_{j \in S} (\hat y_j - y_j) + 2 \sum_{j: y_j= 0} \hat y_j \label{eq:key-int-step-eq4}.
    \end{align}
\end{lemma}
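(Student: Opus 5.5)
The plan is to prove \Cref{eq:key-int-step-eq1} in detail. \Cref{eq:key-int-step-eq2}, \Cref{eq:key-int-step-eq3} and \Cref{eq:key-int-step-eq4} should then follow by the same argument with the roles of the players and of the constraint vectors exchanged, as sketched at the end.

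\textbf{Step 1: reduce to a vector problem.} Fix $A \in \mathcal{G}^x_\alpha(x,y)$ and set $v = Ay \in [-1,1]^n$. Let $\star \in \argmin_i v_i$ and $d_i = v_i - v_\star \in [0,2]$. Since $\hat x$ and $x$ both sum to one, $(\hat x - x)^\top A y = (\hat x - x)^\top d$. The constraints defining $\mathcal{G}^x_\alpha(x,y)$ say exactly that $x^\top d = x^\top v - v_\star \le \alpha$. Splitting according to the support of $x$ gives
\begin{align*}
(\hat x - x)^\top d \le \sum_{i: x_i>0} (\hat x_i - x_i) d_i + 2\sum_{i: x_i = 0} \hat x_i ,
\end{align*}
using $d_i \le 2$ and $x_i = 0$ on the second sum. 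This already produces the missing-mass term of \Cref{eq:key-int-step-eq1}. Since the empty set belongs to $\mathcal{S}_\alpha(x)$, the right-hand side of \Cref{eq:key-int-step-eq1} is nonnegative, so the positive part $(\cdot)_+$ costs nothing.

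\textbf{Step 2: fractional knapsack.} Substitute $c_i = x_i d_i$ and $r_i = (\hat x_i - x_i)/x_i$ for $x_i>0$. The first sum is then at most the value of the LP
\begin{align*}
\max_{c}\ \sum_{i: x_i>0} r_i c_i \quad \text{s.t. } c_i \in [0,2x_i],\ \sum_i c_i \le \alpha .
\end{align*}
A linear objective over this polytope is maximized at a vertex. Such a vertex has at most one fractional coordinate $k$, and only indices with $r_i > 0$ receive mass. So an optimal $c$ has the form $c_i = 2x_i$ for $i$ in a set $T$, $c_k = \theta \cdot 2x_k$ with $\theta \in [0,1)$, and $c_i=0$ elsewhere. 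The budget gives $\sum_{i\in T} 2x_i \le \alpha$, so $T \in \mathcal{S}_\alpha(x)$, and the full items contribute $2\sum_{i\in T}(\hat x_i - x_i)$. The fractional item contributes $2\theta(\hat x_k - x_k)_+$.

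\textbf{Step 3 (main obstacle): the fractional item.} If $x_k \le \alpha/2$, then $\{k\} \in \mathcal{S}_\alpha(x)$, and the fractional item adds at most another $2\sup_{S}\sum_{i\in S}(\hat x_i-x_i)$. Together with Step 2 this gives the constant $4$. The delicate case is $x_k > \alpha/2$, where $\{k\}$ is not admissible. Here I would first try to use the budget: $2\theta x_k \le \alpha - 2x(T)$, so $\theta < \alpha/(2x_k) < 1$. I would then try to compare the marginal ratios: greedy optimality gives $r_k \le r_i$ for all $i \in T$. This should let me move the fractional mass onto $T$, or onto a subset of $T \cup \{k\}$ of $x$-mass at most $\alpha/2$, at a loss of at most a factor $2$. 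I am not sure this comparison closes cleanly. If it does not, the fallback is a layer-cake decomposition $d = \int_0^2 \mathbf{1}\{d \ge s\}\,ds$, together with the observation that $x_i d_i \le \alpha$ forces $d_i \le \alpha/x_i$ for heavy atoms; this caps how large $d_k$ can be on atoms with $x_k > \alpha/2$.

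\textbf{Remaining inequalities.} For \Cref{eq:key-int-step-eq2}, take $j^\star$ attaining the max. Then $w = B(y - e_{j^\star})$ satisfies $w_i \ge -2$, and the constraint $x^\top B y - x^\top B e_j \le \alpha$ gives $x^\top w \le \alpha$. Shifting by $\min_i w_i$ turns this into the same knapsack in $x$ as above. \Cref{eq:key-int-step-eq3} and \Cref{eq:key-int-step-eq4} are the transposed versions, where the knapsack is in $y$ and $\hat y$.
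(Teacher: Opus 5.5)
Your Steps 1--2 coincide with the paper's proof: shift by $\min_i(Ay)_i$, split off the mass outside $\supp(x)$, substitute $c_i=x_id_i$, and solve a fractional knapsack. Your treatment of a fractional item with $x_k\le\alpha/2$ is also correct. The case you flag in Step 3, however, cannot be closed, because \Cref{eq:key-int-step-eq1} is false exactly there.

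\textbf{Counterexample for \Cref{eq:key-int-step-eq1}.} Take $n=2$, $x=(\tfrac{1}{2},\tfrac{1}{2})$, $y=e_1$, $\alpha=\tfrac{1}{10}$, and $\hat x=(\tfrac{3}{5},\tfrac{2}{5})$; this $\hat x$ is an empirical distribution from $10$ samples. Let $A_{11}=\tfrac{1}{5}$ and set all other entries to $0$. Then $x^\top Ay=\tfrac{1}{10}=\min_i e_i^\top Ay+\alpha$, so $A\in\mathcal{G}^x_\alpha(x,y)$, and $(\hat x-x)^\top Ay=\tfrac{1}{50}>0$. But both atoms of $x$ exceed $\alpha/2$, so $\mathcal{S}_\alpha(x)=\{\emptyset\}$; there are no zero atoms, so the right-hand side is $0$.

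In your notation this is $T=\emptyset$ with the whole budget on the heavy item $k=1$, so there is nothing to move the fractional mass onto. The comparison $r_k\le r_i$ for $i\in T$ only gives $c_kr_k\le \tfrac{c_k}{2x(T)}\cdot 2\sum_{i\in T}(\hat x_i-x_i)$, and $c_k$ can be arbitrarily larger than $2x(T)$. The layer-cake cap $d_k\le\alpha/x_k$ just reproduces the contribution $\tfrac{\alpha}{x_k}(\hat x_k-x_k)$. The paper's own proof makes the same unjustified step: it asserts $c_jr_j\le c_ir_i$ for $i\in S$.

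What the knapsack does give, using $c_kr_k\le\min\{2,\alpha/x_k\}(\hat x_k-x_k)_+$, is
\begin{align*}
\sup_{A\in\mathcal{G}^x_\alpha(x,y)}\bigl((\hat x-x)^\top Ay\bigr)_+\le 4\sup_{S\in\mathcal{S}_\alpha(x)}\sum_{i\in S}(\hat x_i-x_i)+\alpha\max_{k:\,x_k>\alpha/2}\frac{(\hat x_k-x_k)_+}{x_k}+2\sum_{i:\,x_i=0}\hat x_i ,
\end{align*}
with the max read as $0$ if no such $k$ exists. Since $x_k>\alpha/2$, Bernstein bounds the new term by $O\bigl(\sqrt{\alpha\log(n/\delta)/m}+\log(n/\delta)/m\bigr)$, the same order as the set term. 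So this part can be repaired without changing the rate. Alternatively, apply Bernstein with a union bound directly over the at most $(n+2)2^{n-1}$ vertices $c$ of the knapsack polytope. Each vertex gives a statistic $\sum_i (c_i/x_i)(\hat x_i-x_i)$ with per-sample range $2$ and variance at most $2\alpha$.

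\textbf{The reduction for \Cref{eq:key-int-step-eq2} has a second, independent problem.} In \Cref{eq:key-int-step-eq1} the shift was free because the constraint $x^\top v-v_\star\le\alpha$ becomes exactly $x^\top d\le\alpha$ with $d\ge 0$. For $w=B(y-e_{j^\star})$, the constraint $x^\top w\le\alpha$ already concerns the unshifted vector, whose entries can be negative. Shifting by $\min_i w_i<0$ keeps the objective but raises the budget to $\alpha-\min_i w_i$, possibly $\alpha+2$, so the light-set structure is lost. The paper's variables $c_i=x_i[(By)_i-B_{i\star}]$ have the same sign problem.

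Here too the inequality itself fails. With $x,y,\hat x$ as above and $B=\begin{pmatrix}1&-1\\-1&1\end{pmatrix}$, one has $x^\top B=0$, so $B\in\mathcal{G}^y_\alpha(x,y)$ for every $\alpha\ge 0$. Yet $(\hat x-x)^\top B(y-e_2)=\tfrac{2}{5}$, while the right-hand side is $0$. Since this left-hand side does not shrink with $\alpha$, no heavy-atom term rescues \Cref{eq:key-int-step-eq2} or its transpose \Cref{eq:key-int-step-eq3}. The best general bound is of order $\|\hat x-x\|_1$ (resp.\ $\|\hat y-y\|_1$), and feeding that into the $1/\alpha$ rescaling of \Cref{lemma:mat-constr-new-any-alpha} would lose a factor of $\alpha$.

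The way around this is not to rescale the other player's matrix at all. The set $\mathcal{G}^y_\alpha(x,y)$ constrains $B$ only through $B^\top x$. The matrix
\[
\hat B=\bigl(B+\mathbf{1}(B^\top x-B^\top \hat x)^\top\bigr)/\bigl(1+\|B^\top(x-\hat x)\|_\infty\bigr)
\]
satisfies $\hat B^\top\hat x\propto B^\top x$, hence lies in $\mathcal{G}^y_\alpha(\hat x,y)$, with $\|B-\hat B\|_\infty\le 2\|\hat x-x\|_1$. In the general-sum case, where the two components decouple, this error is dominated by the $1/(\epsilon^2\alpha)$ term.
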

\begin{proof}
    First, we observe that, due to symmetry, once \Cref{eq:key-int-step-eq1} and \Cref{eq:key-int-step-eq2} hold, \Cref{eq:key-int-step-eq3} and \Cref{eq:key-int-step-eq4} hold as well.

    We now start by proving \Cref{eq:key-int-step-eq1}. We assume w.l.o.g. that the supremum is positive. Furthermore, for any $A \in \mathcal{G}_\alpha(x,y)$, denote by $\star \in [n]$ any index that attains the argmin in $\min_{i \in [n]} (Ay)_i$. Then, we have that:
    \begin{align*}
        \sup_{A \in \mathcal{G}_\alpha^x(x,y)} \left( (\hat x-x)^\top A y \right)_+ & =\sup_{A \in \mathcal{G}_\alpha^x(x,y)} \sum_{i \in [n]} (\hat x_i - x_i) [(Ay)_i - (Ay)_\star] \tag{$\hat x, x \in \Delta_n$} 
        & \\ 
        & \le \left( \sup_{A \in \mathcal{G}_\alpha^x(x,y)} \sum_{i:  x_i >0} (\hat x_i - x_i) [(Ay)_i - (Ay)_\star] \right) +  2 \sum_{i: x_i = 0} \hat x_i  \:.
    \end{align*}
    Let us focus on the remaining optimization problem over $A$. We are interested in solving:\footnote{The sup is attained since it is an optimization problem over a compact domain of a continuous function.}
    \begin{equation}\label{eq:key-int-step-eq5}
    \begin{alignedat}{2}
    & \max_{A \in [-1,1]^{n \times n}} \sum_{i: x_i > 0} (\hat x_i - x_i) [(Ay)_i - (Ay)_\star] \\
    & \text{s.t.~}  \sum_{i: x_i > 0} x_i [(Ay)_i - (Ay)_\star] \le \alpha.
    \end{alignedat}
    \end{equation}
    Let us introduce the following auxiliary variable $c_i = x_i[(Ay)_i-(Ay)_\star] \in[0, 2x_i]$ . Then, \eqref{eq:key-int-step-eq5} is upper bounded by:
    \begin{equation}\label{eq:key-int-step-eq7}
    \begin{alignedat}{2}
    & \max_{(c_1, \dots, c_n)} \sum_{i: x_i > 0} c_i \frac{(\hat x_i - x_i)}{x_i}  \\
    & \text{s.t.~}  \sum_{i: x_i > 0} c_i \le \alpha \text{ and } c_i \in [0, 2x_i].
    \end{alignedat}
    \end{equation}
    This is a fractional knapsack problem. It is well-known that such problems can be solved by a greedy algorithm that sorts the indexes $i$'s according to their value (\ie $\frac{\hat x_i -x_i}{x_i}$) and allocates the maximum value of $c_i$ according to this order until all the budget (\ie $\sum_{i: x_i > 0} c_i \le \alpha$) expires. More formally, any optimal solution $c^\star$ of \eqref{eq:key-int-step-eq7} has the following shape. Let $S \subseteq \{ i \in [n]: x_i > 0 \}$ be any subset such that $$\min_{i \in S}\frac{\hat x_i - x_i}{x_i} \ge \max_{i \in S^\complement}\frac{\hat x_i - x_i}{x_i},$$
    where $S^\complement = \{i \in [n]: x_i > 0 \} \setminus S$. 
    Then $c^\star_i = 2x_i$ for all $i \in S$. Furthermore, there exists at most one $j \in S^\complement$ such that $c^\star_j \in [0, 2x_j]$. For all the remaining elements, we have that $c^\star_i = 0$.
    Hence, the optimal value of \eqref{eq:key-int-step-eq7} can be written (and upper bounded) as follows:
    \begin{align*}
        2 \sum_{i \in S} (\hat x_i - x_i) + \frac{\hat x_j - x_j}{x_j} c^\star_j & \le 4 \sum_{i \in S} ( \hat x_i - x_i ) \\
        & \le 4 \sup_{S \in \mathcal{S}_\alpha(x)} \sum_{i \in S} (\hat x_i - x_i) \tag{Def. of $\mathcal{S}_\alpha(x)$}.
    \end{align*}
    where the first inequality step follows from the fact that, for $j \in S^\complement$ it holds that $c_j \frac{\hat x_j - x_j}{x_j} \le c_i \frac{\hat x_i - x_i}{x_i}$ for any $i \in S$. This concludes the proof of \Cref{eq:key-int-step-eq1}.

    Next, we continue by proving \Cref{eq:key-int-step-eq2}. Again, we assume w.l.o.g. that the sup is greater or equal than $0$. Then, we have that:
    \begin{align*}
        \sup_{B \in \mathcal{G}_\alpha^y(x,y)} \left( \max_{j \in [n]} (\hat x - x)^\top B (y - e_j) \right)_+ & =   \max_{B \in \mathcal{G}_\alpha^y(x,y)}  \max_{\star \in [n]} \sum_{i \in [n]} (\hat x_i - x_i) \left[\sum_{j \in [n]} y_j B_{ij} - B_{i\star}\right] \\
        & \le \max_{B \in \mathcal{G}_\alpha^y(x,y)}  \max_{\star \in [n]} \sum_{i: x_i > 0} (\hat x_i - x_i) \left[\sum_{j \in [n]} y_j B_{ij} - B_{i\star}\right] + 2\sum_{i: x_i = 0} \hat x_i \\
        & = \max_{\star \in [n]} \max_{\substack{B \in \mathcal{G}_\alpha(x,y) \\ \star \in \argmax (\hat x -x)^\top B (y-e_\star)}}  \sum_{i: x_i > 0} (\hat x_i - x_i) \left[\sum_{j \in [n]} y_j B_{ij} - B_{i\star}\right] + 2\sum_{i: x_i = 0} \hat x_i \\
        & \le \max_{\star \in [n]} \max_{\substack{B \in \mathcal{G}_\alpha(x,y)}}  \sum_{i: x_i > 0} (\hat x_i - x_i) \left[\sum_{j \in [n]} y_j B_{ij} - B_{i\star}\right] + 2\sum_{i: x_i = 0} \hat x_i\:.
    \end{align*}
    Hence, fix any $\star \in [n]$ in the outer maximization problem, and restrict our attention to the maximization problem over $B$, that is 
    \begin{equation}\label{eq:key-int-step-eq9}
    \begin{alignedat}{2}
    & \max_{B \in [-1,1]^{n \times n}} \sum_{i: x_i > 0} (\hat x_i - x_i) \left[ \sum_{j \in [n]} y_j B_{ij} - B_{i\star} \right]\\
    & \text{s.t.~}  \sum_{i: x_i > 0} x_i \left[ \sum_{j \in [n]} y_j B_{ij} - B_{ik}\right] \le \alpha \quad \forall k \in [n].
    \end{alignedat}
    \end{equation}
    Since the $\alpha$-Nash constraint holds for all $k \in [n]$, it holds also for $\star$. Hence, we can upper bound \eqref{eq:key-int-step-eq9} with the following problem:
    \begin{equation}\label{eq:key-int-step-eq10}
    \begin{alignedat}{2}
    & \max_{B \in [-1,1]^{n \times n}} \sum_{i: x_i > 0} (\hat x_i - x_i) \left[ \sum_{j \in [n]} y_j B_{ij} - B_{i\star} \right]\\
    & \text{s.t.~}  \sum_{i: x_i > 0} x_i \left[ \sum_{j \in [n]} y_j B_{ij} - B_{i\star}\right] \le \alpha 
    \end{alignedat}
    \end{equation}
    Introducing the variable $c_i = x_i \left[ \sum_{j \in [n]} y_j B_{ij} - B_{i\star}\right]$ and rewriting the objective function as $\sum_{i: x_i > 0} c_i \frac{\hat x_i - x_i}{x_i}$, we obtain again a fractional knapsack problem that can be solved in closed form, as we presented above. Following the same steps as above, the proof is concluded.
\end{proof}

Here, we show how to exploit the previous results to upper bound the Hausdorff distance. Here, we will use the fact that $\hat x$ and $\hat y$ are the empirical mean of $x$ and $y$ after $m$ samples.

\begin{lemma}[Matrix Construction $\to$ Hausdorff Distance]\label{lemma:any-alpha-new-det-h}
    Let $\alpha > 0$ and $x,y, \hat x, \hat y \in \Delta_n$. Specifically, $\hat x$ and $\hat y$ are the maximum likelihood estimators of $x,y$ after $m$ samples.
    Let $f_\alpha(x, \hat x)$ and $f_\alpha(y, \hat y)$ be defined as:
    \begin{align*}
        & f_\alpha(x, \hat x) =  \frac{16}{\alpha} \left(   \sup_{S \in \mathcal{S}_\alpha(x)} \sum_{i \in S} (\hat x_i - x_i) + \sup_{S \in \mathcal{S}_\alpha(\hat x)} \sum_{i \in S} ( x_i - \hat x_i) +  \sum_{i: \hat x_i= 0}  x_i\right) \\
        & f_\alpha(y, \hat y) = \frac{16}{\alpha} \left(   \sup_{S \in \mathcal{S}_\alpha(y)} \sum_{j \in S} (\hat y_j - y_j) + \sup_{S \in \mathcal{S}_\alpha(\hat y)} \sum_{j \in S} ( y_j - \hat y_j) +  \sum_{j: \hat y_j= 0}  y_j\right).
    \end{align*}
    It holds that:
    \begin{align*}
        H(\mathcal{G}_\alpha(x, y), \mathcal{G}_\alpha(\hat x, \hat y) \le f_\alpha(x, \hat x) + f_\alpha(y, \hat y).
    \end{align*}
\end{lemma}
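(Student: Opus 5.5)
We would start from \Cref{corollary:separation-x-y-err}, stated for general-sum sets as well:
\begin{align*}
H(\mathcal{G}_\alpha(x,y),\mathcal{G}_\alpha(\hat x,\hat y)) \le H(\mathcal{G}_\alpha(x,y),\mathcal{G}_\alpha(\hat x,y)) + H(\mathcal{G}_\alpha(\hat x,y),\mathcal{G}_\alpha(\hat x,\hat y)).
\end{align*}
It then suffices to show that the first term is at most $f_\alpha(x,\hat x)$ and the second at most $f_\alpha(y,\hat y)$. Each Hausdorff distance has two directed parts. Since $\mathcal{G}_\alpha=\mathcal{G}^x_\alpha\times\mathcal{G}^y_\alpha$ and the metric is an $\ell_\infty$ maximum over the two blocks, each directed part is the maximum of the $A$-block and $B$-block errors, so we may bound the two blocks separately.

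\textbf{First term, direction $(x,y)\to(\hat x,y)$.} Fix $(A,B)\in\mathcal{G}_\alpha(x,y)$. By \Cref{eq:mat-constr-new-any-alpha-eq1} of \Cref{lemma:mat-constr-new-any-alpha}, there is $(\hat A,\hat B)\in\mathcal{G}_\alpha(\hat x,y)$ with
\begin{align*}
\|A-\hat A\|_\infty \le \tfrac{2}{\alpha}\bigl((\hat x-x)^\top Ay\bigr)_+,
\end{align*}
and with the analogous bound for $B$. Taking the supremum over $A$ and $B$ and applying \Cref{eq:key-int-step-eq1,eq:key-int-step-eq2} of \Cref{lemma:key-intermediate-step}, both errors are at most
\begin{align*}
\tfrac{2}{\alpha}\Bigl(4\sup_{S\in\mathcal{S}_\alpha(x)}\textstyle\sum_{i\in S}(\hat x_i-x_i)+2\sum_{i:x_i=0}\hat x_i\Bigr).
\end{align*}
Because $\hat x$ is an empirical mean, $x_i=0$ forces $\hat x_i=0$ almost surely, so the missing-mass term vanishes. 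This direction is therefore bounded by $\tfrac{8}{\alpha}\sup_{S\in\mathcal{S}_\alpha(x)}\sum_{i\in S}(\hat x_i-x_i)$, which is at most $f_\alpha(x,\hat x)$.

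\textbf{First term, reverse direction.} Apply the same two lemmas with the roles of $x$ and $\hat x$ swapped: start from $(\hat A,\hat B)\in\mathcal{G}_\alpha(\hat x,y)$ and build a point of $\mathcal{G}_\alpha(x,y)$. Both lemmas hold for arbitrary points of the simplex, so this is legitimate. The resulting bound is
\begin{align*}
\tfrac{2}{\alpha}\Bigl(4\sup_{S\in\mathcal{S}_\alpha(\hat x)}\textstyle\sum_{i\in S}(x_i-\hat x_i)+2\sum_{i:\hat x_i=0}x_i\Bigr),
\end{align*}
which is again at most $f_\alpha(x,\hat x)$, since $16$ dominates the constants $8$ and $4$.

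\textbf{Second term.} The argument is symmetric. Use \Cref{eq:mat-constr-new-any-alpha-eq2} of \Cref{lemma:mat-constr-new-any-alpha} with the base point $(\hat x,y)$ in place of $(x,y)$, moving $y\to\hat y$ with the first strategy fixed at $\hat x$. Then apply \Cref{eq:key-int-step-eq3,eq:key-int-step-eq4} of \Cref{lemma:key-intermediate-step}, again with first strategy $\hat x$. The right-hand sides of those bounds depend only on $y$ and $\hat y$, not on the first strategy, which is what makes the second term collapse to $f_\alpha(y,\hat y)$. The reverse direction swaps $y$ and $\hat y$ and produces the $\mathcal{S}_\alpha(\hat y)$ term and the missing-mass term for $y$.

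\textbf{Main obstacle.} The main point to check is that \Cref{corollary:separation-x-y-err} and both lemmas are genuinely stated for arbitrary simplex points, so that the base point can be moved to $(\hat x,y)$ and the roles swapped. We also need the $B$-block bounds of \Cref{lemma:mat-constr-new-any-alpha} to be controlled by exactly the same knapsack quantities. If the corollary as stated only covers $\mathcal{Z}$, we would reprove it via the triangle inequality for the Hausdorff distance, which is immediate.
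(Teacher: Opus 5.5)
\textbf{The step that fails is your use of \Cref{lemma:key-intermediate-step}, and with it the statement itself is false.} Your argument follows the paper's own route: \Cref{corollary:separation-x-y-err}, then the scaling construction of \Cref{lemma:mat-constr-new-any-alpha}, then \Cref{lemma:key-intermediate-step} for each block and direction. Taking a max over blocks instead of the paper's sum only changes constants, and those surrounding steps are fine. Here is a counterexample. Take $n=2$, $\alpha=1/10$, $x=y=\hat y=(1/2,1/2)$ and $\hat x=(3/5,2/5)$; these are realizable empirical frequencies, e.g.\ with $m=10$.
\begin{itemize}
\item Every coordinate of $x,\hat x,y,\hat y$ exceeds $\alpha/2$, so each of $\mathcal{S}_\alpha(x),\mathcal{S}_\alpha(\hat x),\mathcal{S}_\alpha(y),\mathcal{S}_\alpha(\hat y)$ equals $\{\emptyset\}$. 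There is no missing mass, so $f_\alpha(x,\hat x)=f_\alpha(y,\hat y)=0$.
\item $A=\begin{pmatrix}1/5&1/5\\0&0\end{pmatrix}$, paired with $B=0$, lies in $\mathcal{G}_\alpha(x,y)$: $Ay=(1/5,0)$ and $x^\top Ay-\min_i e_i^\top Ay=1/10$.
\item Any $A'$ with $\|A'-A\|_\infty\le t$ satisfies $(\hat x-e_2)^\top A'y\ge 3/25-6t/5$. This exceeds $\alpha$ unless $t\ge 1/60$.
\end{itemize}
Hence $H(\mathcal{G}_\alpha(x,y),\mathcal{G}_\alpha(\hat x,\hat y))\ge 1/60>0$, while the claimed bound is $0$.

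\textbf{Where the knapsack argument breaks.} In \eqref{eq:key-int-step-eq1}, the greedy optimum is $2\sum_{i\in S}(\hat x_i-x_i)$ plus the contribution $c_j(\hat x_j-x_j)/x_j$ of the partially filled item. That extra term is not dominated by the full items: a larger ratio does not imply a larger product, and $S$ may be empty. In the example $2x_i=1>\alpha$ for both items, so the optimum is the fractional item alone, with $c_1=\alpha$ and value $\alpha(\hat x_1-x_1)/x_1=1/50$. For the $B$-block bound \eqref{eq:key-int-step-eq2}, which your ``main obstacle'' paragraph simply assumes, even the relaxation is invalid. The variable $c_i=x_i[(By)_i-B_{i\star}]$ has no sign constraint, so $c_i\in[0,2x_i]$ does not hold. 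In the same instance $B=\begin{pmatrix}1&-1\\-1&1\end{pmatrix}\in\mathcal{G}^y_\alpha(x,y)$ makes the left-hand side equal to $\|\hat x-x\|_1=1/5$, and it forces $H\ge 1/20$. The generality of the cited lemmas, which you flagged as the main obstacle, is not a problem; this knapsack lemma is.

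\textbf{A possible repair.} The own-player terms need an extra fractional-item term such as $\alpha\max_{j:x_j>\alpha/2}(\hat x_j-x_j)_+/x_j$. With high probability this is still $O(\sqrt{\alpha\log(n/\delta)/m}+\log(n/\delta)/m)$. The cross-player blocks should bypass the $1/\alpha$ scaling bound altogether. Shift columns additively, $B+\mathbf{1}w^\top$ with $w_j=(\hat x-x)^\top B(y-e_j)$ (so that $w^\top y=0$), then divide by $1+\|w\|_\infty$. This lands in $\mathcal{G}^y_\alpha(\hat x,y)$ at cost $O(\|\hat x-x\|_1)$.
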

\begin{proof}
    From \Cref{corollary:separation-x-y-err}, we have that:
    \begin{align*}
        H(\mathcal{G}_\alpha(x, y), \mathcal{G}_\alpha(\hat x, \hat y) & \le H(\mathcal{G}_\alpha(x, y), \mathcal{G}_\alpha(\hat x,  y) + H(\mathcal{G}_\alpha(\hat x, y), \mathcal{G}_\alpha(\hat x, \hat y) 
    \end{align*}
    Let us focus on the fist term. From \Cref{lemma:mat-constr-new-any-alpha}, we have that:
    \begin{align*}
        H(\mathcal{G}_\alpha(x, y), \mathcal{G}_\alpha(\hat x,  y) \le 2 \sup_{A \in \mathcal{G}^x_\alpha(x,y)}\left( \frac{(\hat x-x)^\top A y}{\alpha} \right)_+ & + 2 \sup_{A \in \mathcal{G}^x_\alpha(\hat x,y)}\left( \frac{(x-\hat x)^\top A y}{\alpha} \right)_+  +\\
        &+ \sup_{A \in \mathcal{G}^y_\alpha(x, y)}2\left( \frac{\max_{j \in [n]}(\hat x-x)^\top B(y-e_j)}{\alpha} \right)_+ + \\
        & + \sup_{A \in \mathcal{G}^y_\alpha(\hat x, y)}2\left( \frac{\max_{j \in [n]}(x-\hat x)^\top B(y-e_j)}{\alpha} \right)_+
    \end{align*}
    This, in turn, can be upper bounded using \Cref{lemma:key-intermediate-step}. We thus obtain:
    \begin{align*}
        H(\mathcal{G}_\alpha(x, y), \mathcal{G}_\alpha(\hat x,  y) \le \frac{16}{\alpha} \left(   \sup_{S \in \mathcal{S}_\alpha(x)} \sum_{i \in S} (\hat x_i - x_i) + \sup_{S \in \mathcal{S}_\alpha(\hat x)} \sum_{i \in S} ( x_i - \hat x_i) +  \sum_{i: \hat x_i= 0}  x_i\right),
    \end{align*}
    where we used the fact that $\sum_{x_i = 0} \hat x_i  = 0$.
    Similarly, for $H(\mathcal{G}_\alpha(\hat x, y), \mathcal{G}_\alpha(\hat x, \hat y)$, we obtain:
    \begin{align*}
        H(\mathcal{G}_\alpha(\hat x, y), \mathcal{G}_\alpha(\hat x, \hat y)  \le  \frac{16}{\alpha} \left(   \sup_{S \in \mathcal{S}_\alpha(y)} \sum_{j \in S} (\hat y_j - y_j) + \sup_{S \in \mathcal{S}_\alpha(\hat y)} \sum_{j \in S} ( y_j - \hat y_j) +  \sum_{j: \hat y_j= 0}  y_j\right), 
    \end{align*}
    thus concluding the proof.
\end{proof}

The following lemma shows how to upper bound with probabilistic argument the terms that appeared in the previous lemma.

\begin{lemma}[Haussdorf High-Probability upper bound]\label{lemma:any-alpha-new-hauss-high-p}
    Let $m$ be such that:
    \begin{align}\label{eq:any-alpha-new-m-req}
        m \ge 1+ \frac{4 \left(n+\log\left( \frac{24}{\delta} \right)\right)}{\alpha}.
    \end{align}
    With probability at least $1-\delta$ it holds that:
    \begin{align*}
        H(\mathcal{G}_\alpha(x, y), \mathcal{G}_\alpha(\hat x, \hat y) \le \frac{64}{\alpha} \left(\sqrt{\frac{4\alpha \left( n+\log\left( \frac{6}{\delta} \right) \right) }{m}} + \frac{5(n+\log\left( \frac{6}{\delta} \right))}{m} +\frac{3\sqrt{n} \log\left(\frac{6}{\delta} \right)}{m}\right).
    \end{align*}
\end{lemma}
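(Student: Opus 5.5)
Starting from \Cref{lemma:any-alpha-new-det-h}, the Hausdorff distance is at most $f_\alpha(x,\hat x)+f_\alpha(y,\hat y)$. By symmetry, and by a union bound that allots failure probability $\delta/2$ to each player, it suffices to show that, with probability at least $1-\delta/2$,
\begin{align*}
\sup_{S \in \mathcal{S}_\alpha(x)} \sum_{i \in S} (\hat x_i - x_i) + \sup_{S \in \mathcal{S}_\alpha(\hat x)} \sum_{i \in S} ( x_i - \hat x_i) + \sum_{i: \hat x_i= 0} x_i
\end{align*}
is at most twice the bracket in the statement. I would bound the three terms separately, giving each failure probability $\delta/6$. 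This explains the $\log(6/\delta)$ in the target and the $24/\delta$ in \Cref{eq:any-alpha-new-m-req}.

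\textbf{First term.} Fix $S \in \mathcal{S}_\alpha(x)$. Then $\sum_{i\in S}\hat x_i$ is the empirical mean of $m$ i.i.d.\ Bernoulli variables with mean $p_S=\sum_{i\in S}x_i\le \alpha/2$, so their variance is at most $\alpha/2$. Bernstein's inequality gives, with probability at least $1-\delta'$,
\[
\sum_{i\in S}(\hat x_i-x_i)\le \sqrt{\alpha\log(1/\delta')/m}+\tfrac{2\log(1/\delta')}{3m}.
\]
There are at most $2^n$ subsets, so I take $\delta'=\delta 2^{-n}/6$. Since $\log(2^n 6/\delta)\le n+\log(6/\delta)$, this yields the $\sqrt{\alpha(n+\log(6/\delta))/m}$ term together with a lower-order $(n+\log(6/\delta))/m$ term.

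\textbf{Second term.} This is the main obstacle, because $\mathcal{S}_\alpha(\hat x)$ is random. I would first show that $\mathcal{S}_\alpha(\hat x)\subseteq\mathcal{S}_{2\alpha}(x)\cup\{\text{sets with } \sum_{i \in S} x_i \text{ small}\}$ with high probability. Concretely, apply the multiplicative lower-tail Chernoff bound, union bounded over all $2^n$ subsets $S$: with probability at least $1-\delta/6$, every $S$ with $p_S\ge \alpha$ satisfies $\hat p_S \ge p_S/2$. The condition \Cref{eq:any-alpha-new-m-req}, $m\alpha \gtrsim n+\log(24/\delta)$, is exactly what makes $\exp(-m p_S/8)\le \delta 2^{-n}/24$ for $p_S \ge \alpha$. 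Hence any $S$ with $\hat p_S\le\alpha/2$ must have $p_S\le \alpha$, so it belongs to $\mathcal{S}_{2\alpha}(x)$; the support requirement is harmless, since $\hat x_i>0$ implies $x_i>0$. On that class, the lower-tail Bernstein bound with variance at most $\alpha$ and the same $2^n$ union bound gives $\sqrt{2\alpha(n+\log(6/\delta))/m}$ plus a lower-order term. This accounts for the constant $4$ under the square root and the coefficient $5$ of the linear term after collecting constants.

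\textbf{Third term.} This is the missing mass $M=\sum_{i:\hat x_i=0}x_i$. Its expectation is $\sum_i x_i(1-x_i)^m\le n/(em)$, and I would invoke the concentration result of \citet{rajaraman2020toward}, taken as an available tool, to get $M\le \mathbb{E}M+3\sqrt{n}\log(6/\delta)/m$ with probability at least $1-\delta/6$. The $n/m$ piece is absorbed into the $(n+\log)/m$ term.

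Finally, I would sum the three bounds for $x$ and the analogous three for $y$, multiply by the prefactor $16/\alpha$ of \Cref{lemma:any-alpha-new-det-h}, and loosen constants to obtain the factor $64/\alpha$. The union bound over all six events gives total failure probability at most $\delta$.
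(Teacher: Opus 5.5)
Your argument is correct in substance and follows the paper's skeleton. You bound $H$ by $f_\alpha(x,\hat x)+f_\alpha(y,\hat y)$, and you replace the random class $\mathcal{S}_\alpha(\hat x)$ by a deterministic class of sets with small true mass, using $\hat x_i>0\Rightarrow x_i>0$. You then run Bernstein with a $2^n$ union bound on that class, and handle the missing mass via \citet{rajaraman2020toward} together with $\mathbb{E}[\sum_{i:\hat x_i=0}x_i]\le n/m$.

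The one real difference is the reduction step. The paper uses an empirical-Bernstein event over all $S\subseteq[n]$, which turns $\sum_{i\in S}\hat x_i\le\alpha/2$ directly into an upper bound on $\sum_{i\in S}x_i$. You instead apply a one-sided multiplicative Chernoff bound only to sets of true mass at least $\alpha$. Your version is more elementary (no sample variance, no $m-1$ denominators) and needs only a lower tail. The paper's gives two-sided, data-dependent control of every subset at once. Both need $m\alpha\gtrsim n+\log(1/\delta)$ because of the union bound over subsets.

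Two constant-level points need fixing.

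First, the stated requirement $m\ge 1+4(n+\log(24/\delta))/\alpha$ only guarantees $m\alpha/8\ge (n+\log(24/\delta))/2+\alpha/8$. This falls short of the $n\log 2+\log(24/\delta)$ needed for $e^{-m\alpha/8}\le 2^{-n}\delta/24$, so the condition is not ``exactly'' what your Chernoff step with $\hat p_S\ge p_S/2$ needs. It suffices to conclude $\sum_{i\in S}x_i\le 2\alpha$ instead. If $p_S>2\alpha$, then $\hat p_S\le\alpha/2$ forces $\hat p_S<p_S/4$, which has probability at most $e^{-\frac{9}{32}mp_S}\le e^{-\frac{9}{16}m\alpha}$, and $\tfrac{9}{16}m\alpha\ge\tfrac{9}{4}(n+\log(24/\delta))$ does cover the union bound. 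The lower-tail Bernstein term then becomes $2\sqrt{\alpha(\cdot)/m}$ rather than $\sqrt{2\alpha(\cdot)/m}$, which still fits in the factor-$2$ slack you reserved.

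Second, your second term uses two events, so ``$\delta/6$ per term'' exceeds $\delta/2$ per player. Splitting $\delta/8$ across the four events per player is again absorbed by that slack.

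The paper's own reduction has the same kind of looseness. Under the stated condition, its empirical-Bernstein bound only yields $\sum_{i\in S}x_i\le\alpha/2+\alpha/2+\alpha=2\alpha$, not the claimed $\alpha$.
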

\begin{proof}
    We first fix any $m \ge 1$ and we consider the following good event $$\mathcal{E} = \mathcal{E}_{\textup{miss}} \cap \mathcal{E}_{\textup{set}} \cap \mathcal{E}_{\text{err},x} \cap \mathcal{E}_{\text{err},y}$$ where these events are defined as:
    \begin{align*}
        & \mathcal{E}_{\text{miss}} = \left\{ \sum_{i: \hat x_i=0} x_i - \mathbb{E}\left[ \sum_{i: \hat x_i = 0} x_i \right] \le  \frac{3\sqrt{n} \log\left(\frac{6}{\delta} \right)}{m}  \right\} \bigcap  \left\{ \sum_{j: \hat y_j=0} y_j - \mathbb{E}\left[ \sum_{j: \hat y_j = 0} y_j \right] \le \frac{3\sqrt{n} \log\left(\frac{6}{\delta} \right)}{m}  \right\} \\
        & \mathcal{E}_{\text{set},x} = \bigcap_{S \subseteq[n]} \left\{ \Bigg|\sum_{i \in S} (\hat x_i - x_i) \Bigg| \le \sqrt{\frac{2 \sum_{i \in S} \hat x_i\left(n+\log(\frac{24}{\delta}) \right)}{m}} + \frac{4 (n + \log(\frac{24}{\delta}))}{m-1} \right\} \\ 
        & \mathcal{E}_{\text{set},y} = \bigcap_{S \subseteq[n]} \left\{ \Bigg|\sum_{j \in S} (\hat y_j - y_j) \Bigg| \le \sqrt{\frac{2 \sum_{j \in S} \hat y_j\left(n+\log(\frac{24}{\delta}) \right)}{m}} + \frac{4 (n + \log(\frac{24}{\delta}))}{m-1} \right\} \\ 
        & \mathcal{E}_{\text{err},x} = \bigcap_{S \in \mathcal{S}_{2\alpha}(x)} \left\{ \Bigg|\sum_{i \in S} (\hat x_i - x_i) \Bigg|  \le \sqrt{\frac{4\alpha \left( n+\log\left( \frac{6}{\delta} \right) \right) }{m}} + \frac{4(n+\log\left( \frac{6}{\delta} \right))}{m} \right\}\\
        & \mathcal{E}_{\text{err},y} = \bigcap_{S \in \mathcal{S}_{2\alpha}(y)} \left\{ \Bigg|\sum_{j \in S} (\hat y_j - y_j) \Bigg|  \le \sqrt{\frac{4\alpha \left( n+\log\left( \frac{6}{\delta} \right) \right) }{m}} + \frac{4(n+\log\left( \frac{6}{\delta} \right))}{m}  \right\}
    \end{align*}

    We now continue by showing that $\mathbb{P}(\mathcal{E}) \ge 1-\delta$. In particular, using a union bound, we obtain that $\Prob(\mathcal{E}) = 1- \Prob(\mathcal{E}^\complement) \ge 1 - \left(\Prob\left(\mathcal{E}_{\text{miss}}^\complement\right) + \Prob\left(\mathcal{E}_{\text{set},x}^\complement\right) +  \Prob\left(\mathcal{E}_{\text{set},y}^\complement\right) + \Prob\left(\mathcal{E}_{\text{err},x}^\complement\right) + \Prob\left(\mathcal{E}_{\text{err},y}^\complement\right) \right) $.
    Hence, we continue by upper bounding the complement of each event that compose $\mathcal{E}$. We start from $\mathbb{P}(\mathcal{E}^\complement_{\text{miss}})$. We have that:
    \begin{align*}
        \mathbb{P}(\mathcal{E}^\complement_{\text{miss}}) \le \mathbb{P}\left( \sum_{i: \hat x_i=0} x_i - \mathbb{E}\left[ \sum_{i: \hat x_i = 0} x_i \right] \ge  \frac{\sqrt{n} \log\left(\frac{6}{\delta} \right)}{m} \right) + \mathbb{P}\left( \sum_{i: \hat y_i=0} y_i - \mathbb{E}\left[ \sum_{i: \hat y_i = 0} y_i \right] \ge  \frac{\sqrt{n} \log\left(\frac{6}{\delta} \right)}{m} \right) \le \frac{\delta}{3}.
    \end{align*}
    
    where the last inequality is a direct application of concentration results for the missing mass, \ie \Cref{lemma:missing-mass}.

    Next, we analyze $\Prob(\mathcal{E}_{\text{set},x}^\complement)$ and $\Prob(\mathcal{E}_{\text{set},y}^\complement)$. Using \Cref{lemma:var-aware}, we have that:
    \begin{align*}
        \Prob(\mathcal{E}_{\text{set},x}^\complement) = \mathbb{P}\left(\exists S \subseteq [n]: \Bigg|\sum_{i \in S} (\hat x_i - x_i) \Bigg| > \sqrt{\frac{2 \sum_{i \in S} \hat x_i\left(n+\log(\frac{24}{\delta}) \right)}{m}} + \frac{4 (n + \log(\frac{24}{\delta}))}{m-1} \right) \le \frac{\delta}{6}.
    \end{align*}
    
    Finally, it remains to analyze $\mathbb{P}(\mathcal{E}_{\text{err},x}^\complement)$ and $\mathbb{P}(\mathcal{E}_{\text{err},y}^\complement)$. We present our argument for $\mathbb{P}(\mathcal{E}_{\text{err},x}^\complement)$ and the computations for $\mathbb{P}(\mathcal{E}_{\text{err},y}^\complement)$ follows from analogous arguments. Specifically, using \Cref{lemma:var-aware}, we have that: 
    \begin{align*}
        \mathbb{P}(\mathcal{E}_{\text{err},x}^\complement) = \mathbb{P}\left( \exists S \in \mathcal{S}(2\alpha): \Bigg| \sum_{i \in S}(\hat x_i - x_i) \Bigg|  \ge \sqrt{\frac{4\alpha \left( n+\log\left( \frac{6}{\delta} \right) \right) }{m}} + \frac{4(n+\log\left( \frac{6}{\delta} \right))}{m} \right) \le \frac{\delta}{6}.
    \end{align*}
        Hence, we obtained that $\mathbb{P}(\mathcal{E}) \ge 1-\delta$. 

    At this point, we proceed by upper bounding the Hausdorff distance under the good event $\mathcal{E}$. We start by applying \Cref{lemma:any-alpha-new-det-h} and we analyze $f_\alpha(x, \hat x)$.\footnote{The arguments for upper bounding $f_\alpha(y, \hat y)$ are symmetric.} We recall that:
    \begin{align*}
        f_\alpha(x, \hat x) & =  \frac{16}{\alpha} \left(   \sup_{S \in \mathcal{S}_\alpha(x)} \sum_{i \in S} (\hat x_i - x_i) + \sup_{S \in \mathcal{S}_\alpha(\hat x)} \sum_{i \in S} ( x_i - \hat x_i) +  \sum_{i: \hat x_i= 0}  x_i\right)
    \end{align*}
    At this point, let us focus on $\mathcal{S}_\alpha(\hat x)$. With probability at least $1-\delta$ it holds that:
    \begin{align*}
        \mathcal{S}_\alpha(\hat x) & = \{S \subseteq [n]: \hat x_i > 0~\forall i \in S \text{ and } \sum_{i \in S} \hat x_i \le \tfrac\alpha2 \} \\
        & \subseteq \{S \subseteq [n]:  x_i > 0~\forall i \in S \text{ and } \sum_{i \in S} \hat x_i \le \tfrac\alpha2 \} \tag{$\hat x_i > 0\implies x_i > 0$} \\
        & = \left\{S \subseteq [n]:  x_i > 0~\forall i \in S \text{ and } \sum_{i \in S} \hat x_i \le \tfrac{\alpha}{2} \text{ and }\sum_{i \in S}  x_i \le \tfrac\alpha2 + \Big|\sum_{i \in S} x_i - \hat x_i\Big|  \right\} \\
        & \subseteq \left\{S \subseteq [n]:  x_i > 0~\forall i \in S \text{ and } \sum_{i \in S} \hat x_i \le \tfrac{\alpha}{2} \text{ and }\sum_{i \in S}  x_i \le \tfrac\alpha2 + \sqrt{\frac{2 \sum_{i \in S} \hat x_i (n+\log(\frac{24}{\delta}))}{m}} + \frac{4 (n+\log(\frac{24}{\delta}))}{m-1}  \right\} \\ 
        & = \left\{S \subseteq [n]:  x_i > 0~\forall i \in S \text{ and } \sum_{i \in S} \hat x_i \le \tfrac{\alpha}{2} \text{ and }\sum_{i \in S}  x_i \le \tfrac\alpha2 + \sqrt{\frac{\alpha (n+\log(\frac{24}{\delta}))}{m}} + \frac{4 (n+\log(\frac{24}{\delta}))}{m-1}  \right\} \\
        & \subseteq \left\{S \subseteq [n]:  x_i > 0~\forall i \in S \text{ and } \sum_{i \in S}  x_i \le \tfrac\alpha2 + \sqrt{\frac{\alpha (n+\log(\frac{24}{\delta}))}{m}} + \frac{4 (n+\log(\frac{24}{\delta}))}{m-1}  \right\} \\
        & \subseteq \{S \subseteq [n]:  x_i > 0~\forall i \in S \text{ and } \sum_{i \in S}  x_i \le \alpha   \} \tag{\Cref{eq:any-alpha-new-m-req}}\\
        & = S_{2\alpha}(x).
    \end{align*}
    where (i) the equality steps in the middle of these arguments follows from the fact that, with high probability, $\sum_{i \in S} \hat x_i \le \frac{\alpha}{2}$ implies the additional conditions that are mentioned in the formula, and (ii) the last $\subseteq$ step follows rom the fact that, if $m$ satisfies \Cref{eq:any-alpha-new-m-req}, then:
    \begin{align*}
        \sqrt{\frac{\alpha (n+\log(\frac{24}{\delta})}{m}} + \frac{4 \log(\frac{24}{\delta})}{m-1} \le \frac{\alpha}{2}.
    \end{align*}
    
    Plugging this result within $f_\alpha(x, \hat x)$, we have that, with probability at least $1-\delta$, 
    \begin{align*}
        f_\alpha(x, \hat x) & \le \frac{32}{\alpha} \left( \sup_{S \in \mathcal{S}_{2\alpha}(x)} \Bigg| \sum_{i \in S} \hat x_i - x_i \Bigg| +  \sum_{i: \hat x_i= 0}  x_i\right) \\
        & \le \frac{32}{\alpha} \left(\sqrt{\frac{4\alpha \left( n+\log\left( \frac{6}{\delta} \right) \right) }{m}} + \frac{4(n+\log\left( \frac{6}{\delta} \right))}{m} + \sum_{i: \hat x_i= 0}  x_i  \right) \tag{Def. of $\mathcal{E}$} \\
        & \le \frac{32}{\alpha} \left(\sqrt{\frac{4\alpha \left( n+\log\left( \frac{6}{\delta} \right) \right) }{m}} + \frac{4(n+\log\left( \frac{6}{\delta} \right))}{m} +\frac{3\sqrt{n} \log\left(\frac{6}{\delta} \right)}{m} + \mathbb{E}\left[ \sum_{i: \hat x_i= 0}  x_i  \right]\right).\tag{Def. of $\mathcal{E}$}
    \end{align*}
    It thus remains to control $\mathbb{E}[\sum_{i: \hat x_i=0} x_i]$. It holds that:
    \begin{align*}
        \mathbb{E}\left[\sum_{i: \hat x_i=0} x_i\right] = \sum_{i \in [n]} x_i \mathbb{P}[\hat x_i = 0] = \sum_{i \in [n]} x_i (1-x_i)^m \le n \max_{p \in [0,1]}p(1-p)^m \le \frac{n}{m} .
    \end{align*}
    Following similar steps, we arrived at the following high-probability bound:
    \begin{align*}
    H(G_\alpha(x,y), \mathcal{G}_\alpha(\hat x, \hat y) & \le f_\alpha(x, \hat x) + f_\alpha(y, \hat y)  \\ & \le  \frac{64}{\alpha} \left(\sqrt{\frac{4\alpha \left( n+\log\left( \frac{6}{\delta} \right) \right) }{m}} + \frac{5(n+\log\left( \frac{6}{\delta} \right))}{m} +\frac{3\sqrt{n} \log\left(\frac{6}{\delta} \right)}{m}\right),
    \end{align*}
    which concludes the proof.
\end{proof}

We are now ready to prove \Cref{thm:ub-any-alpha-gsg-new}.

\begin{theorem}\label{thm:ub-any-alpha-gsg-new}
    Let $\alpha > 0$ and let 
    $$m \in {\mathcal{O}}\left( \frac{n+\log(\frac{1}{\delta})}{\alpha \epsilon^2} + \frac{\sqrt{n} \log\left( \frac{1}{\delta} \right)}{\alpha \epsilon} \right).$$
Then, \Cref{alg:WAS} is $(\epsilon, \delta)$-correct for General-Sum Games and its sample complexity $\tau_\delta$ is given by $m$.
\end{theorem}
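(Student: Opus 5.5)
The argument mirrors the proof of \Cref{thm:gsg-alpha0}. First I fix $m$ explicitly, with constants, in the form
\begin{align*}
m = \max\left\{ 1+\frac{4(n+\log(24/\delta))}{\alpha},\; \frac{c_1 (n+\log(6/\delta))}{\alpha\epsilon^2},\; \frac{c_2(n+\log(6/\delta))}{\alpha\epsilon},\; \frac{c_3\sqrt{n}\log(6/\delta)}{\alpha\epsilon} \right\}
\end{align*}
for suitable absolute constants $c_1,c_2,c_3$. The first entry is needed because \Cref{lemma:any-alpha-new-hauss-high-p} requires condition \eqref{eq:any-alpha-new-m-req}.

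With $m$ at least this large, \Cref{lemma:any-alpha-new-hauss-high-p} gives, with probability at least $1-\delta$, a bound on $H(\mathcal{G}_\alpha(x,y),\mathcal{G}_\alpha(\hat x,\hat y))$ by $\tfrac{64}{\alpha}$ times a sum of three terms. I will show that each of the three terms is at most $\epsilon/3$ after multiplying by $64/\alpha$.

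\textbf{Term 1.} We need $\tfrac{64}{\alpha}\sqrt{4\alpha(n+\log(6/\delta))/m}\le \epsilon/3$. This is equivalent to $m \ge 4\cdot 192^2 (n+\log(6/\delta))/(\alpha\epsilon^2)$, so $c_1 = 4\cdot 192^2$ suffices.

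\textbf{Term 2.} We need $\tfrac{64}{\alpha}\cdot 5(n+\log(6/\delta))/m \le \epsilon/3$. This is $m\ge 960(n+\log(6/\delta))/(\alpha\epsilon)$, which is implied by the $\epsilon^{-2}$ term whenever $\epsilon\le 1$. If $\epsilon>1$ the guarantee is trivial, since any two subsets of $[-1,1]^{n\times n}$ are within Hausdorff distance $2$; alternatively I simply keep this term in the max.

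\textbf{Term 3.} The missing-mass term $\tfrac{64}{\alpha}\cdot 3\sqrt{n}\log(6/\delta)/m\le\epsilon/3$ gives the second summand in the statement, $\sqrt{n}\log(1/\delta)/(\alpha\epsilon)$.

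Summing the three pieces yields $H\le\epsilon$, so Procedure \ref{alg:WAS} is $(\epsilon,\delta)$-correct. Since it stops deterministically after $m$ samples, $\tau_\delta=m$.

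It remains to check that all entries of the max are in the stated $\mathcal{O}(\cdot)$ class. The first entry, $(n+\log(1/\delta))/\alpha$, is dominated by $(n+\log(1/\delta))/(\alpha\epsilon^2)$ for $\epsilon\le 1$. The logarithmic constants, e.g.\ $\log(24/\delta)$ versus $\log(1/\delta)$, are absorbed into the $\mathcal{O}$ because $\delta<1$.

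No step is a real obstacle; the substantive work is already in \Cref{lemma:any-alpha-new-hauss-high-p}. The only points needing care are the bookkeeping of the requirement \eqref{eq:any-alpha-new-m-req} and keeping the $1/\epsilon$ term (Term 2) dominated. Unlike \Cref{lemma:tech-lemma} in the $\alpha=0$ case, no implicit $\log m$ inequality has to be solved here, because the bound in \Cref{lemma:any-alpha-new-hauss-high-p} contains no $\log m$ factor.
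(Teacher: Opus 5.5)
Your proposal is correct and is essentially the paper's proof. The paper also sets $m=\lceil\max\{m_1,m_2,m_3,m_4\}\rceil$, where $m_1$ is the requirement \eqref{eq:any-alpha-new-m-req}, $m_3=960(n+\log(6/\delta))/(\alpha\epsilon)$ and $m_4=576\sqrt{n}\log(6/\delta)/(\alpha\epsilon)$, and makes each of the three terms of \Cref{lemma:any-alpha-new-hauss-high-p} at most $\epsilon/3$ (your $c_1=4\cdot 192^2$ is exact for Term 1, while the paper's $16\cdot 192^2$ is more conservative).

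One small correction to your aside: $\epsilon>1$ is trivial because both feasible sets contain the zero pair and every element has $\ell_\infty$-norm at most $1$, so $H\le 1$. The diameter bound of $2$ you cite would only give triviality for $\epsilon>2$, though this is harmless since you keep the $1/\epsilon$ term in the max anyway.
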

\begin{proof}
    Let $m_1, m_2, m_3, m_4$ be defined as follows:
    \begin{align*}
        & m_1= 1+ \frac{4 \left(n+\log\left( \frac{24}{\delta} \right)\right)}{\alpha} \\
        & m_2 = \frac{16 (192)^2 (n+\log(6/\delta))}{\alpha \epsilon^2} \\
        & m_3 = \frac{960 (n+\log(6/\delta))}{\alpha \epsilon} \\
        & m_4 = \frac{576 \sqrt{n} \log(6/\delta)}{\epsilon \alpha}. 
    \end{align*}
    Finally, let $m = \lceil \max \{m_1, m_2, m_3, m_4 \}\rceil$. 
    Since $m \ge m_1$, we can apply \Cref{lemma:any-alpha-new-hauss-high-p}, and using the definition of $m$, one can verify that, with probability at least $1-\delta$, \footnote{Here, we note that $m \ge m_i$ imposes that the $i+1$-th error term in the statement of \Cref{lemma:any-alpha-new-hauss-high-p} is bounded by $\tfrac{\epsilon}{3}$.}
    \begin{align*}
        H(\mathcal{G}_\alpha(x,y), \mathcal{G}_\alpha(\hat x, \hat y) ) \le \epsilon.
    \end{align*}
    thus concluding the proof.
\end{proof}

\subsection{Proof of \Cref{thm:ub-any-alpha-zsg-new} ($\alpha > 0$, Zero-Sum Games)}\label{app:anyalpha-zsg}

\paragraph{Proof outline }
The proof of \Cref{thm:ub-any-alpha-zsg-new} is in the same spirit as that of \Cref{thm:ub-any-alpha-gsg-new}. Specifically, first, for any pair of strategies $(x,y)$ and $(\hat x, \hat y)$, we fix a matrix $A \in \mathcal{Z}_\alpha(x,y)$ and we construct a pair of matrices $(\hat A_1, \hat A_2) \in \mathcal{Z}_\alpha(\hat x, y) \times \mathcal{Z}_\alpha(x, \hat y)$ (\Cref{lemma:mat-constr-new-any-alpha-zsg}). Remarkably, $\hat A_1 \approx A $ and $\hat A_2 \approx A$ with error rates that are similar to those that we presented for the General-Sum Setting. Then, in \Cref{lemma:key-intermediate-step-zsg}, we upper bound these errors by heavily exploiting the key intermediate step that we presented for GSG's (\ie \Cref{lemma:key-intermediate-step}). Once this result is obtained, one can follow the same exact proof for General-Sum as for Zero-Sum Games.

\begin{lemma}[Matrix construction, $\alpha > 0$, Zero-Sum Games]\label{lemma:mat-constr-new-any-alpha-zsg}
    Let $\alpha > 0$ and $x, y, \hat x, \hat y \in \Delta_n$. Consider $A \in \mathcal{Z}_\alpha(x,y)$. It holds that:
    \begin{align}
        & \exists \hat A \in \mathcal{Z}_\alpha(\hat x, y): \| A- \hat A \|_{\infty} \le 2 \left( \max \left\{ \frac{(\hat x - x)^\top A y}{\alpha}, \frac{\max_{j \in [n] } (\hat x - x)^\top A (e_j-y) }{\alpha} \right\}  \right)_+
        \label{eq:mat-constr-new-any-alpha-eq1-zsg} \\
        & \exists \hat A \in \mathcal{Z}_\alpha(x, \hat y): \| A - \hat A \|_{\infty} \le 2 \left( \max \left\{ \frac{\max_{i \in [n] } (x  - e_i)^\top A (\hat y - y) }{\alpha}, \frac{x^\top A (y - \hat y) }{\alpha} \right\}  \right)_+. \label{eq:mat-constr-new-any-alpha-eq2-zsg}
    \end{align}
\end{lemma}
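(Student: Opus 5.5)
Mirroring \Cref{lemma:mat-constr-new-any-alpha}, I would use a single multiplicative shrinkage of $A$. Because $\mathcal{Z}_\alpha$ constrains one matrix for both players, the scaling factor has to account for both families of violations at once. For \Cref{eq:mat-constr-new-any-alpha-eq1-zsg}, define the maximal violation of the $\alpha$-Nash constraints at the new profile $(\hat x, y)$:
\begin{align*}
g_A \coloneqq \max\Big\{ \max_{i\in[n]} \big(\hat x^\top A y - e_i^\top A y\big),\ \max_{j\in[n]} \big(\hat x^\top A e_j - \hat x^\top A y\big) \Big\}.
\end{align*}
Set $\hat A = \lambda A$ with $\lambda = \min\{1, \alpha/g_A\}$. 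If $g_A \le \alpha$, take $\hat A = A$ and there is nothing to prove. Otherwise $\lambda\in(0,1)$, so $\hat A \in [-1,1]^{n\times n}$.

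The constraints are positively homogeneous in the matrix: every difference in the definition of $\mathcal{Z}_\alpha(\hat x,y)$ is linear in $A$ and is multiplied by $\lambda$. Hence each such difference for $\hat A$ is at most $\lambda g_A \le \alpha$, which gives $\hat A\in\mathcal{Z}_\alpha(\hat x,y)$. For the distance, $\|A-\hat A\|_\infty \le (1-\lambda)\|A\|_\infty \le (g_A-\alpha)/g_A \le (g_A-\alpha)/\alpha$, and I would keep the factor $2$ for safety as in the general-sum lemma. It remains to bound $g_A - \alpha$ by the maximum of the two bracketed terms, and I treat the two branches of the max defining $g_A$ separately.

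In the first branch, add and subtract $x^\top A y$: $\hat x^\top Ay - e_i^\top Ay = (\hat x - x)^\top A y + (x^\top Ay - e_i^\top Ay)$. Since $A\in\mathcal{Z}_\alpha(x,y)$, the second part is at most $\alpha$, so this branch minus $\alpha$ is at most $(\hat x-x)^\top Ay$. In the second branch, write $\hat x^\top A(e_j - y) = (\hat x-x)^\top A(e_j-y) + x^\top A(e_j-y)$. The column constraint of $\mathcal{Z}_\alpha(x,y)$ gives $x^\top A e_j - x^\top A y\le \alpha$, so this branch minus $\alpha$ is at most $\max_j(\hat x-x)^\top A(e_j-y)$. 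Combining the two branches yields \Cref{eq:mat-constr-new-any-alpha-eq1-zsg}, where the positive part absorbs the case $g_A\le\alpha$.

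The case \Cref{eq:mat-constr-new-any-alpha-eq2-zsg} is symmetric, with violations measured at $(x,\hat y)$. For the row constraints, $x^\top A\hat y - e_i^\top A\hat y = (x-e_i)^\top A(\hat y - y) + (x-e_i)^\top A y$, and the second term is at most $\alpha$. For the column constraints, $x^\top A e_j - x^\top A \hat y = x^\top A(y-\hat y) + (x^\top Ae_j - x^\top Ay)$, and the last term is at most $\alpha$. The same scaling argument then applies.

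I see no real obstacle. The only point needing care is to check that the single factor $\lambda$ simultaneously repairs both the row-player and the column-player constraints. This holds because every constraint is a homogeneous linear inequality in $A$ with right-hand side $\alpha$, unlike the $\alpha=0$ setting, where additive shifts were needed.
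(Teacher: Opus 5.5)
Your proof is correct and follows the paper's argument essentially step for step. Both use a single multiplicative rescaling $\hat A=\lambda A$ with $\lambda=\min\{1,\alpha/g_A\}$, where $g_A$ is the maximal violation over both the row and column constraints; positive homogeneity then gives membership in $\mathcal{Z}_\alpha$, and adding and subtracting the $(x,y)$-quantities bounds $g_A-\alpha$ branch by branch. Your remark that $\|A\|_\infty\le 1$ makes the factor $2$ slack is also right; the paper simply uses the looser bound $\|A\|_\infty\le 2$.
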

\begin{proof}
    We first prove \Cref{eq:mat-constr-new-any-alpha-eq1-zsg}. 

    Let $A \in \mathcal{Z}_\alpha(x,y)$. We define $\hat A = \lambda A$ where $\lambda \in \mathbb{R}$ is given by:
    \begin{align*}
        \lambda = \min\left\{ 1, \frac{\alpha}{\max \{ \max_{i \in [n]} \hat x^\top  Ay - e_i^\top A y , \max_{j \in [n]} \hat x^\top A e_j - \hat x^\top A y\}} \right\}.
    \end{align*}
    In the following, we assume w.l.o.g., that $\lambda \ge 0$. If $\lambda < 0$, one could pick $\hat A = A$ and all the equilibrium constraints would be satisfied. 

    Now, we verify that $A \in \mathcal{Z}_\alpha(\hat x , y)$. First, since $\lambda \in (0, 1]$, we have that $\hat A \in [-1,1]^{n \times n}$. Secondly, for all $i \in [n]$ it holds that:
    \begin{align*}
        \hat x^\top \hat A y - e_i^\top \hat A y & = \lambda (  \hat x^\top A y - e_i^\top A y) \\
        & =  \min\left\{ 1, \frac{\alpha}{\max \{ \max_{i \in [n]} \hat x^\top  Ay - e_i^\top A y , \max_{j \in [n]} \hat x^\top A e_j - \hat x^\top A y\}} \right\} (  \hat x^\top A y - e_i^\top A y) \\
        & \le \alpha,
    \end{align*}
    where, (i) if $\lambda = 1$, we have that $\max_{i \in [n]} \hat x^\top A y - e_i^\top A y \le \alpha $, and (ii) if $\lambda < 1$,  the last step is trivial.
    Following similar arguments, one can prove that $\hat x^\top \hat A e_j - \hat x^\top \hat A y  \le \alpha$ for all $j \in [n]$. Hence, $\hat A \in \mathcal{Z}_\alpha(\hat x, y)$. 


    Next, we proceed by analyzing $\| A - \hat A\|_{\infty}$. We consider the case where $\lambda < 1$, otherwise $A = \hat A$ and \Cref{eq:mat-constr-new-any-alpha-eq1-zsg} is direct. We have that:
    \begin{align*}
        \| A - \hat A \|_{\infty} & \le (1-\lambda) \| A \|_{\infty} \\
        & \le 2(1-\lambda) \\
        & \le 2 \frac{\max \{ \max_{i \in [n]} \hat x^\top  Ay - e_i^\top A y , \max_{j \in [n]} \hat x^\top A e_j - \hat x^\top A y\} - \alpha}{\alpha} \\
        & \le 2 \max \left\{ \frac{\max_{i \in [n] }\hat x^\top A y - e_i^\top A y - \alpha}{\alpha}, \frac{\max_{j \in [n] } \hat x^\top A e_j - \hat x^\top A y- \alpha}{\alpha} \right\} \\
        & \le 2 \max \left\{ \frac{(\hat x - x)^\top A y}{\alpha}, \frac{\max_{j \in [n] } (\hat x - x)^\top A (e_j-y) }{\alpha} \right\} \tag{$A \in \mathcal{Z}_\alpha(x,y)$},
    \end{align*}
    thus concluding the proof of \Cref{eq:mat-constr-new-any-alpha-eq1-zsg}.

    The proof of \Cref{eq:mat-constr-new-any-alpha-eq2-zsg} follows from analogous arguments. One has simply to define $\hat A = \tilde \lambda A$ where $\tilde \lambda$ is given by:
    \begin{align*}
        \tilde \lambda = \min\left\{ 1, \frac{\alpha}{\max \{ \max_{i \in [n]} x^\top A \hat y - e_i^\top A \hat y, \max_{j \in [n]} x^\top A e_j - x^\top A \hat y \}} \right\}.
    \end{align*}
\end{proof}

The following results extend the key intermediate that we presented for General-Sum Games (\ie \Cref{lemma:key-intermediate-step}) to the Zero-Sum Games problem. The proof exploits a connection between the quantities that arise in \Cref{lemma:mat-constr-new-any-alpha-zsg} and those that we already upper-bounded in \Cref{lemma:key-intermediate-step}.

\begin{lemma}[Key Intermediate Step (Zero-Sum Games)]\label{lemma:key-intermediate-step-zsg}
Let $\alpha > 0$ and $x, y, \hat x, \hat y \in \Delta_n$. The following holds:
\begin{align}
    & \sup_{A \in \mathcal{Z}_\alpha(x,y)} (\hat x - x)^\top A y \le 4 \sup_{S \in \mathcal{S}_\alpha(x)} \sum_{i \in S} (\hat x_i - x_i) + 2 \sum_{i: x_i= 0} \hat x_i  \label{eq:key-int-zsgs-eq-1} \\
    & \sup_{A \in \mathcal{Z}_\alpha(x,y)}\max_{j \in [n] } (\hat x - x)^\top A (e_j-y) \le 4 \sup_{S \in \mathcal{S}_\alpha(x)} \sum_{i \in S} (\hat x_i - x_i) + 2 \sum_{i: x_i= 0} \hat x_i \label{eq:key-int-zsgs-eq-2} \\
    & \sup_{A \in \mathcal{Z}_\alpha(x,y) }{\max_{i \in [n] } (x  - e_i)^\top A (\hat y - y) } \le 4 \sup_{S \in \mathcal{S}_\alpha(y)} \sum_{j \in S} (\hat y_j - y_j) + 2 \sum_{j: y_j= 0} \hat y_j \label{eq:key-int-zsgs-eq-3} \\
    & \sup_{A \in \mathcal{Z}_\alpha(x,y) } x^\top A (y - \hat y) \le 4 \sup_{S \in \mathcal{S}_\alpha(y)} \sum_{j \in S} (\hat y_j - y_j) + 2 \sum_{j: y_j= 0} \hat y_j  \label{eq:key-int-zsgs-eq-4}
\end{align}
\end{lemma}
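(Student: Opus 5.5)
The plan is to reduce each of the four inequalities to the corresponding one in \Cref{lemma:key-intermediate-step}. The reduction rests on one observation: $\mathcal{Z}_\alpha(x,y)$ embeds into the general-sum feasible sets. Any $A \in \mathcal{Z}_\alpha(x,y)$ satisfies $x^\top A y \le e_i^\top A y + \alpha$ for all $i$, so $A \in \mathcal{G}^x_\alpha(x,y)$. It also satisfies $x^\top A e_j - \alpha \le x^\top A y$, i.e. $x^\top(-A)y \le x^\top (-A) e_j + \alpha$ for all $j$, so $-A \in \mathcal{G}^y_\alpha(x,y)$. Hence a supremum over $\mathcal{Z}_\alpha(x,y)$ is at most the supremum of the same functional over $\mathcal{G}^x_\alpha(x,y)$ evaluated at $A$, or over $\mathcal{G}^y_\alpha(x,y)$ evaluated at $B=-A$, whichever is convenient. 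The left-hand sides here lack the $(\cdot)_+$ of \Cref{lemma:key-intermediate-step}, but $t \le (t)_+$, so dropping it only helps.

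I would match each term as follows.
\begin{itemize}
\item For \eqref{eq:key-int-zsgs-eq-1}, use $A \in \mathcal{G}^x_\alpha(x,y)$ and apply \Cref{eq:key-int-step-eq1} directly.
\item For \eqref{eq:key-int-zsgs-eq-2}, set $B=-A \in \mathcal{G}^y_\alpha(x,y)$. Then $(\hat x - x)^\top A(e_j - y) = (\hat x - x)^\top B (y - e_j)$, so \Cref{eq:key-int-step-eq2} applies verbatim.
\item For \eqref{eq:key-int-zsgs-eq-3}, use $A \in \mathcal{G}^x_\alpha(x,y)$ and apply \Cref{eq:key-int-step-eq3} with the identical functional.
\item For \eqref{eq:key-int-zsgs-eq-4}, set $B=-A$. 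Then $x^\top A(y-\hat y) = x^\top B(\hat y - y)$, and \Cref{eq:key-int-step-eq4} gives the claim.
\end{itemize}

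The main thing to get right is the sign bookkeeping in the second and fourth items: the functionals must match exactly after substituting $B=-A$, and $-A$ must lie in $[-1,1]^{n\times n}$, which is immediate. I would also double-check that \Cref{lemma:key-intermediate-step}, whose proof only used the defining constraints and the box constraint (relaxed into a fractional knapsack), indeed takes its suprema over the full sets $\mathcal{G}^x_\alpha(x,y)$ and $\mathcal{G}^y_\alpha(x,y)$ with no extra assumption. If so, restricting to the subset coming from $\mathcal{Z}_\alpha(x,y)$ can only decrease the suprema, and the four bounds follow with the same constants.
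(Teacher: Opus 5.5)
Your reduction is the paper's proof step for step. For \eqref{eq:key-int-zsgs-eq-1} and \eqref{eq:key-int-zsgs-eq-3}, drop the column constraints and use $\mathcal{Z}_\alpha(x,y)\subseteq\mathcal{G}^x_\alpha(x,y)$. For \eqref{eq:key-int-zsgs-eq-2} and \eqref{eq:key-int-zsgs-eq-4}, drop the row constraints and substitute $B=-A\in\mathcal{G}^y_\alpha(x,y)$. Your embedding and sign bookkeeping are correct. The real problem is the lemma you reduce to: \Cref{lemma:key-intermediate-step} is false, and so is the statement you are asked to prove. The ``double-check'' you deferred is therefore exactly where the argument breaks. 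It is not for the reason you anticipated; the suprema there are indeed over the full sets.

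\textbf{Counterexample.} Take $n=2$, $\alpha\in(0,1)$, $x=(\tfrac12,\tfrac12)$, $y=e_1$, and $\hat x=(\tfrac12+\gamma,\tfrac12-\gamma)$ with $\gamma\in(0,\tfrac12)$. Since $x$ has full support and each $x_i>\tfrac\alpha2$, $\mathcal{S}_\alpha(x)=\{\emptyset\}$, so the right-hand sides of \eqref{eq:key-int-zsgs-eq-1} and \eqref{eq:key-int-zsgs-eq-2} are $0$.
\begin{itemize}
\item $A_1=\bigl(\begin{smallmatrix}\alpha&0\\-\alpha&0\end{smallmatrix}\bigr)$ has $x^\top A_1=0$ and $A_1y=(\alpha,-\alpha)^\top$, so $A_1\in\mathcal{Z}_\alpha(x,y)$. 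Yet $(\hat x-x)^\top A_1y=2\alpha\gamma>0$.
\item $A_2=\bigl(\begin{smallmatrix}0&1\\0&-1\end{smallmatrix}\bigr)$ has $x^\top A_2=0$ and $A_2y=0$, so $A_2\in\mathcal{Z}_\alpha(x,y)$. Yet $(\hat x-x)^\top A_2(e_2-y)=2\gamma$, which does not even shrink with $\alpha$.
\end{itemize}
The relabeling $(x,\hat x,y,\hat y,A)\mapsto(y,\hat y,x,\hat x,-A^\top)$ maps $\mathcal{Z}_\alpha(x,y)$ onto $\mathcal{Z}_\alpha(y,x)$. It carries both sides of \eqref{eq:key-int-zsgs-eq-1} and \eqref{eq:key-int-zsgs-eq-2} to those of \eqref{eq:key-int-zsgs-eq-4} and \eqref{eq:key-int-zsgs-eq-3}, so the last two inequalities fail as well.

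\textbf{Where the general-sum knapsack argument fails.}
\begin{itemize}
\item The fractional item $j$ may have $2x_j>\alpha$. Its contribution, up to $\alpha(\hat x_j-x_j)/x_j$, is not controlled by $\sup_{S\in\mathcal{S}_\alpha(x)}$. The claimed comparison $c_j\frac{\hat x_j-x_j}{x_j}\le c_i\frac{\hat x_i-x_i}{x_i}$ for $i\in S$ is unjustified, and vacuous when $S=\emptyset$.
\item In the proof of \eqref{eq:key-int-step-eq2}, $c_i=x_i\bigl[\sum_k y_kB_{ik}-B_{i\star}\bigr]$ can be negative, because $\star$ is a fixed column rather than a row-wise minimizer. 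The relaxation to $c_i\in[0,2x_i]$ is therefore invalid, and the budget $\sum_i c_i\le\alpha$ constrains nothing.
\end{itemize}

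\textbf{What can be repaired.} Inequalities \eqref{eq:key-int-zsgs-eq-1} and \eqref{eq:key-int-zsgs-eq-4} can be rescued by adding a term such as $\alpha\max_{j:x_j>\alpha/2}(\hat x_j-x_j)_+/x_j$ (and its $y$-analogue). This term still concentrates at the $\sqrt{\alpha\log(n/\delta)/m}$ scale up to lower-order terms. The left-hand sides of \eqref{eq:key-int-zsgs-eq-2} and \eqref{eq:key-int-zsgs-eq-3}, however, can be of order $\|\hat x-x\|_1$ and $\|\hat y-y\|_1$. So no bound of this $\alpha$-localized form holds for them. The downstream multiplicative construction divides these quantities by $\alpha$, so it would need a different correction (for example, an additive one) to keep the $\alpha^{-1}$ rate.
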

\begin{proof}
    We start by proving \Cref{eq:key-int-zsgs-eq-1}. Specifically, we have that:
    \begin{align*}
        \sup_{A \in \mathcal{Z}_\alpha(x,y)} (\hat x - x)^\top A y & \le \sup_{A \in \mathcal{G}^x_\alpha(x,y)} (\hat x - x)^\top A y \\
        & \le 4 \sup_{S \in \mathcal{S}_\alpha(x)} \sum_{i \in S} (\hat x_i - x_i) + 2 \sum_{i: x_i= 0} \hat x_i \tag{\Cref{lemma:key-intermediate-step}}, 
    \end{align*}
    where in the first inequality we have simply dropped the constraints $x^\top A e_j - x^\top A y \le \alpha$ for all $j \in [n]$.
    We can follow the same argument (\ie simply drop the constraints) also for proving \Cref{eq:key-int-zsgs-eq-3}.

    Next, we continue with \Cref{eq:key-int-zsgs-eq-2}. We have that:
    \begin{align*}
        \sup_{A \in \mathcal{Z}_\alpha(x,y)}\max_{j \in [n] } (\hat x - x)^\top A (e_j-y) \le \sup_{B \in \mathcal{G}_\alpha^y(x,y)} \max_{j \in [n] } (\hat x - x)^\top B (y-e_j) \\
        \sup_{S \in \mathcal{S}_\alpha(x)} \sum_{i \in S} (\hat x_i - x_i) + 2 \sum_{i: x_i= 0} \hat x_i \tag{\Cref{lemma:key-intermediate-step}},
    \end{align*}
    where, in the first step, we have dropped the constraints $x^\top A y - e_i^\top A y \le \alpha$ and we applied a change of variable $B = -A$. We can use the same argument (\ie dropping constraints and change of variable) also for proving \Cref{eq:key-int-zsgs-eq-4}.
\end{proof}

We can now prove \Cref{thm:ub-any-alpha-zsg-new}

\begin{theorem}\label{thm:ub-any-alpha-zsg-new}
    Let $\alpha > 0$ and let 
    $$m \in {\mathcal{O}}\left( \frac{n+\log\left(\frac{1}{\delta}\right)}{\alpha \epsilon^2} + \frac{\sqrt{n} \log\left( \frac{1}{\delta} \right)}{\alpha\epsilon}  \right).$$
Then, \Cref{alg:WAS} is $(\epsilon, \delta)$-correct for Zero-Sum Games, and its sample complexity $\tau_\delta$ is given by $m$.
\end{theorem}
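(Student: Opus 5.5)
The proof will mirror \Cref{thm:ub-any-alpha-gsg-new}, with the general-sum ingredients replaced by their zero-sum counterparts. First, I would invoke \Cref{corollary:separation-x-y-err} (stated for $\mathcal{Z}_\alpha$ as well) to split the error:
\begin{align*}
H(\mathcal{Z}_\alpha(x,y),\mathcal{Z}_\alpha(\hat x,\hat y)) \le H(\mathcal{Z}_\alpha(x,y),\mathcal{Z}_\alpha(\hat x,y)) + H(\mathcal{Z}_\alpha(\hat x,y),\mathcal{Z}_\alpha(\hat x,\hat y)).
\end{align*}
For each term, I would bound both directions of the Hausdorff distance with \Cref{lemma:mat-constr-new-any-alpha-zsg}. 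In the first term, for $A\in\mathcal{Z}_\alpha(x,y)$, equation \eqref{eq:mat-constr-new-any-alpha-eq1-zsg} produces $\hat A\in\mathcal{Z}_\alpha(\hat x,y)$. For the reverse direction, I would apply the same lemma with the roles of $x$ and $\hat x$ swapped. The second term is handled in the same way through \eqref{eq:mat-constr-new-any-alpha-eq2-zsg}, with $\hat x$ in place of $x$ and the roles of $y,\hat y$ swapped as needed.

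Next, I would take the supremum over $A$ of the resulting error bounds and control it with \Cref{lemma:key-intermediate-step-zsg}. In the swapped directions, the constraint sets become $\mathcal{S}_\alpha(\hat x)$ and $\mathcal{S}_\alpha(\hat y)$, and the remainder terms become missing masses $\sum_{i:\hat x_i=0}x_i$. In the unswapped directions, the remainder terms $\sum_{i:x_i=0}\hat x_i$ vanish. This produces the exact analogue of \Cref{lemma:any-alpha-new-det-h}: $H\le f_\alpha(x,\hat x)+f_\alpha(y,\hat y)$, possibly with a different absolute constant.

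One subtlety concerns the $y$-part. After the decomposition, that part is centered at $\hat x$ rather than $x$. However, \Cref{lemma:key-intermediate-step-zsg} holds for arbitrary strategies, and its right-hand side depends only on $y$ and $\hat y$, so this causes no difficulty. I would also verify that the proof of \Cref{corollary:separation-x-y-err} (essentially the triangle inequality for Hausdorff distances) applies verbatim to $\mathcal{Z}_\alpha$.

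From here, the probabilistic step repeats \Cref{lemma:any-alpha-new-hauss-high-p}. I would use the same good event, consisting of missing-mass concentration, variance-aware uniform concentration over subsets, and the reduction $\mathcal{S}_\alpha(\hat x)\subseteq\mathcal{S}_{2\alpha}(x)$ whenever $m$ is at least $1+4(n+\log(24/\delta))/\alpha$. This gives, with probability $1-\delta$, a Hausdorff bound of order
\begin{align*}
\frac{1}{\alpha}\left(\sqrt{\frac{\alpha(n+\log(1/\delta))}{m}}+\frac{n+\log(1/\delta)}{m}+\frac{\sqrt n\log(1/\delta)}{m}\right).
\end{align*}
Choosing $m=\lceil\max\{m_1,m_2,m_3,m_4\}\rceil$ as in \Cref{thm:ub-any-alpha-gsg-new} makes each term at most $\epsilon/3$, which yields the claimed rate.

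The only step needing genuine care is the reverse-direction application of \Cref{lemma:key-intermediate-step-zsg}. There, the supremum runs over $\mathcal{Z}_\alpha(\hat x,y)$, so the knapsack sets depend on the random vector $\hat x$. This is exactly the point handled in the general-sum proof by the inclusion $\mathcal{S}_\alpha(\hat x)\subseteq\mathcal{S}_{2\alpha}(x)$ on the good event, and I expect it to transfer unchanged.
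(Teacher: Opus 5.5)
Your plan is the paper's own argument written out in more detail: split with \Cref{corollary:separation-x-y-err}, shrink $\hat A=\lambda A$ via \Cref{lemma:mat-constr-new-any-alpha-zsg}, bound the shrinking error with \Cref{lemma:key-intermediate-step-zsg}, then reuse the concentration step of \Cref{lemma:any-alpha-new-hauss-high-p}. That route has a genuine gap, inherited from \Cref{lemma:key-intermediate-step-zsg}, and it lies in a step you treated as routine. The problem is the cross terms \eqref{eq:key-int-zsgs-eq-2} and \eqref{eq:key-int-zsgs-eq-3}, for example $\sup_{A}\max_j(\hat x-x)^\top A(e_j-y)$. These are reduced to \eqref{eq:key-int-step-eq2} (via $B=-A$) and its mirror \eqref{eq:key-int-step-eq3}. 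Their knapsack argument treats the variables, e.g.\ $c_i=x_i[(By)_i-B_{i\star}]$, as lying in $[0,2x_i]$, but nothing forces $c_i\ge 0$. The column-player constraint only bounds the $x$-average $x^\top A(e_\star-y)\le\alpha$ of a vector whose entries lie in $[-2,2]$ and can have either sign. So $(\hat x-x)^\top A(e_\star-y)$ fluctuates at the scale of $\|\hat x-x\|_1$, not $\sqrt{\alpha/m}$. The required sign structure exists only for the own-player terms \eqref{eq:key-int-zsgs-eq-1} and \eqref{eq:key-int-zsgs-eq-4}, where $\star$ is a minimizing row of $Ay$ (resp.\ a maximizing column of $x^\top A$).

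Here is a concrete counterexample. Let $n=2$, $x=y=(\tfrac12,\tfrac12)$, $\alpha\in(0,1)$, and $A=\begin{pmatrix}-1&1\\1-2\alpha&2\alpha-1\end{pmatrix}$. Then $Ay=0$ and $x^\top A=(-\alpha,\alpha)$, so $A\in\mathcal{Z}_\alpha(x,y)$. For $\hat x=(\tfrac12+\eta,\tfrac12-\eta)$ we get $(\hat x-x)^\top A(e_2-y)=2\eta(1-\alpha)>0$. The right-hand side of \eqref{eq:key-int-zsgs-eq-2}, however, is $0$, since $\mathcal{S}_\alpha(x)=\{\emptyset\}$ and $x$ has full support.

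Hence your claimed analogue of \Cref{lemma:any-alpha-new-det-h} is false. With $\hat y=y$ and $\hat x_i>\alpha/2$, both $f_\alpha$ terms vanish, yet $A\notin\mathcal{Z}_\alpha(\hat x,y)$. Nor is this a matter of constants. Here the shrinking factor is $\lambda=\alpha/(\alpha+2\eta(1-\alpha))$, so $\|A-\lambda A\|_\infty=1-\lambda\approx 2\eta/\alpha$. Since $\eta$ is typically of order $m^{-1/2}$, multiplicative shrinking can certify at best $m\gtrsim 1/(\alpha^2\epsilon^2)$ on this instance, which is the naive rate.

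The theorem itself is not refuted by this example. For small $\eta$, the true distance of this $A$ to $\mathcal{Z}_\alpha(\hat x,y)$ is $O(\eta)$: add $t\mathbf{1}(e_1-e_2)^\top$ with $t=2\eta(1-\alpha)$, which leaves $Ay$ unchanged. Closing the gap therefore needs a different construction for the cross terms. One candidate is additive column shifts $A+\mathbf{1}t^\top$, which move every $e_i^\top Ay$ and $\hat x^\top Ay$ by the same amount and so leave the row-player constraints intact; these must be handled carefully near the boundary of $[-1,1]^{n\times n}$.

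There is also a smaller, rate-harmless slip in the same lemma. The fractional item of the greedy knapsack solution is not dominated by $\sum_{i\in S}(\hat x_i-x_i)$ (for instance when $S=\emptyset$). This adds a term $\max_{j:x_j>\alpha/2}(\hat x_j-x_j)_+/x_j$ to the own-player bounds.
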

\begin{proof}
    Combining \Cref{lemma:mat-constr-new-any-alpha-zsg} and \Cref{lemma:key-intermediate-step-zsg}, we can upper bound the Hausdorff distance $H(\mathcal{Z}_\alpha(x,y), \mathcal{Z}_\alpha(\hat x, \hat y))$ with the same quantities that we obtained in General Sum Games (\ie those appearing in \Cref{lemma:any-alpha-new-det-h}). Hence, we can proceed as in the proof of General-Sum Games, and we obtain the same result.
\end{proof}

\section{Auxiliary Lemmas}

In this section, we provide statements of auxiliary lemma that are used throughout our analysis.

\subsection{Concentrations and Probabilities}

\begin{lemma}\label{lemma:support}
    Consider $D \in \mathbb{N}_{>0}$ and $p \in \Delta_{D}$. Let $\hat p_m$ be the maximum likelihood estimate of $p$ obtained with $m \ge 1$ samples. 
    Consider a threshold level $\beta > 0$ and let $\mathcal{I}(\beta) = \{ i \in [D]: x_i \ge \beta \}$. 
    If $m \ge \frac{\log\left(\frac{D}{\delta}\right)}{\log\left(\frac{1}{1-\beta}\right)}$, then, with probability at least $1-\delta$, it holds that:
    \begin{align*}
        \forall i \in \mathcal{I}(\beta), N_i(m) \ge 1.
    \end{align*}
    As a corollary, let $p_{\min} \in \mathbb{R}$ as any real number that satisfies $p_{\min} > 0$ and   $p_{\min} \le \min_{i \in [D]: p_i > 0} p_i$. Then, if $m \ge \frac{\log\left( \frac{D}{\delta} \right)}{\log\left( \frac{1}{1-p_{\min}} \right)}$ it holds that 
    \begin{align*}
        \mathbb{P}\left( \supp(p) \ne \supp(\hat p_m) \right) \le \delta. 
    \end{align*}
\end{lemma}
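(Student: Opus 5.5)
The plan is to first show that every index $i \in \mathcal{I}(\beta)$ is observed at least once with high probability, and then deduce the support statement as a corollary. Here $N_i(m)$ denotes the number of times index $i$ is drawn among the $m$ i.i.d.\ samples, and the condition in $\mathcal{I}(\beta)$ is read as $p_i \ge \beta$.

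\textbf{Single index.} Fix $i \in \mathcal{I}(\beta)$. The event $\{N_i(m) = 0\}$ means that none of the $m$ independent draws equals $i$, so
\[
\mathbb{P}(N_i(m)=0) = (1-p_i)^m \le (1-\beta)^m .
\]
The assumption $m \ge \log(D/\delta)/\log(1/(1-\beta))$ gives $m \log(1-\beta) \le -\log(D/\delta)$. Hence $(1-\beta)^m \le \delta/D$. The degenerate case $\beta = 1$ is trivial: then $p_i = 1$ and index $i$ is observed surely.

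\textbf{Union bound.} Since $|\mathcal{I}(\beta)| \le D$, a union bound over $i \in \mathcal{I}(\beta)$ gives
\[
\mathbb{P}\bigl(\exists i \in \mathcal{I}(\beta): N_i(m) = 0\bigr) \le D \cdot \frac{\delta}{D} = \delta ,
\]
which is the first claim.

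\textbf{Corollary.} Take $\beta = p_{\min}$. By definition of $p_{\min}$, every $i$ with $p_i > 0$ satisfies $p_i \ge p_{\min}$, so $\mathcal{I}(p_{\min}) = \supp(p)$. By the first part, with probability at least $1-\delta$ every $i \in \supp(p)$ has $N_i(m) \ge 1$, and therefore $\hat p_{m,i} = N_i(m)/m > 0$, giving $\supp(p) \subseteq \supp(\hat p_m)$. Conversely, if $p_i = 0$ then $i$ is almost surely never sampled, so $\hat p_{m,i} = 0$ and $\supp(\hat p_m) \subseteq \supp(p)$ almost surely. Combining the two inclusions yields $\mathbb{P}(\supp(p) \neq \supp(\hat p_m)) \le \delta$.

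The argument has no real obstacle; the only care needed is the direction of the inequality when taking logarithms, since $\log(1-\beta) < 0$.
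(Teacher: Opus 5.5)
Your proposal is correct and follows the same route as the paper: bound $\mathbb{P}(N_i(m)=0)=(1-p_i)^m\le(1-\beta)^m\le\delta/D$, take a union bound over the at most $D$ indices in $\mathcal{I}(\beta)$, and set $\beta=p_{\min}$ for the support claim. You also spell out two details the paper leaves implicit, namely the almost-sure reverse inclusion $\supp(\hat p_m)\subseteq\supp(p)$ and the edge case $\beta=1$.
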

\begin{proof}
    With probabilistic arguments, we have that:
    \begin{align*}
        \mathbb{P}\left( \exists i \in \mathcal{I}(\beta): N_i(m) = 0 \right) & \le \sum_{i \in \mathcal{I}(\beta)} \mathbb{P}(N_i(m) = 0) \tag{Boole's inequality} \\
        & \le \sum_{i: x_i \ge \beta} (1-x_i)^m \\
        & \le D (1-\beta)^m \\
        & \le \delta \tag{$m \ge \frac{\log \left( \frac{D}{\delta} \right)}{\log\left( \frac{1}{1-\beta} \right)}$},
    \end{align*}
    where, in the second inequality, we have used the fact that, for any $i$, $N_i(m) = 0$ if and only if, for all the samples $m$, the outcome $i$ is never observed.

    The corollary is a direct application of this claim by setting $\beta = p_{\min}$.
\end{proof}

\begin{lemma}[Theorem A.2 in \citet{rajaraman2020toward}]\label{lemma:missing-mass}
    Consider $D \in \mathbb{N}_{>0}$ and a distribution $p \in \Delta_D$. Consider a sequence $P_1, \dots, P_m$ of i.i.d. samples drawn from $p$. For all $i \in [D]$, let $N_i(p) = \sum_{t=1}^m \bm{1} \left\{ P_t = i\right\}$. Then, for $\delta \in (0,1)$, it holds that:
    \begin{align*}
        \mathbb{P}\left( 
        \sum_{i \in [D]} p_i \bm{1} \left\{ N_i(m) = 0 \right\} - \mathbb{E}  \left[  \sum_{i \in [D]} p_i \bm{1}\left\{ N_i(m) = 0 \right\} \right] \ge \frac{{3 \sqrt{D} \log\left(\frac{1}{\delta} \right)}}{n} 
        \right) \le \delta.
    \end{align*}
\end{lemma}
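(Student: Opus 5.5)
The statement to prove is the missing mass concentration bound, cited as Theorem A.2 of \citet{rajaraman2020toward}. Note that the $n$ in the denominator should read $m$, the number of samples. I would prove it with a martingale (Doob) argument combined with Freedman/Bernstein-type concentration that exploits negative association.

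Write the missing mass as $M_m=\sum_{i\in[D]} p_i \bm{1}\{N_i(m)=0\}$. The indicators $\bm{1}\{N_i(m)=0\}$ are negatively associated, since they are monotone decreasing functions of the multinomial counts and those counts are negatively associated. Hence the moment generating function of $M_m$ is dominated by the product of the individual MGFs, exactly as if the summands were independent, with $\bm{1}\{N_i=0\}\sim\mathrm{Bernoulli}(q_i)$ and $q_i=(1-p_i)^m$.

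\textbf{Splitting the support.} I would separate coordinates by the threshold $p_i \gtrless \tau$ for a threshold $\tau$ to be chosen.
\begin{itemize}
\item Small coordinates ($p_i\le\tau$) contribute summands bounded by $\tau$. Their variance is $\sum p_i^2 q_i(1-q_i)\le \sum_i p_i^2 e^{-mp_i}$. Since $p^2e^{-mp}\le p\cdot \max_u u e^{-mu}\le p/(em)$, this sum is at most $1/(em)$ in total. Alternatively it is at most $D\cdot \max_p p^2e^{-mp}\le 4D/(e^2m^2)$.
\item Large coordinates ($p_i>\tau$) have $q_i\le e^{-m\tau}$, which is negligible.
\end{itemize}

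\textbf{Applying Bernstein.} Bernstein's inequality for the upper tail, valid under negative association, gives a deviation of
\[
\sqrt{2V\log(1/\delta)}+\tfrac{\tau}{3}\log(1/\delta).
\]
Using $V\lesssim D/m^2$ makes the first term $\sqrt{D\log(1/\delta)}/m$. Choosing $\tau\approx \sqrt{D}/m$ makes the second term order $\sqrt{D}\log(1/\delta)/m$.

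\textbf{Handling the large coordinates.} There are at most $m/\sqrt D$ coordinates with $p_i>\sqrt D/m$. Their missing probability is at most $e^{-\sqrt D}$ each. Their total mass contribution can be absorbed either by a Markov/union argument or, more simply, by keeping them in Bernstein: for these summands the range is $p_i$, but the variance term $p_i^2 q_i$ is tiny.

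\textbf{Main obstacle.} The hard step is getting the range term right for those large-$p_i$ summands, where $b=\max p_i$ could be order $1$. I would handle it with the Bennett-type MGF bound for each negatively associated summand:
\[
\log\mathbb{E}\,e^{\lambda p_i(\bm{1}-q_i)}\le q_i\bigl(e^{\lambda p_i}-1-\lambda p_i\bigr).
\]
Then I would take $\lambda= m\log(1/\delta)/(c\sqrt D)$ and verify that
\[
\sum_i (1-p_i)^m \bigl(e^{\lambda p_i}-1\bigr)\lesssim \log(1/\delta).
\]
This uses $(1-p)^m e^{\lambda p}\le e^{-p(m-\lambda)}$ together with $\lambda\le m/2$ when $\log(1/\delta)\le c\sqrt D/2$; the other regime gives a trivial bound since $M_m\le 1$. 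Finally, the Chernoff bound yields the constant $3$.
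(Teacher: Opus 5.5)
The paper does not prove this lemma; it imports it from \citet{rajaraman2020toward}. Your skeleton is the standard and correct route: negative association of the indicators $\bm{1}\{N_i(m)=0\}$, the per-summand bound $\log\mathbb{E}\,e^{\lambda p_i(\bm{1}-q_i)}\le q_i(e^{\lambda p_i}-1-\lambda p_i)$, then Chernoff. You are also right that $n$ should read $m$. However, the step you single out as the crux is computed incorrectly and, as written, fails.

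The sum $\sum_i(1-p_i)^m(e^{\lambda p_i}-1)$ is not $O(\log(1/\delta))$. Discarding the centering term $-\lambda p_i$ throws away exactly the cancellation that makes the bound work. Take uniform $p$, $m=D$ and your choice $\lambda=m\log(1/\delta)/(c\sqrt D)$. Then that sum is about $\sqrt D\log(1/\delta)/(ec)$. This exceeds $\lambda t=3\log^2(1/\delta)/c$ (with $t=3\sqrt D\log(1/\delta)/m$) as soon as $\sqrt D>3e\log(1/\delta)$.

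You have to keep the linear term. From $e^x-1-x\le\tfrac{x^2}{2}e^x$ and $q_i\le e^{-mp_i}$,
\[
q_i\bigl(e^{\lambda p_i}-1-\lambda p_i\bigr)\le \tfrac{\lambda^2p_i^2}{2}\,e^{-(m-\lambda)p_i}\le \frac{8\lambda^2}{e^2m^2}\qquad(\lambda\le m/2).
\]
Hence $\psi(\lambda):=\log\mathbb{E}\,e^{\lambda(M_m-\mathbb{E}M_m)}\le 8D\lambda^2/(e^2m^2)$, which for your $\lambda$ equals $8\log^2(1/\delta)/(e^2c^2)$. This is of order $\log^2(1/\delta)$, not $\log(1/\delta)$, and the uniform example attains that order. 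The proof closes only because this is compared with $\lambda t=3\log^2(1/\delta)/c$. With $c=1$ the Chernoff exponent is at most $-(3-8/e^2)\log^2(1/\delta)$, which is $\le-\log(1/\delta)$ once $\log(1/\delta)\ge 0.53$.

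Three further points need repair.

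\textbf{The regime $\log(1/\delta)>c\sqrt D/2$ is not trivial.} There $t>3cD/(2m)$, which is far below $1$ when $m\gg D$, so $M_m\le 1$ gives nothing. Cap $\lambda=m/2$ instead. Then $\psi(m/2)\le 2D/e^2<4\sqrt D\log(1/\delta)/(e^2c)$, while $\lambda t=\tfrac32\sqrt D\log(1/\delta)$. For $c=1$ and $D\ge 2$ the exponent is again $\le-\log(1/\delta)$; the case $D=1$ is trivial since $M_m\equiv 0$.

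\textbf{No Chernoff argument covers $\delta$ arbitrarily close to $1$.} The Chernoff deviation is of order $\sqrt{\mathrm{Var}(M_m)\log(1/\delta)}$, with $\mathrm{Var}(M_m)\asymp D/m^2$ in the uniform example. Since $\sqrt{\log(1/\delta)}\gg\log(1/\delta)$ as $\delta\to 1$, the threshold $3\sqrt D\log(1/\delta)/m$ is out of reach there. Your argument therefore yields the lemma for, e.g., $\delta\le 1/e$. That is all the paper needs, since it invokes the lemma at level $\delta/6$, but you should state the restriction rather than claim all $\delta\in(0,1)$.

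\textbf{The threshold split in your first part is unnecessary.} Its union bound over heavy coordinates, $\min\{D,m/\sqrt D\}\,e^{-\sqrt D}$, is not below $\delta$ in general anyway. Once $\lambda\le m/2$, the factor $e^{-(m-\lambda)p_i}$ already neutralizes heavy coordinates.
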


\begin{lemma}\label{lemma:l1-high-prob}
    Consider $D \in \mathbb{N}_{>0}$ and $p \in \Delta_{D}$. Let $\hat p$ be the maximum likelihood estimator of $p$ after $m$ i.i.d. samples. Then, for any $\delta \in (0,1]$, it holds that:
    \begin{align*}
        \mathbb{P}\left( \| p - \hat p_m\|_1 > \sqrt{\frac{2\log(1/\delta)+2D \log(6m)}{m}} \right) \le \delta. 
    \end{align*}
\end{lemma}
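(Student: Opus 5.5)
The statement to prove is \Cref{lemma:l1-high-prob}: the $\ell_1$ deviation of the empirical distribution from $p$, with the $D\log(6m)$ form. My plan is to reduce the $\ell_1$ norm to a supremum of linear functionals over sign vectors and then combine a union bound with a scalar concentration inequality. The $\log(6m)$ factor indicates the intended route: a union bound over a finite cover, or a Bernstein/Bretagnolle–Huber–Carol type inequality, possibly applied through a peeling or cover argument.

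First I use the dual representation
\[
\|p-\hat p_m\|_1=\max_{s\in\{-1,1\}^D} s^\top(\hat p_m-p)=2\max_{S\subseteq[D]}\sum_{i\in S}(\hat p_{m,i}-p_i).
\]
For a fixed $S$, the quantity $\sum_{i\in S}\hat p_{m,i}$ is an average of $m$ i.i.d.\ Bernoulli variables with mean $p(S)$. Hoeffding's inequality then gives
\[
\Prob\bigl(\hat p_m(S)-p(S)>t\bigr)\le e^{-2mt^2}.
\]
Taking a union bound over the $2^D$ subsets, I set $t=\tfrac12\sqrt{2(\log(1/\delta)+D\log 2)/m}$. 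This yields
\[
\|p-\hat p_m\|_1\le\sqrt{\frac{2\log(1/\delta)+2D\log 2}{m}}
\]
with probability at least $1-\delta$. Since $\log 2\le\log(6m)$ for every $m\ge1$, the threshold in the statement is at least this one, so the claimed bound follows.

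The main point to verify is the constants, so I would check the factor of $2$ in the dual representation carefully. With $\|\cdot\|_1=2\max_S(\cdot)$, the deviation is at most $2t$, and I need
\[
2t=\sqrt{\frac{2\log(1/\delta)+2D\log 2}{m}},
\]
which means $t^2=(\log(1/\delta)+D\log2)/(2m)$. Then
\[
2^D e^{-2mt^2}=2^D e^{-(\log(1/\delta)+D\log 2)}=\delta,
\]
so the union bound closes exactly.

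If that computation were off by a constant, I would fall back on the more loss-tolerant route the $\log(6m)$ suggests. That route takes an $\varepsilon$-net of the $\ell_\infty$ unit ball of size $(3/\varepsilon)^D$ with $\varepsilon=1/(2m)$, which gives the $(6m)^D$ cardinality. I would apply Hoeffding to each net point and absorb the discretization error of at most $\varepsilon$ into the slack. No earlier results from the paper are needed for this argument.
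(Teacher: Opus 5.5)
Your proof is correct, and it takes a different route from the paper's.

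\textbf{The paper's route.} It applies Pinsker's inequality, $\|\hat p_m - p\|_1 \le \sqrt{2\,\mathrm{KL}(\hat p_m,p)}$. It then invokes a time-uniform deviation bound for the KL divergence of the empirical categorical distribution (Proposition 1 of Jonsson et al.\ (2020)). That bound states that, with probability at least $1-\delta$, simultaneously for all $m$,
\[
m\,\mathrm{KL}(\hat p_m,p) \le \log(1/\delta) + (D-1)\log\bigl(e(1+m/(D-1))\bigr).
\]
The $D\log(6m)$ in the statement is just a crude upper bound on the second summand. So the $\log(6m)$ does not come from a net of the $\ell_\infty$ ball, as you guessed. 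Your fallback is unnecessary in any case.

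\textbf{Your route.} You use the identity $\|\hat p_m-p\|_1 = 2\max_{S\subseteq[D]}\bigl(\hat p_m(S)-p(S)\bigr)$, one-sided Hoeffding for each fixed $S$, and a union bound over the $2^D$ subsets. This is essentially the classical Weissman et al.\ $\ell_1$ inequality. Your constants close exactly and give the threshold $\sqrt{(2\log(1/\delta)+2D\log 2)/m}$. This is at most the stated threshold because $\log 2 \le \log(6m)$ for $m\ge 1$.

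\textbf{Comparison.} Your argument is elementary, self-contained, and strictly sharper for a fixed $m$. The paper's argument buys an anytime guarantee, valid uniformly over $m$ and hence usable with data-dependent stopping. The price is the $\log m$ factor and reliance on an external result. The lemma is stated for a fixed $m$, and Procedure~\ref{alg:WAS} uses a deterministic $m$, so your version suffices. It would even remove the logarithmic factor in the $(n+\log(1/\delta))/\epsilon^2$ term of Theorem~\ref{thm:ub-alpha-0}.
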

\begin{proof}
    
By Pinsker's inequality we have:
\[
\|\hat{p}_m-p\|_1 \leq \sqrt{2\textup{KL}(\hat{p}_m,p)}
\]
where $\textup{KL}(p_1,p_2)$ denotes the Kullback–Leibler divergence between distributions $p_1$ and $p_2$.

Now, applying Proposition 1 in \citet{jonsson2020planning}, we obtain
\[
\Prob\left[\forall m \in \mathbb{N}^*,\: m\textup{KL}(\hat{p}_m,p) \leq \log\left(\frac{1}{\delta}\right) + (D-1)\log \left(e\left(1+\frac{m}{D-1}\right) \right)  \right] \geq 1-\delta.
\]
Thus, we obtained that:
\[
\Prob\left[\|\hat{p}_m-p\|_1 \leq \sqrt{\frac{2}{m} \left( \log\left(\frac{1}{\delta}\right) + (D-1)\log \left(e\left(1+\frac{m}{D-1}\right) \right)\right)  } \right] \geq 1-\delta.
\]
At this point, we observe that:
\[
\sqrt{\frac{2}{m} \left( \log\left(\frac{1}{\delta}\right) + (n-1)\log \left(e\left(1+\frac{m}{D-1}\right) \right)\right) } \leq  \sqrt{\frac{2\log(1/\delta) + 2D \log(6m)}{m}}
\]
Thus concluding the proof.
\end{proof}

\begin{lemma}\label{lemma:kl-divergence-aux-lemma}
    Let $n > 1$ and $v \in \{-1, 1\}^{n-1}$ such that $\sum_{k=1}^{n-1} v_k = 0$. Furthermore, let $\tfrac{1}{2}> \beta > \gamma > 0$, 
     $\mathbb{P} = \left( \frac{\beta + \gamma\beta v_1}{n-1}, \dots, \frac{\beta + \gamma \beta v_{n-1}}{n-1}, 1-\beta\right)$, and $\mathbb{Q} = \left( \frac{\beta}{n-1}, \dots, \frac{\beta}{n-1}, 1-\beta \right)$.
    Then, it holds that:
    \begin{align*}
        \textup{KL}(\mathbb{P}, \mathbb{Q}) \le 2 \beta \gamma^2.
    \end{align*}    
\end{lemma}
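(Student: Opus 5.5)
The plan is to bound the KL divergence by the $\chi^2$-divergence and then compute the latter exactly. This needs nothing beyond the elementary inequality $\log u \le u-1$ for $u>0$.

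First I check that $\mathbb{P}$ is a valid distribution. Since $0<\gamma<\tfrac12$, every entry $\beta(1+\gamma v_k)/(n-1)$ is positive. Since $\sum_{k=1}^{n-1} v_k = 0$, the first $n-1$ entries sum to $\beta$, so $\mathbb{P}$ sums to $1$. Every entry of $\mathbb{Q}$ is also strictly positive, because $\beta<\tfrac12$ gives $1-\beta>0$. Hence all ratios $\mathbb{P}_i/\mathbb{Q}_i$ are well defined and positive.

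Next I apply $\log u \le u-1$ with $u = \mathbb{P}_i/\mathbb{Q}_i$:
\begin{align*}
\textup{KL}(\mathbb{P},\mathbb{Q}) = \sum_i \mathbb{P}_i \log\frac{\mathbb{P}_i}{\mathbb{Q}_i} \le \sum_i \mathbb{P}_i\Bigl(\frac{\mathbb{P}_i}{\mathbb{Q}_i}-1\Bigr) = \sum_i \frac{(\mathbb{P}_i-\mathbb{Q}_i)^2}{\mathbb{Q}_i}.
\end{align*}
The last equality holds because $\sum_i \mathbb{P}_i\bigl(\tfrac{\mathbb{P}_i}{\mathbb{Q}_i}-1\bigr) - \sum_i \tfrac{(\mathbb{P}_i-\mathbb{Q}_i)^2}{\mathbb{Q}_i} = \sum_i(\mathbb{P}_i-\mathbb{Q}_i) = 0$.

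Finally I evaluate the right-hand side coordinatewise. The last coordinates of $\mathbb{P}$ and $\mathbb{Q}$ coincide, so that term contributes $0$. For $k\le n-1$ we have $\mathbb{P}_k-\mathbb{Q}_k = \gamma\beta v_k/(n-1)$ and $v_k^2=1$, so each such term equals
\begin{align*}
\frac{\gamma^2\beta^2/(n-1)^2}{\beta/(n-1)} = \frac{\gamma^2\beta}{n-1}.
\end{align*}
Summing over the $n-1$ coordinates gives $\textup{KL}(\mathbb{P},\mathbb{Q})\le \beta\gamma^2 \le 2\beta\gamma^2$.

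No step here is a real obstacle. The only points needing care are the positivity of all entries of $\mathbb{Q}$, which justifies the division, and the role of $\sum_k v_k=0$, which makes $\mathbb{P}$ a probability vector and so lets the linear term $\sum_i(\mathbb{P}_i-\mathbb{Q}_i)$ vanish. The hypothesis $\beta>\gamma$ is not needed for this bound.
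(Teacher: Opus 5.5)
Your proof is correct, but it follows a different route from the paper's. The paper computes the divergence directly. Using that exactly half of the $v_k$ equal $+1$, it writes $\textup{KL}(\mathbb{P},\mathbb{Q})=\tfrac{\beta}{2}\bigl[(1+\gamma)\log(1+\gamma)+(1-\gamma)\log(1-\gamma)\bigr]$ and discards the nonpositive term $\tfrac{\beta}{2}\log(1-\gamma^2)$. It then bounds $\gamma\log(1+\gamma)\le\gamma^2$ and $-\gamma\log(1-\gamma)\le\tfrac{\gamma^2}{1-\gamma}\le 2\gamma^2$. That last step is where $\gamma<\tfrac12$, and hence the chain $\gamma<\beta<\tfrac12$, is used, and the result is $\tfrac32\beta\gamma^2\le 2\beta\gamma^2$. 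You instead use $\textup{KL}\le\chi^2$ via $\log u\le u-1$ and evaluate the $\chi^2$-divergence in closed form. This is shorter and gives the better constant $\beta\gamma^2$. It needs only positivity of the entries plus $\sum_k v_k=0$ for normalization. Because $v_k^2=1$, you never need the half/half count, and neither $\beta>\gamma$ nor $\gamma<\tfrac12$ plays any role. That robustness matters here. In \Cref{thm:lb-any-alpha-n} the lemma is applied with $\beta=\alpha$ and $\gamma=128\epsilon$, where $\alpha>128\epsilon$ is not implied by the theorem's assumptions. In \Cref{thm:alpha0-n} it is applied to a distribution with no $1-\beta$ coordinate (effectively $\beta=1$). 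Your argument covers both cases verbatim. What the paper's exact computation buys is that it keeps the true value $\tfrac{\beta}{2}\gamma^2\bigl(1+O(\gamma^2)\bigr)$ in view, so it could be pushed to the sharp constant if that were ever needed.
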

\begin{proof}

    We have that: 
    \begin{align*}
        \textup{KL}(\mathbb{P}, \mathbb{Q}) & = \sum_{i \in [n-1]: v_i = 1} \frac{\beta + \gamma\beta }{n-1} \log\left(1+\gamma \right) + \sum_{i \in [n-1]: v_i = -1} \frac{\beta - \gamma\beta }{n-1} \log\left(1-\gamma \right) \\
        & = \frac{1}{2} \beta (1+\gamma) \log(1+\gamma) + \frac{1}{2} \beta (1-\gamma) \log(1-\gamma) \\
        & = \frac{\beta}{2} \left( \log(1-\gamma^2) + \gamma \log(1+\gamma) - \gamma \log(1-\gamma) \right) \\
        & \le \frac{\beta}{2} \left( \gamma \log(1+\gamma) - \gamma \log(1-\gamma) \right) \tag{$\log(1-\gamma^2) \le 0$} \\
        & \le \frac{\beta}{2} \left( \gamma^2 + \gamma\left(\frac{1}{1-\gamma} - 1\right) \right) \tag{$\log(1+x) \le x$ and  $-\log(1-x) \le \frac{1}{1-x} -1$ for $x \in (0,1)$} \\
        & \le 2\beta \gamma^2.
    \end{align*}
This concludes the proof.
\end{proof}

\begin{lemma}[Variance Aware Error Control]\label{lemma:var-aware}
    Let $D \in \mathbb{N}_{>0}$ and $p \in \Delta_D$. Let $\hat p $ be the maximum likelihood estimator of $p$ after $m$ i.i.d. samples where $m > 2$. Let $\beta \in (0,1)$ and let $\mathcal{S}(\beta) = \{ S \subseteq [D]: \forall i \in S, p_i > 0 \text{ and} \sum_{i \in S} p_i \le \beta \}$.

    Then, for $\delta \in (0,1)$ it holds that:
    \begin{align}\label{eq:bernst-ci}
        & \mathbb{P}\left( \exists S \in \mathcal{S}(\beta): \Bigg| \sum_{i \in S} ( p_i - \hat p_i) \Bigg|  > \sqrt{\frac{4\beta \left( D+\log\left( \frac{1}{\delta} \right) \right) }{m}} + \frac{4(D+\log\left( \frac{1}{\delta} \right))}{m}\right) \le \delta.
    \end{align}
    Furthermore, for $\delta \in (0,1)$ it holds that:
    \begin{align}\label{eq:emp-bernst-ci}
        \mathbb{P}\left(\exists S \subseteq [n] > \sqrt{\frac{2 \sum_{i \in S} \hat p_i\left(D+\log(\frac{4}{\delta}) \right)}{m}} + \frac{4 (D + \log(\frac{4}{\delta})}{m-1}\right) \le \delta.
    \end{align}
\end{lemma}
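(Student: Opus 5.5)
The plan is to reduce both inequalities to the concentration of a single Bernoulli empirical mean, and then take a union bound over all subsets of $[D]$, of which there are at most $2^D$.

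\textbf{Reduction.} Fix any $S\subseteq[D]$ and set $q_S=\sum_{i\in S}p_i$. Then
\begin{align*}
\sum_{i\in S}\hat p_i=\frac1m\sum_{t=1}^m \bm{1}\{P_t\in S\}
\end{align*}
is the empirical mean of $m$ i.i.d.\ Bernoulli($q_S$) variables. Hence $\sum_{i\in S}(p_i-\hat p_i)=q_S-\hat q_S$, and the problem is one-dimensional for each fixed $S$.

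\textbf{Bound \eqref{eq:bernst-ci}.} For $S\in\mathcal{S}(\beta)$ the variance satisfies $q_S(1-q_S)\le\beta$. The two-sided Bernstein inequality at level $\delta'$ gives
\begin{align*}
|q_S-\hat q_S|\le\sqrt{\frac{2\beta\log(2/\delta')}{m}}+\frac{2\log(2/\delta')}{3m}.
\end{align*}
I will take $\delta'=\delta/2^D$ and union bound over the at most $2^D$ sets in $\mathcal{S}(\beta)$. The only remaining step is to absorb constants. We have $\log(2^{D+1}/\delta)=(D+1)\log 2+\log(1/\delta)\le 2\bigl(D+\log(1/\delta)\bigr)$ because $D\ge1$. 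This turns the square-root term into $\sqrt{4\beta(D+\log(1/\delta))/m}$. The linear term becomes at most $\tfrac{4}{3}(D+\log(1/\delta))/m\le 4(D+\log(1/\delta))/m$, which is exactly the claimed bound.

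\textbf{Bound \eqref{eq:emp-bernst-ci}.} I read this as a bound on $|\sum_{i\in S}(\hat p_i-p_i)|$ holding simultaneously for all $S\subseteq[D]$. The tool is the empirical Bernstein inequality of Maurer and Pontil, applied per set at level $\delta'=\delta/2^D$ and combined with the same union bound. Its deviation term involves the unbiased sample variance. For Bernoulli data this equals
\begin{align*}
\hat V_S=\frac{m}{m-1}\,\hat q_S(1-\hat q_S).
\end{align*}

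I expect this step to be the main obstacle: the factor $\tfrac{m}{m-1}$ must be absorbed so that the square-root term reads $\sqrt{2\hat q_S(\cdot)/m}$. I will try $\hat V_S\le \tfrac{m}{m-1}\hat q_S$ and use $m>2$. This is needed because $\tfrac{m}{m-1}$ exceeds $1$ and bounds the square-root term only up to $\sqrt{m/(m-1)}\le\sqrt{3/2}$. I will therefore move the slack into the linear $\tfrac{1}{m-1}$ term rather than into the leading term.

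The second issue is the logarithm. Maurer–Pontil carries $\log(2/\delta')$ with constant $\tfrac{7}{3}$. Bounding
\begin{align*}
\tfrac{7}{3}\log(2^{D+1}/\delta)\le 4\bigl(D+\log(4/\delta)\bigr)
\end{align*}
gives the stated linear term, provided the square-root term tolerates the resulting $\log(4/\delta)$ constant. If the constants do not close, I will instead rederive the inequality directly: combine Bernstein for $\hat q_S$ with a one-sided concentration of $\sqrt{\hat V_S}$ around $\sqrt{q_S(1-q_S)}$. In that route I would split $\delta$ into four parts, which is the source of the $\log(4/\delta)$ in the statement.
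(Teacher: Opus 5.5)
Your proposal is correct and follows the paper's route: reduce each fixed $S$ to the empirical mean of a Bernoulli$(q_S)$ variable, apply Bernstein (respectively the Maurer--Pontil empirical Bernstein bound) at level $\delta/2^D$, and union bound over the at most $2^D$ subsets; your constant bookkeeping for the first bound checks out. For the second bound your main route already closes, so the fallback rederivation is unnecessary: Maurer--Pontil is one-sided, so the two-sided version costs $\log(2^{D+2}/\delta)\le D+\log(4/\delta)$, which is where the $\log(4/\delta)$ comes from. You can also skip the slack-shifting, since with $k=m\hat q_S$ one has $\hat V_S=\frac{k(m-k)}{m(m-1)}\le\frac{k}{m}=\hat q_S$ exactly; this is sharper than the paper's bound $\hat V_S\le 2\hat q_S$, which plugged into $\sqrt{2\hat V_S\log(\cdot)/m}$ would only give $\sqrt{4\hat q_S(\cdot)/m}$ rather than the stated constant.
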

\begin{proof}
    We start by proving \Cref{eq:bernst-ci}.
    Fix a set $S \in \mathcal{S}(\beta)$, then $\sum_{i \in S} \hat p_i$ is the empirical mean of a Bernoulli r.v. with mean $\mu = \sum_{i \in S} p_i$. Hence, for $\delta' \in (0,1)$, a standard application of Bernstein's inequality \citep{boucheron2003concentration} yields 
    \begin{align*}
        \mathbb{P}\left( \Bigg| \sum_{i \in S} ( p_i - \hat p_i) \Bigg|  > \sqrt{\frac{4\beta \log\left( \frac{1}{\delta'} \right)  }{m}} + \frac{4\log\left( \frac{1}{\delta'} \right)}{m}\right)  \le \delta',
    \end{align*}
    where we have used that $\Var(\sum_{i \in S} p_i) \le \sum_{i \in S} p_i \le \beta$.

    Now, for any value of $\beta$, the cardinality of $\mathcal{S}(\beta)$ is bounded by $2^D$. Hence, for any $\delta' \in (0,1)$ we have that: 
    \begin{align*}
        \mathbb{P}\left( \exists S \in \mathcal{S}(\beta): \Bigg| \sum_{i \in S} ( p_i - \hat p_i) \Bigg|  > \sqrt{\frac{4\beta \log\left( \frac{1}{\delta'}  \right) }{m}} + \frac{4\log\left( \frac{1}{\delta'} \right)}{m}\right) \le 2^D \delta'.
    \end{align*}
    Picking $\delta' = \frac{\delta}{2^D}$ concludes the proof.

    \Cref{eq:emp-bernst-ci} follows from similar arguments, but we replaced Bernstein's inequality with its empirical version \cite{maurer2009empirical}. In particular, the empirical Bernstein's inequality, replace the variance in the Bernstein inequality with the sample variance. In the following, we show how the sample variance of a Bernoulli r.v. with mean $q$ can be upper bouded with $2 \hat q$. Let   
    Denote by $Z_1, \dots, Z_m \sim q$ the $m$ independent samples from $q$. Then:
    \begin{align*}
        \hat{\Var}(\hat q) & = \frac{1}{m(m-1)} \sum_{i < j} (Z_i - Z_j)^2 \\ 
        & = \frac{1}{m-1} \sum_{i=1}^m (Z_i - p)^2 \\ &
        = \frac{m}{m-1} \hat q (1-\hat q) \\
        & \le 2 \hat q \tag{For $m \ge 2$}
    \end{align*}
    From this follows the result of \Cref{eq:emp-bernst-ci}.
\end{proof}

The following change of measure arguments follows from the same arguments of the ones used in \citet{metelli2023towards}. 

\begin{lemma}[Change of measure argument]\label{lemma:ours-change-of-measure}
    Let $\alpha \ge 0$ and consider any $(\epsilon, \delta)$-correct algorithm. Let  $(x^0, y^0)$ and $(x^1, y^1)$ be the strategy profiles of two different problem instances. Then, the following holds:
    \begin{align}
        & H(\mathcal{G}_\alpha(x^0, y^0), \mathcal{G}_\alpha(x^1, y^1)) \ge 2 \epsilon \implies \delta \ge \frac{1}{4} \exp\left( -\textup{KL}(\mathbb{P}_{x^0, y^0}, \mathbb{P}_{x^1, y^1}) \right) \label{eq:com-eq-1}\\
        & H(\mathcal{Z}_\alpha(x^0, y^0), \mathcal{Z}_\alpha(x^1, y^1)) \ge 2 \epsilon \implies \delta \ge \frac{1}{4} \exp\left( -\textup{KL}(\mathbb{P}_{x^0, y^0}, \mathbb{P}_{x^1, y^1}) \right) \label{eq:com-eq-2}
    \end{align}
    where $\mathbb{P}_{x, y} = \prod_{t=1}^{\tau_\delta} p_{x}(X_t) p_y(Y_t)$, with $(X_t,Y_t)$ denoting the joint action sampled at time $t$ and $p_q(\cdot)$ denoting the density function of $q \in \Delta_n$.
\end{lemma}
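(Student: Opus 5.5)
This is the standard two-point (Le Cam) reduction to a binary hypothesis test, as in \citet{metelli2023towards}. The plan has three steps: build a test from the algorithm's output, show that $(\epsilon,\delta)$-correctness makes both error probabilities small, and then lower bound those errors through the Bretagnolle--Huber inequality. We argue \Cref{eq:com-eq-1}; \Cref{eq:com-eq-2} is identical with $\mathcal{Z}_\alpha$ in place of $\mathcal{G}_\alpha$, since only the metric structure of $H$ is used.

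Write $\mathcal{G}^0=\mathcal{G}_\alpha(x^0,y^0)$ and $\mathcal{G}^1=\mathcal{G}_\alpha(x^1,y^1)$, and consider the event
\[
E=\{H(\mathcal{G}^0,\hat{\mathcal{G}}_\alpha)<\epsilon\}.
\]
This event is measurable with respect to the history observed up to the stopping time $\tau_\delta$. We use $\mathbb{1}_E$ as a test for instance $0$ against instance $1$.

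By correctness on instance $0$, $\mathbb{P}_{x^0,y^0}(E^\complement)\le\delta$. On instance $1$, consider the event $E$. Since the Hausdorff distance is a pseudometric (the triangle inequality holds for $H$ on nonempty sets), on $E$ we get
\[
H(\mathcal{G}^1,\hat{\mathcal{G}}_\alpha)\ge H(\mathcal{G}^0,\mathcal{G}^1)-H(\mathcal{G}^0,\hat{\mathcal{G}}_\alpha)>2\epsilon-\epsilon=\epsilon .
\]
Correctness on instance $1$ then gives $\mathbb{P}_{x^1,y^1}(E)\le\delta$.

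Next we apply the Bretagnolle--Huber inequality. This step needs care because the sample size is the random stopping time $\tau_\delta$, so the inequality must be applied to the laws of the stopped observation sequence. The stopped history has density $\prod_{t\le\tau_\delta}p_x(X_t)p_y(Y_t)$, and the inequality applies to any two measures on the same measurable space. It yields
\[
\mathbb{P}_{x^0,y^0}(E^\complement)+\mathbb{P}_{x^1,y^1}(E)\ge\tfrac12\exp\bigl(-\mathrm{KL}(\mathbb{P}_{x^0,y^0},\mathbb{P}_{x^1,y^1})\bigr).
\]
The left-hand side is at most $2\delta$, so $\delta\ge\frac14\exp(-\mathrm{KL})$, which is the claim.

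The main point to verify is that the hypotheses of Bretagnolle--Huber hold for the stopped process. If $\tau_\delta=\infty$ with positive probability, we assume $\mathbb{E}[\tau_\delta]<\infty$; otherwise the desired lower bounds on $\mathbb{E}[\tau_\delta]$ hold trivially. We also note that if the KL divergence is infinite (for instance under support mismatch), the bound is vacuous but still true. Finally, the triangle inequality for $H$ should be checked when sets may be empty; here $\hat{\mathcal{G}}_\alpha$ is always nonempty, since the zero matrix belongs to any feasible set.
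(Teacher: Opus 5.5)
Your proof is correct and is essentially the argument the paper relies on. The paper gives no proof of its own and defers to \citet{metelli2023towards}, whose change-of-measure lemma is this same two-point reduction: a test on $\{H(\mathcal{G}^0,\hat{\mathcal{G}}_\alpha)<\epsilon\}$, the triangle inequality for $H$, and Bretagnolle--Huber applied to the laws of the stopped history, which yields the factor $\tfrac14$.

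One small fix: you cannot justify nonemptiness of $\hat{\mathcal{G}}_\alpha$ via the zero matrix, because an arbitrary $(\epsilon,\delta)$-correct algorithm may output any set. The justification is not needed anyway. On $E$ the distance $H(\mathcal{G}^0,\hat{\mathcal{G}}_\alpha)$ is finite, and with the convention $\inf\emptyset=+\infty$ this already forces $\hat{\mathcal{G}}_\alpha\neq\emptyset$.
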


\begin{lemma}[Change of measure with several instances]\label{lemma:ours-change-of-measure-several-instances}
Let $\alpha \ge 0$ and $\{ (x^k, y^k) \}_{k=1}^K$ be $K$ distinct instances. Let $(x^0, y^0)$ be a reference instance. It holds that:
\begin{align*}
    & \forall i,j \in [K], i \ne j, H(\mathcal{G}_\alpha(x^i,y^i), \mathcal{G}_\alpha(x^j,y^j)) \ge 2\epsilon \implies \delta \ge 1-\frac{1}{\log K} \left( \frac{1}{K}\sum_{k \in [K]} \textup{KL}(\mathbb{P}_{x^i, y^i}, \mathbb{P}_{x^0, y^0}) + \log2  \right) \\ 
    & \forall i,j \in [K], i \ne j, H(\mathcal{Z}_\alpha(x^i,y^i), \mathcal{Z}_\alpha(x^j,y^j)) \ge 2\epsilon \implies \delta \ge 1-\frac{1}{\log K} \left( \frac{1}{K}\sum_{k \in [K]} \textup{KL}(\mathbb{P}_{x^i, y^i}, \mathbb{P}_{x^0, y^0}) + \log2  \right),
\end{align*}
where $\mathbb{P}_{x, y} = \prod_{t=1}^{\tau_\delta} p_{x}(X_t) p_y(Y_t)$, and $p_q(\cdot)$ denotes the density function of $q \in \Delta_n$.
\end{lemma}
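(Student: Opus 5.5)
This is a standard change-of-measure argument in the style of \citet{metelli2023towards}, and the two statements in \Cref{lemma:ours-change-of-measure-several-instances} have the same proof. Both implications in \Cref{lemma:ours-change-of-measure} follow from one argument, since only the Hausdorff separation of the two candidate sets matters. So fix a generic family $\mathcal{F}_\alpha\in\{\mathcal{G}_\alpha,\mathcal{Z}_\alpha\}$ and write $\mathcal{F}^k=\mathcal{F}_\alpha(x^k,y^k)$.

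\textbf{Step 1: reduce to two disjoint events.} Let $\hat{\mathcal{F}}$ be the algorithm's output at time $\tau_\delta$ and define the events $E_k=\{H(\hat{\mathcal{F}},\mathcal{F}^k)<\epsilon\}$ for $k\in\{0,1\}$. The Hausdorff distance is a pseudometric on nonempty sets and satisfies the triangle inequality. On $E_0\cap E_1$ we would therefore have $H(\mathcal{F}^0,\mathcal{F}^1)<2\epsilon$, contradicting the assumption, so $E_0\cap E_1=\emptyset$. Both events are measurable with respect to the stopped history $\mathcal{H}_{\tau_\delta}=(X_t,Y_t)_{t\le\tau_\delta}$. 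Correctness gives $\mathbb{P}_{x^0,y^0}(E_0^\complement)\le\delta$, and since $E_1\subseteq E_0^\complement$, also $\mathbb{P}_{x^0,y^0}(E_1)\le\delta$. Correctness on the other instance gives $\mathbb{P}_{x^1,y^1}(E_1^\complement)\le\delta$.

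\textbf{Step 2: apply Bretagnolle--Huber.} For any event $E$ and any two laws $P,Q$ of $\mathcal{H}_{\tau_\delta}$,
\[
P(E)+Q(E^\complement)\ge \tfrac12\exp(-\textup{KL}(P,Q)).
\]
Take $E=E_1$, $P=\mathbb{P}_{x^0,y^0}$ and $Q=\mathbb{P}_{x^1,y^1}$. Step 1 bounds the left-hand side by $2\delta$, which yields $\delta\ge\tfrac14\exp(-\textup{KL}(\mathbb{P}_{x^0,y^0},\mathbb{P}_{x^1,y^1}))$. Here the KL is between the laws of the stopped trajectory. The paper's later proofs expand it through Wald's identity, but this lemma does not need that expansion.

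\textbf{Step 3: the main obstacle.} The only subtle point is that $\tau_\delta$ may be random or infinite. I would first assume $\mathbb{E}[\tau_\delta]<\infty$ under $\mathbb{P}_{x^0,y^0}$, since otherwise every downstream bound is trivial. I would then invoke the data-processing inequality on the filtration up to $\tau_\delta$, together with the fact that the output is a measurable function of $\mathcal{H}_{\tau_\delta}$. Measurability of $E_k$ also needs the map $f\mapsto\hat{\mathcal{F}}$ to be regular enough. I will assume this, as is standard.

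For \Cref{lemma:ours-change-of-measure-several-instances} I would use the Fano version. Define $E_k$ in the same way for $k\in[K]$. Pairwise separation of at least $2\epsilon$ makes the $E_k$ pairwise disjoint, and $\mathbb{P}_{x^k,y^k}(E_k)\ge1-\delta$ for every $k$. Build the estimator $\hat k$ that returns the unique $k$ with $E_k$ holding, and an arbitrary index otherwise. Its average error is at most $\delta$, so Fano's inequality with reference measure $\mathbb{P}_{x^0,y^0}$, in the form $\frac1K\sum_k\mathbb{P}_k(\hat k\ne k)\ge1-\frac{\frac1K\sum_k\textup{KL}(\mathbb{P}_k,\mathbb{P}_0)+\log2}{\log K}$, gives the claim. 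To go from the average error to the stated bound I use $\frac1K\sum_k\mathbb{P}_k(\hat k\ne k)\le\delta$.
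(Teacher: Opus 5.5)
Your proposal is correct and follows the same route as the argument the paper defers to (\citet{metelli2023towards}): the $2\epsilon$-separation and the triangle inequality for $H$ make the events $\{H(\hat{\mathcal{F}},\mathcal{F}^k)<\epsilon\}$ pairwise disjoint, each has probability at least $1-\delta$ under its own instance, and Fano's inequality with reference measure $\mathbb{P}_{x^0,y^0}$ then gives the bound. Going through the estimator $\hat k$ instead of applying Fano directly to the disjoint events is only a cosmetic difference.
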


\subsection{Others}

\begin{lemma}\label{lemma:nash-support}
Let $A\in[-1,1]^{n\times n}$, $y\in\Delta_n$, $x\in\Delta_n$.
Then it holds that:
\begin{align}
& \text{(i)}~ x^\top A y \le e_i^{\top}Ay\ \ \forall i\in[n]
\qquad\Longleftrightarrow\qquad
\text{(ii)}\ 
\begin{cases}
e_i^{\top}Ay = e_j^{\top}Ay & \forall\, i,j\in \supp(x),\\
e_i^{\top}Ay \le e_j^{\top}Ay & \forall\, i\in \supp(x),\ \forall\, j\notin \supp(x).
\end{cases} \label{eq:support-nash-1}
\\
& \text{(i)}~ x^\top A e_j \le  x^\top A y \le e_i^\top Ay\ \ \forall i,j\in[n]
~\Longleftrightarrow~
\text{(ii)}\ 
\begin{cases}
x^{\top}Ay = x^{\top}Ae_j =e_i^TAy & \forall\, i \in \supp(x),j\in \supp(y),\\
x^\top A e_j \le x^{\top}Ay \le e_i^\top A y & \forall\, i\notin \supp(x),\ \forall\, j\notin \supp(y).
\end{cases} \label{eq:support-nash-3} 
\end{align}
\end{lemma}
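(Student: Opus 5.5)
We prove the two equivalences of \Cref{lemma:nash-support} directly from the definitions, using only that $x$ and $y$ are probability vectors. The key identity is
\[
x^\top A y = \sum_{i\in\supp(x)} x_i\,(e_i^\top A y),
\]
so $x^\top Ay$ is a convex combination, with strictly positive weights, of the row values $e_i^\top Ay$ for $i\in\supp(x)$.

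For \eqref{eq:support-nash-1}, direction (i)$\Rightarrow$(ii): set $m=\min_{k\in[n]} e_k^\top Ay$. Condition (i) says $x^\top Ay\le m$. A convex combination of numbers that are each at least $m$ is at least $m$, so $x^\top Ay = m$. Moreover, equality in a strictly positive convex combination forces every term to equal $m$. Hence $e_i^\top Ay=m$ for all $i\in\supp(x)$. This gives both the equalities inside the support and $e_i^\top Ay = m\le e_j^\top Ay$ for $j\notin\supp(x)$.

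Direction (ii)$\Rightarrow$(i): let $v$ be the common value of $e_i^\top Ay$ on $\supp(x)$. Then $x^\top Ay = v$ by the identity above. We have $v\le e_j^\top Ay$ for $j\notin\supp(x)$ by the second line of (ii), and $v=e_i^\top Ay$ for $i\in\supp(x)$. So (i) holds.

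For \eqref{eq:support-nash-3}, we apply the same argument twice. The right inequality of (i), $x^\top Ay\le e_i^\top Ay$ for all $i$, is exactly the setting of \eqref{eq:support-nash-1}. It gives $e_i^\top Ay = x^\top Ay$ for $i\in\supp(x)$ and $x^\top Ay\le e_i^\top Ay$ for $i\notin\supp(x)$.

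The left inequality, $x^\top Ae_j\le x^\top Ay$ for all $j$, is the mirror statement for the column player. It uses
\[
x^\top Ay=\sum_{j\in\supp(y)} y_j\,(x^\top Ae_j),
\]
a strictly positive convex combination of the column values, now with the maximum $M=\max_j x^\top Ae_j$ in place of the minimum. The same reasoning gives $x^\top Ae_j = x^\top Ay$ for $j\in\supp(y)$ and $x^\top Ae_j\le x^\top Ay$ otherwise. Together these yield (ii).

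The converse of \eqref{eq:support-nash-3} again follows by evaluating $x^\top Ay$ through the two convex combinations. Within the supports the values coincide with $x^\top Ay$, and outside the supports the inequalities in (ii) give the corresponding inequalities in (i).

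The only delicate step is the strict-positivity argument, which turns ``a convex combination attains its extremal lower (or upper) bound'' into ``every supported term equals that bound.'' It is elementary. Suppose some supported term exceeded $m$ strictly; since its weight is positive, the combination would be strictly greater than $m$, a contradiction. The argument for the maximum is symmetric.
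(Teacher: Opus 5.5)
Your proof is correct and follows the same route as the paper: write $x^\top Ay$ as a convex combination, with strictly positive weights, of the supported row values $e_i^\top Ay$ (respectively the supported column values $x^\top Ae_j$), and use that attaining the bound forces every supported term to equal it. The only difference is that you write out the column-player mirror argument for the second equivalence, which the paper dismisses as ``analogous,'' and you read the second line of (ii) there, correctly, with separate quantifiers over $i\notin\supp(x)$ and over $j\notin\supp(y)$.
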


\begin{proof}
This is a well-known result (e.g.~\citet[Section~3.6]{gintis2000game}). For completeness, we prove \Cref{eq:support-nash-1}. The proof of \Cref{eq:support-nash-3} follows analogous arguments. We first show that (i) $\Rightarrow$ (ii).
Assume $x^\top A y \le e_i^{\top}Ay$ for all $i \in [n]$.
Rewrite $x^\top A y$ as follows:
\[
x^\top A y 
= \sum_{k \in [n]} x_k \sum_{j \in [n]} y_j A_{ij}
= \sum_{k\in \supp(x)} x_k \sum_{j \in [n]} y_j A_{ij},
= \sum_{k\in \supp(x)} x_k \left( e_k^\top A y \right)
\]
since $x_k=0$ for $k\notin \supp(x)$. Note that $\sum_{k \in S}x_i =1$.
Then, since $x_k>0$ for $k\in \supp(x)$ and each $e_k^{\top}Ay \ge x^\top A y$ (by assumption), it must follows that $e_k^\top A y = x^\top A y$ for all $k \in \supp(x)$. Hence, we have proved the first condition of (ii) and it remains to prove that for all $j \notin \supp(x)$ and all $i \in \supp(x)$, we have that $e_i^\top A y \le e_j^\top A y$.

Fix any $j \notin \supp(x)$. By assumption (i), we have $x^\top A y \le e_j^\top A y$.
Combining this with the equality $e_i^\top A y = x^\top A y$ for all $i \in \supp(x)$, we obtain
\[
e_i^\top A y \le e_j^\top A y
\qquad \forall\, i \in \supp(x),\ \forall\, j \notin \supp(x),
\]
which concludes the proof of $(i) \Rightarrow (ii)$.

We conclude the proof by showing that (ii) $\Rightarrow$ (i).
Assume (ii) holds. Let $c$ denote the common value $e_i^{\top}Ay$ for $i\in \supp(x)$. Then:
\[
x^{\top}Ay = \sum_{i\in \supp(x)} x_i\, \left( e_i^{\top}Ay\right) = \sum_{i\in S} x_i\, c = c.
\]
Hence, $x^\top A y \le e_i^\top A y$ for all $i \in \supp(x)$.

For any $j \notin \supp(x)$, (ii) gives $e_j^{\top}Ay \ge c=x^\top A y$, which concludes the proof.
\end{proof}

\begin{lemma}\label{lemma:tech-lemma}
Let $c_1, c_2,K \in \mathbb{R}_{> 0}$ and consider 
$$
\bar{T} = \inf \left\{ t \in \mathbb{N}: \frac{\log(c_1/\delta) + n \log(c_2t)}{t} \le K^2 \right\}.
$$
Then:
\begin{align*}
    \bar{T} \le 2 \left( 2 + \frac{4\log(c_1/\delta)}{K^2} + \frac{4n}{K^2} \log\left( \frac{2nc_2}{K^2} \right)\right).
\end{align*}
\end{lemma}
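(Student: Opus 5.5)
Since $t\mapsto \log(c_2 t)/t$ is eventually decreasing, the plan is to exhibit an explicit $T_0$ with the same form as the stated bound and check that the defining inequality already holds at $t=\lceil T_0\rceil$. That gives $\bar T\le \lceil T_0\rceil$, which is at most the stated quantity. Write $L=\log(c_1/\delta)$. The condition $\frac{L+n\log(c_2 t)}{t}\le K^2$ follows from the two separate requirements
\begin{align*}
\frac{L}{t}\le \frac{K^2}{2}\quad\text{and}\quad \frac{n\log(c_2 t)}{t}\le \frac{K^2}{2}.
\end{align*}
The first holds for every $t\ge 2L/K^2$.

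For the second, I will use the classical fact that $t\ge 2a\log(ab)$ implies $t\ge a\log(bt)$, valid for $a\ge 1$ and $ab\ge e$. Apply it with $a=2n/K^2$ and $b=c_2$; this gives $n\log(c_2 t)\le tK^2/2$ whenever
\begin{align*}
t\ge \frac{4n}{K^2}\log\left(\frac{2nc_2}{K^2}\right).
\end{align*}
To verify the fact, substitute $t=2a\log(ab)$. Then $a\log(bt)=a\log(ab)+a\log(2\log(ab))\le 2a\log(ab)$, because $2\log u\le u$ for all $u>0$. For larger $t$, note that $t-a\log(bt)$ is increasing once $t\ge a$.

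Next take $T_0=\frac{4L}{K^2}+\frac{4n}{K^2}\log\left(\frac{2nc_2}{K^2}\right)+2$. Because the two conditions are monotone beyond their thresholds, $t=\lceil T_0\rceil$ satisfies both. Moreover $\lceil T_0\rceil\le T_0+1\le 2(2+\cdots)$, which is the stated bound; the outer factor $2$ and the additive constants provide ample slack for the rounding.

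The main obstacle is the degenerate regime, where $ab<e$ or $a<1$, i.e.\ $K$ is large relative to $n$ and $2nc_2/K^2$ may be small or the logarithm negative. I would handle it separately. If $c_2t\le 1$ then $\log(c_2 t)\le 0$, so the second requirement is vacuous. Otherwise I use $\log(c_2 t)\le c_2 t$ together with the additive constant $4$ inside the bound, so that small $t$ need not be checked. If a corner still fails, I would rely on the lemma's intended usage ($K<1$, $c_2\ge 1$, $n\ge 1$), which guarantees $a\ge 1$ and $ab\ge e$.
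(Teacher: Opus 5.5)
Your core argument is correct, under the same implicit hypotheses the paper's proof also needs (see the second paragraph), but it runs differently from the paper's.

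\textbf{How the two routes differ.} The paper argues by first failure. After the same split, it bounds $\bar T\le\max\{\inf A,\inf B\}\le 2(\inf A+\inf B)$, where $A,B$ are the sets on which each half-condition holds. Let $\tilde T=\inf B$, the first $t$ with $n\log(c_2t)/t\le K^2/2$. The inequality fails at $\tilde T-1$, so $\tilde T-1<a\log(b(\tilde T-1))$ with $a=2n/K^2$ and $b=c_2$. The crude estimate $\log x\le\sqrt{x}$ gives $\tilde T-1<a^2b$, and feeding this back into the logarithm yields $\tilde T-1<2a\log(ab)$. You instead check directly that $t_0=2a\log(ab)$ satisfies $t\ge a\log(bt)$, and extend to all $t\ge t_0$ by monotonicity of $t-a\log(bt)$ on $[a,\infty)$. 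The threshold is the same.

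\textbf{What each buys.} The bootstrap is short and uses only $\log x\le\sqrt x$. However, it controls only the first crossing. Its opening step $\bar T\le\max\{\inf A,\inf B\}$ also tacitly assumes $B$ is upward closed, which fails for small $c_2$: with $n=1$, $c_2=1/2$, $K^2=0.04$ and $\log(c_1/\delta)=2$, the maximum is $100$ but $\bar T>100$. Your route shows that \emph{every} $t\ge T_0$ satisfies the defining inequality. That is what \Cref{thm:gsg-alpha0} actually uses, since it sets $m$ equal to the bound and needs the inequality at $m$ itself.

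\textbf{Small corrections.} The assumption $a\ge1$ is never used. You only need $ab\ge\sqrt{e}$, so that $t_0\ge a$ and $\log(ab)>0$. This matters because $K<1$, $c_2\ge1$, $n\ge1$ only give $ab>2$, not $ab\ge e$ as you claim.

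\textbf{The degenerate regime.} It cannot be ``handled separately'', because the statement is false there. Take $n=K=1$, $c_1=\delta$ and $c_2=e^{-10}$: the right-hand side equals $2\bigl(2+4(\log 2-10)\bigr)<0$, while $\bar T\ge1$. Likewise, taking $c_1\ll\delta$ makes $\log(c_1/\delta)$, and hence the bound, arbitrarily negative. So your sketch via $\log(c_2t)\le c_2t$ cannot succeed. The fallback must be an explicit hypothesis, for instance $c_1\ge\delta$ and $2nc_2/K^2\ge\sqrt{e}$. These also justify your final step $T_0+1\le 2(2+\cdots)$, which requires $\frac{4\log(c_1/\delta)}{K^2}+\frac{4n}{K^2}\log\frac{2nc_2}{K^2}\ge-1$. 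Both hypotheses hold in the paper's application ($c_1=4$, $c_2=6$, $K=\epsilon/\sqrt{128}$, $\delta<1$). The paper's own proof silently relies on comparable restrictions, for example $\log(c_1/\delta)\ge0$ in its bound $\inf A\le 1+4\log(c_1/\delta)/K^2$.
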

\begin{proof}
First, it is easy to see that:
    \begin{align*}
        \bar{T} & \le \max \left\{ \inf \left\{t \in \mathbb{N}: \frac{\log(c_1/\delta)}{t} \le \frac{K^2}{2} \right\}, \inf \left\{t \in \mathbb{N}: \frac{n\log(c_2t)}{t} \le \frac{K^2}{2} \right\}  \right\} \\ 
        & \le 2 \left( \inf \left\{t \in \mathbb{N}: \frac{\log(c_1/\delta)}{t} \le \frac{K^2}{2} \right\} + \inf \left\{t \in \mathbb{N}: \frac{n\log(c_2t)}{t} \le \frac{K^2}{2} \right\} \right) \\
        & \le 2 \left( 1 + \frac{4\log(c_1/\delta)}{K^2} + \inf \left\{t \in \mathbb{N}: \frac{n\log(c_2t)}{t} \le \frac{K^2}{2} \right\} \right) \\ 
        & \coloneqq 2 \left( 1 + \frac{4\log(c_1/\delta)}{K^2} + \tilde{T} \right).
    \end{align*}
    
    It remains to analyze the term: $$\tilde{T} =\inf \left\{t \in \mathbb{N}: \frac{n\log(c_2t)}{t} \le \frac{K^2}{2} \right\} = \left\{ t \in \mathbb{N}: t \ge \frac{2n \log(c_2t)}{K^2} \right\}.$$
    By definition, for $\tilde{T}-1$, we have that:
    \begin{align}\label{eq:tech-lemma}
        \tilde{T} - 1 < \frac{2n \log(c_2(\tilde{T}-1))}{K^2}. 
    \end{align}
    Since $\log(x) \le \sqrt{x}$, we have that:
    \begin{align*}
        \sqrt{\tilde{T} - 1} < \frac{2n\sqrt{c_2}}{K^2},
    \end{align*}
    which leads to:
    \begin{align*}
        \tilde{T} - 1 < \left( \frac{2n\sqrt{c_2}}{K^2} \right)^2.
    \end{align*}
    Plugging this result within \Cref{eq:tech-lemma}, we obtain:
    \begin{align*}
        \tilde{T} < \frac{4n}{K^2} \log\left( \frac{2nc_2}{K^2} \right) + 1,
    \end{align*}
    which leads to the desired result.
\end{proof}

\begin{lemma}[General Decomposition Error]\label{lemma:tech-decomposition-lemma}
    Let $(\mathcal{Y}, d)$ be a metric space and let $\mathcal{X}$ be any set.
    Consider a compact-valued correspondence $F: \mathcal{X} \times \mathcal{X} \mapstoto \mathcal{Y}$, and let $x_1, x_2, \hat{x}_1, \hat x_2 \in \mathcal{X}$. It holds that:   
    \begin{align*}
        H_d(F(x_1,x_2), F(\hat x_1, \hat x_2)) \le H_d(F(x_1,x_2), F(\hat x_1, x_2)) + H_d(F(\hat x_1,  x_2), F( \hat x_1, \hat x_2)).
    \end{align*}
\end{lemma}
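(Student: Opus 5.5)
The plan is to prove the two one-sided (directed) Hausdorff deviations separately and use the triangle inequality for the directed distance $h(\mathcal{A},\mathcal{B}) \coloneqq \sup_{a \in \mathcal{A}} \inf_{b \in \mathcal{B}} d(a,b)$. The nonempty correspondences here are the payoff sets. Write $U = F(x_1,x_2)$, $V = F(\hat x_1, x_2)$, $W = F(\hat x_1,\hat x_2)$. Since $H_d(U,W) = \max\{h(U,W), h(W,U)\}$, it suffices to show
\begin{align*}
h(U,W) &\le h(U,V) + h(V,W), \\
h(W,U) &\le h(W,V) + h(V,U),
\end{align*}
because each right-hand side is at most $H_d(U,V) + H_d(V,W)$.

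For the first inequality, fix $u \in U$ and any $v \in V$. For every $w \in W$ the triangle inequality of $d$ gives $d(u,w) \le d(u,v) + d(v,w)$. Taking the infimum over $w$ yields
\begin{align*}
\inf_{w \in W} d(u,w) \le d(u,v) + \inf_{w \in W} d(v,w) \le d(u,v) + h(V,W).
\end{align*}
Taking next the infimum over $v \in V$ gives $\inf_{w} d(u,w) \le \inf_{v} d(u,v) + h(V,W) \le h(U,V) + h(V,W)$. Finally we take the supremum over $u \in U$. The second inequality follows by the same argument with the roles of $U$ and $W$ exchanged.

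Compactness is used only so that the infima and suprema are attained and finite. Even without attainment, the argument goes through with infima and suprema in the extended reals, provided the sets are nonempty. Nonemptiness holds in the paper's applications, since the zero matrix is always feasible.

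The only point needing care, and really the main obstacle, is to avoid assuming that $H_d$ is a metric on arbitrary sets. The proof therefore works with directed distances and never requires any symmetry of $h$. With that in place, the lemma is simply the triangle inequality for $H_d$ applied to the intermediate set $F(\hat x_1, x_2)$. This is exactly how it is used in \Cref{corollary:separation-x-y-err}.
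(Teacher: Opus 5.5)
Your proof is correct and follows the same route as the paper: each directed Hausdorff deviation is bounded by passing through the intermediate set $F(\hat x_1, x_2)$ and applying the triangle inequality for $d$. The only difference is that the paper selects a nearest point $c(y) \in F(\hat x_1, x_2)$, which is where it uses compactness, whereas you take the infimum over $v \in F(\hat x_1, x_2)$ directly, so your version needs only that the sets are nonempty.
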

\begin{proof}
    For any $y \in \mathcal{Y}$, let $c(y) \in \argmin_{c \in F(\hat x_1, x_2)} d(y,c)$. Note that, by Heine-Cantor, this minimum is attained since $d(y, \cdot)$ is a continuous function and $F$ is compact-valued. Then, recall that:
    \begin{align}\label{eq:decomposition-lemma-eq-1}
        H_d(F(x_1,x_2), F(\hat x_1, \hat x_2) & = \max \left\{ \sup_{y \in F(x_1,x_2)} \inf_{\hat y \in F(\hat x_1, \hat x_2)} d(y, \hat y), \sup_{\hat y \in F(\hat x_1,\hat x_2)} \inf_{ y \in F( x_1, x_2)} d(y, \hat y)  \right\} 
    \end{align}
    Now, focus on the first term. We have that:
    \begin{align*}
        \sup_{y \in F(x_1,x_2)} \inf_{\hat y \in F(\hat x_1, \hat x_2)} d(y, \hat y) & \le \sup_{y \in F(x_1,x_2)} \inf_{\hat y \in F(\hat x_1, \hat x_2)} \left( d(y, c(y)) + d(c(y), \hat y) \right) \tag{Triangular inequality} \\
        & = \sup_{y \in F(x_1,x_2)} \left( d(y, c(y)) + \inf_{\hat y \in F(\hat x_1, \hat x_2)}   d(c(y), \hat y) \right) \\
        & \le \sup_{y \in F(x_1,x_2)}  d(y, c(y)) +  \sup_{y \in F(x_1,x_2)} \inf_{\hat y \in F(\hat x_1, \hat x_2)}   d(c(y), \hat y)  \\
        & \le H_d(F(x_1, x_2), F(\hat x_1, x_2)) + \sup_{y \in F(x_1,x_2)} \inf_{\hat y \in F(\hat x_1, \hat x_2)}   d(c(y), \hat y) \tag{Def. of $c(y)$} \\
        & \le H_d(F(x_1, x_2), F(\hat x_1, x_2)) + H_d(F(\hat x_1, x_2), F(\hat x_1, \hat x_2)) \tag{$c(y) \in F(\hat x_1, x_2)$}
    \end{align*}
    Similarly, by using the triangular inequality with $c(\hat y)$, one has that:
    \begin{align*}
        \sup_{\hat y \in F(\hat x_1,\hat x_2)} \inf_{ y \in F( x_1, x_2)} d(y, \hat y) \le H_d(F(x_1, x_2), F(\hat x_1, x_2)) + H_d(F(\hat x_1, x_2), F(\hat x_1, \hat x_2)).
    \end{align*}
    Plugging these results within \Cref{eq:decomposition-lemma-eq-1}, concludes the proof.
\end{proof}

\begin{corollary}[Decomposition Error in Inverse Game Theory]\label{corollary:separation-x-y-err}
    Let $x,y,\hat{x},\hat{y} \in \Delta_n$. Then, for $\alpha \ge 0$, it holds that:
    \begin{align*}
        & H(\mathcal{G}_{\alpha}(x,y), \mathcal{G}_\alpha(\hat{x}, \hat{y}))\le H(\mathcal{G}_{\alpha}(x,y), \mathcal{G}_{\alpha}(\hat x,y)) + H(\mathcal{G}_{\alpha}(\hat x,y), \mathcal{G}_{\alpha}(\hat x, \hat y)) \\
        & 
        H(\mathcal{Z}_{\alpha}(x,y), \mathcal{Z}_\alpha(\hat{x}, \hat{y}))\le H(\mathcal{Z}_{\alpha}(x,y), \mathcal{Z}_{\alpha}(\hat x,y)) + H(\mathcal{Z}_{\alpha}(\hat x,y), \mathcal{Z}_{\alpha}(\hat x, \hat y))
    \end{align*}
\end{corollary}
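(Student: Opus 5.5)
The plan is to obtain the Corollary directly from \Cref{lemma:tech-decomposition-lemma}. I would instantiate that lemma with $\mathcal{X} = \Delta_n$, with $(\mathcal{Y}, d)$ the Euclidean space of matrix pairs (for $\mathcal{G}_\alpha$) or of single matrices (for $\mathcal{Z}_\alpha$) under the $\ell_\infty$-metric, and with the correspondence $F(x,y) = \mathcal{G}_\alpha(x,y)$, respectively $F(x,y) = \mathcal{Z}_\alpha(x,y)$. Once the hypotheses of the lemma are checked, substituting $x_1 = x$, $x_2 = y$, $\hat x_1 = \hat x$, $\hat x_2 = \hat y$ gives both displayed inequalities verbatim.

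The only work is checking compact-valuedness. For fixed $(x,y)$, the set $\mathcal{G}_\alpha(x,y)$ is the intersection of the compact box $[-1,1]^{n\times n}\times[-1,1]^{n\times n}$ with finitely many closed half-spaces. Each constraint $x^\top A y - e_i^\top A y \le \alpha$ (and its analogue for $B$) is linear, hence continuous, in $(A,B)$. So the set is closed and bounded, and therefore compact. The same argument applies to $\mathcal{Z}_\alpha(x,y)$, which has two families of linear constraints in $A$.

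I expect one subtlety. The Hausdorff distance, and the argmin $c(y)$ used in the proof of \Cref{lemma:tech-decomposition-lemma}, implicitly need the sets to be nonempty. This holds, because the zero matrix (or zero pair) satisfies every constraint for any $\alpha \ge 0$, as noted in the introduction. So each $F(x,y)$ is a nonempty compact set, and the lemma applies directly. I do not expect a real obstacle: the Corollary is a specialization, and checking nonemptiness and compactness is the whole argument.
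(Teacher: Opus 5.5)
Your proposal is correct and matches the paper's proof, which is likewise a direct instantiation of \Cref{lemma:tech-decomposition-lemma} with $F=\mathcal{G}_\alpha$ (resp.\ $F=\mathcal{Z}_\alpha$) and $\mathcal{X}=\Delta_n$. Your explicit checks supply hypotheses the paper leaves implicit: each $F(x,y)$ is nonempty because the zero matrix is feasible, and compact because it is a closed box cut by finitely many closed half-spaces.
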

\begin{proof}
    This is a direct application of \Cref{lemma:tech-decomposition-lemma}.
\end{proof}

\begin{lemma}[Packing argument \cite{metelli2023towards}, Lemma E.6] \label{lemma:pack-arg}
Let
\(
\mathcal{V}
\;=\;
\Bigl\{ v \in \{-1,1\}^D : \sum_{j=1}^D v_j = 0 \Bigr\}.
\)
Then the $\tfrac{D}{16}$-packing number of $\mathcal{V}$ with respect to the metric
\(
d(v,v') \;=\; \sum_{j=1}^D \lvert v_j - v'_j \rvert
\)
is lower bounded by $2^{D/5}$.    
\end{lemma}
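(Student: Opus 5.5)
This packing bound is a Gilbert--Varshamov-style argument restricted to balanced sign vectors, and I would prove it by a greedy (maximal packing) construction combined with a volume/counting bound. Let $N=|\mathcal{V}|=\binom{D}{D/2}$, assuming $D$ even, which is implicit since $\mathcal{V}$ is otherwise empty. Since $|v_j-v'_j|\in\{0,2\}$, the metric satisfies $d(v,v')=2\,d_H(v,v')$, where $d_H$ is Hamming distance. A $\tfrac{D}{16}$-packing in $d$ is therefore a set whose pairwise Hamming distances are at least $D/32$; any $\tfrac{D}{16}$-separation requirement in the weaker sense used in the main text follows as well.

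First take a maximal subset $\bar{\mathcal{V}}\subseteq\mathcal{V}$ with pairwise Hamming distance at least $r:=\lceil D/32\rceil$. By maximality, the Hamming balls of radius $r-1$ around the points of $\bar{\mathcal{V}}$ cover $\mathcal{V}$. This gives
\[
|\bar{\mathcal{V}}|\ \ge\ \frac{\binom{D}{D/2}}{\max_{v}\,|\{w\in\mathcal{V}: d_H(v,w)<r\}|}.
\]

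Next, bound the two sides. For the numerator, use $\binom{D}{D/2}\ge 2^D/(D+1)$. For the denominator, crudely bound the ball within $\mathcal{V}$ by the full Hamming ball: $\sum_{k<r}\binom{D}{k}\le 2^{D\,h(1/32)}$, where $h$ is binary entropy and $r/D\le 1/2$. This gives $|\bar{\mathcal{V}}|\ge 2^{D(1-h(1/32))}/(D+1)$. Since $h(1/32)\approx 0.2006$, the exponent is about $0.799D$, and this exceeds $2^{D/5}$ once $(D+1)\le 2^{0.599 D}$, which holds for all even $D\ge 2$ (check $D=2,4$ directly). The constant is therefore comfortable.

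The main care point is bookkeeping the factor of $2$ between $d$ and Hamming distance, together with the rounding of $D/16$ and $D/32$ for small $D$. For the small cases I would verify directly, or use the remaining slack in the exponent. Alternatively, since the statement is cited as Lemma E.6 of \citet{metelli2023towards}, one can invoke that lemma verbatim; the argument above is the self-contained route.
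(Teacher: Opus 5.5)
Your proof is correct. The paper never proves this lemma; it imports the statement verbatim from Lemma E.6 of \citet{metelli2023towards} and only applies it, with $D=n$ and $D=n-1$, in the proofs of \Cref{thm:alpha0-n,thm:lb-any-alpha-n}. Your argument is therefore a self-contained replacement for the citation, following the standard Gilbert--Varshamov route:
\begin{itemize}
\item reduce to Hamming distance via $d=2d_H$;
\item take a maximal $\lceil D/32\rceil$-separated subset of $\mathcal{V}$;
\item compare $|\mathcal{V}|=\binom{D}{D/2}\ge 2^D/(D+1)$ with the ball bound $\sum_{k\le D/32}\binom{D}{k}\le 2^{D h(1/32)}$.
\end{itemize}
The citation is shorter. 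Your route is more transparent in three ways. It makes the factor $2$ between $d$ and Hamming distance explicit, which is exactly the conversion used in \Cref{thm:alpha0-n}: there $d(v,w)=4|D^+_{v,w}|\ge n/16$ is turned into $|D^+_{v,w}|\ge n/64$. It shows that $2^{D/5}$ has a lot of slack, since you actually get about $2^{0.8D}/(D+1)$. And it can be rerun unchanged at any other separation radius.

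One small slip: the inequality $D+1\le 2^{0.599D}$ fails at $D=2$, because $2^{1.198}<3$. So the direct check you mention is genuinely needed there, not just a formality. Small $D$ can be handled all at once as follows. Two distinct balanced sign vectors differ in at least two coordinates, so $d(v,v')\ge 4>D/16$ whenever $D<64$. In that range all of $\mathcal{V}$ is already a packing, of size $\binom{D}{D/2}\ge 2^{D/2}\ge 2^{D/5}$.
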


\end{document}